\DeclareMathAlphabet{\mathpzc}{OT1}{pzc}{m}{it}
\definecolor{lightred}{rgb}{1, 0.8, 0.8}
\newtheorem{theorem}{Theorem}
\crefname{theorem}{theorem}{Theorems}
\Crefname{Theorem}{Theorem}{Theorems}
\newtheorem*{lemma_nonumber*}{Lemma}
\newtheorem{lemma}{Lemma}
\newaliascnt{corollary}{theorem}
\newtheorem{corollary}[corollary]{Corollary}
\crefname{corollary}{corollary}{corollaries}
\Crefname{Corollary}{Corollary}{Corollaries}
\newaliascnt{proposition}{theorem}
\newtheorem{proposition}[proposition]{Proposition}
\crefname{proposition}{proposition}{propositions}
\Crefname{Proposition}{Proposition}{Propositions}
\newaliascnt{definition}{theorem}
\newtheorem{definition}[definition]{Definition}
\crefname{definition}{definition}{definitions}
\Crefname{Definition}{Definition}{Definitions}
\newaliascnt{remark}{theorem}
\newtheorem{remark}[remark]{Remark}
\crefname{remark}{remark}{remarks}
\Crefname{Remark}{Remark}{Remarks}
\crefname{example}{example}{examples}
\Crefname{Example}{Example}{Examples}
\crefname{figure}{figure}{figures}
\Crefname{Figure}{Figure}{Figures}
\newtheorem{assumption}{\textbf{H}\hspace{-3pt}}
\def\msf{\mathsf{G}}
\def\msw{\mathsf{W}}
\def\msu{\mathsf{U}}
\def\complementary{\mathrm{c}}
\def\msi{\mathsf{I}}
\def\msj{\mathsf{J}}
\def\msa{\mathsf{A}}
\def\msk{\mathsf{K}}
\def\msb{\mathsf{B}} 
\def\msc{\mathsf{C}}
\def\mse{\mathsf{E}}
\def\msf{\mathsf{F}}
\def\msh{\mathsf{H}}
\def\msm{\mathsf{M}}
\def\msu{\mathsf{U}}
\def\msv{\mathsf{V}}
\def\msq{\mathsf{Q}}
\def\msx{\mathsf{X}}
\def\msz{\mathsf{Z}}
\def\mcs{\mathcal{S}}
\def\mcbb{\mathcal{B}}  
\newcommand{\mcb}[1]{\mathcal{B}(#1)}
\def\rset{\mathbb{R}}
\def\zset{\mathbb{Z}}
\def\nset{\mathbb{N}}
\def\nsets{\mathbb{N}^*}
\def\Rset{\mathbb{R}}
\def\Zset{\mathbb{Z}}
\def\Nset{\mathbb{N}}
\def\rml{\mathrm{L}}
\def\rmC{\mathrm{C}}
\newcommand{\abs}[1]{\left\vert #1 \right\vert}
\newcommandx{\psr}[3][3=]{\left\langle#1,#2 \right\rangle_{#3}}
\newcommandx{\normr}[2][2=]{ \left\Vert#1 \right\Vert_{#2}}
\newcommandx{\psrLigne}[3][3=]{\langle#1,#2 \rangle_{#3}}
\newcommandx{\normrLigne}[2][2=]{ \Vert#1 \Vert_{#2}}
\newcommandx{\norm}[2][1=]{\ifthenelse{\equal{#1}{}}{\left\Vert #2 \right\Vert}{\left\Vert #2 \right\Vert^{#1}}}
\newcommand{\defEns}[1]{\left\lbrace #1 \right\rbrace }
\newcommand\probaMarkovTilde[2][2=]
\newcommand{\plusinfty}{+\infty}
\def\eqsp{\;}
\newcommand\sequence[3][2=,3=]
\newcommand\sequenceD[3][2=,3=]
\newcommand\sequenceDouble[4][3=,4=]
\def\iid{i.i.d.}
\def\Idd{\mathrm{I}_d}
\newcommand{\1}{\mathbbm{1}}
\def\transpose{\top}
\newcommand{\beq}{\begin{equation}}
\newcommand{\eeq}{\end{equation}}
\def\Leb{\mathrm{Leb}}
\newcommand*{\dd}{\mathop{}\!\mathrm{d}}
\def\ee{~}
\def\orbit{\mathcal{O}}
\def\F{U}
\def\Psiverlet{\Psi}
\newcommandx{\gperthmc}[2][1=,2=]{\ifthenelse{\equal{#1}{}}{\Xi}{\ifthenelse{\equal{#2}{}}{\Xi_{h,#1}}{\Xi_{#2,#1}}}}
\newcommandx{\Phiverlet}[2][1=,2=]{\ifthenelse{\equal{#1}{}}{\Phi}{\Phi_{#1}^{\circ (#2)}}}
\newcommandx{\gpertub}[2][1=,2=]{\ifthenelse{\equal{#1}{}}{g}{g_{#1}^{#2}}}
\newcommandx{\Phiverletq}[2][1=,2=]{\ifthenelse{\equal{#1}{}}{\widetilde{\Phi}}{\widetilde{\Phi}_{#1}^{\circ (#2)}}}
\newcommandx{\Phiverletqi}[2][1=,2=]{\ifthenelse{\equal{#1}{}}{\bar{\Psi}}{\bar{\Psi}_{#1}^{(#2)}}}
\newcommandx{\Pkerhmc}[2][1=,2=]{\ifthenelse{\equal{#1}{}}{\mathrm{P}}{\mathrm{P}_{#1, #2}}}
\newcommandx{\tPkerhmc}[2][1=,2=]{\ifthenelse{\equal{#1}{}}{\tilde{\mathrm{P}}}{\tilde{\mathrm{P}}_{#1, #2}}}
\newcommandx{\PkerhmcD}[2][1=,2=]{\ifthenelse{\equal{#1}{}}{\mathrm{K}}{\mathrm{K}_{#1, #2}}}
\def\rmp{\mathrm{p}}
\def\rmpp{\mathrm{P}}
\def\rmq{\mathrm{q}}
\def\rmqq{\mathrm{Q}}
\def\Kmax{K_{\mathrm{m}}}
\def\lyapD{\mathpzc{V}}
\def\VlyapD{\mathpzc{V}}
\def\gauss{\mathrm{N}}
\def\Kker{\mathrm{K}}
\def\KkerU{\mathrm{K}^{\mathsf{U}}}
\def\pivot{\mathbf{r}}
\def\Khmc{T}
\def\KkerH{\mathrm{K}^{\mathsf{H}}_{h,\Khmc}}
\def\KkerUE{\tilde{\mathrm{K}}^{\mathsf{U}}}
\def\VFL{\VlyapD}
\def\argmax{\text{argmax}}
\def\GaussStandard{\rho_0}
\def\dist{\mathrm{dist}}
\def\scrU{\mathscr{U}}
\def\Sfun{S}
\def\funS{\Sfun}
\def\ltt{\mathtt{L}}
\def\tpi{\tilde{\pi}}
\def\tint{\mathrm{t}_{\mathrm{int}}}
\def\inv{\leftarrow}
\def\Fr{\mathrm{Fr}}
\def\bfA{\mathbf{A}}
\def\bfB{\mathbf{B}}
\newcommand{\fracD}[2]{(#1/#2)}
\def\symset{\mathbb{S}}
\title{On the convergence of dynamic implementations of Hamiltonian Monte Carlo and No U-Turn Samplers}
\author[1]{Alain Durmus \thanks{alain.durmus@polytechnique.ed}} 
  \author[2]{Samuel Gruffaz \thanks{samuel.gruffaz@ens-paris-saclay.fr}}
  \author[3]{Miika Kailas \thanks{miika.p.kailas@jyu.fi}} 
  \author[4]{Eero Saksman \thanks{eero.saksman@helsinki.fi}}
  \author[3]{Matti Vihola \thanks{matti.s.vihola@jyu.fi}}
  \affil[1]{CMAP, CNRS, Ecole Polytechnique, Institut Polytechnique
  de Paris, 91120 Palaiseau, France }
\affil[2]{Université Paris-Saclay, ENS Paris-Saclay, Centre Borelli, F-91190 Gif-sur-Yvette, France.}
\affil[3]{University of Jyväskylä, Department of Mathematics and Statistics.  }
\affil[4]{University of Helsinki, Department of Mathematics and Statistics. }
\begin{document}

\maketitle

\begin{abstract}
    There is substantial empirical evidence about the success of dynamic
    implementations of Hamiltonian Monte Carlo (HMC), such as the No
    U-Turn Sampler (NUTS), in many challenging inference problems but
    theoretical results about their behavior are scarce.  The aim of
    this paper is to fill this gap. More precisely, we consider a
    general class of MCMC algorithms we call dynamic HMC. We show that
    this general framework encompasses NUTS as a particular case, implying
    the invariance of the target distribution as a by-product.  Second,
    we establish conditions under which NUTS is irreducible and
    aperiodic and as a corrolary ergodic.
  Under conditions similar to the ones existing for HMC, we also show that NUTS is geometrically ergodic.
  Finally, we improve existing convergence results for HMC showing that this method is ergodic without any
  boundedness condition on the stepsize and the number of leapfrog
  steps, in the case where the target is a perturbation of a Gaussian
  distribution. 
\end{abstract}

%


\section{Introduction}


In this paper we consider No U-Turn Samplers (NUTS), a class of dynamic implementations of the Hamiltonian Monte Carlo algorithm (HMC).
 HMC is a Metropolis-Hastings algorithm designed to sample from a target probability density $\pi$ on $\Rset^d$.
 This method has a pretty long history beginning from computational physics in 1987 \cite{duane1987hybrid}
  before quickly gaining in popularity inside the statistics community in the early paper of \cite{neal1992bayesian};
   see, for example \cite[chapter 9]{liu2001monte}, \cite{neal-hmc} and \cite{girolami2011riemann}.
 This method is now the main inference engine of popular probabilistic programming languages such as Stan \cite{Stan_libcarpenter2017stan},
  PyMC3 \cite{Salvatier2016} and Turing \cite{ge2018t}.
    The HMC algorithm aims to remove the random-walk behavior that plagues most MCMC algorithms:
     the proposals -- obtained by integrating a system of Hamiltonian equations using the leapfrog integrator --
      are far away from the starting position while still having a high probability of being accepted.

   During the previous decade the challenge to avoid a drop in performance was the appropriate tuning of the  parameters of the leapfrog integrator:
    the stepsize $h>0$ and the number of leapfrog steps $T\in \Nset^*$.
   Indeed, the length of the time interval $hT$ along which the Hamiltonian equations are integrated \cite{beyn2014markov}
    controls the sampler's exploration/exploitation trade-off since it changes the distance between the current state and the proposal.
   One option is to fix $T$ and to estimate $h$ with an adaptive mechanism, see \cite{andrieu2008tutorial} for a review.
    Then, $T$ may be selected by cross validation or by using expert knowledge, depending on the context.

     As a major breakthrough, the first NUTS algorithm using slice sampling was introduced in \cite{hoffman2014no}
      as a variant of HMC which selects $T$ automatically by design
       and which finds $h$ using an adaptive mechanism called dual-averaging \cite{nesterov2009primal}.
     The algorithm implemented in the Stan library \cite{Stan_libcarpenter2017stan} has been further developed and improved, in particular by replacing the slicing procedure with a multinomial mechanism \cite{Betancourt}.
     The main idea is to integrate the Hamiltonian equations until the No-U-turn criterion fails,
     corresponding to the moment where the trajectory turns back to the area where it comes from, 
      with the objective of maximizing the distance of the ending point to the starting point.
     Then, a position is sampled on the resulting trajectory. 
     Moreover, this sampling is designed to leave $\pi$ invariant and to encourage the selection of points in areas of high density (relatively to $\pi$) far from the starting point.

      More generally, different dynamic and adaptive implementations of HMC selecting the integration time according to the context have been proposed,
       to cite a few: HMC with randomized integration time (rHMC, see e.g. \cite{livingstone2019geometric} and references therein) to reduce the effect of ``resonant" behavior,
        ChEES-HMC \cite{hoffman2021adaptive} to allow parallel computations on GPU
         or the recently suggested Apogee-to-Apogee Path Sampler \cite{sherlock2023apogee}, closely related to NUTS, to choose the integration time dynamically.
 There is substantial empirical evidence about their success in many challenging inference problems \cite{tang2014learning, schofield2016model, brauner2021inferring, van2021bayesian, harrer2021doing, yu2022beyond}
   but precise theoretical results about their behaviour (apart from rHMC) are scarce compared to the original HMC \cite{Durmus2017-tf, livingstone2019geometric, byrne2013geodesic, betancourt2017geometric, bou2017randomized, Bou-Rabee2018-ly, Chen2023hmc, Bou-Rabee2023, Gouraud2022}.

        The goal of the present paper is to derive primary theoretical guarantees for NUTS.
        As far as we know, our results are the first to imply the convergence of NUTS. 
         More precisely, our contributions are as follows.
         First, we present a general framework for dynamic HMC algorithms and prove a condition on their reversibility and invariance.
         The condition and its proof are transparent, yet general enough to encompass NUTS \cite{hoffman2014no, Betancourt}. 
         Second, as the primary contribution of our work, we prove the irreducibility of the current Stan implementation of NUTS.
         Classical results depending on the regularity of the transition kernel \cite{mengersen1996rates, roberts1996exponential} and recent results for basic HMC \cite{Durmus2017-tf, livingstone2019geometric} do not apply directly.
         In particular, establishing classical regularity conditions for the transition kernel using a nonzero probability of an one-step transition is ruled out by the construction of NUTS, necessitating the use of global information on HMC trajectories.
          Our irreducibility results (\Cref{thm:ergodicity_compile}) apply without any restrictions on the step size or maximum number of steps to the cases where $-\log \pi$ is real analytic with vanishing Hessian at infinity, 
        or alternatively with extra conditions on the step size under less stringent regularity assumptions.\footnote{
            Note that if $-\log \pi$ grows faster than quadratically in every direction, HMC and its dynamic variants fail to be geometrically ergodic due to the instability of the leapfrog integrator.
            See \Cref{sec:ergodicity-of-dynamic-hmc} for an extended discussion of the growth rates of potentials for which geometric ergodicity is possible.
          }
          The conditions that we consider highlight the regularity of the No-U-Turn stopping rule which is at the heart of the dynamic trajectory selection of NUTS (see \Cref{hyp:3} in \Cref{ergodicity_section}).
          The technical challenges are further discussed at the end of \Cref{sec:ergodicity-of-dynamic-hmc}.
          
         Third, we establish geometric ergodicity of NUTS under similar conditions as the ones considered in \cite{Durmus2017-tf} for HMC, without any additional smoothness assumptions on the potential $-\log \pi$.
         Finally, our considerations of HMC trajectories allow us to weaken the conditions on the stepsize proposed in \cite{Durmus2017-tf} to establish the ergodicity of basic HMC.
         More specifically, we remove any boundedness condition on the stepsize in the case where $\pi$ has the same tail behavior as a Gaussian.

        \paragraph*{Outline}
        The paper is organized as follows.
        In \Cref{sec:dynamic-hmc},
        we present a class of MCMC methods we call Dynamic HMC which  encompasses NUTS and HMC as particular cases. In addition, we provide conditions ensuring that the target distribution is invariant for the resulting Markov kernel. 
        In \Cref{sec:nuts-its-invariance}, we verify that these conditions are met for the NUTS implentation in Stan as an illustrative and comprehensive example.
        Conditions under which the NUTS implementation in Stan is ergodic and $\lyapD$-uniformly geometrically ergodic are presented in \Cref{ergodicity_section,section:ergo_geo}
         respectively.
         Finally, some properties related to the irreducibility of the HMC algorithm, which are of independent interest, are stated in \Cref{section:general_HMC}. 

             All results are more oriented toward a qualitative understanding than a quantitative analysis since the constants are only sometimes tractable.
     Nevertheless, this work can be a starting point for more quantitative analysis. 

\subsection{Notation}

We denote by $\mathcal{P}(\msx)$ the power set of a set $\msx$,
 integer ranges by $[k:l]=\{k,\ldots,l\}\subset \mathcal{P}(\Zset)$ and $ [l]=[1:l]$ with $k,l\in \Nset$, and
  the sets of non-negative and positive real numbers by
 $\mathbb{R}_{\geq0}$and $\mathbb{R}_{>0}$, respectively.
The set  $\rset^d$ is endowed with the Euclidean scalar product $\psr{\cdot}{\cdot}$, the corresponding norm $|\cdot|$ and Borel $\sigma$-field $\mathcal{B}(\mathbb{R}^d)$.    
Denote by $\mathbb{F}(\mathbb{R}^d)$ the set of Borel measurable functions on $\mathbb{R}^d$ and for $f \in \mathbb{F}(\mathbb{R}^d),\|f\|_{\infty}=\sup _{x \in \mathbb{R}^d}|f(x)|$. 
    The Lebesgue measure is denoted by $\Leb$.
    For $\mu$ a probability measure on $(\mathbb{R}^d, \mathcal{B}(\mathbb{R}^d))$ and $f \in \mathbb{F}(\mathbb{R}^d)$ a $\mu$-integrable function, 
    denote by $\mu(f)$ the integral of $f$ with respect to $\mu$. 
    Let $\VlyapD: \mathbb{R}^d \to [1, \infty)$ be a measurable function. 
    For $f \in \mathbb{F}(\mathbb{R}^d)$, the $\VlyapD$-norm of $f$ is given by $\|f\|_{\VlyapD}=\|f / \VlyapD\|_{\infty}$. 
    For two probability measures $\mu$ and $\nu$ on $(\mathbb{R}^d, \mathcal{B}(\mathbb{R}^d))$, the $\VlyapD$-total variation distance of $\mu$ and $\nu$ is defined as
$$
\|\mu-\nu\|_\VlyapD=\sup _{f \in \mathbb{F}(\mathbb{R}^d),\|f\|_\VlyapD \leq 1} \abs{\int_{\mathbb{R}^d} f(x) \mathrm{d} \mu(x)-\int_{\mathbb{R}^d} f(x) \mathrm{d} \nu(x)} \eqsp.
$$
If $\VlyapD \equiv 1$, 
then $\|\cdot\|_{\VlyapD}$ is the total variation denoted by $\|\cdot\|_{\mathrm{TV}}$. 
For any $x \in \mathbb{R}^d$ and $M>0$ we denote by $\mathrm{B}(x, M)$,
the Euclidean ball centered at $x$ with radius $M$.
Denote by $\mathrm{I}_n$ the identity matrix.
    Let $k \geq 1$. Denote by $(\mathbb{R}^d)^{\otimes k}$ the $k^{\text {th }}$ tensor power of $\mathbb{R}^d$, 
    for any $x \in \mathbb{R}^d, y \in \mathbb{R}^{\ell}$, $x \otimes y \in(\mathbb{R}^d)^{\otimes 2}$ the tensor product of $x$ and $y$, and $x^{\otimes k} \in(\mathbb{R}^d)^{\otimes k}$ the $k^{\text {th }}$ tensor power of $x$.
     We equip product spaces with the norm $\|x_1 \otimes \cdots \otimes x_k\|=\sup _{i \in\{1, \ldots, k\}}|x_i|$,
     where $x_1, \ldots, x_k \in \mathbb{R}^d$.
      We let $\mathcal{L}((\mathbb{R}^d)^{\otimes k}, \mathbb{R}^{\ell})$ stand for the set of linear maps from $(\mathbb{R}^n)^{\otimes k}$ to $\mathbb{R}^{\ell}$ and for $\mathrm{L} \in \mathcal{L}((\mathbb{R}^d)^{\otimes k}, \mathbb{R}^{\ell})$,
       we denote by $\|\mathrm{L}\|$ the operator norm of $\rml$.
       Let $f: \mathbb{R}^d \rightarrow \mathbb{R}^{d}$ be a Lipschitz function, namely there exists $C \geq 0$ such that for any $x, y \in \mathbb{R}^d,|f(x)-f(y)| \leq C|x-y|$.
        Then we denote by
       $\|f\|_{\text {Lip }}=\inf \left\{|f(x)-f(y)| /|x-y| \mid x, y \in \mathbb{R}^d, x \neq y\right\}$. Let $k \geq 0$ and $\mathrm{U}$ 
       be an open subset of $\mathbb{R}^d$.
        Denote by $\mathrm{C}^k(\mathsf{U}, \mathbb{R}^{d})$ the set of all $k$ times continuously differentiable funtions from $\mathsf{U}$ to $\mathbb{R}^{d}$.
        Let $\Phi \in \mathrm{C}^k(\mathsf{U}, \mathbb{R}^{d})$.
         Write $\dd^k \Phi: \msu \to \mathcal{L}((\mathbb{R}^d)^{\otimes k}, \mathbb{R}^{\ell})$ for the $k^{\text {th }}$ differential of $\Phi \in \mathrm{C}^k(\mathbb{R}^d, \mathbb{R}^{\ell})$. For $x \in \rset^d$, denote by $\dd^k \Phi(x)$ the $k$-th differential of $\Phi$ at $x$. 
         For smooth enough functions $f: \mathbb{R}^d \to \mathbb{R}$, denote by $\nabla f$ and $\nabla^2 f$ the gradient and the Hessian of $f$ respectively.
         Let $\msa \subset \mathbb{R}^d$.
          We write $\overline{\msa}, \msa^{\circ}$ and $\partial \msa$ for the closure, the interior and the boundary of $\msa$, respectively.
        For any $n_1, n_2 \in \mathbb{N}, n_1>n_2$, we take the convention that $\sum_{k=n_2}^{n_1}=0$.
       We denote  for any not empty sets $\msa,\msc \subset (\Rset^d)^2 $, $\dist((q,p),\msa)=\inf_{(q',p')\in \msa} \dist((q,p),(q',p')) $ and $\dist(\msc,\msa)=\inf_{(q,p)\in \msc} \dist((q,p),\msa)$.
       The space of real matrices with $d\in \Nset^*$ rows and $c\in \Nset^*$ columns is identified with $\rset^{d \times c}$ and the space of square symmetric matrices is denoted by $\symset_d(\Rset)=\{\mathbf{A}\in \rset^{d\times d} \, : \, \bfA^{\top} = \bfA\}$.

\section{Dynamic HMC algorithms}
\label{sec:dynamic-hmc}


The aim of this section is to present a general framework for discussing HMC and its many variants.
We introduce a general HMC scheme that includes the basic HMC algorithm \cite{duane1987hybrid, neal1992bayesian}, its randomized version and the dynamic No U-Turn Sampler \cite{hoffman2014no} as special cases.
Despite its generality, our scheme admits a simple sufficient condition on its constituents for the invariance of the target distribution $\pi$, unifying and simplifying the existing case-by-case analysis of invariance of HMC-type algorithms.
For ease of presentation we start by introducing HMC and the related concepts and objects that are necessary for our analysis.
However, an exhaustive introduction is out of the scope of this work and we refer to \cite{Geo_integratorsbou2018geometric,Betancourt} for more detail and motivation.

\subsection{Hamiltonian Monte Carlo}

We assume that the target distribution $\pi$ admits a positive density (still denoted by $\pi$) with respect to the Lebesgue measure of the form
$\pi(x) \propto \exp(-U(x))$ with a twice differentiable potential function $U: \Rset^d \to \Rset$.
We define the extended target distribution $\tpi = \pi\otimes \gauss(0,\Idd)$, with density with respect to the Lebesgue (still denoted by $\tpi$), $\tpi(q,p)\propto \exp(-H(q,p))$ with the Hamiltonian function $H: (\rset^d)^2 \to \rset$ given by\begin{equation}
    \label{eq:def_ham}
    H(q, p) = U(q) + \tfrac{1}{2} p^\top p \eqsp.
\end{equation}
We assume that the potential $U$ satisfies the following.

\begin{assumption}
\label{hyp:regularity}
    $U$ is continuously twice differentiable on $\Rset^d$ and the map $q \mapsto \nabla U(q)$ is $\ltt_1$-Lipschitz: for any $q,q'\in\rset^d$,
\begin{equation}
  \label{eq:3}
    | \nabla U(q) -  \nabla U(q')| \leq \ltt_1 |q-q'| \eqsp.
\end{equation}
\end{assumption}
The HMC algorithm and its extensions rely on the Hamiltonian dynamics associated with $U$, defined by Hamilton's equations
\begin{equation}
    \label{eq:hamiltonian_system}
    \frac{\dd q_t}{\dd t} = \frac{\partial H}{\partial p} (q_t,p_t) =  p_t\eqsp,
    \quad
    \frac{\dd p_t}{\dd t}=-\frac{\partial H}{\partial q} (q_t,p_t)= -\nabla U(q_t)\eqsp.
\end{equation}
Under \Cref{hyp:regularity} any initial condition $(q_0, p_0)$ gives rise to an unique solution to \eqref{eq:hamiltonian_system} and moreover it is well-known (see e.g., \cite{Geo_integratorsbou2018geometric}) that the Hamiltonian dynamics preserves the extended target distribution $\tpi$ in the sense that $\tpi(q_t, p_t) = \tpi(q_0, p_0)$ for any initial condition $(q_0, p_0)$ and the associated solution $(q_t, p_t)_{t \geq 0}$.
Since Hamiltonian dynamics also preserves the volume of the phase space $(\Rset^d)^2$, it follows that if the initial condition is drawn randomly as $(q_0, p_0) \sim \tpi$ then also $(q_t, p_t) \sim \tpi$ for all $t \geq 0$.
Given a fixed time horizon $t_{f}>0$ and a sequence of \iid~$\gauss(0,\Idd)$ random variables $(G_k)_{k\in\nset}$, then the \emph{ideal} HMC algorithm consists in defining a Markov chain $(Q_k^{\mathrm{ideal}},P_k^{\mathrm{ideal}})_{k \in\nset}$ such that for $k=1, 2, \dots$, $(Q_k^{\mathrm{ideal}},P_k^{\mathrm{ideal}})$ is the solution of \eqref{eq:hamiltonian_system} at $t_f$ starting from $(Q_{k-1}^{\mathrm{ideal}},G_{k-1})$.
The marginal chain $(Q_k^{\mathrm{ideal}})_{k \in\nset}$ then targets the desired probability distribution $\pi$.

Simulating the ideal Markov chain $(Q_k^{\mathrm{ideal}}, P_k^{\mathrm{ideal}})$ exactly is computationally infeasible since in all but few special cases Hamilton's equations \eqref{eq:hamiltonian_system} need to be solved numerically.
The most common implementation of HMC uses the leapfrog integrator associated with \eqref{eq:hamiltonian_system}. 
Given a time step $h > 0$ and a current point $(q_0,p_0) \in (\rset^d)^2$, one leapfrog step is defined as
\begin{equation}
\label{eq:iteration_verlet}
  (q_{1},p_{1})  = \Phiverlet^{(1)}_h(q_0,p_0) \eqsp, \quad \Phiverlet^{(1)}_h =  \Psiverlet^{(1)}_{h/2} \circ \Psiverlet^{(2)}_h \circ \Psiverlet^{(1)}_{h/2} \eqsp,
\end{equation}
where for each $t \in \Rset_{\geq0}$, the momentum and position update maps $\Psiverlet^{(1)}_t, \Psiverlet^{(2)}_t :(\Rset^d)^2 \to (\Rset^d)^2$ are given by
\begin{equation}
    \label{eq:def_Psiverlet_0}
    \Psiverlet^{(1)}_t(q,p) = (q, p-t\nabla \F(q)) \eqsp,
    \quad
    \Psiverlet^{(2)}_t(q,p) = (q+tp, p) \eqsp
\end{equation}
for any $(q,p) \in (\Rset^d)^2$.
Note that $\Phiverlet[h][1]$ is a volume-preserving bijection $(\Rset^d)^2 \to (\Rset^d)^2$, which implies that likewise its inverse $\Phiverlet[h][-1]$ and more generally any iterate $\Phiverlet[h][\ell]$, $\ell \in \Zset$, is also a volume-preserving bijection $(\Rset^d)^2 \to (\Rset^d)^2$.

In contrast to ideal Hamiltonian dynamics the leapfrog integrator $\Phiverlet[h][\ell]$, for any $\ell \in\nsets$, does not preserve the extended target $\tpi$ and to ensure that HMC is invariant for $\tpi$ a Metropolis accept-reject step has to be added as follows.
Given a number of leapfrog steps $T\in\nsets$ and a sequence of \iid~$\gauss(0,\Idd)$ random variables $(G_k)_{k\in\nset}$, then the HMC algorithm consists in defining a Markov chain $(Q_k,P_k)_{k \in\nset}$ such that for $k=1, 2, \dots$, (1) a proposal $(\tilde{Q}_k, \tilde{P}_k) = \Phiverlet[h][T](Q_{k-1},G_k)$ is first generated, (2) which is accepted, i.e., we set $(Q_k,P_k) = (\tilde{Q}_k,\tilde{P}_k)$, with probability $1\wedge \exp[H(Q_{k-1},P_{k-1}) -H(\tilde{Q}_k,\tilde{P}_k)]$ and rejected, i.e., set $(Q_k,P_k) = (Q_{k-1},P_{k-1})$, otherwise.
Again, the marginal chain $(Q_k)_{k \in \nset}$ correctly targets $\pi$.

The choice of the number of leapfrog steps $T$ is crucial for the efficiency of the algorithm.
If the integration time $T h$ is small, the algorithm reduces to a random walk and diffusive exploration due to the resampling of the momentum at every iteration.
On the other hand if $T h$ is large, the approximate Hamiltonian trajectories will loop back to previously explored neighborhoods and the increased computation time yields little benefit.
To fully realize the algorithm's potential of making long moves in the state space while maintaining computational efficiency, it is essential to strike the right balance between these extremes.
These observations highlight the critical role of $T$ in the expected performance of the algorithm.
In particular, as observed for example in \cite[Section 4.3]{Betancourt}, even for simple models it turns out that the optimal integration time at iteration $k$ for ideal HMC depends on the current point of the algorithm.
Adjusting and choosing the integration time $T h$ dynamically based on the current state is the main achievement of the NUTS algorithm \cite{hoffman2014no} that has proven to be remarkably robust and efficient across a wide range of statistical applications.
We defer the detailed presentation of NUTS to \Cref{sec:nuts-its-invariance} and continue here by presenting a generalization that, hopefully, makes the details easier to understand.

\subsection{General framework for dynamic HMC algorithms}
\label{sec:general-dynamic-hmc}

For the scheme in \Cref{def:dynamic-hmc} we need the following concepts and notation.
Let $h > 0$, $\Kmax \in \nset$ and let an \emph{orbit selection kernel}
\begin{equation}
    \rmpp_h = \{ \rmpp_h(\cdot \mid q_0, p_0) : (q_0, p_0) \in (\Rset^d)^2 \} \eqsp
\end{equation}
be a family of probability distributions on $\mathcal{P}([-2^{\Kmax}: 2^{\Kmax}])$.
In the dynamic HMC scheme below, an orbit selection kernel defines the probabilities $\rmpp_h(\msj \mid q_0, p_0)$ of considering samples from the orbit $\orbit_\msj(q_0, p_0) = \{ \Phiverlet[h][j](q_0, p_0): j \in \msj\}$, where the size of the index set $\msj \subset [-2^{\Kmax}: 2^{\Kmax}] \subset \Zset$ is bounded above by a constant.\footnote{
    While the restriction to a constant upper bound on the lengths of orbits is somewhat artificial from the theoretical point of view, and technically excludes algorithms such as HMC with randomised integration time where the number of steps is taken to have an unbounded distribution (exponential distribution is a common choice) from the class of dynamic HMC algorithms as defined here, all practical algorithms have some effective limitation on the number of leapfrog steps taken during a single iteration.
    In particular, the NUTS algorithm explicitly incorporates the bound $2^{\Kmax}$ for the number of leapfrog steps so we take the convenient opportunity here to introduce the notation.

    Some of our results include an assumption that bounds the total allowed integration time $h 2^{\Kmax}$.
    In the cases where such a bound is assumed, we usually write $\Kmax = \Kmax(h)$ to emphasize the interdependence of the admissible values of $h$ and $\Kmax$.
}
Let an \emph{index selection kernel}
\begin{equation}
    \rmqq_h = \{ \rmqq_h(\cdot \mid \msj, q_0, p_0) : \msj \subset [-2^{\Kmax}: 2^{\Kmax}], (q_0, p_0) \in (\Rset^d)^2 \}
\end{equation}
be a family of probability distributions on the index sets $\msj \subset [-2^{\Kmax}: 2^{\Kmax}]$, indexed by the orbit index sets $\msj \subset [-2^{\Kmax}: 2^{\Kmax}]$ selected by $\rmpp_h$ and the associated initial points $(q_0, p_0) \in (\Rset^d)^2$ in the phase space.
An index selection kernel defines the probability $\rmqq_h(j \mid \msj, q_0, p_0)$ of choosing the leapfrog iterate $\Phiverlet[h][j](q_0, p_0) \in \orbit_\msj(q_0, p_0)$ as the next state of the Markov chain when the current state is $(q_0, p_0)$ and the orbit has been selected.

\begin{definition}
    \label{def:dynamic-hmc}
    We define the dynamic HMC scheme associated to an orbit selection kernel $\rmpp_h$ and index selection kernel $\rmqq_h$ as the Markov chain $(Q_k)_{k \in \nset}$ defined by the following steps that define $Q_{k+1}$ given $Q_k$:
    \begin{enumerate}[wide, labelwidth=!, labelindent=0pt,label=(\arabic*)]
        \item Sample $P_{k+1}$ with distribution $\gauss(0, \Idd)$.
        \item Sample $\msi_{k+1}$ with distribution $\rmpp_h(\cdot \mid Q_k, P_{k+1})$.
        \item Sample $J_{k+1}$ with distribution $\rmqq_h(\cdot \mid \msi_{k+1}, Q_k, P_{k+1})$.
        \item Set $Q_{k+1} = \operatorname{proj}_1 \{\Phiverlet[h][J_{k+1}](Q_k,P_{k+1})\}$, where $\operatorname{proj}_1: (\Rset^d)^2 \to \mathbb{R}^d$ is the projection onto the first $d$ coordinates, i.e., from the phase space to the position coordinates.
    \end{enumerate}
\end{definition}
To specialize the general scheme to a specific algorithm, the orbit selection kernel $\rmpp_h$ and index selection kernel $\rmqq_h$ should be chosen so that the procedure leaves the desired target distribution $\pi$ invariant.
For example, the basic HMC algoritm with number of steps $T \in \nset$ may be realized as the dynamic HMC scheme with the orbit selection defined deterministically by
\begin{equation}
    \label{eq:hmc-orbit-selection}
    \rmpp_h^{\mathrm{HMC}}(\{0, T\} \mid q_0, p_0) = 1 \eqsp
\end{equation}
and the index selection probabilities via the Metropolis acceptance rate as
\begin{multline}
    \label{eq:hmc-index-selection}
    \rmqq_h^{\mathrm{HMC}}(\cdot \mid \{0,T\},q_0,p_0)
    =
    \left( 1 \wedge \frac{\tpi(\Phiverlet[h][T](q_0,p_0))}{\tpi(q_0,p_0)} \right) \updelta_T(\cdot)
\\    + \left(1-1\wedge \frac{\tpi(\Phiverlet[h][T](q_0,p_0))}{\tpi(q_0,p_0)}\right) \updelta_0(\cdot)
    \eqsp .
\end{multline}
More generally, \Cref{prop:trajectory-invariance} below gives a simple sufficient condition for the dynamic HMC algorithm associated to a particular choice of $\rmpp_h$ and $\rmqq_h$ to be invariant with respect to $\pi$.

Before stating the invariance result we make a few further comments on the general dynamic HMC scheme.
The transition kernel of the dynamic HMC algorithm associated to $\rmpp_h$ and $\rmqq_h$ has the form
\begin{gather}
    \Kker_{h}(q_0, \msa)
    =
    \int \dd p_0 \, \rho_{0}(p_0) \tilde{\Kker}_h((q_0, p_0), \msa)\eqsp,
    \qquad \text{where}
    \label{eq:transition-kernel}
    \\
    \tilde{\Kker}_h((q_0, p_0), \msa) = \sum_{\msj \subset \zset} \sum_{j \in \msj}  \rmpp_h(\msj \mid q_0, p_0) \rmqq_h(j \mid \msj, q_0, p_0) \updelta_{\operatorname{proj}_1(\Phiverlet[h][j](q_0, p_0))}(\msa) \eqsp
\end{gather}
for $q_0 \in \Rset^d$ and $\msa \in \mcbb(\Rset^d)$, and where we use the natural convention that $\rmpp_h(\msj \mid q_0, p_0) = 0$ for $\msj \not \subset [-2^{\Kmax}:2^{\Kmax}\}$ and where $\rho_{0}(\cdot)$ denotes the density of $\gauss(0, \Idd)$ on $\Rset^d$.
We refer to as $\tilde{\Kker}_h$ as an extended deterministic dynamic HMC kernel. 
Note that because of the dependence of $\rmqq_h$ and $\rmpp_h$ on the momentum $p_0$, the kernel $\Kker_h$ cannot generally be expressed as a position-dependent mixture of basic HMC kernels, i.e., in the form
$
    \Kker_h'(q_0,\msa)
    =
    \sum_{j\in \mathbb{Z}} \varpi_j(q_0) \int \dd p_0 \ \rho_{0}(p_0)
    \updelta_{\operatorname{proj}_1(\Phiverlet[h][j](q_0, p_0))}(\msa)
$
where $\{\varpi_j(q_0)\}_{j\in\zset}$ is a sequence of non-negative weights which sum to $1$ for all $q_0$.
Thus we emphasize that the dynamic HMC scheme presented here is significantly more general than mixtures of HMC kernels with different numbers of leapfrog steps.
In particular the scheme is general enough to encompass e.g., NUTS \cite{hoffman2014no}, the Apogee-to-Apogee Path Sampler \cite{sherlock2023apogee} and other algorithms where the orbit selection is defined dynamically via a stopping time.

The following proposition enables us to choose appropriate distributions $\rmpp_h$ and $\rmqq_h$ so that $\pi$ is invariant for $\Kker_h$.
The proposition and its proof unifies and generalizes existing invariance proofs of HMC algorithms in the literature, particularly that in \cite[Appendix A]{Betancourt}.
\begin{proposition}\label{prop:trajectory-invariance}
    If the orbit selection kernel $\rmpp_h$ and index selection kernel $\rmqq_h$ satisfy
    \begin{equation}
        \begin{aligned}
            &\tpi(q_0, p_0) \rmpp_h(\msj \mid q_0, p_0)
            \\ & \qquad =
            \sum_{j \in \mathbb{Z}} \mathbbm{1}_{\msj}(0) \tpi(\Phiverlet[h][-j](q_0, p_0)) \rmpp_h(\msj+j \mid \Phiverlet[h][-j](q_0, p_0)) \rmqq_h(j \mid \msj+j, \Phiverlet[h][-j](q_0, p_0)) \eqsp,
        \end{aligned}
        \label{eq:trajectory-invariance}
    \end{equation}
    for all $q_0,p_0 \in \Rset^d$ and $\msj \subset \Zset$, the transition kernel $\Kker_h$ defined in \eqref{eq:transition-kernel} leaves the target measure $\pi$ invariant.
\end{proposition}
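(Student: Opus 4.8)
The plan is to verify invariance in its dual form: it suffices to show that $\int \pi(\dd q_0)\, \Kker_h f(q_0) = \pi(f)$ for every bounded measurable $f : \Rset^d \to \Rset$, and splitting $f$ into its positive and negative parts we may assume $f \geq 0$. Since $\tpi = \pi \otimes \gauss(0,\mathbf{M})$, the formula \eqref{eq:transition-kernel} gives $\int \pi(\dd q_0)\, \Kker_h f(q_0) = \int \tpi(q_0,p_0)\, \tilde{\Kker}_h f(q_0,p_0)\, \dd q_0\, \dd p_0$, where $\tilde{\Kker}_h f(q_0,p_0) = \sum_{\msj}\sum_{j \in \msj} \rmpp_h(\msj \mid q_0,p_0)\, \rmqq_h(j \mid \msj, q_0,p_0)\, f(\proj_1 \Phiverlet[h][j](q_0,p_0))$. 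Because $\rmpp_h(\msj \mid \cdot) = 0$ unless $\msj \subset [-2^{\Kmax}:2^{\Kmax}]$, every sum appearing here is finite, so it commutes freely with the integral.

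Next I would treat each term indexed by a pair $(\msj, j)$ with $j \in \msj$ through the change of variables $(q,p) = \Phiverlet[h][j](q_0,p_0)$. Since $\Phiverlet[h][j]$ is a volume-preserving bijection of $(\Rset^d)^2$ with inverse $\Phiverlet[h][-j]$ and satisfies $\proj_1 \Phiverlet[h][j](q_0,p_0) = \proj_1(q,p)$, that term becomes $\int \tpi(\Phiverlet[h][-j](q,p))\, \rmpp_h(\msj \mid \Phiverlet[h][-j](q,p))\, \rmqq_h(j \mid \msj, \Phiverlet[h][-j](q,p))\, f(\proj_1(q,p))\, \dd q\, \dd p$. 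Writing $\msj = \msi + j$ — so that $j \in \msj$ corresponds to $0 \in \msi$ — and summing over $j \in \Zset$ for each fixed $\msi \ni 0$ (only finitely many $j$ survive), the inner $j$-sum is exactly the right-hand side of hypothesis \eqref{eq:trajectory-invariance} evaluated at $(q,p)$ with orbit index set $\msi$ (the factor $\mathbbm{1}_\msi(0)$ there equals $1$ since $0 \in \msi$), hence equal to $\tpi(q,p)\, \rmpp_h(\msi \mid q,p)$. This yields $\int \pi(\dd q_0)\, \Kker_h f(q_0) = \sum_{\msi \ni 0} \int \tpi(q,p)\, \rmpp_h(\msi \mid q,p)\, f(\proj_1(q,p))\, \dd q\, \dd p$.

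To conclude, I would observe that \eqref{eq:trajectory-invariance} applied to any $\msj$ with $0 \notin \msj$ has a vanishing right-hand side, so $\rmpp_h(\msj \mid q,p) = 0$ there because $\tpi > 0$; as $\rmpp_h(\cdot \mid q,p)$ is a probability distribution, this gives $\sum_{\msi \ni 0} \rmpp_h(\msi \mid q,p) = 1$ for every $(q,p)$. Therefore the previous display equals $\int \tpi(q,p)\, f(\proj_1(q,p))\, \dd q\, \dd p = \big( \int \rho_{\mathbf{M}}(p)\, \dd p \big) \int \pi(q) f(q)\, \dd q = \pi(f)$, which is the desired identity.

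The change of variables and the rearrangement of finite sums are routine; the one genuinely delicate point is the index arithmetic, i.e.\ recognizing that pushing forward by $\Phiverlet[h][j]$ turns the orbit index set $\msj$ into its shift $\msj - j$ relative to the new base point, so that summing over the ``incoming'' index $j$ reconstructs the right-hand side of \eqref{eq:trajectory-invariance} term by term. In essence \eqref{eq:trajectory-invariance} is a skew detailed balance relation for the chain augmented by the selected orbit, aggregated over the deterministic leapfrog orbits, and the proof simply unfolds it.
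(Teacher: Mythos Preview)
Your proof is correct and follows essentially the same route as the paper's: expand $\int \pi\, \Kker_h f$ via $\tpi$ and $\tilde{\Kker}_h$, swap the finite sums with the integral, apply the volume-preserving change of variables $(q,p)=\Phiverlet[h][j](q_0,p_0)$ term by term, reindex $\msj \mapsto \msj - j$, and invoke \eqref{eq:trajectory-invariance}. Your final paragraph makes explicit a point the paper leaves implicit---that \eqref{eq:trajectory-invariance} applied to $\msj \not\ni 0$ forces $\rmpp_h(\msj\mid\cdot)=0$, so the remaining $\msi$-sum collapses to $1$---but this is a clarifying elaboration rather than a different argument.
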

\begin{proof}
    The proof is a straightforward computation, presented in Section 1.1 of the Supplementary Material A.
\end{proof}

The orbit selection probabilities $\rmpp_h(\msj \mid q_0, p_0)$ and $\rmpp_h(\msj+j \mid \Phiverlet[h][-j](q_0, p_0))$ in \eqref{eq:trajectory-invariance} refer to the same orbit in phase space, as
\begin{align}
    \label{eq:6}
    \orbit_{\msj+j}(\Phiverlet[h][-j](q_0, p_0))
    & =
    \{ \Phiverlet[h][\ell](\Phiverlet[h][-j](q_0, p_0)) \mid \ell \in \msj+j \}
    \\
    & =
    \{ \Phiverlet[h][\ell](q_0, p_0) \mid \ell \in \msj \}
    =
    \orbit_\msj(q_0, p_0)\eqsp,
\end{align}
and the index selection probability $\rmqq_h(j \mid \msj+j, \Phiverlet[h][-j](q_0, p_0))$ in \eqref{eq:trajectory-invariance} is the probability of choosing $\Phiverlet[h][j](\Phiverlet[h][-j](q_0, p_0)) = (q_0, p_0)$ from the orbit started at $\Phiverlet[h][-j](q_0, p_0)$.
Thus the meaning of the condition \eqref{eq:trajectory-invariance} is that for the invariance of the dynamic HMC kernel to hold it is sufficient that \emph{on every fixed phase space trajectory separately}\footnote{
    To be precise, by phase space trajectory we mean an orbit $\orbit_\msj(q_0, p_0)$ together with the indexing information.
    The concept could be formally defined as an equivalence class of pairs of index sets and initial points $(\msj, (q_0, p_0))$ with pairs considered equivalent, $(\msj', (q', p')) \equiv (\msj, (q, p))$, if $(q', p') = \Phiverlet[h][-j](q, p)$ and $\msj' = \msj + j$ for some $j \in \msj$.
    We will, however, not explicitly work with this formulation.
}
the index selection kernel $\rmqq_h$ leaves invariant the finitely supported measure on $\msj$ defined by the induced weights $\tpi(\Phiverlet[h][-j](q_0, p_0)) \rmpp_h(\msj+j \mid \Phiverlet[h][-j](q_0, p_0))$.
In particular, for a given $\rmpp_h$, the choice of $\rmqq_h$ to guarantee the invariance of $\pi$ for $\Kker_h$ according to \Cref{prop:trajectory-invariance} reduces to a problem of designing invariant Markov kernels on \emph{finite} state spaces.

As far as we are aware, in all dynamic HMC algorithms currently in use the orbit selection kernel $\rmpp_h$ is symmetric in the sense of the following corollary, which gives the invariance condition \eqref{eq:trajectory-invariance} a particularly simple form.

\begin{corollary}
    \label{cor:trajectory-invariance-symmetric}
Suppose the orbit selection kernel $\rmpp_h$ satisfies the symmetry condition
    \begin{align}
        \label{eq:trajectory-condition2}
        \rmpp_h(\msj + j \mid \Phiverlet[h][-j](q_0, p_0)) = \rmpp_h(\msj \mid q_0, p_0) \eqsp
    \end{align}
    for all $(q_0, p_0) \in (\Rset^d)^2$, $\msj \subset \Zset$ and $-j \in \msj$. Then, the invariance condition \eqref{eq:trajectory-invariance} is equivalent to
    \begin{equation}
        \tpi(q_0, p_0)
        =
        \sum_{j \in \mathbb{Z}} \mathbbm{1}_{\msj}(0) \tpi(\Phiverlet[h][-j](q_0, p_0)) \rmqq_h(j \mid \msj+j, \Phiverlet[h][-j](q_0, p_0)) \eqsp .
    \end{equation}
\end{corollary}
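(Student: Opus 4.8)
The plan is to establish the equivalence by a direct substitution with essentially no analytic content: the whole point is that the symmetry hypothesis \eqref{eq:trajectory-condition2} lets the factor $\rmpp_h(\msj\mid q_0,p_0)$ be pulled outside the sum on the right-hand side of \eqref{eq:trajectory-invariance}. So fix $(q_0,p_0)\in(\Rset^d)^2$ and $\msj\subset\Zset$ and work with \eqref{eq:trajectory-invariance} for these fixed data.

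First I would record which indices $j$ actually contribute to the sum in \eqref{eq:trajectory-invariance}. By the convention adopted after \eqref{eq:transition-kernel}, $\rmpp_h(\msj+j\mid\cdot)=0$ unless $\msj+j\subset[-2^{\Kmax}:2^{\Kmax}]$, so the sum is finite; moreover, an orbit selection kernel only charges index sets containing $0$ (the current point always lies on the selected orbit), so $\rmpp_h(\msj+j\mid\Phiverlet[h][-j](q_0,p_0))=0$ unless $0\in\msj+j$, i.e.\ unless $-j\in\msj$. Hence only indices $j$ with $-j\in\msj$ contribute, and for those \eqref{eq:6} identifies $\orbit_{\msj+j}(\Phiverlet[h][-j](q_0,p_0))$ with $\orbit_{\msj}(q_0,p_0)$, so that \eqref{eq:trajectory-condition2} applies verbatim with the very same $q_0,p_0,\msj,j$ and gives $\rmpp_h(\msj+j\mid\Phiverlet[h][-j](q_0,p_0))=\rmpp_h(\msj\mid q_0,p_0)$.

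Substituting this into \eqref{eq:trajectory-invariance}, the factor $\rmpp_h(\msj\mid q_0,p_0)$ no longer depends on $j$ and can be taken out of the sum, so \eqref{eq:trajectory-invariance} becomes
\[
  \tpi(q_0,p_0)\,\rmpp_h(\msj\mid q_0,p_0)
  =\rmpp_h(\msj\mid q_0,p_0)\sum_{j\in\Zset}\mathbbm{1}_{\msj}(0)\,\tpi(\Phiverlet[h][-j](q_0,p_0))\,\rmqq_h(j\mid\msj+j,\Phiverlet[h][-j](q_0,p_0))\eqsp.
\]
If $\rmpp_h(\msj\mid q_0,p_0)>0$ we cancel this factor to get exactly the claimed identity; if $\rmpp_h(\msj\mid q_0,p_0)=0$, then both sides of \eqref{eq:trajectory-invariance} vanish — the right-hand side because every contributing summand carries the factor $\rmpp_h(\msj+j\mid\Phiverlet[h][-j](q_0,p_0))=\rmpp_h(\msj\mid q_0,p_0)=0$ — so \eqref{eq:trajectory-invariance} is automatically satisfied and imposes no constraint on such $\msj$. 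The converse is the same computation run backwards: multiply the simplified identity by $\rmpp_h(\msj\mid q_0,p_0)$ and reinsert it into the summands via \eqref{eq:trajectory-condition2}.

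I do not expect a genuine obstacle here; the entire content is bookkeeping of the index shifts $\msj\leadsto\msj+j$ and the base-point shifts $(q_0,p_0)\leadsto\Phiverlet[h][-j](q_0,p_0)$. The only point needing a little care is the degenerate regime — indices $j$ with $-j\notin\msj$, and index sets $\msj$ of zero $\rmpp_h$-mass — where the two identities are ``equivalent'' only in the sense that the non-contributing terms vanish identically on both sides and \eqref{eq:trajectory-invariance} is vacuous (and, implicitly, $\rmqq_h(\cdot\mid\msj,q_0,p_0)$ is immaterial when the orbit $\msj$ is never selected); making this precise is all one must be attentive to.
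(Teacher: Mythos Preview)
Your proposal is correct and is precisely the elaboration the paper leaves implicit: the paper's own proof is the single line ``Immediate from \eqref{eq:trajectory-invariance} and \eqref{eq:trajectory-condition2}'', and your substitute--factor--cancel argument is the natural unpacking of that sentence, including the careful treatment of the degenerate case $\rmpp_h(\msj\mid q_0,p_0)=0$.
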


\begin{proof}
    Immediate from \eqref{eq:trajectory-invariance} and \eqref{eq:trajectory-condition2}.
\end{proof}


In \Cref{sec:nuts-its-invariance} we present the NUTS algorithm as a particular instance of the dynamic HMC scheme and show that it defines a Markov kernel which admits $\tpi$ as invariant probability measure using \Cref{prop:trajectory-invariance}.

\subsection{Ergodicity of dynamic HMC}
\label{sec:ergodicity-of-dynamic-hmc}
The main focus of this work is to study the ergodic properties of dynamic HMC algorithms, in particular the NUTS algorithm of Hoffman and Gelman \cite{hoffman2014no} and its more recent developments \cite{Betancourt}.
As background we review some existing results in the literature, though a comprehensive survey is out of the scope of this work.
We focus here on results that contextualize our main results for NUTS, Theorems \ref{thm:ergodicity_compile} and \ref{thm:ergo_geo}, in terms of earlier contributions towards similar theoretical guarantees for HMC-type algorithms and also the fundamental limits of such algorithms.

We say that a Markov kernel $\Kker$ on $\Rset^d$ with the invariant
measure $\pi$ is $\pi$-ergodic if for  $\pi$-almost every $x\in\rset^d$,
\begin{equation}
  \label{eq:pi_ergodic}
  \lim_{k \to \plusinfty} \|\Kker^k(x, \cdot) - \pi\|_{\mathrm{TV}} = 0 \eqsp.
\end{equation}
Note that this property implies a strong law of large numbers for $\pi$-integrable functions.
In addition, recall that, 
for a measurable function $\VFL: \Rset^d \to [1, \infty)$, a Markov kernel $\Kker$ on $\Rset^d$ with the invariant measure $\pi$ is said to be $\VFL$-uniformly ergodic if there exist  $C > 0$ and $\gamma \in (0, 1)$ for which for any $x \in \Rset^d$ and $k \in \Nset$,
\begin{equation}
    \label{eq:v-uniform-ergodicity}
    \|\Kker^k(x, \cdot) - \pi\|_{\VFL} \leq C \gamma^k \VFL(x) \eqsp.
  \end{equation}

We first discuss the fundamental limitation on the range of potentials for which ergodicity and $\VFL$-uniform ergodicity of HMC algorithms may hold.
It is apparent from the definition \eqref{eq:iteration_verlet} of the leapfrog integrator that if $|\nabla U(q_0)|$ is large compared to $|q_0|$ and $h |p_0|$, the discretized dynamics is unable to accurately track the continuous dynamics described in \eqref{eq:hamiltonian_system}.
The loss of stability of the leapfrog integrator is well-known in the case of rapidly growing potentials, particularly in the tails, as discussed in more detail in \cite{Geo_integratorsbou2018geometric, Livingstone2017-ub}.
For the usual quadratic kinetic energy as in \eqref{eq:hamiltonian_system}, the limit for stability is at quadratic potentials.
However, since the definitions \eqref{eq:pi_ergodic} and \eqref{eq:v-uniform-ergodicity} of $\pi$-ergodicity and $\VFL$-uniform ergodicity guarantee convergence from ($\pi$-almost) any starting point $x \in \Rset^d$, the instability of the leapfrog integrator in the tails indicates that geometric ergodicity may not be expected to hold for potentials exhibiting growth faster than quadratic.
We emphasize that this limitation concerning the tails of the target distribution reflects the well-known instability of the leapfrog integrator, which has practical implications for tuning HMC algorithms.


The Metropolis-adjusted Langevin algorithm (MALA) can be seen as a variant of the basic HMC algorithm with a single leapfrog step. Establishing ergodicity for MALA is relatively straightforward under mild regularity conditions on the potential function $U$, as discussed in \cite{roberts1996exponential}. Specifically, the MALA kernel $\Kker_h^{\mathrm{MALA}} = \Kker_{h, 1}^{\mathrm{HMC}}$ corresponds to a Metropolis-Hastings kernel with a position-dependent Gaussian proposal. In this case, it is evident that any open set can be reached from any point with a positive probability in a single iteration.
However, achieving $\VFL$-uniform ergodicity requires additional conditions on $U$,  especially with regards to its growth at infinity. Specifically, it requires that the potential exhibits at most quadratic growth as explained in more detail in \cite{roberts1996exponential} and other references. The proof strategies for establishing $\VFL$-uniform ergodicity are relatively more involved \cite{roberts1996exponential,durmus2022geometric}.

Livingstone et al. \cite{livingstone2019geometric} investigate HMC kernels of the form
\begin{equation}
    \label{eq:rhmc-transition-kernel}
    \Kker_h^{\mathrm{rHMC}} = \sum_{j=1}^J \varpi_j \Kker_{h,j}^{\mathrm{HMC}},
\end{equation}
In this form, $\Kker_h^{\mathrm{rHMC}}$ is a sum of HMC kernels, each with a different number of leapfrog steps. The weights $(\varpi_j)_j$ are nonnegative and sum up to 1, with $\varpi_1$ being positive. Additionally, a upper bound $J \in \Nset$ is specified for the non-zero weights.
Therefore, irreducibility and as result ergodicity of HMC established in \cite[Section 5.1]{livingstone2019geometric} are direct consequences of the ones of MALA since $\varpi_1 >0$. 
Durmus et al. \cite{Durmus2017-tf} relax the restriction $\varpi_1 > 0$ for achieving ergodicity of HMC kernels in the form of equation \eqref{eq:rhmc-transition-kernel}. Notably, their results cover cases with a deterministic number of leapfrog steps $T \geq 2$ and no restriction on the step size $h > 0$.
Our main results, as presented in \Cref{thm:ergodicity_compile} and \Cref{thm:ergo_geo}, establish both ergodicity and $\VFL$-uniform ergodicity for the NUTS algorithm described in detail in \Cref{sec:nuts-its-invariance}. These results have broad applicability for potentials that exhibit growth slower than quadratic at infinity, with additional assumptions in the case of quadratic growth.

A major challenge in establishing our results arises from the fact that while some NUTS variants include a MALA component in their transition kernel the specific variant we deal with, which is used in recent versions of Stan, does not.
As a result, conventional irreducibility arguments based on accessibility in one step, as seen in \cite{roberts1996exponential, livingstone2019geometric}, are insufficient.
Further, existing results on basic HMC establish strong control of leapfrog orbits only for short integration times $Th > 0$ (see e.g. \cite{Chen2023hmc, Bou-Rabee2023, Gouraud2022} and our \Cref{thm:trajectory-transitivity}) and on the other hand the use of degree theory to control the orbits for longer integration times as in \cite{Durmus2017-tf} seems difficult to adapt to the analysis of the dynamically defined stopping time.
Thus we need to employ a new proof strategy incorporating global information about leapfrog orbits.

\section{NUTS and its invariance}
\label{sec:nuts-its-invariance}

The original NUTS (No-U-Turn Sampler) algorithm developed by Hoffman and Gelman \cite{hoffman2014no} has undergone further development, and the current variant implemented in recent versions of Stan \cite{Stan_libcarpenter2017stan} and other probabilistic programming frameworks (such as PyMC3 \cite{Salvatier2016} and Turing.jl \cite{ge2018t}) has some differences from the original algorithm.

In this section, we provide a precise description of the algorithm that we analyze along with a comprehensive and detailed proof of its invariance. We have made efforts to align our algorithm with the current version of Stan (2.32), but since a complete description of Stan's algorithm is not readily available outside of the program code, there may be some differences, and certain minor differences are intentional.
In summary, our algorithm implements the original NUTS stopping rule for orbit selection \cite{hoffman2014no} while excluding the energy check (see Section 8 of Supplementary Material A), and what is called biased progressive sampling in \cite{Betancourt} for index selection.

We will now proceed with the detailed presentation of our algorithm, outlining its key components and steps.

\subsection{The NUTS algorithm}
Implementation of one iteration of the NUTS algorithm is given as pseudo-code in \Cref{alg:nuts-doubling}.
Given an initial position and momentum $q_0,p_0 \in \rset^d$ and sequences $(V_k)_{k =0}^{\Kmax}$, $(\bar{U}_{k})_{k=0}^{\Kmax}$ and $(\tilde{U}_{k})_{k=0}^{\Kmax}$  of \iid~random variables with distribution
$\mathrm{Ber}(1/2)$, $\mathrm{Unif}([0,1])$ and $\mathrm{Unif}([0,1])$, \Cref{alg:nuts-doubling} gives as output  
$(q_{j_f},I_f,j_f)$ satisfying $q_{j_f} = \operatorname{proj}_1 \{\Phiverlet[h][j_f](q_0,p_0)\}$ by definition and
$I_f,j_f$ are measurable transformations of $(V_k,\bar{U}_k,\tilde{U}_k)_{k =0}^{\Kmax}$ and $(q_0,p_0)$ which satisfy $j_f \in I_f$. 
As a result, \Cref{alg:nuts-doubling} defines an extended dynamic HMC kernel and its related dynamic HMC kernel in the sense of \eqref{eq:transition-kernel} that we denote by $\KkerUE_h$ and $\KkerU_h$, respectively. 
The orbit selection kernel associated to $\KkerU_h$, i.e., the distribution of $I_f$ given $(q_0, p_0)$, will be denoted by $\rmp_h$ and the index selection kernel associated to $\KkerU_h$, i.e., the distribution of $j_f$ given $I_f$ and $(q_0, p_0)$, will be denoted by $\rmq_h$.

We briefly describe the construction of the random variables $I_f, j_f$.
The random interval $I_f \subset \zset$ is defined recursively through the sequence of random intervals $(I_k)_{k=0}^{\Kmax}$ starting from $I_0 = \{0\}$.
The interval $I_{k+1}$ is the union of $I_k$ and an interval $I^{\text{new}}_k$ with $|I^k| = |I^{\text{new}}_k|$, which is to the left of $I_k$ if $V_k=0$ and to the right of $I_k$ otherwise; see \Cref{I_construction}.
Given the intervals $(I_k)_{k=0}^{\Kmax}$, the final interval is defined as $I_f = I_{K_f}$ where $K_f+1$ is a stopping time that indicates that an U-turn has occurred in the trajectory associated to $I_k$ as described in the sub-routine \Cref{alg:nuts-uturn}.
Note that since the intervals $(I_k)_{k=0}^{\Kmax}$ have lengths $2^k$ they may be naturally indexed by an increasing sequence of complete binary trees of depths $k$.
Once $I_f$ has been constructed the index $j_f \in I_f$ is chosen so that the transition $0 \to j_f$ leaves the induced target measure on $I_f$ (i.e., the measure on $I_f$ with weights $\tpi(\Phiverlet[h][j](q_0, p_0))$ for $j \in I_f$) invariant and so that $j_f$ is as far as possible from $0$ on $I_f$ in terms of the binary tree induced by the construction; see \Cref{scheme_mul_prob}.
We refer to \cite{Betancourt} for more intuition and an alternative presentation of this construction.
In addition, more detail on these random variables is given in \Cref{sectionrmp} and \Cref{sectionrmq} where explicit expressions for $\rmp_h$ and $\rmq_h$ are given and where we define the notation required for our theoretical analysis.

Our pseudo-code implementation for the simulation of $\KkerUE_h((q_0,p_0),\cdot)$ in \Cref{alg:nuts-doubling} is different from the implementation used in Stan, which relies instead on a recursive construction of the full binary trees which appear in the indexing of the intervals $(I_k)_{k=0}^{\Kmax}$.

\begin{minipage}{.60\linewidth}
\begin{algorithm}[H]
    \caption{
        Doubling dynamic orbits with the No U-Turn stopping rule.
    }
    \label{alg:nuts-doubling}
    \begin{algorithmic}
        \Statex \textbf{Input}
                   \Statex  initial position and momentum $(q_0, p_0)\in (\Rset^d)^2$
        \Statex            leapfrog parameters $h>0$
        \Statex            sequences of \iid~random variables 
        \Statex            $(V_k)_{k =0}^{\Kmax}$, $(\bar{U}_{k})_{k=0}^{\Kmax}$ and $(\tilde{U}_{k})_{k=0}^{\Kmax}$ with distribution
\Statex            $\mathrm{Ber}(1/2)$, $\mathrm{Unif}([0,1])$ and $\mathrm{Unif}([0,1])$  respectively
\Statex \textbf{Initialize} $I_0 \gets \{ 0 \}$, $I_0' \gets \{ 0 \}$,  $k \gets 0$, $j_0 \gets 0$,
NoUTurns$\gets$ True 
        \While {NoUTurns and $k < \Kmax$}
            \If {$V_k > 0$}
            \Statex $I^{\text{new}}_k=\bigcup_{l=1}^{2^k} \{ \max I'_{l} + l \} $
            \Statex $I_{k+1}' \gets I_k' \cup I^{\text{new}}_k$
                \For {$\ell = \max I'_{k}  + 1 \, : \, \max I'_{k} + 2^k$}
                    \Statex $q_\ell, p_\ell \gets \Phiverlet[h][1](q_{\ell-1}, p_{\ell-1})$
                \EndFor
            \Else
            \Statex $I^{\text{new}}_k \gets \bigcup_{l=1}^{2^k} \{ \min I'_k - l \}$
            \Statex $I'_{k+1}  \gets I'_k \cup I^{\text{new}}_k$
                \For {$\ell = \min I'_k - 1 \, : \, \min I'_k - 2^k$}
                    \Statex $q_\ell, p_\ell \gets \Phiverlet[h][-1](q_{\ell+1}, p_{\ell+1})$
                \EndFor
            \EndIf
            \Statex $j_{k+1}\gets j_k $
            \Statex NoUTurns $\gets \operatorname{NoUTurns}(I_{k+1} ', \mathcal{O}_{I_{k+1}'}(q_0,p_0))$ 
            \If {NoUTurns}
            \Statex $I_{k+1} \gets I_{k+1} '$
            \Statex $(j_{k+1},i_{k+1}') \gets$ Update($j_k,I^{\text{new}}_k,I_k,q_0,p_0,\bar{U}_k,\tilde{U}_k$)
                \EndIf
            \Statex $k \gets k+1$
        \EndWhile
        \Statex $I_f \gets I_k $,  $j_f \gets j_k $
        \Statex \textbf{return} $(q_{j_f},I_f,j_f)$
    \end{algorithmic}
\end{algorithm}
\end{minipage}
\begin{minipage}{.4\linewidth}
\begin{algorithm}[H]
    \caption{
        U-turn checking.
    }
    \label{alg:nuts-uturn}
    \begin{algorithmic}
        \Statex \textbf{Subroutine} $\operatorname{NoUTurns}(\msj, \orbit_\msj(q_0,p_0))$
        \Statex \textbf{Input}
            set $\msj \subset \mathbb{Z}$ of $2^k$ consecutive integers for some $k > 0$,
            orbit $\orbit_\msj(q_0,p_0)$ indexed by $\msj$
                  \Statex \textbf{Initialize} $K \gets \log_2 |\msj|$, $j_0 \gets \min \msj$, $\mathrm{UTurnFound} \gets \mathrm{False}$  
        \For {$k = 1, \dots, K-1$ \footnote{Note that the first step is always accepted \phantom{aaa} in the index set $I$.}}
        \For {$h = 1, \dots, 2^{K-k}$}
\State         $\ell \gets j_0 +(h-1)2^k$
\If {$p_{\ell + 2^k-1}^\top (q_{\ell + 2^k-1} - q_{\ell}) < 0$ \\
\quad \qquad   or $p_{\ell}^\top (q_{j_0 + h 2^k - 1} - q_{\ell}) < 0$}
                    \State $\mathrm{UTurnFound} \gets \mathrm{True}$
                \EndIf
            \EndFor
        \EndFor
        \If {$\mathrm{UTurnFound}$} \textbf{return} $\mathrm{False}$
        \Else \,\textbf{return} $\mathrm{True}$
        \EndIf
    \end{algorithmic}
  \end{algorithm}
\end{minipage}
\vspace{4mm}

We stress that the distinction is solely computational in that the recursive implementation is considerably more memory-efficient; there is no difference in terms of the Markov transitions defined by the different implementations.
We consider the details of the fully recursive implementation in Section 7 of the Supplementary Material A.
\begin{algorithm}[t]
    \caption{
        Elementary step of the recursive sampling on the index set
    }
    \label{alg:nuts-samplingj}
    \begin{algorithmic}
        \Statex \textbf{Subroutine} Update($j_k, I^{\text{new}}_k, I_k,q_0,p_0,\bar{U}_k,\tilde{U}_k$)
        \Statex \textbf{Input}
            $j_k$ the current index,
            index sets $I^{\text{new}}_k,I_k$,
           initial states $q_0,p_0$,             random variables $\bar{U}_k,\tilde{U}_k$
        \State $\bar{\pi}_{i} \gets \left. \tpi(\Phiverlet[h][i](q_0,p_0))\middle. / \sum_{m \in I^{\text{new}}_k} \tpi(\Phiverlet[h][m](q_0,p_0))\right. $, for $i \in I_k\cup I_k^{\text{new}}$
        \State $i'_{k+1} \sim \operatorname{Multinomial}((\bar{\pi}_i)_{i \in I^{\text{new}}_k},I^{\text{new}}_k,\tilde{U}_k)$
         \State $\bar{V}_k =  \1\defEns{\bar{U}_k \leq [1 \wedge   \sum_{m \in I^{\text{new}}_k} \bar{\pi}_m / \sum_{m \in I_k} \bar{\pi}_m]}$
        \State $j_{k+1} \gets j_k $
        \If {$\bar{V}_k=1$}
        		\State $j_{k+1} \gets i'_{k+1}$
	\EndIf
	\State \textbf{return} $(j_{k+1},i_{k+1}')$
    \end{algorithmic}
  \end{algorithm}

  \begin{algorithm}[t]
      \begin{algorithmic}
        \Statex \textbf{Subroutine} Multinomial($(\bar{\pi}_i)_{i \in I^{\text{new}}_k},I^{\text{new}}_k,\tilde{U}_k$)
        \Statex \textbf{Input}
            $(\bar{\pi}_i)_{i\in I_k^{\text{new}}}$ the weights, the
            index set $I^{\text{new}}_k$,
            random variable $\tilde{U}_k$
        \State $i_{k+1}' \gets \sum_{i\in I_k^{\text{new}}} i \mathbbm{1}_{[\sum_{j\in I_k^{\text{new}}: j\leq i-1} \bar{\pi}_j:\sum_{j\in I_k^{\text{new}}: j\leq i} \bar{\pi}_j] }(\tilde{U}_k)$\footnote{$\sum_{i\in \emptyset}i=0$}
	\State \textbf{return} $i_{k+1}'$
    \end{algorithmic}
\end{algorithm}

\begin{figure}[!h]
    \begin{center}
    \includegraphics[width=100mm]{images/I_trajectory_correct.png}
    \end{center}
    \caption{Scheme of the construction of the index set $I_f$ in  \Cref{alg:nuts-doubling} based on \cite[Figure 1]{hoffman2014no}}
    \label{I_construction}
  \end{figure}

In the rest of this section we give explicit expressions for the orbit and index selection kernels $\rmp_h$ and $\rmq_h$ and use \Cref{prop:trajectory-invariance} to prove, in full detail, that $\pi$ is invariant for the NUTS transition kernel $\KkerU_h$.
An entirely different, but also fully mathematically detailed, proof has been given in \cite{andrieu2020general} for the original slice variant of NUTS.
We rely on the more general \Cref{prop:trajectory-invariance} and show that the orbit and index selection kernels $\rmp_h$ and $\rmq_h$ satisfy the assumption and conclusion of \Cref{cor:trajectory-invariance-symmetric}; the details consist in the more or less straightforward but laborious matter of unpacking the complexity inherent in \Cref{alg:nuts-doubling} and \Cref{alg:nuts-samplingj}.
An alternative, considerably less detailed, presentation of the ideas is given in \cite{Betancourt}.
\begin{theorem} 
    \label{thm:invariance_target}
    Assume \Cref{hyp:regularity}.
    The target distribution $\pi$ is invariant for $\KkerU_h$.
\end{theorem}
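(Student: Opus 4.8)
The plan is to apply \Cref{prop:trajectory-invariance} via its corollary \Cref{cor:trajectory-invariance-symmetric}. By construction, $\KkerU_h$ is the kernel $\Kker_h$ of \eqref{eq:transition-kernel} associated to the orbit selection kernel $\rmp_h$ (the law of the output $I_f$ of \Cref{alg:nuts-doubling} given $(q_0,p_0)$) and the index selection kernel $\rmq_h$ (the conditional law of $j_f$ given $I_f$ and $(q_0,p_0)$). So it suffices to: (i) extract closed-form expressions for $\rmp_h$ and $\rmq_h$ by unwinding \Cref{alg:nuts-doubling,alg:nuts-uturn,alg:nuts-samplingj}, as carried out in \Cref{sectionrmp,sectionrmq}; (ii) verify that $\rmp_h$ satisfies the symmetry condition \eqref{eq:trajectory-condition2}; and (iii) verify, under \eqref{eq:trajectory-condition2}, the reduced invariance identity appearing in the conclusion of \Cref{cor:trajectory-invariance-symmetric}. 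Then \Cref{cor:trajectory-invariance-symmetric} upgrades (iii) to \eqref{eq:trajectory-invariance} and \Cref{prop:trajectory-invariance} gives that $\pi$ is invariant for $\KkerU_h$, as claimed.

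The crux of step (ii) is that the No U-Turn stopping rule is equivariant under relabelling of an orbit. One first checks that the nested sets $I_0\subset I_1\subset\cdots$ built by \Cref{alg:nuts-doubling} are the initial segments of the binary-tree (min-anchored dyadic) decomposition of $I_f$ determined by the requirement $0\in I_k$, and that this tree is exactly the one the test in \Cref{alg:nuts-uturn} traverses; consequently $\operatorname{NoUTurns}(\msj,\orbit_\msj(q_0,p_0))$ depends on $(q_0,p_0)$ only through the indexed orbit $(\Phiverlet[h][i](q_0,p_0))_{i\in\msj}$, and not through which orbit point carries the label $0$. Combining this with the fact that the directions $(V_k)$ are i.i.d.\ $\mathrm{Ber}(1/2)$, one reads off a factorisation of $\rmp_h(\msj\mid q_0,p_0)$ into (a) the factor $|\msj|^{-1}$ accounting for the unique direction sequence that realises $\msj$ from the starting point; (b) the indicator that no proper dyadic sub-interval of $\msj$ triggers a U-turn; and (c) a termination factor, equal to $1$ if $|\msj|=2^{\Kmax}$ and otherwise to $\tfrac12\sum_{v\in\{0,1\}}\1\{\text{a further doubling on side }v\text{ triggers the U-turn rule}\}$. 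All three factors depend only on the indexed orbit, hence are unchanged under $\msj\mapsto\msj+j$, $(q_0,p_0)\mapsto\Phiverlet[h][-j](q_0,p_0)$ for $-j\in\msj$; this is precisely \eqref{eq:trajectory-condition2}.

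For step (iii), \eqref{eq:trajectory-condition2} reduces the goal, via \Cref{cor:trajectory-invariance-symmetric}, to showing that $\rmq_h$ preserves the $\tpi$-weights along the selected orbit in the sense displayed there. Since $I_f$ and $(q_0,p_0)$ determine the intermediate sets and hence the whole doubling tree (as in step (ii)), this amounts to a finite-state claim: for any balanced binary tree of orbit points with positive leaf weights proportional to $\tpi(\Phiverlet[h][i](q_0,p_0))$, the leaf-to-leaf transition $z_0\mapsto z_{j_f}$, where $z_i:=\Phiverlet[h][i](q_0,p_0)$, induced by \Cref{alg:nuts-samplingj} is reversible with respect to these weights; the reduced identity then follows by summing detailed balance over the leaves. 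We prove reversibility by induction on the tree depth. Depth $0$ is the identity on a single leaf. In the induction step the leaves split into two equal halves of weights $w_L,w_R$; if the starting leaf lies in one half, the first doublings run the sub-tree recursion on that half (reversible by the induction hypothesis) while the last doubling draws $i'_{k+1}$ from the normalised weights on the other half and accepts the jump with probability $1\wedge w_{\text{other}}/w_{\text{this}}$, as encoded by $\bar{U}_k,\tilde{U}_k$ in \Cref{alg:nuts-samplingj}. A two-line computation shows this between-halves move is reversible with respect to the concatenated weights, and combined with the induction hypothesis this gives detailed balance for $z_0\mapsto z_{j_f}$ on the whole tree, closing the induction.

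The conceptual content of (ii)--(iii) is light; the main obstacle is the bookkeeping. One must (1) reconcile the interval-based implementation in \Cref{alg:nuts-doubling} with the recursive tree-based implementation used in Stan (see \Cref{equivalence_section}); (2) keep careful track of the distinction between the orbit merely \emph{considered} at a doubling, $I'_{k+1}$, and the one \emph{retained}, $I_{k+1}$, and of the two causes of termination ($k=\Kmax$ versus a detected U-turn); and (3) handle cleanly the relabelling bijection between phase-space trajectories and index sets together with the conditioning present in the identity of \Cref{cor:trajectory-invariance-symmetric}. None of this is deep, but together it accounts for the length of the argument -- the laborious but routine unpacking of \Cref{alg:nuts-doubling,alg:nuts-samplingj} alluded to above.
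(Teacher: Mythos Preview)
Your proposal is correct and follows the same high-level route as the paper: apply \Cref{prop:trajectory-invariance} through \Cref{cor:trajectory-invariance-symmetric} by checking the symmetry \eqref{eq:trajectory-condition2} for $\rmp_h$ and detailed balance for $\rmq_h$ on each orbit. Your treatment of step~(ii) matches the paper's \Cref{lemma:nuts-trajectory-probability} and \Cref{prop:rmpcondition_check} almost verbatim, just phrased as a factorisation rather than a case split on $K<\Kmax$ versus $K=\Kmax$.

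The one genuine difference is step~(iii). The paper first derives a closed-form expression for $\hat\rmq_{h,K}(a,b)$ (\Cref{lemma:definition_continuity_q}) by unrolling the recursion \eqref{def:recur_def_rmq}, and then checks the detailed-balance identity $\hat\pi_K(a)\hat\rmq_{h,K}(a,b)=\hat\pi_K(b)\hat\rmq_{h,K}(b,a)$ by direct algebra on that formula (\Cref{lemma:invariance_q}). You instead run an induction on tree depth: reversibility on each half by the inductive hypothesis, plus a two-line check that the last between-halves accept/reject move is itself reversible for the concatenated weights. Both arguments are valid and of comparable length. Your inductive version is arguably cleaner for the invariance statement alone and avoids the heavy bit-string bookkeeping; the paper's explicit formula, however, is not wasted effort, since it is reused downstream to establish irreducibility of $\bar\rmq_h$ (\Cref{lemma:irreducibility_q}) and continuity in $(q_0,p_0)$, both of which feed into the ergodicity proofs of \Cref{ergodicity_section}.
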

\begin{proof}
    The result is a consequence of \Cref{prop:trajectory-invariance} in \Cref{sec:general-dynamic-hmc}, \Cref{prop:rmpcondition_check} in \Cref{sectionrmp} and \Cref{lemma:invariance_q} in \Cref{sectionrmq}.
\end{proof}

\subsection{The orbit selection kernel $\rmp_h$}
\label{sectionrmp}

We let $(q_0,p_0) \in (\Rset^d)^2$ be fixed throughout this section and specify here the orbit selection kernel $\rmp_h$ for which $\rmp_h(.|q_0,p_0)$ is the distribution of $I_f$, the index set returned by \Cref{alg:nuts-doubling} starting from $q_0,p_0$.
Let $(V_k)_{k=0}^{\Kmax-1} \in \{0,1\}^{\Kmax}$ be the sequence of i.i.d random variables used in \Cref{alg:nuts-doubling}.
The sequence $(I_k)_{k=0}^{\Kmax}$ of subintervals of $\mathbb{Z}$ is constructed recursively via $(V_k)_{k=0}^{\Kmax-1}$ by setting $I_0 = \{0\}$, and
if $V_k=1$ the $2^k$ consecutive integers to the right of $I_k$ are added to define $I_{k+1}$,
and if $V_k=0$ the $2^k$ consecutive integers to the left of $I_k$ are added to define $I_{k+1}$.

We first aim for an explicit expression for $I_k$ in terms of the sequence $(V_k)_{k=0}^{\Kmax-1}$.
To this end, we introduce some additional notation.
We denote binary sequences $(v_k)_{k=0}^{K-1} \in \{0,1\}^{K}$, for $K\in \Nset^*$, by $v_{K-1} \ldots v_0$. 
We identify $\{0,1\}^K$ with $B_K=[0:2^K-1]$ via the bijection $v_{K-1} \dots v_0 \to \sum_{k=0}^{K-1} v_k 2^k$, 
i.e., any element $v\in B_K$ is identified with its unique length $K$ binary representation $v_{K-1}\ldots v_0\in \{0,1\}^K$. 
We define the concatenation of $a \in B_K$ and $b \in B_{K'}$ 
 by $ab = a_{K-1} \dots a_0 b_{K'-1} \dots b_0$ and denote the truncation of $a$ 
 to its $n$ last bits by $a|_n = a_{n-1} \dots a_0$.
 Finally, for $v\in B_K$ we denote
\begin{equation}
    B_K(v) = \{-T_-^{(K)}(v), \dots, T_+^{(K)}(v)\}=B_K-(2^K-1-v)\ee,
    \quad \text{where}
\end{equation}
\begin{equation}
\label{eq:T_-}
    T_-^{(K)}(v) = \sum_{k=0}^{K-1} (1-v_k) 2^k
    \quad \text{and} \quad
    T_+^{(K)}(v) = \sum_{k=0}^{K-1} v_k 2^k=v\ee.
\end{equation}
Equipped with these notations, by setting $V=\sum_{i=0}^{\Kmax-1} V_i 2^i$ we have $I_K=B_K(V|_K)$. 
The construction of $B_K(V|_K)$ is depicted in \Cref{I_construction}.

We define $K_f$ as the step at which the algorithm, starting from $q_0$ and $p_0$, stops due to a U-turn. Moreover, let $I_f(V)$ be the random index set  returned by the algorithm:
$I_f(V) = B_{K_f}(V|_{K_f})$.
 In the following, we show that $K_f=(\funS_f-1)\wedge \Kmax$ where $\funS_f$ is a stopping time for the filtration 
 generated by the i.i.d.~sequence of Bernoulli $(V_k)_{k=0}^{\Kmax-1}$.
 By establishing this relationship, we will be able to express $I_f$ as a function of $(V_k)_{k=0}^{\Kmax-1}$ and  specify $\rmp_h$.
  We now focus on the precise definition of  $\funS_f$ based on \Cref{alg:nuts-uturn}.
  We say that a U-turn occurs between indices $k$ and $k'$ belonging to $[-2^{\Kmax}+1:2^{\Kmax}-1]$ if, 
denoting $(q_k, p_k) = \Phiverlet[h][k](q_0,p_0)$, at least one of the two following inequalities holds:
\begin{equation}
  p_{k'}^\top (q_{k'}-q_k) < 0
    \quad \text{or} \quad
    p_k^\top (q_{k'}-q_k) < 0 \eqsp.
  \end{equation}
 We need to specify the set of pairs of indices in $I_k$ that are considered in \Cref{alg:nuts-uturn}.
Define  
\begin{gather}
    \scrU_{k,l,+}^{(K)} = \{ v\in B_{K}:\ee p_{i_{+}(K,k,l,v)}^\top (q_{i_{+}(K,k,l,v)} - q_{i_{-}(K,k,l,v)}) < 0 \} \eqsp,
    \\
    \scrU_{k,l,-}^{(K)} = \{v\in B_{K}: \ee p_{i_{-}(K,k,l,v)}^\top (q_{i_{+}(K,k,l,v)} - q_{i_{-}(K,k,l,v)}) < 0 \} \eqsp,
\end{gather}
where
$$i_{-}(K,k,l,v)=-T_-^{(K)}(v)+(l-1) 2^k,\qquad i_{+}(K,k,l,v)=-T_-^{(K)}(v) + l 2^k-1 $$
for $v\in B_{K}$, $K\in [\Kmax]$, $k\in [K-1]$ and $l\in [2^{K-k}]$, and further
\begin{equation}
    \label{eq:scrU}
    \scrU_k^{(K)}(q_0,p_0) = \bigcup_{l=1}^{2^{K-k}} \big( \scrU_{k,l,+}^{(K)} \cup \scrU_{k,l,-}^{(K)} \big)
    \quad
    \text{and}
    \quad
    \scrU^{(K)}(q_0,p_0) = \bigcup_{k=1}^{K-1} \scrU_k^{(K)}(q_0,p_0) \eqsp,
  \end{equation}
with the convention $\cup_{k=1}^{0} = \emptyset$.
  Then,  the event $\{V|_K \in \scrU^{(K)}\}$ corresponds to the event that a U-turn occurs  at the $K$-th stage of the algorithm  between two indices in $I_K$; 
more precisely between $-T_-^{(K)}(V|_K) + (l-1) 2^k$ and $-T_-^{(K)}(V|_K) + l 2^k - 1$ for some $k\in [K-1]$ and $l\in [2^{K-k}]$.
It is worth pointing out that, by construction,
the event $\{V|_K \in \scrU^{(K)}\}$ does not consider \emph{all} the pairs of points in $\orbit_{B_K(V|_K)}(q_0,p_0)$. 
For instance, \Cref{alg:nuts-uturn} does not verify if there is a U-turn between the pair of indices $1$ and $2$ or, more generally, between pairs of indices with different parity. 
The reason for this is primarily computational and allows for a significant reduction in the amount of memory used by the algorithm\footnote{
    We remark here that, in one of our minor intentional differences from Stan 2.32, the stopping rule implemented in Stan checks slightly more pairs of indices for U-turns.
    Namely, the sets $\scrU_{k,h,+}^{(K)}$ and $\scrU_{k,h,-}^{(K)}$ are augmented with additional checks given by
    \begin{equation}
        \scrU_{k,h,++}^{(K)} = \{ p_{-T_-^{(K)}(v|_K) + h 2^k}^\top (q_{-T_-^{(K)}(v|_K) + h 2^k} - q_{-T_-^{(K)}(v|_K) + (h-1) 2^k}) < 0 \} \eqsp
    \end{equation}
    and various symmetrizations in order to plug some gaps that are left in the checks in $\scrU_{k,h,+}^{(K)}$ and $\scrU_{k,h,-}^{(K)}$.
    The notation being heavy already, we do not write out these additional checks.
    While the additional checks can make a significant difference for the computational performance of the algorithm, our results and methods are independent of these details.
}.

We are ready to define $\funS_f$. 
Defining $\Sfun: \cup_{K=1}^{\Kmax} B_K \times (\Rset^d)^2\to [\Kmax]$ by
\begin{equation}
    \label{eq:nuts-stopping}
 \Sfun(v,q_0,p_0) = \inf\{k \in [K] \, : \, v|_k \in \scrU^{(k)}(q_0,p_0)\} \eqsp,  \quad v \in B_K \eqsp,
\end{equation}
with $\inf \emptyset = \infty$, we set $\Sfun_f=\Sfun(V,q_0,p_0) $.
Finally, $\rmp_h(\cdot \mid q_0,p_0)$ is defined as the distribution of the random variable $I_f(V)$:
\begin{equation}
    \label{eq:nuts-I}
    I_f(V) = B_{K_f}(V|_{K_f}) \eqsp, \text{ with } K_f=(\funS_f-1)\wedge \Kmax \eqsp.
  \end{equation}
  By construction, $\funS_f$ is a stopping time with respect to the filtration generated by the sequence $(V_k)_{k =0}^{\Kmax - 1}$ and for any $v \in B_{K'}$, $K' \geq K$, it holds that
  \begin{equation}
    \label{eq:prop_funS}
\text{    if $\funS(v,q_0,p_0) = K$} \eqsp, \quad \text{ $\funS(v,q_0,p_0) = \funS(v|_K,q_0,p_0)$ }\eqsp.
  \end{equation}
  With a slight abuse of notation we drop the dependence on $q_0,p_0$ in $\funS$ defined in \eqref{eq:nuts-stopping} and simply denote $\funS(v,q_0,p_0)$ by $\funS(v)$ as long as $q_0$, $p_0$ are considered fixed. 

The following result gives an expression for the orbit selection probabilities $\rmp_h$ that will be used in verifying that the symmetry condition \eqref{eq:trajectory-condition2} in \Cref{cor:trajectory-invariance-symmetric} is satisfied by NUTS.
\begin{lemma}\label{lemma:nuts-trajectory-probability}
    Assume \Cref{hyp:regularity}.
    For any $K\in [\Kmax]$ and $a \in B_K$,
    \begin{equation}
      \label{eq:4}
        \rmp_h(B_K(a) \mid q_0, p_0)
        =
        \begin{cases}
            \sum_{b \in \{0, 1\}} 2^{-K-1} \mathbbm{1}\{ \funS(b a) = K+1 \},
            & K < \Kmax
            \\
            2^{-\Kmax} \mathbbm{1}\{ \funS(a) > \Kmax \},
            & K = \Kmax \eqsp.
        \end{cases}
    \end{equation}
  If $\msj \subset \Zset$ is not of the form $\msj = B_K(a)$ for some $K \in [\Kmax]$ and $a \in B_K$, $\rmp_h(\msj |q_0,p_0) = 0$.
Finally, $(q_0',p_0') \in \mathbb{R}^d \to \rmp_h(B_K(a) \mid q_0', p_0')$ is measurable.
\end{lemma}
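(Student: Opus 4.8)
The plan is to unwind the recursive construction of $I_f$ given in \Cref{alg:nuts-doubling} and express the event $\{I_f = B_K(a)\}$ purely in terms of the Bernoulli sequence $(V_k)$ and the stopping rule $\funS$. First I would establish, as a preliminary combinatorial fact, that the sequence of intervals $(I_k)_{k=0}^{\Kmax}$ produced by the algorithm satisfies $I_k = B_k(V|_k)$: this is a direct induction on $k$, using that $B_{k+1}(V|_{k+1})$ is obtained from $B_k(V|_k)$ by appending $2^k$ consecutive integers on the right when $V_k = 1$ and on the left when $V_k = 0$, which matches exactly the definition of $I^{\text{new}}_k$ and the update $I_{k+1} \gets I_k \cup I^{\text{new}}_k$ in the algorithm. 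In particular every interval returned by the algorithm is of the form $B_K(a)$ with $a = V|_K \in B_K$, which immediately gives the last-but-one claim: $\rmp_h(\msj \mid q_0, p_0) = 0$ whenever $\msj$ is not of this form.

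Next I would analyze the stopping behaviour. By \eqref{eq:nuts-I} the algorithm returns $I_f = B_{K_f}(V|_{K_f})$ with $K_f = (\funS_f - 1) \wedge \Kmax$, where $\funS_f = \funS(V, q_0, p_0)$, so $\{I_f = B_K(a)\}$ with $K < \Kmax$ is the event $\{V|_K = a\} \cap \{\funS(V) = K+1\}$, and by the truncation property \eqref{eq:prop_funS} this equals $\{V|_K = a\} \cap \{\funS((V_K) a) = K+1\}$ where $V_K \in \{0,1\}$ is the bit to be added at stage $K$; note $\funS((V_K)a) = K+1$ already determines that no U-turn occurred at any earlier stage, i.e. $\funS(a) > K$, so the algorithm does not stop before reaching $I_K$. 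Because the bits $V_0, \dots, V_{K-1}$ that build $a$ and the bit $V_K$ are independent and uniform, $\PP(\{V|_K = a\} \cap \{\funS((V_K)a) = K+1\}) = 2^{-K} \sum_{b \in \{0,1\}} 2^{-1} \mathbbm{1}\{\funS(ba) = K+1\}$, which is the first case of \eqref{eq:4}. For $K = \Kmax$ the algorithm exits the while loop either by a U-turn at stage $\Kmax$ or by reaching $k = \Kmax$ without a U-turn; in both situations $I_f = I_{\Kmax} = B_{\Kmax}(V|_{\Kmax})$ exactly when no U-turn occurred at any stage $k \le \Kmax$, i.e. $\{V|_{\Kmax} = a\} \cap \{\funS(V) > \Kmax\} = \{V|_{\Kmax} = a\} \cap \{\funS(a) > \Kmax\}$ using \eqref{eq:prop_funS} again, which has probability $2^{-\Kmax} \mathbbm{1}\{\funS(a) > \Kmax\}$, giving the second case.

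For measurability of $(q_0', p_0') \mapsto \rmp_h(B_K(a) \mid q_0', p_0')$, I would observe that under \Cref{hyp:regularity} the leapfrog flow $\Phiverlet[h][j]$ is continuous (indeed smooth) in the initial condition for every fixed $j$, hence each $(q_j, p_j) = \Phiverlet[h][j](q_0', p_0')$ depends continuously on $(q_0', p_0')$; therefore each U-turn indicator, being of the form $\mathbbm{1}\{p_{k'}^\top(q_{k'} - q_k) < 0\}$, is a composition of a continuous function with the (Borel measurable) indicator of an open half-line, hence Borel measurable in $(q_0', p_0')$. Since $\funS(v, q_0', p_0')$ is a finite Boolean combination of such indicators over the finitely many indices $v|_k \in \scrU^{(k)}$, the map $(q_0', p_0') \mapsto \mathbbm{1}\{\funS(ba) = K+1\}$ (and likewise $\mathbbm{1}\{\funS(a) > \Kmax\}$) is Borel measurable, and \eqref{eq:4} exhibits $\rmp_h(B_K(a) \mid \cdot)$ as a finite linear combination of such maps.

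The main obstacle I anticipate is purely bookkeeping: carefully matching the left/right append convention and the indexing offsets in the algorithm (the $\max I'_k$, $\min I'_k$ bookkeeping and the definitions of $i_\pm(K,k,l,v)$ in \eqref{eq:scrU}) with the clean formula $I_k = B_k(V|_k)$ and the truncation identity \eqref{eq:prop_funS}, and making sure the ``first step always accepted'' convention in \Cref{alg:nuts-uturn} (the inner loop starts at $k=1$, not $k=0$) is correctly reflected so that the stopping time $\funS$ as defined in \eqref{eq:nuts-stopping} genuinely coincides with the stage at which NoUTurns first returns False. Once the correspondence $I_k = B_k(V|_k)$ and the event identity $\{I_f = B_K(a)\} = \{V|_K = a, \funS(a) > K\} \cap (\{\funS((V_K)a) = K+1\} \text{ or } K = \Kmax)$ are nailed down, the probability computation is an elementary use of the independence and uniformity of the $V_k$.
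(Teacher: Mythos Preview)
Your proposal is correct and takes essentially the same approach as the paper. The only cosmetic difference is that the paper expands $\rmp_h(B_K(a)\mid q_0,p_0)$ as a full sum $\sum_{b\in B_{\Kmax}:\,b|_K=a}2^{-\Kmax}\mathbbm{1}\{\funS(b)=K+1\}$ and then collapses the irrelevant high bits via \eqref{eq:prop_funS}, whereas you marginalize those bits out up front using the independence of the $V_k$; both routes rely on the same truncation identity \eqref{eq:prop_funS} and the same measurability argument (indicator of an open half-space composed with a continuous leapfrog map), and your bookkeeping concerns are already absorbed in the paper's setup $I_k=B_k(V|_k)$ and $K_f=(\funS_f-1)\wedge\Kmax$.
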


\begin{proof}
    For $K < \Kmax$ and $a \in B_K$ we have
    \begin{align}
& \textstyle        \rmp_h(B_K(a) \mid q_0, p_0)
         =
            \sum_{b \in B_{\Kmax}: b|_K=a|_K} \mathbb{P}(V_0=b_0, \dots, V_{\Kmax-1}=b_{\Kmax-1})
            \mathbbm{1}\{S(b) = K+1\}
        \\
        &  \textstyle  =
        \sum_{b \in B_{\Kmax}: b|_K=a|_K} 2^{-\Kmax} \mathbbm{1}\{\funS(b) = K+1\}
         =
        \sum_{b' \in B_{\Kmax-K}} 2^{-\Kmax} \mathbbm{1}\{\funS(b'a|_K) = K+1\}
        \\
        &  \textstyle =
        \sum_{b' \in B_{\Kmax-K-1}} \sum_{b'' \in \{0,1\}} 2^{-\Kmax} \mathbbm{1}\{\funS(b'b''a|_K) = K+1\}  \eqsp,
    \end{align}
 which completes the proof of \eqref{eq:4} using  \eqref{eq:prop_funS}.
    The case $K = \Kmax$ is clear and this completes the proof of the first part of the statement.

    Finally, the measurability of $(p,q)\in (\Rset^d)^2 \mapsto \rmp_h(B_K(a)\mid p,q)$ can be deduced from the measurability of $(q,p)\in (\Rset^d)^2  \mapsto S(a,(q,p))=\inf_{k\in [K]} k/[\mathbbm{1}_{\scrU^{(k)}(q,p)}(v)] $, 
    which in turn is implied by the measurability of 
    \begin{equation}
    (q,p)\in (\Rset^d)^2 \mapsto \mathbbm{1}_{\scrU^{(k)}(q,p)}(v)=\mathbbm{1}_{\scrU^{(k)}(v)}(q,p) 
  \end{equation}
    for $k\in [K]$ where $\scrU^{(k)}(v)=\{(q,p) \in (\Rset^d)^2: v\in \scrU^{(k)}(q,p) \}$ are open sets.
    Namely, they are pre-images of open sets under continous functions as the maps $(q,p)\in (\Rset^2)^2\mapsto \Phiverlet[h][j](q,p)$ for $j\in [-2^{K}+1:2^K-1]$ are continuous by \Cref{hyp:regularity}.
  \end{proof}

  We may then deduce 
  \begin{proposition}
    \label{prop:rmpcondition_check}
    Assume \Cref{hyp:regularity}.
    The orbit selection kernel $\rmp_h$ satisfies the condition \eqref{eq:trajectory-condition2}.
\end{proposition}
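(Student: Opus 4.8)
The goal is to verify the symmetry condition $\rmp_h(\msj + j \mid \Phiverlet[h][-j](q_0,p_0)) = \rmp_h(\msj \mid q_0,p_0)$ for all $(q_0,p_0)$, $\msj \subset \Zset$ and $-j \in \msj$. By \Cref{lemma:nuts-trajectory-probability} the only sets of positive probability are of the form $\msj = B_K(a)$ with $K \in [\Kmax]$ and $a \in B_K$, so if $\msj$ is not of this form both sides vanish and there is nothing to prove; thus I may assume $\msj = B_K(a)$. The first step is to understand the action of shifting the base point by $\Phiverlet[h][-j]$ on the combinatorial labels: since $\orbit_{B_K(a)+j}(\Phiverlet[h][-j](q_0,p_0)) = \orbit_{B_K(a)}(q_0,p_0)$ by \eqref{eq:6}, and $j$ ranges over $B_K(a) = \{-T_-^{(K)}(a),\dots,T_+^{(K)}(a)\}$, I want to show that $B_K(a) + j = B_K(a')$ for a suitable $a' \in B_K$ depending on $j$ and $a$, and to identify $a'$ explicitly. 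Concretely, shifting the base point by $-j$ amounts to re-centering the binary-tree-indexed interval so that the new index $0$ sits at old index $j$; the bit string $a'$ is the one for which the new ``distance to the left end'' is $T_-^{(K)}(a) + j$ and ``distance to the right end'' is $T_+^{(K)}(a) - j$, i.e. $a' = a + (j \text{ in the appropriate sense})$, equivalently $B_K(a') = B_K - (2^K - 1 - a')$ forces $a' = a - j$ interpreted via $T_+^{(K)}(a') = T_+^{(K)}(a) - j = v - j$.

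The second and main step is to transfer the stopping-time/U-turn structure under this re-centering. The key observation is that the U-turn sets $\scrU_{k,l,\pm}^{(K)}$ defined via the indices $i_{\pm}(K,k,l,v)$ are intrinsic to the orbit: the pair of points compared, namely $\Phiverlet[h][i_-]$ and $\Phiverlet[h][i_+]$ relative to $(q_0,p_0)$, equals the pair $\Phiverlet[h][i_- + j]$ and $\Phiverlet[h][i_+ + j]$ relative to $\Phiverlet[h][-j](q_0,p_0)$, and the U-turn inequalities $p^\top(q'-q) < 0$ depend only on these phase-space points, not on the base point. Hence I want to prove that the stopping time is invariant: $\funS(b a, q_0, p_0) = \funS(b a', \Phiverlet[h][-j](q_0,p_0))$ for the correspondence $a \leftrightarrow a'$ above and any one-bit extension $b$. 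This requires checking that the family of index pairs checked at each stage $k$ is preserved by the shift — i.e. that $\{(i_-(K,k,l,a) , i_+(K,k,l,a)) : l\}$ shifted by $j$ coincides with $\{(i_-(K,k,l,a'),i_+(K,k,l,a')) : l\}$ — which follows from the fact that both describe exactly the partition of the common interval $B_K(a) = B_K(a')+$(shift) into consecutive blocks of length $2^k$ aligned with the binary tree, and the binary tree structure on the interval is the same regardless of which endpoint is labelled $0$... here one has to be slightly careful because the tree depends on the $V$-labels, which is exactly why the bit strings transform as $a \mapsto a'$.

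With the stopping-time invariance in hand, I would plug into the formula \eqref{eq:4} from \Cref{lemma:nuts-trajectory-probability}: for $K < \Kmax$,
\[
\rmp_h(B_K(a') \mid \Phiverlet[h][-j](q_0,p_0)) = \sum_{b \in \{0,1\}} 2^{-K-1} \mathbbm{1}\{\funS(b a', \Phiverlet[h][-j](q_0,p_0)) = K+1\} = \sum_{b \in \{0,1\}} 2^{-K-1} \mathbbm{1}\{\funS(b a, q_0, p_0) = K+1\} = \rmp_h(B_K(a) \mid q_0, p_0),
\]
and the case $K = \Kmax$ is identical with $\mathbbm{1}\{\funS > \Kmax\}$ in place of $\mathbbm{1}\{\funS = K+1\}$. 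Since $B_K(a') = B_K(a) + j = \msj + j$ by the first step, this is exactly the claimed identity. The main obstacle is the bookkeeping in the second step: pinning down precisely how the bit string $a$ relabels to $a'$ under recentring and checking that the one-bit extension $b$ (encoding the next doubling direction $V_K$) is genuinely unaffected, so that the index pairs examined by the U-turn subroutine at every stage are literally the same subsets of the orbit. Once the combinatorial dictionary between $(a, \text{base point } (q_0,p_0))$ and $(a', \text{base point } \Phiverlet[h][-j](q_0,p_0))$ is set up carefully, the rest is a direct substitution into \eqref{eq:4}.
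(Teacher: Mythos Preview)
Your proposal is correct and follows essentially the same route as the paper: identify the relabelled bit string (the paper calls it $c$, you call it $a'$) with $B_K(c) = \msj + j$, observe that the U-turn checks are intrinsic to the orbit so that $\funS(bc, \Phiverlet[h][-j](q_0,p_0)) = \funS(ba, q_0, p_0)$ for any extension bit $b$, and substitute into \eqref{eq:4}. The paper asserts the stopping-time invariance in one line (``as a result of the construction'') where you spell out the combinatorial bookkeeping; note a minor sign slip in your identification --- it should be $a' = a + j$, since $T_+^{(K)}(a') = T_+^{(K)}(a) + j$.
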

\begin{proof}
    Let $\msj \subset \mathbb{Z}$ with $\rmp_h(\msj \mid q_0, p_0) >0$ and let $-j\in \msj$. 
Then,  by \Cref{lemma:nuts-trajectory-probability}, $\msj = B_K(a)$ for $K \in[\Kmax]$ and $a \in B_K$.
Let $c = c_{K-1} \dots c_0 \in \{0, 1\}^K$ denote the unique binary sequence for which $B_K(c) = \msj+j$.
Then $\orbit_{\msj}(q_0,p_0)=\orbit_{\msj+j}(\Phiverlet[h][-j](q_0,p_0))=\orbit_{B_K(c)}(\Phiverlet[h][-j](q_0,p_0))$ by \eqref{eq:6} and as a result of the construction $S(b'c,\Phiverlet[h][-j](q_0,p_0))=S(b'a,q_0,p_0)$ for any $b'$ in $\{0,1\}$.
It follows from \Cref{lemma:nuts-trajectory-probability} and $\msj+j = B_K(c)$ that
\begin{align}
 \rmp_h(\msj+j\mid \Phiverlet[h][-j](q_0,p_0))&=2^{-K-1}\sum_{b'\in\{0,1\}} \mathbbm{1}\{S(b'c,\Phiverlet[h][-j](q_0,p_0)) = K+1\}\\
 &=2^{-K-1}\sum_{b'\in\{0,1\}} \mathbbm{1}\{S(b'a,q_0,p_0) = K+1\}=\rmp_h(\msj \mid q_0, p_0) \eqsp
\end{align}
for $K < \Kmax$.
 For $K=\Kmax$, applying  \Cref{lemma:nuts-trajectory-probability} again yields
 $$\rmp_h(\msj+j\mid \Phiverlet[h][-j](q_0,p_0))=2^{- \Kmax}=\rmp_h(\msj \mid q_0, p_0) \eqsp.$$

\end{proof}


\subsection{The index selection kernel $\rmq_h$}
\label{sectionrmq}
We still consider $(q_0, p_0) \in (\Rset^d)^2$ to be fixed, and in addition we consider an index set $\msi \subset \Zset$ satisfying $\rmp_h(\msi \mid q_0, p_0) > 0$ to also be fixed throughout this section.
By \Cref{cor:trajectory-invariance-symmetric} and \Cref{prop:rmpcondition_check},
in order to keep the target distribution $\pi$ invariant by $\KkerU_h$ it is sufficient to show that the
finitely supported distribution $\bar{\pi}(\cdot \mid \msi, q_0, p_0)$ defined as
\begin{equation}
    \label{eq:omega_bar}
    \bar{\pi}(j \mid \msi,q_0, p_0 )
    =
    \frac{\tpi(\Phiverlet[h][j](q_0, p_0)) }{\sum_{j'\in \msi} \tpi(\Phiverlet[h][j'](q_0, p_0))} \eqsp,
    \qquad
    j \in \msi \eqsp,
\end{equation}
is invariant for the transition kernel $\bar{\rmq}_h(\cdot, \cdot \mid \msi,q_0,p_0 )$ defined as
\begin{equation}
    \label{eq:transition_prob_q}
    \bar{\rmq}_h(j,a\mid \msi, q_0,p_0 ) = \rmq_h(a-j \mid \msi-j, \Phiverlet[h][j](q_0, p_0)) \eqsp,
    \qquad
    a, j \in \Zset \eqsp.
\end{equation}
A crucial step in showing this property is an explicit expression for the index selection kernel $\rmq_h$ defined by \Cref{alg:nuts-doubling}.

We remark that a simple way to ensure that $\bar{\pi}(\cdot \mid \msi, q_0, p_0)$ is invariant for $\bar{\rmq}_h(\cdot, \cdot \mid \msi, q_0, p_0)$ would be to replace the index selection kernel $\rmq_h$, which was defined as the distribution of $j_f$ from \Cref{alg:nuts-doubling}, with sampling independently from $\bar{\pi}(\cdot \mid \msi, q_0, p_0)$.
This, in fact, was how index selection was implemented in certain older versions of Stan.
However, this choice would not encourage the selection of distant states and thus can be expected to be less efficient, as discussed in \cite{Betancourt}.

As implemented by Algorithms \ref{alg:nuts-doubling} and \ref{alg:nuts-samplingj},
the index selection kernel $\rmq_h$ can be expressed recursively as follows.
Let $v=(v_k)_{k=0}^{\Kmax-1}\in B_{\Kmax}$ and denote, for any $K\in [\Kmax]$, $\msi_{v,K}^\text{old}=B_{K}(v|_{K})$ and $\msi_{v,K}^\text{new}=\msi_{v,K+1}^\text{old}\setminus \msi_{v,K}^\text{old}$.
For $K < S(v, q_0, p_0)$ and $j\in \msi_{v,K}^\text{old}$, the index selection kernel $\rmq_h$ satisfies
\begin{equation}
    \label{def:recur_def_rmq}
    \rmq_h(j \mid \msi_{v,K}^\text{old}, q_0, p_0)
    =
    (1-R_{v|_{K-1}}) \rmq_h(j \mid \msi_{v,K-1}^\text{old}, q_0, p_0)
    + R_{v|_{K-1}} \bar{\pi}(j \mid \msi_{v,K-1}^\text{new}, q_0, p_0),
\end{equation}
where $R_{v|_K} = 1 \wedge [\tpi(\msi_{v, K}^\text{new})/\tpi(\msi_{v, K}^\text{old})]$ with the shorthand notation
\begin{equation}
    \label{eq:def_tile_pi_J}
    \tpi(\msj) = \tpi_{q_0,p_0}(\msj)= \sum_{j\in \msj} \tpi(\Phiverlet[h][j](q_0,p_0)) \eqsp,
    \qquad
    \msj \subset \Zset \eqsp,
\end{equation}
and where $\rmq_h(j\mid\{0\},q_0,p_0)=\mathbbm{1}_0(j)$. 
This mechanism is depicted in  \Cref{scheme_mul_prob}.
It is apparent that this index selection favors the selection of states in an area far from the starting point with a high energy level.

When we expand the recursion starting from $B_K(v|K) = \msi_{v, K}^\text{old}$, we obtain the following formula:
\begin{multline}
    \rmq_h(j \mid B_K(v|_K), q_0, p_0)
    =
    \sum_{k=0}^{K-1} \left(\prod_{\ell=k+1}^{K-1} (1-R_{v|_\ell})\right) R_{v|_k} \bar{\pi}(j \mid \msi_{v, k}^\text{new}, q_0, p_0)
\\    + \mathbbm{1}_{0}(j) \prod_{\ell=0}^{K-1} (1-R_{v|_\ell})\eqsp.
\end{multline}
Note that since $\msi_{v,k}^\text{new}$ and $\msi_{v,k'}^\text{new}$ are disjoint for $k, k' \in [\Kmax]$, $k \neq k'$, exactly one of the terms in the expression above is nonzero for $j \in B_K(v|_K)$.
This is the desired explicit expression for $\rmq_h(\cdot \mid \msi, q_0, p_0)$, though the notation is quite cumbersome for analyzing its properties.

In order to obtain a more manageable form of the explicit expression for $\rmq_h(\cdot \mid \msi, q_0, p_0)$ we perform
the following reduction to the case where $\msi$ is replaced by $B_K$ for some $K\in [\Kmax]$.
For the rest of this section, we let $K\in [\Kmax]$ and $ v\in B_{\Kmax}$ be fixed such that $\msi=B_K(v|_K)=B_K-(2^K-1-v|_K)$. In addition, consider $\iota$, the unique increasing bijection from $\msi$ to $B_K$, i.e.,
\begin{equation}
    \label{eq:iota}
\iota(c)= c+(2^K-1-v|_K) \text{ for any $c\in B_K$ } \eqsp.
\end{equation}
\begin{lemma}
    \label{lemma:comeback}
    Assume \Cref{hyp:regularity}.
 For any $a,b\in B_K$ we have
\begin{equation}
    \bar{\rmq}_h(a,b\mid B_K,\Phiverlet[h][-(2^K-1-v|_K)](q_0, p_0))=\bar{\rmq}_h(\iota^{-1}(a),\iota^{-1}(b) \mid \msi,q_0,p_0) \eqsp.
  \end{equation}
\end{lemma}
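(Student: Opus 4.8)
The plan is to show that, once the definition \eqref{eq:transition_prob_q} of $\bar{\rmq}_h$ is unfolded, \Cref{lemma:comeback} becomes a short bookkeeping identity: it follows from the explicit form \eqref{eq:iota} of the relabelling $\iota$ together with the flow property $\Phiverlet[h][m]\circ\Phiverlet[h][n]=\Phiverlet[h][m+n]$ of the leapfrog iterates. The only role of \Cref{hyp:regularity} here is to guarantee that all the leapfrog iterates occurring below are everywhere well defined.

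Concretely, I would write $s=2^K-1-v|_K$ and $w=\Phiverlet[h][-s](q_0,p_0)$, so that $w$ is exactly the base point appearing on the left-hand side of the lemma and, by \eqref{eq:iota} together with $\msi=B_K(v|_K)=B_K-s$, one has $\iota^{-1}(c)=c-s$ for every $c\in B_K$. Hence, for $a,b\in B_K$, $\iota^{-1}(b)-\iota^{-1}(a)=b-a$, $\;\msi-\iota^{-1}(a)=B_K-a$, and $\Phiverlet[h][\iota^{-1}(a)](q_0,p_0)=\Phiverlet[h][a-s](q_0,p_0)=\Phiverlet[h][a](w)$, the last equality by the flow property. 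Substituting these three identities into \eqref{eq:transition_prob_q} gives, for the right-hand side of the lemma,
\[
\bar{\rmq}_h\bigl(\iota^{-1}(a),\iota^{-1}(b)\mid\msi,q_0,p_0\bigr)=\rmq_h\bigl(b-a\mid B_K-a,\Phiverlet[h][a](w)\bigr),
\]
while applying \eqref{eq:transition_prob_q} directly with orbit index set $B_K$ and base point $w$ gives, for the left-hand side,
\[
\bar{\rmq}_h\bigl(a,b\mid B_K,w\bigr)=\rmq_h\bigl(b-a\mid B_K-a,\Phiverlet[h][a](w)\bigr).
\]
The two right-hand sides coincide, which is precisely the assertion of the lemma.

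I do not expect a genuine obstacle; the one thing requiring care is purely clerical, namely keeping the three simultaneous shifts consistent throughout — the relabelling $\iota$ of indices, the translation $\msi\mapsto B_K$ of the index set, and the matching shift of the orbit's base point by $\Phiverlet[h][-s]$. It is also worth checking in passing that every $\rmq_h$ appearing above is evaluated at admissible arguments: $B_K-a=B_K(2^K-1-a)$ is again of the form $B_K(\cdot)$ since $a\in B_K$, and $\rmp_h\bigl(B_K-a\mid\Phiverlet[h][a-s](q_0,p_0)\bigr)=\rmp_h(\msi\mid q_0,p_0)>0$ by the symmetry property of $\rmp_h$ established in \Cref{prop:rmpcondition_check}, so that $\rmq_h(\cdot\mid B_K-a,\Phiverlet[h][a-s](q_0,p_0))$ is indeed well defined.
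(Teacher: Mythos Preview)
Your proof is correct and follows essentially the same approach as the paper: both arguments unfold the definition \eqref{eq:transition_prob_q} of $\bar{\rmq}_h$ and reduce the claim to the three bookkeeping identities (invariance of differences under $\iota$, the index-set shift $\msi-\iota^{-1}(a)=B_K-a$, and the leapfrog flow property), the only cosmetic difference being that the paper parameterizes by $a,b\in\msi$ and applies $\iota$, whereas you parameterize by $a,b\in B_K$ and apply $\iota^{-1}$. Your closing remark on admissibility via \Cref{prop:rmpcondition_check} is a welcome extra sanity check that the paper omits.
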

\begin{proof}
    See Section 2.1 of the Supplementary Material A.
\end{proof}

For $a,b \in B_K$, denote by
\begin{equation}
  \label{eq:def_hatPi}
    \hat{\rmq}_{h,K}(a,b) = \bar{\rmq}_h(a,b\mid B_K,\Phiverlet[h][-(2^K-1-v|_K)](q_0, p_0)) \eqsp ,
\end{equation}
\begin{equation}
    \label{eq:hat_omega}
    \hat{\pi}_K(a) = \bar{\pi}(a\mid B_K,\Phiverlet[h][-(2^K-1-v|_K)](q_0, p_0))\eqsp.
\end{equation}
Then, \Cref{lemma:comeback} and definitions \eqref{eq:omega_bar}--\eqref{eq:transition_prob_q} imply that for any $a,b \in \msi$, 
\begin{equation}
  \label{eq:reduction_q_h_pi_h}
\text{  $\hat{\rmq}_{h,K}(\iota(a),\iota(b))= \bar{\rmq}_h(a,b\mid \msi,q_0,p_0)$ and $\hat{\pi}_K(\iota(a)) = \bar{\pi}(a\mid \msi,q_0,p_0)$} \eqsp.
\end{equation}
As a result and as already stated, we can restrict to the case $\msi=B_K$. 

\Cref{alg:nuts-doubling} defines recursively a binary tree as in \Cref{I_construction}.
Indeed, each Bernoulli random variable $V_k$ increases the depth of the tree by adding a new branch to the left ($V_k=0$) or to the right ($V_k=1$).
Then, to select the index $j_f \in I_f=\msi$, this binary tree is explored backward (see \Cref{scheme_mul_prob}), i.e., starting from the root corresponding to the last bit $V_{K-1}$.
Based on this observation we introduce the following notation.
For any $n \in [K]$,  $u \in B_n$ define
 \begin{equation}
   \label{eq:def_check_omega}
   \check{\pi}_n(u) = \tpi_{\Phiverlet[h][-(2^K-1-v|_K)](q_0, p_0)}(\{a \in B_K:\ee a|^n=u \}) \eqsp,
 \end{equation}
 where we recall that  $a|^n=a_{K-1}\ldots a_{K-n}$  is the truncation to the $n$ last bits.
The quantity $\check{\pi}_n(u)$ is the sum of the weights of states associated to indices $a$ such that $a|^n=u$.
\begin{figure}[!h]
    \begin{center}
    \includegraphics[width=110mm]{images/scheme_rmq_good_one.png}
    \end{center}
    \caption{Construction of probabilities $\rmq_h$ in the example of sampling $q_0,p_0$ and $\Phiverlet[h][3](q_0,p_0)$ with $\rmq(\cdot|\msi-3,\Phiverlet[h][3](q_0,p_0))=\bar{\rmq}_h(3,\cdot|\msi,q_0,p_0)$.}
    
    \label{scheme_mul_prob}
\end{figure}
Writing $1^c = 0$ and $0^c = 1$,  for any $t\in [K]$ and $a\in B_K$ we define
\begin{equation}
    \label{eq:Omega}
     \ee \Pi(a,t)=\prod_{i=0}^{t-1} \left( 1 - \left( 1 \wedge \frac{\check{\pi}_{i+1}(a|^ia_{K-i-1}^c)}{\check{\pi}_{i+1}(a|^i a_{K-i-1})} \right) \right)\ee,\ee \Pi(a,0)=1\eqsp,
\end{equation}
where we denote $a|^0b=b$ for any $a\in B_K$ and $b\in \{0,1\}$.
We are ready to state the explicit expression for $\hat{\rmq}_h$ and thus for $    \bar{\rmq}_h$ and $\rmq_h$.
\begin{lemma}
    Assume \Cref{hyp:regularity}.
    \label{lemma:definition_continuity_q}
    For $a, b \in B_K$, 
    \begin{equation}
        \label{eq:stan-index-selection2}
        \hat{\rmq}_{h,K}( a,b)
        =
        \begin{cases}
        \Pi(a,K) & \text{if}\ee n=K \\
        \Pi(a,n) \Big( 1 \wedge \frac{\check{\pi}_{n+1}(a|^n a_{K-n-1}^c)}{\check{\pi}_{n+1}(a|^na_{K-n-1})} \Big) 
        \frac{\hat{\pi}_K(b)}{\check{\pi}_{n+1}( a|^na_{K-n-1}^c)} & \text{otherwise}
        \end{cases},
    \end{equation}
    where $n=\max \big(\{i\in [K]:\ee a|^i= b|^i\}\cup \{0\}\big)$.
    In addition, it holds that
    $$\rmq_h(a\mid \msi,q_0,p_0)=\hat{\rmq}_{h,K}( \iota(0),\iota(a)) \eqsp,$$
    where $\iota$ is defined in \eqref{eq:iota}.
    Moreover, $\tilde{q}_0,\tilde{p}_0 \to \rmq_h(a\mid \msi,\tilde{q_0},\tilde{p_0}) $ is continuous.
\end{lemma}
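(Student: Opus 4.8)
The statement bundles three facts about the index selection kernel: the closed form \eqref{eq:stan-index-selection2} for $\hat{\rmq}_{h,K}$, the reduction $\rmq_h(a\mid\msi,q_0,p_0)=\hat{\rmq}_{h,K}(\iota(0),\iota(a))$, and joint continuity in the initial point. The plan is to prove the first and derive the other two from it with little extra work. For the closed form I would begin from \eqref{eq:def_hatPi} and \eqref{eq:transition_prob_q}: writing $(q_0',p_0')=\Phiverlet[h][-(2^K-1-v|_K)](q_0,p_0)$ and $(q_1,p_1)=\Phiverlet[h][a](q_0',p_0')$,
\[
\hat{\rmq}_{h,K}(a,b)=\bar{\rmq}_h(a,b\mid B_K,(q_0',p_0'))=\rmq_h\bigl(b-a\mid B_K-a,(q_1,p_1)\bigr).
\]
Since $B_K-a$ is a block of $2^K$ consecutive integers containing $0$, it equals $B_K(\bar a)$ with $\bar a=2^K-1-a$ the $K$-bit complement of $a$; applying (the expansion of) the recursion \eqref{def:recur_def_rmq} to $\rmq_h(\cdot\mid B_K(\bar a),(q_1,p_1))$ along the nested chain $\{0\}=\msi_{\bar a,0}^{\text{old}}\subset\dots\subset\msi_{\bar a,K}^{\text{old}}=B_K(\bar a)$ then writes $\hat{\rmq}_{h,K}(a,b)$ as a finite sum of products of the quantities $R_{\bar a|_\ell}$ and $\bar{\pi}(\cdot\mid\msi_{\bar a,k}^{\text{new}},(q_1,p_1))$.

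The core of the argument is to rewrite this in terms of \eqref{eq:def_check_omega}--\eqref{eq:Omega}. Using $\Phiverlet[h][j](q_1,p_1)=\Phiverlet[h][j+a](q_0',p_0')$ and that $j\mapsto j+a$ carries $B_K-a$ bijectively onto $B_K$, I would check that under this identification the ``old'' set $\msi_{\bar a,k}^{\text{old}}$ becomes the standard dyadic block $\{m\in B_K:m|^{K-k}=a|^{K-k}\}$ containing the leaf $a$, and its sibling $\msi_{\bar a,k}^{\text{new}}$ becomes $\{m\in B_K:m|^{K-k}=a|^{K-k-1}a_k^c\}$; hence the weights of these sets are $\check{\pi}_{K-k}(a|^{K-k})$ and $\check{\pi}_{K-k}(a|^{K-k-1}a_k^c)$, so that $R_{\bar a|_k}=1\wedge\check{\pi}_{K-k}(a|^{K-k-1}a_k^c)/\check{\pi}_{K-k}(a|^{K-k})$. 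Re-indexing by $i=K-1-k$ — which is precisely the passage from growing the orbit outward from the leaf $0$, as in \eqref{def:recur_def_rmq}, to descending from the root towards the leaf $a$, as in the product \eqref{eq:Omega} — turns $\prod_{\ell=k+1}^{K-1}(1-R_{\bar a|_\ell})$ into $\Pi(a,K-1-k)$.

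With these correspondences in hand I would split into two cases. If $b=a$, then no sibling block contains $a$, so only the last term of the expansion survives and equals $\prod_{\ell=0}^{K-1}(1-R_{\bar a|_\ell})=\Pi(a,K)$, which is the line $n=K$ of \eqref{eq:stan-index-selection2}. If $b\neq a$ and $n=\max\bigl(\{i:a|^i=b|^i\}\cup\{0\}\bigr)<K$, an elementary bit count shows that $b$ lies in the sibling block $\msi_{\bar a,k}^{\text{new}}$ for exactly one index, $k=K-1-n$, so by disjointness of the sets $\msi_{\bar a,k}^{\text{new}}$ only the corresponding term of the expansion is nonzero; substituting $\bar{\pi}(b-a\mid\msi_{\bar a,K-1-n}^{\text{new}},(q_1,p_1))=\hat{\pi}_K(b)/\check{\pi}_{n+1}(a|^n a_{K-n-1}^c)$ (the normalizations of $\check{\pi}$ and $\hat{\pi}_K$ are taken consistently so that the orbit total weight cancels) together with the two identities above yields the ``otherwise'' line of \eqref{eq:stan-index-selection2}. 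I expect this matching of the two dyadic-tree bookkeepings — in particular keeping the orientation reversal $k\leftrightarrow K-1-k$ and the normalization conventions straight — to be the main obstacle of the proof.

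The reduction $\rmq_h(a\mid\msi,q_0,p_0)=\hat{\rmq}_{h,K}(\iota(0),\iota(a))$ is then immediate: it is the case ``first index $=0$'' of \eqref{eq:reduction_q_h_pi_h} (which itself follows from \Cref{lemma:comeback}), together with $\bar{\rmq}_h(0,a\mid\msi,q_0,p_0)=\rmq_h(a\mid\msi,q_0,p_0)$, which holds by \eqref{eq:transition_prob_q} since $\Phiverlet[h][0]=\Id$ and $0\in\msi$. Finally, continuity follows from the explicit formula: for fixed $\msi$ the integers $K$, $v|_K$, $\iota(0)$ and $\iota(a)$ are fixed; the maps $(q,p)\mapsto\Phiverlet[h][\ell](q,p)$ are continuous under \Cref{hyp:regularity} (by \eqref{eq:iteration_verlet}--\eqref{eq:def_Psiverlet_0}, as $\nabla U$ is continuous), and $\tpi=\mathrm{e}^{-H}$ is continuous and strictly positive; hence each $\check{\pi}_n(\cdot)$ and $\hat{\pi}_K(\cdot)$ depends continuously on $(q_0,p_0)$, and every denominator appearing in \eqref{eq:Omega} and \eqref{eq:stan-index-selection2}, being a finite sum of strictly positive terms over a nonempty index set, is continuous and nonvanishing. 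Since $x\mapsto 1\wedge x$ and $(x,y)\mapsto x/y$ on $\{y>0\}$ are continuous, \eqref{eq:stan-index-selection2} exhibits $\hat{\rmq}_{h,K}(\iota(0),\iota(a))$, and therefore $(\tilde{q}_0,\tilde{p}_0)\mapsto\rmq_h(a\mid\msi,\tilde{q}_0,\tilde{p}_0)$, as a continuous function.
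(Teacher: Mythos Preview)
Your proposal is correct and follows essentially the same route as the paper. Both start from $\hat{\rmq}_{h,K}(a,b)=\rmq_h(b-a\mid B_K-a,\Phiverlet[h][a](q_0',p_0'))$, identify the nested sets $\msi_{\bar a,k}^{\text{old}}+a$ and $\msi_{\bar a,k}^{\text{new}}+a$ with the dyadic blocks $\{m\in B_K:m|^{K-k}=a|^{K-k}\}$ and $\{m\in B_K:m|^{K-k}=a|^{K-k-1}a_k^c\}$ (this is the content of the paper's technical \Cref{lemma:pivot_defenition_q}), re-index $k\leftrightarrow K-1-k$ to pass from the recursion to the product $\Pi$, and split into the cases $b=a$ and $b\neq a$. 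The only stylistic difference is that the paper works with the probabilistic description of \Cref{alg:nuts-doubling} (the variables $\bar V_k$, $i'_{n_0+1}$, $N_0$) to compute the same products, whereas you work directly from the expanded recursion \eqref{def:recur_def_rmq}; the paper also spells out in more detail the ``elementary bit count'' locating $b$ in the unique sibling block at level $k=K-1-n$ (its verification that $n_0=n_0'$), which you should be prepared to write out. Your arguments for the reduction via \eqref{eq:reduction_q_h_pi_h} and for continuity match the paper's.
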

\begin{proof}
    The explicit expression follows from \eqref{def:recur_def_rmq}; see Section 2.2 of the Supplementary Material A.
\end{proof}

We remark that if $a|^n\neq b|^n$ for some for $a,b\in B_K$ and $n\in [K]$, then
\begin{equation}
    \label{rm:growthmaximality}
    a|^l\neq b|^l \eqsp,
\end{equation}
for any $l\in[n:K]$, so that for any $l\in [0:n]$ we have $a|^l=b|^l $ when $n=\max [\{i\in [K]:\ee a|^i= b|^i\}\cup \{0\}]$.
 With the explicit expression of \Cref{lemma:definition_continuity_q}, the reversibility of $\hat{\rmq}_{h,K}$ follows easily.
\begin{proposition}
    \label{lemma:invariance_q}
    The transition kernel $\hat{\rmq}_{h,K}$ is reversible for $\hat{\pi}_K$ which implies that
 the transition kernel $\bar{\rmq}_h(\cdot, \cdot \mid \msi,q_0,p_0)$ leaves $\bar{\pi}(.\mid \msi,q_0,p_0)$ invariant.
\end{proposition}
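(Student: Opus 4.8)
The plan is to prove the detailed balance relation
$\hat{\pi}_K(a)\,\hat{\rmq}_{h,K}(a,b) = \hat{\pi}_K(b)\,\hat{\rmq}_{h,K}(b,a)$
for all $a,b \in B_K$ directly from the explicit formula \eqref{eq:stan-index-selection2} of \Cref{lemma:definition_continuity_q}. Once reversibility of $\hat{\rmq}_{h,K}$ with respect to $\hat{\pi}_K$ is established, summing over $a$ gives invariance of $\hat{\pi}_K$, and transporting the statement through the increasing bijection $\iota$ of \eqref{eq:iota} by means of the identities \eqref{eq:reduction_q_h_pi_h} yields that $\bar{\rmq}_h(\cdot,\cdot\mid\msi,q_0,p_0)$ is reversible, hence invariant, for $\bar{\pi}(\cdot\mid\msi,q_0,p_0)$.

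The case $a=b$ is immediate, so assume $a\neq b$ and set $n = \max(\{i\in[K]: a|^i = b|^i\}\cup\{0\})$, a quantity symmetric in $a,b$. Since $a,b$ are length-$K$ binary words, $a|^K=b|^K$ would force $a=b$, so $n<K$ and both $\hat{\rmq}_{h,K}(a,b)$ and $\hat{\rmq}_{h,K}(b,a)$ fall in the second branch of \eqref{eq:stan-index-selection2}. Writing $u = a|^n = b|^n$ and $\alpha = a_{K-n-1}$, maximality of $n$ forces $a_{K-n-1}\neq b_{K-n-1}$, hence $b_{K-n-1} = \alpha^c$ and $a|^{n+1}=u\alpha$, $b|^{n+1}=u\alpha^c$; moreover for every $i\in[0:n-1]$ one has $a|^i=b|^i$ and $a_{K-i-1}=b_{K-i-1}$, so the products in \eqref{eq:Omega} agree term by term, $\Pi(a,n)=\Pi(b,n)=:\Pi_0$.

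Substituting these facts into \eqref{eq:stan-index-selection2} and cancelling the common non-negative factor $\hat{\pi}_K(a)\hat{\pi}_K(b)\Pi_0$ (the subcase $\Pi_0=0$ being trivial since then both sides vanish), detailed balance reduces to
\begin{equation}
  \Bigl(1 \wedge \frac{\check{\pi}_{n+1}(u\alpha^c)}{\check{\pi}_{n+1}(u\alpha)}\Bigr)\frac{1}{\check{\pi}_{n+1}(u\alpha^c)}
  =
  \Bigl(1 \wedge \frac{\check{\pi}_{n+1}(u\alpha)}{\check{\pi}_{n+1}(u\alpha^c)}\Bigr)\frac{1}{\check{\pi}_{n+1}(u\alpha)}\eqsp.
\end{equation}
Setting $x = \check{\pi}_{n+1}(u\alpha)>0$ and $y = \check{\pi}_{n+1}(u\alpha^c)>0$ (strictly positive since $\tpi>0$ and the fibres $\{a\in B_K : a|^{n+1} = \cdot\}$ are nonempty), both sides equal $\min(x,y)/(xy)$, which closes the argument.

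I expect the only genuine obstacle to be combinatorial bookkeeping rather than analysis: reading off from the definition of $n$ as the depth of the deepest common ancestor of $a$ and $b$ in the doubling binary tree that the two indices branch apart precisely at level $n+1$ in complementary directions, that the leading "no-update" products $\Pi(a,n)$ and $\Pi(b,n)$ coincide, and that the remaining Metropolis-type factors recombine into the symmetric quantity $\min(x,y)/(xy)$. One should double-check the edge case $n=0$ (empty common prefix, $\Pi_0=1$, convention $a|^0 b = b$) and confirm that no denominator vanishes. The conceptual heart is then the one-line identity $\min(1,y/x)/y = \min(1,x/y)/x$, which encodes the level-wise reversibility underlying biased progressive sampling.
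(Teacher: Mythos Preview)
Your proof is correct and follows essentially the same approach as the paper: both reduce detailed balance for $a\neq b$ to the elementary identity $(1\wedge y/x)/y=(1\wedge x/y)/x=\min(x,y)/(xy)$ with $x=\check{\pi}_{n+1}(a|^n a_{K-n-1})$ and $y=\check{\pi}_{n+1}(a|^n a_{K-n-1}^c)$, after observing that $\Pi(a,n)=\Pi(b,n)$ from $a|^n=b|^n$. Your version is slightly more explicit about the combinatorics (why $n<K$, why the $(n{+}1)$st bits are complementary, the edge case $\Pi_0=0$), but the argument is the same.
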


\begin{proof}
    Let $a,b\in B_K$ and $n$ be defined as in \Cref{lemma:definition_continuity_q}.
    When $b\neq a$ (the case $b=a$ is trivial),
     we have
    \begin{align}
        \hat{\pi}_K(a) \hat{\rmq}_{h,K}( a,b)
        & =
       \Pi(a,n) \Big( 1 \wedge \frac{\check{\pi}_{n+1}(a|^n a_{K-n-1}^c)}{\check{\pi}_{n+1}(a|^na_{K-n-1})} \Big) 
        \frac{\hat{\pi}_K(b)\hat{\pi}_K(a) }{\check{\pi}_{n+1}( a|^na_{K-n-1}^c)}
        \\
        & =
        \Pi(a,n)\Big( \frac{1}{\check{\pi}_{n+1}( a|^na_{K-n-1}^c)} \wedge \frac{1}{\check{\pi}_{n+1}(a|^na_{K-n-1})} \Big) \hat{\pi}_K(b) \hat{\pi}_K(a)
        \\
        & =
        \hat{\pi}_K(b) \hat{\rmq}_{h,K}( b,a)
    \end{align}
    since $a|^n = b|^n$.
    From \eqref{eq:reduction_q_h_pi_h}-\eqref{eq:omega_bar}-\eqref{eq:transition_prob_q} and \Cref{cor:trajectory-invariance-symmetric}, the implications are clear.
\end{proof}


  In \Cref{ergodicity_section}, we are interested in the irreducibility of $\KkerU_h$.
 To this
hend, we rely on the irreducibility of $\bar{\rmq}_h(\cdot\mid \msi,q_0,p_0)$ which is a consequence of the following result.
    In words, while $\hat{\rmq}_{h,K}( a,b)$ can be equal to $0$ for most of elements $b \in \msi$ (i.e., when $b$ is in an old set with small weights; see \Cref{scheme_mul_prob}), 
  it can be shown that there exists $j_0\in \{1,2\}$ such that $\hat{\rmq}_{h,K}^{j_0}( a,b)>0$.
    \begin{proposition}
      \label{lemma:irreducibility_q}
  Assume \Cref{hyp:regularity}.
\begin{enumerate}[wide, labelwidth=!, labelindent=0pt, label=(\alph*)] 
\item \label{lemma:item1_irreducibility_q} For any $a,b$ in $B_K$, there exists $j_0$ in $\{1,2\}$ such that $\hat{\rmq}_{h,K}^{j_0}( a,b)>0$. In particular, it follows that $\bar{\rmq}_h(\cdot\mid \msi,q_0,p_0)$ is irreducible.
  
\item \label{lemma:item2_irreducibility_q}  In addition, suppose that for any $n\in[K]$ and $a,b\in B_K$ such that $a|^n\neq b|^n$, we have $\check{\pi}_n(a)\neq \check{\pi}_n(b) $. Then, for
  any $a,b\in B_K$ and for any $j\in \Nset$, we have $\hat{\rmq}_{h,K}^{2+j}(a,b)>0$. In particular, it follows that  $\bar{\rmq}_h(\cdot\mid \msi,q_0,p_0)$ is irreducible and aperiodic.
\end{enumerate}
\end{proposition}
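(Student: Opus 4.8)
The plan is to work directly with the explicit expression for $\hat{\rmq}_{h,K}$ given in \Cref{lemma:definition_continuity_q} and exploit the binary-tree structure encoded in the truncation operation $a \mapsto a|^n$. The key structural observation is that for any two leaves $a, b \in B_K$ the transition $\hat{\rmq}_{h,K}(a,b)$ is positive precisely when $b$ lies in the ``new branch'' attached at the level $n = \max(\{i \in [K] : a|^i = b|^i\} \cup \{0\})$ where $a$ and $b$ first split, \emph{and} the corresponding acceptance ratio $1 \wedge \check{\pi}_{n+1}(a|^n a_{K-n-1}^c)/\check{\pi}_{n+1}(a|^n a_{K-n-1})$ is nonzero; since all the weights $\check{\pi}_m(\cdot)$ are strictly positive (being finite sums of $\tpi(\Phiverlet[h][j](q_0,p_0)) > 0$), this acceptance ratio is always strictly positive. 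Hence $\hat{\rmq}_{h,K}(a,b) > 0$ \emph{unless} $\Pi(a,n) = 0$, i.e., unless one of the acceptance ratios encountered while descending from the root to level $n$ along the path of $a$ equals zero --- which again cannot happen by positivity of the weights. Wait: $\Pi(a,n) = \prod_{i=0}^{n-1}(1 - 1\wedge \check{\pi}_{i+1}(\cdot^c)/\check{\pi}_{i+1}(\cdot))$, and each factor $1 - 1\wedge(\cdot)$ \emph{can} vanish when $\check{\pi}_{i+1}(a|^i a_{K-i-1}^c) \geq \check{\pi}_{i+1}(a|^i a_{K-i-1})$. So the real content of part \ref{lemma:item1_irreducibility_q} is to handle this: from $a$, in one step we can reach any $b$ with $\Pi(a, n(a,b)) > 0$; and if $\Pi(a, n) = 0$ we argue that two steps suffice. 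The mechanism is that $\hat{\rmq}_{h,K}(a,a) \geq \Pi(a,K) > 0$ fails in general too, so instead I would: pick a leaf $a^\star$ along the ``heavy'' branch at each level (the one where the acceptance ratio is $1$, hence the complementary factor in $\Pi$ is $0$, so $a^\star$ is only reachable if... ) --- more carefully, I would show that from any $a$ one can reach in one step the particular leaf $b^\star$ obtained by flipping $a$'s bit at the \emph{first} (lowest) level $i$ where the ratio test goes the ``wrong'' way, and that from $b^\star$ one can then reach the desired target $b$, because relative to $b^\star$ the split level with $b$ is lower and the descent product is now nonempty in the favorable direction. The bookkeeping here is the main obstacle: one must track, for a fixed target $b$, how the quantity $\Pi(\cdot, n(\cdot, b))$ behaves as the source leaf is varied, and verify that some intermediate leaf $b^\star$ satisfies simultaneously $\hat{\rmq}_{h,K}(a, b^\star) > 0$ and $\hat{\rmq}_{h,K}(b^\star, b) > 0$. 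Irreducibility of $\bar{\rmq}_h(\cdot \mid \msi, q_0, p_0)$ on the finite set $\msi$ then follows from \eqref{eq:reduction_q_h_pi_h} and the identity $\rmq_h(a \mid \msi, q_0, p_0) = \hat{\rmq}_{h,K}(\iota(0), \iota(a))$: irreducibility is a property of the kernel $\hat{\rmq}_{h,K}$ as a Markov chain on $B_K$, transported to $\msi$ via the bijection $\iota$.

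For part \ref{lemma:item2_irreducibility_q}, the extra hypothesis $\check{\pi}_n(a) \neq \check{\pi}_n(b)$ whenever $a|^n \neq b|^n$ forces, at every level $i$ and every node $u = a|^i$, a \emph{strict} inequality between $\check{\pi}_{i+1}(u \, 0)$ and $\check{\pi}_{i+1}(u\,1)$, so exactly one of the two complementary factors $1 - 1\wedge \check{\pi}_{i+1}(u a^c)/\check{\pi}_{i+1}(ua)$ is strictly positive and the other is strictly positive as well only if ... no: if $\check{\pi}_{i+1}(u\,0) < \check{\pi}_{i+1}(u\,1)$ then from the ``$1$''-side the ratio is $<1$ so $1 - (\cdot) > 0$, and from the ``$0$''-side the ratio is $>1$ so $1\wedge(\cdot) = 1$ and $1 - 1 = 0$. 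The point is that there is a distinguished leaf --- call it the \emph{maximal-weight leaf} $m \in B_K$, obtained by always descending into the heavier of the two children --- for which $\Pi(m, n) > 0$ for \emph{every} $n \in [0:K]$, because at each step $m$ sits on the strictly-heavier side so each factor $1 - 1\wedge(\check{\pi}_{i+1}(\text{lighter})/\check{\pi}_{i+1}(\text{heavier})) = 1 - \check{\pi}_{i+1}(\text{lighter})/\check{\pi}_{i+1}(\text{heavier}) > 0$. Consequently $\hat{\rmq}_{h,K}(m, m) \geq \Pi(m,K) > 0$ (the diagonal case $n = K$), so $m$ is an aperiodic state; combined with irreducibility from part \ref{lemma:item1_irreducibility_q} (which gives, in at most two steps, $m \to b$ and $a \to m$ for all $a,b$), the whole chain is aperiodic. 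To get the uniform statement $\hat{\rmq}_{h,K}^{2+j}(a,b) > 0$ for all $j \in \Nset$: use $a \to m$ in $\leq 2$ steps, then $m \to m$ via the self-loop to absorb the remaining $j$ steps' worth of time (padding $j-1$ self-loops), then $m \to b$ in $\leq 2$ steps --- adjusting the count so the total is exactly $2+j$, which works because the self-loop at $m$ lets us waste any number of steps. The details to nail down are (i) that the descent-into-heavier-child leaf $m$ indeed has $\Pi(m, n) > 0$ for all $n$ and hence a positive self-transition probability, and (ii) that the ``$\leq 2$ steps'' bounds from part (a) can be combined with the self-loop to hit \emph{exactly} $2 + j$; both are short once the structure is in place.

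The main obstacle, as indicated, is the combinatorial argument in part \ref{lemma:item1_irreducibility_q} showing that two steps always suffice to connect any two leaves of the binary tree under $\hat{\rmq}_{h,K}$: the transition $a \to b$ can genuinely fail (when $\Pi(a, n(a,b)) = 0$), so one must identify a good intermediate leaf. I expect the cleanest route is to introduce the maximal-weight leaf $m$ already here (it only requires non-strict comparisons in part (a), breaking ties arbitrarily) and show that \emph{every} leaf $a$ reaches $m$ in one step and $m$ reaches \emph{every} leaf $b$ in one step --- the first because descending from the root toward $m$ passes only through heavier-or-equal children so $\Pi(m, \cdot)$ stays positive, wait, one needs $\Pi(a, n(a,m))$ positive, which is about the path of $a$, not $m$; so more honestly: $\hat{\rmq}_{h,K}(a,m) = \Pi(a,n)(1\wedge \check{\pi}_{n+1}(a|^n a^c_{\cdot})/\check{\pi}_{n+1}(a|^n a_\cdot)) \hat{\pi}_K(m)/\check{\pi}_{n+1}(a|^n a^c_\cdot)$ where $n = n(a,m)$ and $a^c_\cdot$-side is $m$'s side hence heavier, so the ratio equals $1$, and $\Pi(a,n)$ is a product over levels $0,\dots,n-1$ of $a$'s path which all agree with $m$'s path (since $a|^n = m|^n$... no, $n$ is where they split, so $a|^i = m|^i$ for $i \leq n$) and on that common initial segment $a$'s bit at level $i$ is $m$'s bit which is the heavier side, making $1\wedge(\text{ratio}) < 1$ strictly only if the split is strict --- here's where part (a) without the strictness hypothesis needs care, since with ties the factor $1 - 1\wedge(\cdot)$ could be $0$. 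This is exactly the subtlety; I would resolve it by choosing $m$ via a \emph{strict} tie-breaking rule that is consistent across levels, or by a separate short argument that if $\Pi(a, n(a,m)) = 0$ then there is an even better intermediate leaf. I anticipate this is where most of the work of \Cref{sec_supp:...} goes, and I would budget accordingly.
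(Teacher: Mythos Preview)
Your overall strategy --- identify a ``maximal-weight leaf'' $m$ (the paper calls it the pivot $\mathbf{r}$) obtained by descending into the heavier child at each level, use the key observation that $\Pi(a,n)$ depends only on $a|^n = m|^n$ so $\Pi(a,n(a,m)) = \Pi(m,n(a,m))$, and exploit the self-loop $\hat{\rmq}_{h,K}(m,m) = \Pi(m,K) > 0$ for aperiodicity --- is exactly the paper's approach. Your argument for part~\ref{lemma:item2_irreducibility_q} is correct once you streamline the counting: under the strict hypothesis every factor in $\Pi(m,n)$ is strictly positive, hence $\hat{\rmq}_{h,K}(a,m) > 0$ and $\hat{\rmq}_{h,K}(m,b) > 0$ for \emph{all} $a,b$ in \emph{one} step (not ``$\leq 2$''), giving
$\hat{\rmq}_{h,K}^{2+j}(a,b) \geq \hat{\rmq}_{h,K}(a,m)\,\hat{\rmq}_{h,K}(m,m)^{j}\,\hat{\rmq}_{h,K}(m,b) > 0$.

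The gap is in part~\ref{lemma:item1_irreducibility_q}. Your first proposed fix, a ``strict tie-breaking rule consistent across levels'', cannot work: if $\check{\pi}_{i+1}(u\,0) = \check{\pi}_{i+1}(u\,1)$ at some level $i$ then the factor $1 - 1\wedge(\text{ratio}) = 0$ regardless of which side you declare heavier, so $\Pi(m,n) = 0$ for every $n > i$ and the pivot $m$ loses its universal one-step reachability. Your second fix (``a separate short argument'') is where the real work sits, and the paper's solution is not short. The paper lets $l$ be the \emph{first} level at which a tie occurs along the heavy path (so $\Pi(\cdot,k) > 0$ is guaranteed only for $k \leq l$), and then introduces \emph{two} pivots $\mathbf{r}^0 = \mathbf{r}|^l\,0\cdots 0$ and $\mathbf{r}^1 = \mathbf{r}|^l\,1\cdots 1$. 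For any leaf $\xi$ with $\xi|^l \neq \mathbf{r}^0|^l$ one has $n(\xi,\mathbf{r}^m) < l$ and hence positivity via the pre-tie segment; for $\xi$ with $\xi|^l = \mathbf{r}^0|^l$ one chooses $m = \xi_{K-l-1}^c$ so that $n(\xi,\mathbf{r}^m) = l$ exactly. A four-case analysis on whether each of $a|^l$, $b|^l$ agrees with $\mathbf{r}^0|^l$ then shows that some $\mathbf{r}^m$ (or, in one sub-case, a direct transition) connects $a$ to $b$ in at most two steps. You correctly anticipated that this is where the budget goes; what you are missing is the concrete two-pivot construction and the attendant case split.
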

\begin{proof}
    The proof is postponed to Section 3 of the Supplementary Material A.
\end{proof}



\section{Ergodicity}
\label{ergodicity_section}

The purpose of this section is to establish ergodicity of the NUTS sampler defined in the previous section.
To this end, we consider three explicitly verifiable conditions, each of which guarantees ergodicity: one concerns the step size $h > 0$ and maximum number of leapfrog steps $\Kmax$ (\Cref{hyp:lipschitz_hard}($h, \Kmax$)), one restricts to real analytic potentials $U$ and assumes that its Hessian vanishes at infinity (\Cref{hyp:analytic_potential}), and the last one assumes $\pi$ is a Gaussian distribution (\Cref{hyp:pure_gaussian}).

To state \Cref{hyp:lipschitz_hard}, define an auxiliary function by $\mathcal{V}_1(s)=1+s/2+s^2/4$ for $s \geq 0$.
\begin{assumption}[$h,\Kmax$]
    \label{hyp:lipschitz_hard}
    The step size $h > 0$ and $\Kmax$ satisfy the following inequality:
    \begin{equation}
 [(1+h \ltt_1^{{1/2}}\mathcal{V}_1(h \ltt_1^{1/2}))^{2^{\Kmax}}-1]<1/4 \eqsp .
    \end{equation}
\end{assumption}
This assumption is nearly the same condition that is considered \cite[Eq (10), p.10]{Durmus2017-tf} to prove HMC ergodicity.
When $h\ll 1/\ltt_1^{1/2}$, this assumption is nearly equivalent to $\ltt_1^{{1/2}}2^{\Kmax}h<1/4 $.
Thus \Cref{hyp:lipschitz_hard}($h,\Kmax$) implies a limit on the maximal integration time $h 2^{\Kmax}$.

Instead of assuming a bound on the maximal integration time $h 2^{\Kmax}$ we may impose an additional regularity condition on $U$ in order to prove ergodicity.
To this end, we recall that a function $f: \Rset^d \to \Rset$ is said to be real analytic if it can be locally expanded as a power series, i.e., for every $x_0 \in \Rset^d$ there exists a neighborhood $\msv$ and a sequence $(P_n)_{n=0}^\infty$ of $n$-homogeneous polynomials\footnote{I.e. polynomials whose nonzero terms all have degree $n$.} such that for any $x \in \msv$, $f(x) = \sum_{n=0}^{\infty} P_n (x-x_0)$.

\begin{assumption}
    \label{hyp:analytic_potential}
    The potential $U: \Rset^d \mapsto \Rset$ is real analytic and in addition
    $\lim_{|q| \to \infty} \| \nabla^2 U(q) \| = 0$.
\end{assumption}

While \Cref{hyp:analytic_potential} excludes quadratic potentials, we consider this case separately.

\begin{assumption}
    \label{hyp:pure_gaussian}
    The potential $U$ is a quadratic form, i.e., $\pi$ a non-generate Gaussian distribution.
\end{assumption}
%

Before stating our first results, we introduce some definitions relative to Markov chain theory which are at the basis of our statements.
A kernel $\Kker$ is said to be irreducible if it admits an accessible small set \cite[Definition 9.2.1]{douc2018markov}.
A set $\mathsf{E}\in \mathcal{B}(\mathbb{R}^d)$ is \emph{accessible} for the transition kernel $\Kker$ if for any $q \in \mathbb{R}^d$ we have $\sum_{n=0}^\infty \Kker^n(q, \mse) > 0$.
A set $\msc \subset \mathbb{R}^d$ is a $n$-\emph{small} for $\Kker$ with $n\in \Nset^*$ if there exist  $\varepsilon > 0$ 
and a probability measure $\mu$ on $\mathbb{R}^d$ such that $\Kker^n(q, \msa) \geq \varepsilon \mu(\msa)$ for any $q \in \msc$ and any measurable set $\msa \subset \mathbb{R}^d$.
Let $(X_n)_{n\geq 0}$ be the canonical chain associated with $\Kker$ defined on the canonical space
$((\Rset)^\Nset,\msb(\Rset^d)^{\otimes \Nset})$.
Defining for any measurable set $\msa \subset \mathbb{R}^d$ $N_\msa=\sum_{i=0}^{\infty} \mathbb{1}_\msa(X_i)$ the number of visits to $\msa$, then $\msa$ is said to be reccurent if $\mathbb{E}_x(N_\msa)=+\infty $ for any $x\in \msa$ \cite[Definition 10.1.1]{douc2018markov}.
 The Markov chain $\Kker$ is said to be recurrent if all
accessible sets are recurrent. In particular, if $\Kker$ admits an invariant probability measure and is irreducible then $\Kker$ is called positive \cite[Definition 11.2.7]{douc2018markov} which implies that $\Kker$ recurrent \cite[Theorem 10.1.6.]{douc2018markov}. 
 The period of an accessible small set $\msc$ is the positive integer $d(\msc)$ defined by
\begin{equation}
\textstyle    d(\msc)=\operatorname{g.c.d} \left\{n\in \Nset^*\,:\,\inf_{x\in \msc} \Kker^n(x,\msc)>0 \right\} \eqsp .
\end{equation}
If $\Kker$ is an irreducible Markov kernel, the common period of all accessible small sets is called the period of $\Kker$ \cite[Definition 9.3.1]{douc2018markov}.
 If the period is equal to one, the kernel is said to be aperiodic.

\begin{theorem}
    \label{thm:ergodicity_compile}
    Let $\Kmax \in \nset_{>0}$ and $h >0$. Assume \Cref{hyp:regularity} and that either \Cref{hyp:lipschitz_hard}($h,\Kmax$) or \Cref{hyp:analytic_potential} holds.
    Then we have the following.
\begin{enumerate}[label=(\roman*),wide, labelwidth=!, labelindent=0pt]
    \item \label{thm:item_i_ergo1} The NUTS transition kernel $\KkerU_h$ is irreducible, aperiodic, the Lebesgue measure is an irreducibility measure and any compact set of $\mathbb{R}^d$ is small.
    \item \label{thm:item_ii_ergo2}$\KkerU_h$ is positive recurrent with invariant probability $\pi$ and for $\pi$-almost every $q \in \mathbb{R}^d$, 
    $$ \textstyle \lim _{n \rightarrow+\infty}\left\|\delta_q (\KkerU_h)^n-\pi\right\|_{\mathrm{TV}}=0\eqsp .$$
 \end{enumerate}
    In addition, if \Cref{hyp:pure_gaussian} holds there exists a countable subset $\msh_0 \subset \Rset_{>0}$ such that the conclusions above hold for all $h \in \Rset_{>0} \setminus \msh_0$.
\end{theorem}
\begin{proof}
The proofs of this section are postponed to Section 4 of the Supplementary Material A.
We remark that \ref{thm:item_ii_ergo2} is a consequence of \ref{thm:item_i_ergo1} by \cite[Theorem 13.3.4]{markovchainmeyn2012markov} .
\end{proof}

In order to establish results analogous to \Cref{thm:ergodicity_compile} for the HMC kernel with $T \geq 2$ number of leapfrog steps, \cite{Durmus2017-tf} establish a degree of geometric control over the nonlinear deterministic maps $p \mapsto \operatorname{proj}_1 \Phiverlet[h][T](q_0, p)$, which allows them to conclude that every open set is accessible for HMC in one step.
However, establishing a similar one-step accessibility result for NUTS seems difficult, if not impossible. Instead, we prove accessibility in at most two steps. The proof of NUTS ergodicity presents additional difficulties due to the index selection kernel $\rmq_h$, which encourages the selection of points far from the initial state.
To overcome these challenges, we derive important properties of the stopping times $S$ defined by \eqref{eq:nuts-stopping}.  These results are detailed in Lemma S6 of the Supplementary Material A
These conditions relate to maps $F^{T_1, T_2}_q$ defined for any $q \in \mathbb{R}^d$ and $T_1, T_2 \in \mathbb{Z}^2$ with $T_1 \neq T_2$ by
\begin{equation}
    \label{eq:F_T1T2}
F^{T_1, T_2}_q: p \in \mathbb{R}^d \mapsto p_{T_1}^\top (q_{T_2} - q_{T_1}) \eqsp,
\end{equation}
where $q_i, p_i = \Phiverlet[h][i](q_0, p_0)$ for any $i \in \mathbb{Z}$, and read for $h >0$ and $\Kmax \in \nset_{>0}$:

\begin{assumption}[$(h,\Kmax)$]
    \label{hyp:3}
    \begin{enumerate}[label=(\roman*),wide, labelwidth=!, labelindent=0pt]
        \item\label{hyp:item_i_topo}
         For any $q\in \Rset^d$, the following set is dense,
         \begin{equation}
            \label{eq:msf-0}
            \msf_{q,-0}=\{p\in \Rset^d:   F^{T_1,T_2}_q(p)\neq 0,\, T_1,T_2\in [-2^{\Kmax}+1:2^{\Kmax}-1]^2,\, T_1\neq T_2\} \eqsp.
        \end{equation}
\item \label{hyp:item_ii_homeo} For any $q_0\in \Rset^d$, there exist $p_0\in \Rset^d,\,r_H>0$ such that for any $T\in [-2^{\Kmax}+1:2^{\Kmax}-1]$ with $T\neq 0$,
    \begin{equation}
        \psi_{q_0}^{(T)}|_{\mathrm{B}(p_0,r_H)}: p\in \mathrm{B}(p_0,r_H)\mapsto \operatorname{proj}_1\Phiverlet[h][T](q_0,p) 
    \end{equation}
    is a local homeomorphism. 
\end{enumerate}
\end{assumption}

In contrast to the easily verifiable \Cref{hyp:lipschitz_hard}($h,\Kmax$), \Cref{hyp:analytic_potential} or \Cref{hyp:pure_gaussian}, the condition \Cref{hyp:3}$(h,\Kmax)$-\ref{hyp:item_i_topo} is technical but less stringent, since it focuses on pathological cases related to the stopping time that cause the main technical difficulties in the proof of irreducibility.
This assumption allows us to make clear and precise the different steps of the proof of \Cref{thm:ergodicity_compile}. 
\begin{theorem}
    \label{thm:ergodic_general}
    Assume \Cref{hyp:regularity}, \Cref{hyp:3}$(h,\Kmax)$, for $h >0$ and $\Kmax \in \nset_{>0}$.
     Then, the conlusions \ref{thm:item_i_ergo1} and \ref{thm:item_ii_ergo2} of \Cref{thm:ergodicity_compile} hold.
\end{theorem}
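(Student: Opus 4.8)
The plan is to reduce everything to conclusion \ref{thm:item_i_ergo1}, since \ref{thm:item_ii_ergo2} follows from it together with the invariance of $\pi$ (\Cref{thm:invariance_target}) and \cite[Theorem 13.3.4]{markovchainmeyn2012markov}, exactly as in the proof of \Cref{thm:ergodicity_compile}: once $\KkerU_h$ is irreducible, aperiodic, admits $\pi$ as invariant measure and every compact set is small, it is positive recurrent and $\pi$-ergodic. Hence all the work is in \ref{thm:item_i_ergo1}, and its heart is the accessibility claim: \emph{for every $q_0\in\Rset^d$ and every nonempty open $O\subset\Rset^d$ one has $(\KkerU_h)^2(q_0,O)>0$}, together with a version of this that is locally uniform in $q_0$.

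The first ingredient is a locally constant trajectory lemma (stated as \Cref{lemma:nuts-locally-constant-trajectories} in the supplement): if $p_0\in\msf_{q_0,-0}$, i.e.\ all the U-turn functionals $F^{T_1,T_2}_{q_0}$ of \eqref{eq:F_T1T2} are nonzero at $p_0$, then by continuity of $p\mapsto\Phiverlet[h][j](q_0,p)$ every strict inequality defining the U-turn sets $\scrU^{(k)}$ in \eqref{eq:scrU} is stable under small perturbations of $p_0$; hence the stopping time $\funS(\cdot,q_0,\cdot)$ of \eqref{eq:nuts-stopping}, and therefore the orbit selection kernel $\rmp_h(\cdot\mid q_0,\cdot)$, are locally constant at $p_0$, while $\rmq_h(\cdot\mid\msi,q_0,\cdot)$ stays continuous in the momentum by \Cref{lemma:definition_continuity_q}. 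Since $\msf_{q_0,-0}$ is dense by \Cref{hyp:3}-\ref{hyp:item_i_topo}, such $p_0$ exist near any prescribed momentum; choosing via \Cref{hyp:3}-\ref{hyp:item_ii_homeo} a ball $\mathrm{B}(p_0^\star,r_H)$ on which $\psi_{q_0}^{(T)}:p\mapsto\operatorname{proj}_1\Phiverlet[h][T](q_0,p)$ is a local homeomorphism for every $T\neq0$ in the range $[-2^{\Kmax}+1:2^{\Kmax}-1]$, we pick $p_0\in\mathrm{B}(p_0^\star,r_H)\cap\msf_{q_0,-0}$. On a small ball $\mathrm{B}$ around $p_0$ the selected orbit $\msi$ is then constant, and $\msi\supsetneq\{0\}$ since the first doubling always succeeds ($\scrU^{(1)}(q_0,p_0)=\emptyset$ by the convention $\cup_{k=1}^{0}=\emptyset$); by the biased progressive sampling mechanism, or directly by item \ref{lemma:item1_irreducibility_q} of \Cref{lemma:irreducibility_q}, there is $T\in\msi\setminus\{0\}$ with $\rmq_h(T\mid\msi,q_0,p_0)>0$, hence $\rmq_h(T\mid\msi,q_0,\cdot)\geq\varepsilon>0$ on $\mathrm{B}$ by continuity. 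Because a $C^1$ local homeomorphism of $\Rset^d$ is an open map whose critical set has empty interior (otherwise the constant rank theorem would contradict openness), after shrinking $\mathrm{B}$ we may assume $\psi_{q_0}^{(T)}$ is a $C^1$ diffeomorphism of $\mathrm{B}$ onto a nonempty open set $G(q_0):=\psi_{q_0}^{(T)}(\mathrm{B})$, whence
\begin{equation*}
    \KkerU_h(q_0,\cdot)\;\geq\;\int_{\mathrm{B}}\rho_{\mathbf{M}}(p)\,\rmq_h(T\mid\msi,q_0,p)\,\updelta_{\psi_{q_0}^{(T)}(p)}(\cdot)\,\mathrm{d}p
\end{equation*}
dominates a measure with a positive continuous density on $G(q_0)$. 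Thus one step from any $q_0$ reaches an open set through an absolutely continuous component.

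Running the same construction at a point $q_1\in G(q_0)$ produces an open set $G(q_1)$ reached from $q_1$ through an absolutely continuous component, and the key remaining step is to show that a prescribed open set $O$ is reached in two steps. For this one exploits, on a suitable subneighborhood of a well-chosen $q_1^0\in G(q_0)$, that the orbit $\msi$, the index $T$ and all the relevant strict inequalities can be taken uniform, so that the landing point $\operatorname{proj}_1\Phiverlet[h][T](q_1,p_1)$ depends jointly $C^1$ on $(q_1,p_1)$ on an open subset of $G(q_0)\times\Rset^d$; combining this with the bijective/affine structure of the leapfrog maps — in particular $\psi_{q_1}^{(1)}(p)=q_1+hp-\tfrac{h^2}{2}\nabla U(q_1)$ is a bijection of $\Rset^d$ for every $q_1$, and for momenta of large norm in a prescribed direction the No‑U‑turn rule does not trigger prematurely while biased sampling selects an index of order comparable to $|p_1|$, making $\psi_{q_1}^{(T)}$ a local diffeomorphism onto a far region — one obtains that every point of $\Rset^d$ lies in the interior of the two-step reachable set, with density lower bounds locally uniform in $q_0$. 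This simultaneously gives that $\Leb$ is an irreducibility measure (any Borel set of positive Lebesgue measure has a density point, around which $(\KkerU_h)^2(q_0,\cdot)$ puts positive mass) and, by covering a compact set of starting points by finitely many pieces each carrying a common minorization towards a fixed ball, that every compact set is small. Aperiodicity follows because, from points of that ball, one step reaches a nonempty open set from which two further steps return to the ball, so both $2$ and $3$ belong to the set defining the period; hence the period is $1$, and \ref{thm:item_ii_ergo2} follows as above.

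The main obstacle is precisely this two-step accessibility. For HMC with a fixed number $T\geq 2$ of leapfrog steps, \cite{Durmus2017-tf} analyse the single deterministic proposal $p\mapsto\operatorname{proj}_1\Phiverlet[h][T](q_0,p)$ and obtain one-step accessibility of every open set. For NUTS the realized number of steps $j_f$ is random, biased towards large values, and cannot be forced to a prescribed value, so from $q_0$ one can only guarantee that \emph{some} $T\neq0$ is selected with positive probability; the corresponding reachable open set depends on $q_0$ and need not be all of $\Rset^d$. This forces the two-step detour together with the delicate bookkeeping — through the locally constant trajectory lemma and the two clauses of \Cref{hyp:3} — needed to patch the $q_0$-dependent pieces and to make the minorizations uniform on compacts.
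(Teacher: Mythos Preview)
Your high-level reduction is the same as the paper's, and you correctly locate the difficulty in two-step accessibility. But the accessibility argument you give has a genuine gap. What your first paragraph actually produces is a one-step \emph{minorization onto some open set}: from $q_0$ you find \emph{some} $T\in\msi\setminus\{0\}$ with $\rmq_h(T\mid\msi,q_0,\cdot)\geq\varepsilon$ on a small momentum ball and, via \Cref{hyp:3}-\ref{hyp:item_ii_homeo}, land with positive density in $G(q_0)=\psi_{q_0}^{(T)}(\mathrm{B})$. This is essentially the content of \Cref{thm:1nuts-small-sets}, not of \Cref{thm:nuts-accessibility}: $G(q_0)$ is a set produced by the construction, not the prescribed target $O$. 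The second paragraph, which is supposed to get from $G(q_0)$ to $O$, rests on the assertions that for large $|p_1|$ ``the No-U-turn rule does not trigger prematurely'' and ``biased sampling selects an index of order comparable to $|p_1|$''. Neither is available under \Cref{hyp:3} alone, and neither is true in general: the stopping time and the selected index depend on the full trajectory through the U-turn inequalities, not on $|p_1|$. Without these claims there is no mechanism forcing the second step to hit $O$, so the proof does not close.

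The paper avoids this by reversing the order of choices. It fixes $q_1\in O$ first and uses that $\psi_{q_0}^{(1)}$ is always a global affine bijection to pick $p_0$ with $\psi_{q_0}^{(1)}(p_0)=q_1$; since the first doubling is never rejected, $\{0,1\}\subset\msi$ automatically (\Cref{lem:1}). If $\rmq_h(1\mid\msi,q_0,p_0)>0$, one step suffices (\Cref{lemma:rmq_positive}). If $\rmq_h(1\mid\msi,q_0,p_0)=0$, \Cref{lemma:irreducibility_q}-\ref{lemma:item1_irreducibility_q} is used not to exhibit a single positive index but to find an intermediate $k\in\msi$ with $\bar{\rmq}_h(0,k\mid\msi,q_0,p_0)\,\bar{\rmq}_h(k,1\mid\msi,q_0,p_0)>0$; the second step then starts from the $k$-th point of the \emph{same} physical orbit, and the orbit symmetry (\Cref{prop:rmpcondition_check}) together with \Cref{lemma:technical_ball} ensures that from there index $1-k$---which lands exactly at $q_1$---is selected with uniformly positive probability (\Cref{lemma:rmq_0}). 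Note also that this accessibility argument uses only \Cref{hyp:3}-\ref{hyp:item_i_topo}; \Cref{hyp:3}-\ref{hyp:item_ii_homeo} enters separately for small sets, and the two are combined in \Cref{thm:3small}. Your sketch conflates these two roles.
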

\Cref{thm:ergodic_general} is a consequence of the general result \cite[Theorem 14.0.1]{markovchainmeyn2012markov} and our results \Cref{thm:nuts-accessibility} and \Cref{thm:3small} below.
At this stage, \Cref{thm:ergodicity_compile} follows after we show that the set of assumptions (\Cref{hyp:regularity}, \Cref{hyp:lipschitz_hard}($h$)), (\Cref{hyp:regularity}, \Cref{hyp:analytic_potential}) and \Cref{hyp:pure_gaussian} are strictly stronger than \Cref{hyp:3}$(h,\Kmax)$:

\begin{proposition}
  \label{prop:general_condition_check}
  Let $\Kmax \in \nset_{>0}$. 
    \begin{enumerate}[label=(\alph*),wide, labelwidth=!, labelindent=0pt]
    \item
    \label{prop:condition_check}
    Assume \Cref{hyp:regularity} and \Cref{hyp:lipschitz_hard}($h,\Kmax$) or \Cref{hyp:analytic_potential}. Then the NUTS transition kernel $\KkerU_h$ satisfies hypothesis \Cref{hyp:3}$(h,\Kmax)$ for any $h >0$.
    \item     \label{prop:condition_check_gaussian}
    Assume \Cref{hyp:pure_gaussian}. Then, there exists a countable set $\msh_0\subset \Rset_{\geq0}$ such that for any $h\in \Rset_{>0}\setminus \msh_0$ the NUTS transition kernel $\KkerU_h$ satisfies \Cref{hyp:3}$(h,\Kmax)$.
    \end{enumerate}
\end{proposition}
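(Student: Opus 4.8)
The plan is to verify the two clauses of \Cref{hyp:3}$(h,\Kmax)$ in each of the three regimes, via two reductions; throughout write $(q_i(p),p_i(p))=\Phiverlet[h][i](q,p)$ and $Z^{T_1,T_2}_q=\{p\in\Rset^d:F^{T_1,T_2}_q(p)=0\}$. Since $F^{T_1,T_2}_q$ is continuous under \Cref{hyp:regularity}, each $Z^{T_1,T_2}_q$ is closed, and $\msf_{q,-0}$ is the complement of the finite union of the $Z^{T_1,T_2}_q$ over the finitely many pairs $T_1\neq T_2$ in $[-2^{\Kmax}+1:2^{\Kmax}-1]$; hence for clause \ref{hyp:item_i_topo} it suffices to show each $Z^{T_1,T_2}_q$ has empty interior. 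For clause \ref{hyp:item_ii_homeo}, since $\psi_{q_0}^{(T)}$ is $C^1$ with Jacobian $\mathrm{D}_pq_T(p)$, it suffices by the inverse function theorem to produce $p_0$ and $r_H>0$ with $\mathrm{D}_pq_T(p)$ invertible on $\mathrm{B}(p_0,r_H)$ for all admissible $T\neq 0$. Now assume \Cref{hyp:analytic_potential}, for every $h>0$. Then $\nabla U$ is real analytic, so each $\Phiverlet[h][j]$, $j\in\Zset$, is a real-analytic diffeomorphism of $(\Rset^d)^2$ and $p\mapsto F^{T_1,T_2}_q(p)$ is real analytic on $\Rset^d$; as a real-analytic function on $\Rset^d$ not identically zero has zero set of empty interior, it suffices to show $F^{T_1,T_2}_q\not\equiv 0$, and for clause \ref{hyp:item_ii_homeo} it suffices to exhibit a single $p$ at which all $\mathrm{D}_pq_T$ are invertible (these form an open set). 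Since $\lim_{|q|\to\infty}\|\nabla^2U(q)\|=0$ forces $|\nabla U(q)|=o(|q|)$ (split $\nabla U(q)-\nabla U(0)=\int_0^1\nabla^2U(tq)\,\dd t\,q$), an induction on $|j|\le 2^{\Kmax}$ gives $q_j(p)=q+jhp+o(|p|)$ and $p_j(p)=p+o(|p|)$ as $|p|\to\infty$, uniformly in $p/|p|$ (at each step the correction involves $\nabla U$ at points of norm comparable to $|p|$); thus $F^{T_1,T_2}_q(p)=(T_1-T_2)h|p|^2+o(|p|^2)\neq 0$ for $|p|$ large, and differentiating the recursion gives $\mathrm{D}_pq_T(p)\to Th\,\mathrm{I}_d$ as $|p|\to\infty$ (all $\nabla^2U$-corrections being evaluated along escaping points), which is invertible.

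\textbf{The small-step case.} Assume \Cref{hyp:lipschitz_hard}$(h,\Kmax)$. Classical stability estimates for the leapfrog integrator (\cite{Durmus2017-tf,Geo_integratorsbou2018geometric}; the threshold $1/4$ is tuned precisely for this) show that for $|j|\le 2^{\Kmax}$ the maps $p\mapsto(q_j(p),p_j(p))$ are globally $C^1$-close to the free maps $p\mapsto(q+jhp,p)$, with a defect $\delta$ controlled by the left-hand side of the inequality in \Cref{hyp:lipschitz_hard}, in particular $\delta<1/4$. Then $\mathrm{D}_pq_T(p)$ stays within operator-norm distance $<|T|h$ of $Th\,\mathrm{I}_d$, hence is invertible, which gives clause \ref{hyp:item_ii_homeo} with any $p_0,r_H$. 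For clause \ref{hyp:item_i_topo}: $p\mapsto p_{T_1}(p)-p$ is a contraction, so $p\mapsto p_{T_1}(p)$ is injective and $\{p:p_{T_1}(p)=0\}$ is at most a point; if $F^{T_1,T_2}_q\equiv 0$ on an open ball $B$, pick $p^\ast\in B$ with $u:=p_{T_1}(p^\ast)\neq 0$. At $p^\ast$ we have $0=\nabla_pF^{T_1,T_2}_q=(\mathrm{D}_pp_{T_1})^\top(q_{T_1}-q_{T_2})+(\mathrm{D}_pq_{T_1}-\mathrm{D}_pq_{T_2})^\top p_{T_1}$; writing $w=q_{T_1}(p^\ast)-q_{T_2}(p^\ast)$ and $a=(T_1-T_2)h$, and using that $\mathrm{D}_pp_{T_1}-\mathrm{I}_d$ and $\mathrm{D}_pq_{T_1}-\mathrm{D}_pq_{T_2}-a\,\mathrm{I}_d$ are small relative to $1$ and $|a|$ (by an amount governed by the $1/4$ threshold), this yields $w=-au+e$ with $|e|\le\delta(|w|+|a||u|)$; substituting into $u^\top w=0$ and running the resulting scalar inequalities with $\delta<1/4$ forces $u=0$, a contradiction. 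Hence $Z^{T_1,T_2}_q$ has empty interior.

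\textbf{The Gaussian case.} Assume \Cref{hyp:pure_gaussian}: $U(q)=\tfrac12 q^\top Aq+\langle b,q\rangle+c$ with $A$ positive definite, so $\nabla U$ is affine and each $\Phiverlet[h][j]$ is an affine map of $(\Rset^d)^2$ whose linear part is the power $M_h^j$ of the one-step leapfrog matrix $M_h=\mathrm{I}_{2d}+hN+h^2P+h^3R$ for fixed matrices $N,P,R$ ($N$ the Hamiltonian matrix of the quadratic form). Writing $A_j(h),B_j(h)$ for the blocks $\partial q_j/\partial p$ and $\partial p_j/\partial p$ of $M_h^j$, these are polynomials in $h$ with $A_j(h)=jh\,\mathrm{I}_d+O(h^3)$ and $B_j(h)=\mathrm{I}_d+O(h^2)$. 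For fixed $q$, $p\mapsto F^{T_1,T_2}_q(p)$ is then a polynomial of degree $\le 2$ whose degree-$2$ part is the $q$-independent quadratic form $p\mapsto p^\top B_{T_1}(h)^\top(A_{T_1}(h)-A_{T_2}(h))p$, and $\Tr\big(B_{T_1}(h)^\top(A_{T_1}(h)-A_{T_2}(h))\big)=d(T_1-T_2)h+O(h^3)$ is a polynomial in $h$ not identically zero; likewise $\det A_T(h)=(Th)^d+O(h^{d+1})$ is a nonzero polynomial. Let $\msh_0\subset\Rset_{\ge 0}$ consist of $0$ together with the finitely many real roots of these polynomials, over the finitely many pairs $T_1\neq T_2$ and indices $T\neq 0$ in $[-2^{\Kmax}+1:2^{\Kmax}-1]$; then $\msh_0$ is finite, and for $h\in\Rset_{>0}\setminus\msh_0$ each $F^{T_1,T_2}_q$ has nonzero degree-$2$ part (for every $q$), hence is a nonzero polynomial with zero set of empty interior, giving clause \ref{hyp:item_i_topo}, while each $\psi_{q_0}^{(T)}$ is affine with invertible linear part $A_T(h)$, hence a homeomorphism of $\Rset^d$, giving clause \ref{hyp:item_ii_homeo}.

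\textbf{Main obstacle.} The delicate point is clause \ref{hyp:item_i_topo} in the small-step regime: unlike in the analytic and Gaussian cases, analyticity of $F^{T_1,T_2}_q$ is not available from \Cref{hyp:regularity} alone, and a mere nonvanishing estimate for large $|p|$ does not deliver density; emptiness of the interior of $Z^{T_1,T_2}_q$ must instead be extracted from a rigidity argument that couples the vanishing of $F^{T_1,T_2}_q$ with that of $\nabla_pF^{T_1,T_2}_q$ and crucially exploits the sharp control of the leapfrog's deviation from free motion guaranteed by the threshold in \Cref{hyp:lipschitz_hard}.
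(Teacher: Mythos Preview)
Your proof is correct and follows essentially the same route as the paper's. In all three regimes the underlying ingredients coincide with the paper's Lemmas~5--10 and Lemma~4: the asymptotics $q_T(p)\sim Thp$, $p_T(p)\sim p$, $\mathrm{D}_pq_T\to Th\,\mathrm{I}_d$ under the vanishing-Hessian hypothesis; the gradient identity $\nabla_pF=(\mathrm{I}-A)^\top w+a(\mathrm{I}-B)^\top u$ with $\|A\|,\|B\|<1/4$ in the small-step regime; and the polynomial-in-$h$ argument for the linear leapfrog in the Gaussian case. The only differences are organizational: in the small-step case you argue directly by contradiction from $F\equiv 0$ on a ball (using both $F=0$ and $\nabla F=0$ at $p^\ast$), whereas the paper routes the same computation through the intermediate reduction ``$\nabla\msf_{q,-0}$ dense $\Rightarrow$ $\msf_{q,-0}$ dense'' and the auxiliary perpendicularity Lemma~6; and in the Gaussian case you test non-degeneracy via the trace of the degree-$2$ coefficient of $F$, while the paper uses the determinant of the linear part of $\nabla F$---both yield a nonzero polynomial in $h$ with the same leading behavior.
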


The main technical challenge in proving accessibility for the NUTS transition kernel arises from the dependence of index selection probabilities on the entire trajectory, which in turn relies on the global geometry of the potential energy function $U$. However, the following result overcomes this challenge and establishes accessibility from every point in either one or two steps.
\begin{theorem}
    \label{thm:nuts-accessibility}
    Assume \Cref{hyp:regularity} and \Cref{hyp:3}$(h,\Kmax)$-\ref{hyp:item_i_topo}, for $h >0$ and $\Kmax \in \nset_{>0}$.
    For the NUTS transition kernel $\KkerU_h$,
    every open set $\mse \subset \mathbb{R}^d$ is accessible for all $q$ in $\Rset^d$.
    Moreover, for every open set $\mse \subset \mathbb{R}^d$ and for any $q_0\in \Rset^d$, there exist $\msw(q_0)$ a neighborhood of $q_0$, $m_\msw(q_0)>0$ and $j(q_0)\in \{1,2\}$ such that for any $q\in \msw(q_0)$:
    \begin{equation}
        (\KkerU_h)^{j(q_0)}(q,\mse)\geq m_\msw(q_0)>0 \eqsp .
      \end{equation}
\end{theorem}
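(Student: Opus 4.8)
The plan is to reduce everything to the ``moreover'' part (which implies accessibility of every open set from every point, by monotonicity of $\mse\mapsto(\KkerU_h)^{n}(q,\mse)$ and by specialising to $q=q_0$), so I fix $q_0\in\Rset^d$ and, without loss of generality, an open ball $\mse$. The argument rests on three observations. (a) A single leapfrog step is a global diffeomorphism in the momentum, $\operatorname{proj}_1\Phiverlet[h][1](q_0,p)=q_0+hp-\tfrac{h^2}{2}\nabla U(q_0)$ (taking $\mathbf{M}=\Idd$), so the set $O=\{p\in\Rset^d:\operatorname{proj}_1\Phiverlet[h][1](q_0,p)\in\mse\}$ is open and nonempty. (b) Since $\scrU^{(1)}=\emptyset$ the first doubling is always accepted, so by \Cref{lemma:nuts-trajectory-probability} the orbit obtained by always doubling to the right from a given momentum has positive $\rmp_h(\cdot\mid q_0,\cdot)$-probability and contains the index $1$. (c) By \Cref{lemma:irreducibility_q}\ref{lemma:item1_irreducibility_q}, applied to the relabelled orbit $B_K$ via the bijection $\iota$ of \eqref{eq:iota}, and the identification \eqref{eq:reduction_q_h_pi_h}, for any orbit $\msi\ni 1$ selected with positive probability from a state $(q_0,p_0)$ there is $j_0\in\{1,2\}$ with $[\bar{\rmq}_h(\cdot,\cdot\mid\msi,q_0,p_0)]^{j_0}(0,1)>0$. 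Now pick $p_0\in O\cap\msf_{q_0,-0}$ — nonempty because $\msf_{q_0,-0}$ is dense by \Cref{hyp:3}$(h,\Kmax)$-\ref{hyp:item_i_topo} — let $\msi^\star\ni 1$ be a positive-probability orbit as in (b), and let $j_0$ be as in (c); this $j_0$ will be the integer $j(q_0)$ of the statement.

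If $j_0=1$, then $\rmq_h(1\mid\msi^\star,q_0,p_0)>0$ and $\operatorname{proj}_1\Phiverlet[h][1](q_0,p_0)\in\mse$, so $\KkerUE_h((q_0,p_0),\mse)>0$. To turn this pointwise positivity into a local lower bound I use the key structural fact, made precise in \Cref{lemma:nuts-locally-constant-trajectories}: at a momentum lying in $\msf_{q,-0}$ none of the strict U-turn inequalities defining the sets $\scrU^{(k)}$ in \eqref{eq:scrU} holds with equality, so the stopping time \eqref{eq:nuts-stopping}, hence $\rmp_h$, is constant on a neighbourhood of $(q,p)$. Combining this with the continuity of $(q,p)\mapsto\rmq_h(\cdot\mid\msj,q,p)$ (\Cref{lemma:definition_continuity_q}), the continuity of the leapfrog maps, and the openness of $\mse$, I obtain a neighbourhood $\mcv$ of $(q_0,p_0)$ and $\varepsilon_0>0$ with $\KkerUE_h((q,p),\mse)\geq\varepsilon_0$ on $\mcv$; integrating against $\rho_{\mathbf{M}}$ gives $\KkerU_h(q,\mse)\geq\varepsilon_0\varepsilon_1>0$ for $q$ in a neighbourhood $\msw(q_0)$ of $q_0$, which is the claim with $j(q_0)=1$. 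If $j_0=2$, there is an intermediate index $c\in\msi^\star$ with $\bar{\rmq}_h(0,c\mid\msi^\star,q_0,p_0)>0$ and $\bar{\rmq}_h(c,1\mid\msi^\star,q_0,p_0)>0$; by the symmetry \eqref{eq:trajectory-condition2} of $\rmp_h$ (\Cref{prop:rmpcondition_check}) and the definition of $\bar{\rmq}_h$, the first says a NUTS step from $(q_0,p_0)$ reaches $q_c^0:=\operatorname{proj}_1\Phiverlet[h][c](q_0,p_0)$ with positive probability, and the second says a NUTS step started from $\Phiverlet[h][c](q_0,p_0)$ reaches $\operatorname{proj}_1\Phiverlet[h][1](q_0,p_0)\in\mse$ with positive probability. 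Applying the previous upgrade twice — once to the first step with target a small ball $\msw_1\ni q_c^0$ and index $c$, once to the second step from points of $\msw_1$ with target $\mse$ and index $1-c$, now invoking \Cref{hyp:3}$(h,\Kmax)$-\ref{hyp:item_i_topo} at $q_c^0$ — and chaining via $(\KkerU_h)^{2}(q,\mse)\geq\KkerU_h(q,\msw_1)\,\inf_{y\in\msw_1}\KkerU_h(y,\mse)$ yields the claim with $j(q_0)=2$.

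The main obstacle is the second upgrade in the $j_0=2$ case: for $\rmp_h$ to be locally constant at the start of the second NUTS step one needs the momentum component $\operatorname{proj}_2\Phiverlet[h][c](q_0,p_0)$ of the intermediate state to avoid the exceptional set $\Rset^d\setminus\msf_{q_c^0,-0}$, so $p_0$ must be chosen not only in the open set $O\cap\msf_{q_0,-0}$ but also off the closed set $\{p_0:\Phiverlet[h][c](q_0,p_0)\in Z\}$, where $Z$ is the union of the zero sets of the relevant U-turn functionals; proving that this closed set is nowhere dense — that is, that density of exceptional momenta is preserved under the leapfrog maps and that the dynamic orbit-selection rule interacts benignly with the global geometry of $U$ — is exactly what the trajectory-structure results \Cref{lemma:nuts-locally-constant-trajectories} are designed to establish. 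Everything else (surjectivity of one leapfrog step, within-orbit reachability from \Cref{lemma:irreducibility_q}, continuity of $\rmq_h$, and the Chapman--Kolmogorov bookkeeping) is routine.
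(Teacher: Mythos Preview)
Your strategy for the one-step case ($j_0=1$) is essentially the paper's: pick $p_0$ with $\operatorname{proj}_1\Phiverlet[h][1](q_0,p_0)\in\mse$, perturb into $\msf_{q_0,-0}$ so the stopping time is locally constant (this is the content of \Cref{lemma:nuts-locally-constant-trajectories} together with \Cref{lemma:scru_open}), and use continuity of $\rmq_h$. That part is fine.

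The two-step case has a real gap, and you have not closed it. You correctly identify the obstacle: for the second NUTS step, the \emph{continuation} momentum $p_c^0=\operatorname{proj}_2\Phiverlet[h][c](q_0,p_0)$ must lie in $\msf_{q_c^0,-0}$ for the stopping time to be locally constant at $(q_c^0,p_c^0)$. But your claim that \Cref{lemma:nuts-locally-constant-trajectories} is ``designed to establish'' this is wrong. That lemma (and \Cref{hyp:3}$(h,\Kmax)$-\ref{hyp:item_i_topo}) gives density of good momenta only in the $p$-fibre over a \emph{fixed} position $q$. What you need is that the set $\{p_0:\Phiverlet[h][c](q_0,p_0)\notin\mathrm{G}_a\}$ is nowhere dense in $\Rset^d$; this is a statement about the intersection of $\mathrm{G}_a^{\mathrm{c}}$ with the $d$-dimensional submanifold $\Phiverlet[h][c](\{q_0\}\times\Rset^d)\subset(\Rset^d)^2$, which is \emph{not} a $p$-fibre and is not controlled by fibre-wise density. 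An open dense subset of $(\Rset^d)^2$ can miss a generic $d$-submanifold entirely. Moreover, the intermediate index $c$ itself depends on $p_0$ through $\msi^\star$ and $\hat{\rmq}_{h,K}$, so you cannot fix $c$ first and then perturb $p_0$ freely.

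The paper avoids this problem altogether; it does \emph{not} try to arrange that $(q_c^0,p_c^0)$ is a regularity point. Instead (\Cref{lemma:rmq_0}), it observes that for the second NUTS step started from $\check q$ near $q_c^0$ one may restrict the freshly sampled momentum to a small ball $\mathrm{B}(\check p,r_v)$ around the continuation momentum, chosen via \Cref{lemma:technical_ball} so that $\Phiverlet[h][-c](\check q,\check p')$ lands back inside the \emph{original} good neighbourhood $\mathrm{B}(q_0,\tilde r_q)\times\mathrm{B}(p_0,\tilde r_p)$. Then orbit symmetry (\Cref{prop:rmpcondition_check}) gives $\rmp_h(\msj-c\mid\check q,\check p')=\rmp_h(\msj\mid\Phiverlet[h][-c](\check q,\check p'))=\rmp_h(\msj\mid q_0,p_0)>0$, and the bound $\rmq_h(1-c\mid\msj-c,\check q,\check p')\geq m'(q_0)$ is inherited from the already-established uniform positivity on the original neighbourhood. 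No regularity assumption on the intermediate state is needed. This ``pull back along the same orbit'' trick is the missing idea in your argument.
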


We also need to show that the transition kernel admits small sets.

\begin{theorem}
    \label{thm:1nuts-small-sets}
    Assume \Cref{hyp:regularity} and \Cref{hyp:3}$(h,\Kmax)$-\ref{hyp:item_ii_homeo}, for $h >0$ and $\Kmax \in \nset_{>0}$. 
    For every $q \in \Rset^d$ there exists an $r > 0$ for which $\mathrm{B}(q, r)$ is $1$-small for the NUTS transition kernel $\KkerU_h$.
\end{theorem}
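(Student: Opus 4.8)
The plan is to produce the minorising measure as in the basic Hamiltonian Monte Carlo case, after localising in phase space so that one leapfrog‑iterate map becomes a genuine diffeomorphism that is selected with uniformly positive probability. Fix $q_0 \in \Rset^d$ and let $p_0 \in \Rset^d$, $r_H > 0$ be provided by \Cref{hyp:3}$(h,\Kmax)$-\ref{hyp:item_ii_homeo}, so that for each $T \in \mathcal{J} := [-2^{\Kmax}+1:2^{\Kmax}-1]\setminus\{0\}$ the map $\psi_{q_0}^{(T)} : p \mapsto \proj_1 \Phiverlet[h][T](q_0,p)$ is a local homeomorphism on $\ball{p_0}{r_H}$. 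Under \Cref{hyp:regularity} each $\Phiverlet[h][T]$, $T \in \Zset$, is $C^1$ with differential jointly continuous in $(q,p)$, so $\psi_{q_0}^{(T)}$ is $C^1$; being a local homeomorphism it is in particular an open map, whence by Sard's theorem the set of $p \in \ball{p_0}{r_H}$ at which $\dd_p \psi_{q_0}^{(T)}(p)$ is invertible is open and dense. Intersecting over the finitely many $T \in \mathcal{J}$, choose $p_1 \in \ball{p_0}{r_H}$ and $\rho_1 > 0$ with $\cball{p_1}{\rho_1} \subset \ball{p_0}{r_H}$ such that each $\psi_{q_0}^{(T)}$ is injective on $\cball{p_1}{\rho_1}$ with $\dd_p \psi_{q_0}^{(T)}$ invertible there. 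By joint continuity of $(q,p)\mapsto \dd_p\Phiverlet[h][T](q,p)$ and a standard compactness argument, there are a neighbourhood $\msw$ of $q_0$ and constants $0 < \cl \le \cu < \infty$ such that for all $q\in\msw$, $T\in\mathcal{J}$ the map $\psi_q^{(T)}|_{\ball{p_1}{\rho_1}}$ is a diffeomorphism onto an open subset of $\Rset^d$ with $\cl \le |\det \dd_p\psi_q^{(T)}(p)| \le \cu$ on $\ball{p_1}{\rho_1}$.

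Next I would record a uniform lower bound on the probability of a non‑trivial move. Let $j_f$ be the index returned by \Cref{alg:nuts-doubling}, with law $w_j(q,p) := \PP(j_f = j \mid q,p) = \sum_{\msj \ni j}\rmp_h(\msj \mid q,p)\rmq_h(j \mid \msj, q,p)$; this is measurable in $(q,p)$ by \Cref{lemma:nuts-trajectory-probability} and \Cref{lemma:definition_continuity_q} and is supported on $\mathcal{J}\cup\{0\}$. Every half‑orbit $I_k^{\mathrm{new}}$ appended in \Cref{alg:nuts-doubling} consists of nonzero integers only, so once the running index leaves $0$ it can never return; hence $\{j_f = 0\}$ forces $\bar V_0 = 0$ at the first pass through \Cref{alg:nuts-samplingj}, and
\begin{equation}
\sum_{j\in\mathcal{J}} w_j(q,p) = \PP(j_f \neq 0 \mid q,p) \ge \PP(\bar V_0 = 1 \mid q,p) = \tfrac12\Big(1 \wedge \tfrac{\tpi(\Phiverlet[h][1](q,p))}{\tpi(q,p)}\Big) + \tfrac12\Big(1 \wedge \tfrac{\tpi(\Phiverlet[h][-1](q,p))}{\tpi(q,p)}\Big).
\end{equation}
Since $\tpi$ is continuous and strictly positive, the right‑hand side is bounded below by some $c_0 > 0$ on the compact set $\overline{\msw}\times\cball{p_1}{\rho_1}$ (shrinking $\msw$ if needed).

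The main obstacle is to pin down a \emph{single} index selected with uniformly positive probability near $(q_0,p_1)$. From the previous display $\sum_{j\in\mathcal{J}} w_j(q_0,p_1) \ge c_0$, so there is a pair $(\msj^\star, j^\star)$ with $j^\star \neq 0$, $\rmp_h(\msj^\star\mid q_0,p_1) > 0$ and $\rmq_h(j^\star\mid\msj^\star,q_0,p_1) > 0$; the factor $\rmq_h(j^\star\mid\msj^\star,\cdot)$ is continuous by \Cref{lemma:definition_continuity_q}, hence bounded below near $(q_0,p_1)$. The subtle point is lower semicontinuity of $\rmp_h(\msj^\star\mid\cdot)$: by \Cref{lemma:nuts-trajectory-probability} it is a sum of indicators $\mathbbm{1}\{\funS(w,\cdot) = K+1\}$, and $\{\funS(w,q,p) = K+1\}$ is the \emph{open} set $\{w\in\scrU^{(K+1)}(q,p)\}$ intersected with the closed set $\bigcap_{k\le K}\{w|_k\notin\scrU^{(k)}(q,p)\}$, the latter being stable under perturbation of $(q,p)$ only where all inner products entering the stage‑$k$ no‑U‑turn checks, $k\le K$, are strictly positive. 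One must therefore choose $p_1$ within $\ball{p_0}{r_H}$ and $\msj^\star$ accordingly so that these finitely many inner products are strict — for instance, if no U‑turn check ever fires then (for a suitable $p_1$, e.g. with the leapfrog trajectory departing monotonically from $q_0$) the relevant orbit has length $2^{\Kmax}$ and every check is strict, while if a U‑turn fires already at stage $2$ one may take $\msj^\star$ of length $2$, for which there is no stage‑$\le 1$ check and stability is immediate from the openness of $\scrU^{(2)}$. Granting such a choice, $w_{j^\star}(q,p) \ge \rmp_h(\msj^\star\mid q,p)\rmq_h(j^\star\mid\msj^\star,q,p) \ge c_1 > 0$ on a neighbourhood $\msw'\times\ball{p_2}{\rho_2}$ of $(q_0,p_1)$ with $\msw'\subset\msw$, $\ball{p_2}{\rho_2}\subset\ball{p_1}{\rho_1}$.

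With $j^\star$ fixed the proof concludes as for basic HMC. For $q\in\msw'$ and Borel $\msa$, keeping only the contribution of $j^\star$ and of momenta in $\ball{p_2}{\rho_2}$, and substituting $y = \psi_q^{(j^\star)}(p)$ (a diffeomorphism on $\ball{p_2}{\rho_2}$ with inverse $\theta_q$),
\begin{equation}
\KkerU_h(q,\msa) \ge \int_{\ball{p_2}{\rho_2}} \rho_{\mathbf{M}}(p)\, w_{j^\star}(q,p)\, \mathbbm{1}_\msa(\psi_q^{(j^\star)}(p))\, \dd p = \int_{\psi_q^{(j^\star)}(\ball{p_2}{\rho_2})} \rho_{\mathbf{M}}(\theta_q(y))\, w_{j^\star}(q,\theta_q(y))\, \frac{\mathbbm{1}_\msa(y)}{|\det\dd_p\psi_q^{(j^\star)}(\theta_q(y))|}\, \dd y .
\end{equation}
Fix a ball $\ball{y_0}{\rho_3}$ contained in the open set $\psi_{q_0}^{(j^\star)}(\ball{p_2}{\rho_2})$; since $(q,p)\mapsto\psi_q^{(j^\star)}(p)$ is continuous and $\psi_q^{(j^\star)}|_{\ball{p_2}{\rho_2}}$ is a diffeomorphism, shrinking $\msw'$ to a ball $\ball{q_0}{r}$ ensures $\ball{y_0}{\rho_3}\subset\psi_q^{(j^\star)}(\ball{p_2}{\rho_2})$ for all $q\in\ball{q_0}{r}$. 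On $\ball{y_0}{\rho_3}$ the integrand is at least $\varepsilon_0 := \big(\inf_{\cball{p_2}{\rho_2}}\rho_{\mathbf{M}}\big)\, c_1\, \cu^{-1} > 0$, so $\KkerU_h(q,\cdot) \ge \varepsilon_0\,\Leb(\ball{y_0}{\rho_3})\,\mu$ for every $q\in\ball{q_0}{r}$, where $\mu(\cdot) = \Leb(\cdot\cap\ball{y_0}{\rho_3})/\Leb(\ball{y_0}{\rho_3})$. This exhibits $\ball{q_0}{r}$ as a $1$‑small set for $\KkerU_h$, as claimed.
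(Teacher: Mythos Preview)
Your overall architecture matches the paper's proof: localise to a phase–space neighbourhood where one nonzero leapfrog index is selected with uniformly positive probability and the corresponding position map is locally invertible, then push the momentum density forward to obtain a Lebesgue minorant. Your first paragraph (Sard plus open–mapping to upgrade the local homeomorphism to a $C^1$ diffeomorphism with two–sided Jacobian bounds, uniformly for $q$ near $q_0$) is fine, as is your final change–of–variables minorisation; the paper uses a Lipschitz–based lower bound borrowed from \cite{Durmus2017-tf} (\Cref{lemma:ball_image_central}) rather than a Jacobian bound, but your version is equally valid.

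The gap is exactly where you flag it and then write ``Granting such a choice''. You correctly observe that for $\rmp_h(\msj^\star\mid\cdot)$ to stay positive near $(q_0,p_1)$ one needs the no–U–turn inner products at all stages $\le K$ to be \emph{strictly} positive, but your two examples do not establish that a momentum $p_1\in\ball{p_0}{r_H}$ with this property exists; the intermediate case of a first U–turn at some stage $K+1\ge 3$ is not covered. The paper handles this systematically via \Cref{lemma:nuts-locally-constant-trajectories}: for each $a\in B_{\Kmax}$ it introduces the open set $\mathrm{G}_a\subset(\Rset^d)^2$ on which the stopping height $K_{f,a}$ is locally constant, proves $\mathrm{G}_a$ is dense under \Cref{hyp:regularity}, and—crucially—shows that for every fixed $q$ the slice $\{p:(q,p)\in\mathrm{G}_a\}$ is dense. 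This last statement lets one perturb $p_0$ inside $\ball{p_0}{r_H}$ to some $p_\epsilon$ with $(q_0,p_\epsilon)\in\mathrm{G}_a$; local constancy of $K_{f,a}$ then gives local constancy (hence a uniform lower bound) of $\rmp_h(B_K(a|_K)\mid\cdot)$ on a full product neighbourhood $\ball{q_0}{r'}\times\ball{p_\epsilon}{r''}$, after which any $j'$ with $\rmq_h(j'\mid B_K(a|_K),q_0,p_\epsilon)>0$ works by continuity of $\rmq_h$. Your ``strict inner products'' heuristic is essentially a partial rediscovery of $\mathrm{G}_a$, but the density argument is what is missing. Incidentally, the slice–density claim in that lemma invokes \Cref{hyp:3}$(h,\Kmax)$-\ref{hyp:item_i_topo}, so the paper's own proof appears to use a hypothesis not listed in the theorem statement; your attempt runs into the same hidden requirement.
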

Finally, as a byproduct of the proofs of \Cref{thm:nuts-accessibility} and \Cref{thm:1nuts-small-sets}:
\begin{theorem}
    \label{thm:3small}
    Assume \Cref{hyp:regularity} and \Cref{hyp:3}$(h,\Kmax)$, for $h >0$ and $\Kmax \in \nset_{>0}$. All compact sets are 3-small for the NUTS transition kernel $\KkerU_h$. Consequently, the NUTS transition kernel $\KkerU_h$ is aperiodic.
\end{theorem}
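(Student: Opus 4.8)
The plan is to establish a uniform three–step minorization over an arbitrary compact set by routing every trajectory through a fixed reference ball in at most two steps and then spending the third step on the one–step minorization of that ball. First I would fix a reference ball: by \Cref{thm:1nuts-small-sets} there is a ball $B_\star = \mathrm{B}(q_\star,r_\star)$ that is $1$-small, i.e. there exist $\varepsilon_\star \in (0,1]$ and a probability measure $\nu_\star$ with $\KkerU_h(x,\cdot)\ge \varepsilon_\star\nu_\star(\cdot)$ for all $x\in B_\star$. Revisiting the construction behind \Cref{thm:1nuts-small-sets} (where $\nu_\star$ is, up to normalization, the pushforward of a Gaussian restricted to a momentum ball under one of the local homeomorphisms $\psi_{q_0}^{(T)}$ furnished by \Cref{hyp:3}$(h,\Kmax)$-\ref{hyp:item_ii_homeo}, hence absolutely continuous), one can choose this ball so that in addition $\nu_\star(B_\star)>0$. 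Writing $c_0 = \varepsilon_\star\nu_\star(B_\star)\in(0,1]$, this yields the uniform self–accessibility bound $\KkerU_h(y,B_\star)\ge c_0$ for every $y\in B_\star$, which is the ingredient that lets the construction tolerate the one–or–two steps in \Cref{thm:nuts-accessibility}.

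Next, given a compact set $\msc\subset\Rset^d$, I would apply \Cref{thm:nuts-accessibility} with the open set $\mse = B_\star$: for each $q_0\in\msc$ there are a neighbourhood $\msw(q_0)$, a constant $m(q_0)>0$ and $j(q_0)\in\{1,2\}$ with $(\KkerU_h)^{j(q_0)}(q,B_\star)\ge m(q_0)$ for $q\in\msw(q_0)$. By compactness finitely many $\msw(q_0^{(1)}),\dots,\msw(q_0^{(N)})$ cover $\msc$; set $m=\min_i m(q_0^{(i)})>0$. Fix $q\in\msc$ with $q\in\msw(q_0^{(i)})$. If $j(q_0^{(i)})=2$ then directly $(\KkerU_h)^2(q,B_\star)\ge m$; if $j(q_0^{(i)})=1$ then $\KkerU_h(q,B_\star)\ge m$ and, by Chapman--Kolmogorov together with the previous paragraph,
\[
(\KkerU_h)^2(q,B_\star)\ge \int_{B_\star}\KkerU_h(q,\rmd y)\,\KkerU_h(y,B_\star)\ge m\,c_0 .
\]
Either way $\inf_{q\in\msc}(\KkerU_h)^2(q,B_\star)\ge m\,c_0 >0$.

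Then I would conclude the three–step minorization: for every $q\in\msc$ and every $\msa\in\mcb{\Rset^d}$,
\[
(\KkerU_h)^3(q,\msa)\ge \int_{B_\star}(\KkerU_h)^2(q,\rmd y)\,\KkerU_h(y,\msa)\ge \varepsilon_\star\nu_\star(\msa)\,(\KkerU_h)^2(q,B_\star)\ge m\,c_0\,\varepsilon_\star\,\nu_\star(\msa),
\]
so $\msc$ is $3$-small with minorizing measure $\nu_\star$ and constant $m c_0\varepsilon_\star$; since $\msc$ was arbitrary, all compact sets are $3$-small. For aperiodicity, note $B_\star$ is open, hence accessible by \Cref{thm:nuts-accessibility}, and $1$-small, hence an accessible small set, so $\KkerU_h$ is irreducible; moreover $\inf_{x\in B_\star}\KkerU_h(x,B_\star)\ge c_0>0$, so $1$ lies in $\{n\ge 1:\inf_{x\in B_\star}(\KkerU_h)^n(x,B_\star)>0\}$ and therefore the period $d(B_\star)$ equals $1$. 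Since all accessible small sets of an irreducible kernel share a common period, $\KkerU_h$ is aperiodic.

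The hard part, and the only step that is not routine Chapman--Kolmogorov bookkeeping, is the claim in the first paragraph that one can produce a $1$-small ball whose own one–step minorizing measure charges the ball — this is not visible from the bare statement of \Cref{thm:1nuts-small-sets} and requires going back into its proof to track where the (absolutely continuous) minorizing measure is supported and to center $B_\star$ accordingly. A secondary point requiring care is the case split $j(q_0)\in\{1,2\}$ coming from \Cref{thm:nuts-accessibility}, but this is fully absorbed by the uniform bound $\KkerU_h(y,B_\star)\ge c_0$ on $B_\star$, so once the first paragraph is secured the rest goes through mechanically.
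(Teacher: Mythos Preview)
Your overall architecture matches the paper's: funnel every point of the compact set into a fixed reference ball in at most two NUTS steps using \Cref{thm:nuts-accessibility}, spend the third step on the one-step minorization of that ball from \Cref{thm:1nuts-small-sets}, handle the case split $j(q_0)\in\{1,2\}$ by inserting one extra self-return step when $j(q_0)=1$, and pass to compact sets by a finite cover.

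The gap is exactly the step you flag, and your suggested resolution does not work. In the proof of \Cref{thm:1nuts-small-sets} the minorizing measure is Lebesgue on a ball $\mathrm{B}(\tilde q,M)$ with $\tilde q\in\psi_{q_0}^{(j')}(\mathrm{B}(p_0,\tilde r))$, i.e.\ where $j'\neq 0$ leapfrog steps send the momentum ball, while the $1$-small set is $\mathrm{B}(q_0,r_1)$. You cannot ``center $B_\star$ accordingly'': $q_0$ and $\tilde q$ are tied through the leapfrog map, so moving the center of the small ball moves the support of $\nu_\star$ with it, and $p_0$ is not freely chosen either (it must satisfy \Cref{hyp:3}$(h,\Kmax)$-\ref{hyp:item_ii_homeo} and lie in $\mathrm{G}_a$). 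There is no reason to expect $\nu_\star(B_\star)>0$. The paper supplies the missing self-return by a separate, NUTS-specific lemma (\Cref{lemma:pre_aperiodicity}): at $q_m$ with $\nabla U(q_m)\neq 0$, the momentum $p=(h/2-\alpha/h)\nabla U(q)$ forces $p^\top(q_1-q)<0$, so the No-U-Turn rule fires at depth one, the orbit is $\{0,1\}$, and $q_1=q-\alpha\nabla U(q)$ stays near $q_m$. This yields a smaller ball $\mathrm{B}(q_m,r_{q_m})\subset\mathrm{B}(q_m,r_m)$ with a uniform one-step return bound into $\mathrm{B}(q_m,r_m)$; \Cref{thm:nuts-accessibility} is then applied with target $\mathrm{B}(q_m,r_{q_m})$, and this self-return --- which exploits the stopping rule, not general Markov-chain machinery --- furnishes the extra step when $j(q_0)=1$. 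Your outline is missing precisely this ingredient.
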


\section{Geometric ergodicity}

 \label{section:ergo_geo}
In this section, we give conditions on the potential $U$ which imply that the NUTS kernel converges geometrically to its invariant distribution.
Let $\VFL: \Rset^d\to [1,+\infty )$ be a measurable function and $\Kker$ be a Markov kernel on $(\Rset^d,\mathcal{B}(\Rset^d))$.
Recall that the definition of $\VFL$-uniformly geometrically ergodicity is given in \eqref{eq:v-uniform-ergodicity}.
By \cite[Theorem 16.0.1]{markovchainmeyn2012markov}, if $\Kker$ is aperiodic, irreducible and satisfies a Foster--Lyapunov drift condition,
i.e., there exist a small set $\msc\in \mathcal{B}(\Rset^d)$ for $\Kker$, $\lambda \in [0,1)$ and $b<+\infty$ such that
\begin{equation}
    \label{Foyster-Lyapunov-condtion}
    \Kker \VFL\leq \lambda \VFL+b \mathbbm{1}_\msc\eqsp,
\end{equation}
then $\Kker$ is $\VFL$-uniformly geometrically ergodic. If a function $\VFL: \Rset^d \to [0,+\infty)$ satisfies \eqref{Foyster-Lyapunov-condtion},
 then $\VFL$ is said to be a Foster--Lyapunov function for $P$.

Define for $a> 0$ and $q\in \Rset^d$, the function
\begin{equation}
    \label{eq:Va}
    \VFL_a(q)=\exp(a|q|)\eqsp .
\end{equation}
In what follows we show that, for any $a > 0$, $\VFL_a$ is a Foster--Lyapunov function for the NUTS kernel under the same assumptions on the potential $U$ considered for HMC in \cite{Durmus2017-tf}. 
Let $m\in(1,2]$.
\begin{assumption}[$m$]
    \label{hyp:rappel}
    \begin{enumerate}[label=(\roman*),wide, labelwidth=!, labelindent=0pt]
        \item \label{hyp:rappel:item_tailgrad} There exists $\msm_1\geq 0$ such that for any $q\in \Rset^d$,
    $$|\nabla U(q)|\leq \msm_1 (1+|q|^{m-1})\eqsp .$$
    \item \label{hyp:rappel:item_rappel}There exist $A_1> 0$ and $A_2\in\Rset$ such that for any $q\in \Rset^d$,
        $$\big(\nabla U(q)\big)^\transpose q \geq A_1 |q|^m-\msa_2 \eqsp .$$
    \item \label{hyp:rappel:item_tailgradv2}$U\in \mathrm{C}^3(\Rset^d)$ and there exists $A_3 > 0$ such that for any $q\in \Rset^d$ and k=2,3:
    $$ |\dd^k U(q)|\leq A_3 (1+|q|)^{m-k}\eqsp . $$
    \item \label{hyp:rappel:item_rappelv2}There exist $A_4 > 0$ and $R_U\in \Rset_{\geq0}$ such that for any $q\in\Rset^d$,$ |q|\geq R_U$,
    $$\left(\nabla U(q)\right)^\transpose \dd^2 U(q) \nabla U(q) \geq A_4 |q|^{3m-4} \eqsp .$$
    \end{enumerate}
\end{assumption}

We remark that \Cref{hyp:rappel}(m), originating from \cite{Durmus2017-tf}, concerns the geometry of the tail of the target distribution $\pi$.
 Conditions \Cref{hyp:rappel}($m$)-\ref{hyp:rappel:item_rappel} and \Cref{hyp:rappel}($m$)-\ref{hyp:rappel:item_tailgrad} induce a restoring force in the tails of $\pi$ and will imply the stability of the proposal kernel.
 This will be more transparent after \Cref{lemma:normposition} below.
 Conditions \Cref{hyp:rappel}($m$)-\ref{hyp:rappel:item_rappelv2} and \Cref{hyp:rappel}($m$)-\ref{hyp:rappel:item_tailgradv2} are both strenghtenings of \Cref{hyp:rappel}($m$)-\ref{hyp:rappel:item_rappel} and \Cref{hyp:rappel}($m$)-\ref{hyp:rappel:item_tailgrad}, respectively, and are needed in order to guarantee that proposals which move away from the center are rejected with probability approaching one in the tails of $\pi$.
 These last conditions are pretty mild: a smooth perturbation of a Gaussian target satisfies \Cref{hyp:rappel}($m$). 
More generally, they are satisfied by $m$-homogeneous quasi-convex functions and by perturbations of such functions (see \cite[Proposition 6]{Durmus2017-tf}).
Recall that a function $U_0$ is $m$-homogeneous quasi-convex outside a ball of radius $R_1$ if the following conditions are satisfied:
\begin{itemize}
  \item For all $t \geq 1$ and $q \in \mathbb{R}^d,|q| \geq R_1, U_0(t q)=t^m U_0(q)$.
\item  For all $q \in \mathbb{R}^d,|q| \geq R_1$, the level sets $\left\{x: U_0(x) \leq U_0(q)\right\}$ are convex.
\end{itemize}

In the case $m=2$ we propose the following milder alternative to \Cref{hyp:rappel}($2$)-\ref{hyp:rappel:item_tailgradv2},\ref{hyp:rappel:item_rappelv2}:
\begin{assumption}
    \label{hyp:gaussian_perturbation}
There exists a twice continuously differentiable $\tilde{U}: \mathbb{R}^d \rightarrow \mathbb{R}$ and a positive definite matrix $\boldsymbol{\Sigma}$ such that $U(q)=q^\top\boldsymbol{\Sigma} q / 2+\tilde{U}(q)$, and there exist $A_5 \geq 0$ and $\varrho \in[1,2)$ such that for any $q, x \in \mathbb{R}^d$
$$
\begin{gathered}
|\tilde{U}(q)| \leq A_5\left(1+|q|^{\varrho}\right), \quad|\nabla \tilde{U}(q)| \leq A_5\left(1+|q|^{\varrho-1}\right)\eqsp, \\
|\nabla \tilde{U}(q)-\nabla \tilde{U}(x)| \leq A_5|q-x|\eqsp .
\end{gathered}
$$
\end{assumption}
\begin{remark}
    \label{rmk:implication_gaussian}
     It is straightforward to check that under \Cref{hyp:gaussian_perturbation}, the conditions \Cref{hyp:regularity} and \Cref{hyp:rappel}($2$)-\ref{hyp:rappel:item_tailgrad},\ref{hyp:rappel:item_rappel} hold.
\end{remark}

The following Lemma gives the main ingredients to establish the drift condition on the kernel $\KkerU_h$.
\begin{lemma}
    \label{lemma:sketchofproof}
    Assume either \Cref{hyp:regularity} and \Cref{hyp:rappel}($m$)-\ref{hyp:rappel:item_tailgrad} for some $m\in (1,2]$, or \Cref{hyp:gaussian_perturbation}.
    Let $\gamma \in ((m-1)/2,m-1)$ and denote
    $\mathrm{B}(q_0) = \{ p\in \Rset^d : |p|\leq |q_0|^{\gamma}\}$ for any $q_0\in \Rset^d$.
    Let $h > 0$ and suppose that
     there exists $R_0>0$ such that for any $q_0\in \Rset^d$ with $|q_0|\geq R_0$ and $p_0\in \mathrm{B}(q_0)$
     we have for any $j\in[-2^{\Kmax},2^{\Kmax}]\setminus \{0\}$
     \begin{equation}
        \label{eq:norm_position_condition}
        \ee |\operatorname{proj}_1\Phiverlet[h][j](q_0, p_0)|-|q_0|\leq -1 
     \end{equation}
     and for $ j\in\{-1,1\}$
    \begin{equation}
        \label{eq:lemma_hamiltonian_condition}
        H(\Phiverlet[h][j](q_0,p_0))-H(q_0,p_0)\leq 0\eqsp . 
    \end{equation}
     Then, there exist $\lambda \in (0,1)$ and $b,R'>0$ such that 
     \begin{equation}
         \KkerU_{h}\VFL_a \leq\lambda  \VFL_a +b \mathbbm{1}_{\bar{\mathrm{B}}(0,R')} \eqsp .
     \end{equation}
    
     
\end{lemma}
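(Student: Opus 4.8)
By the explicit form \eqref{eq:transition-kernel} of the NUTS kernel,
\begin{equation*}
  \KkerU_h \VFL_a(q_0) = \int \rho_{\mathbf{M}}(p_0) \Big[ \sum_{\msj \subset \Zset} \sum_{j \in \msj} \rmp_h(\msj \mid q_0, p_0)\, \rmq_h(j \mid \msj, q_0, p_0)\, \exp(a|\operatorname{proj}_1 \Phiverlet[h][j](q_0,p_0)|) \Big]\, \dd p_0\eqsp,
\end{equation*}
and only indices $j$ with $|j| \le 2^{\Kmax}$ contribute, since $\rmp_h(\cdot \mid q_0, p_0)$ is supported on index sets contained in $[-2^{\Kmax}:2^{\Kmax}]$. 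I would split the inner integral according to whether $p_0 \in \mathrm{B}(q_0)$ or $p_0 \notin \mathrm{B}(q_0)$: on the first set the two hypotheses \eqref{eq:norm_position_condition}--\eqref{eq:lemma_hamiltonian_condition} are available, on the second the Gaussian density $\rho_{\mathbf{M}}$ carries very little mass. The goal is to show that for $|q_0|$ large each contribution is at most a small multiple of $\VFL_a(q_0)$, while on the remaining compact set $\KkerU_h\VFL_a$ is bounded.

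\emph{Central part: $p_0 \in \mathrm{B}(q_0)$, $|q_0| \ge R_0$.} The crucial observation is that \eqref{eq:lemma_hamiltonian_condition} forces the selected index $j_f$ to be nonzero. Indeed $H(\Phiverlet[h][\pm1](q_0,p_0)) \le H(q_0,p_0)$ gives $\tpi(\Phiverlet[h][\pm1](q_0,p_0)) \ge \tpi(q_0,p_0)$, so in the recursion \eqref{def:recur_def_rmq} the first factor $R_{v|_0} = 1 \wedge [\tpi(\msi_{v,0}^{\text{new}})/\tpi(\msi_{v,0}^{\text{old}})]$ equals $1$ (here $\msi_{v,0}^{\text{old}} = \{0\}$ and $\msi_{v,0}^{\text{new}} \in \{\{-1\},\{1\}\}$), whence $\rmq_h(\cdot \mid \msi_{v,1}^{\text{old}}, q_0, p_0)$ is a point mass at $\pm1$; and by induction on $K$ the support of $\rmq_h(\cdot \mid \msi_{v,K}^{\text{old}}, q_0, p_0)$ never contains $0$, because each step only moves mass onto $\msi_{v,k}^{\text{new}}$ while $0 \in \msi_{v,0}^{\text{old}} \subset \msi_{v,k}^{\text{old}}$. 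Hence $\rmq_h(0 \mid \msj, q_0, p_0) = 0$ whenever $\rmp_h(\msj \mid q_0, p_0) > 0$, so every contributing index $j$ is nonzero with $|j| \le 2^{\Kmax}$, and \eqref{eq:norm_position_condition} yields $|\operatorname{proj}_1 \Phiverlet[h][j](q_0,p_0)| \le |q_0| - 1$. Being a probability-weighted average of terms each at most $\exp(a(|q_0|-1))$, the bracketed sum is $\le \exp(a(|q_0|-1)) = \rme^{-a}\VFL_a(q_0)$, and so is the integral over $\mathrm{B}(q_0)$.

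\emph{Tail part: $|p_0| > |q_0|^\gamma$.} Under \Cref{hyp:regularity} a Gronwall-type bound for the leapfrog map (of the kind established in \Cref{lemma:normposition}) provides a constant $C$, depending only on $h$, $\Kmax$ and the Lipschitz/growth data of $\nabla U$, with $|\operatorname{proj}_1 \Phiverlet[h][j](q_0,p_0)| \le C(1+|q_0|+|p_0|)$ for all $|j| \le 2^{\Kmax}$; moreover, in the sub-quadratic regime, i.e.\ under \Cref{hyp:rappel}($m$)-\ref{hyp:rappel:item_tailgrad} with $m<2$ or the sub-quadratic component of \Cref{hyp:gaussian_perturbation}, one gets the sharper bound $|\operatorname{proj}_1 \Phiverlet[h][j](q_0,p_0)| \le |q_0| + C(1+|q_0|^{m-1}+|p_0|+|p_0|^{m-1})$. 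Combining the relevant bound with the Gaussian estimate $\int_{|p_0|>r} \rho_{\mathbf{M}}(p_0)\exp(aC(|p_0|+|p_0|^{m-1}))\, \dd p_0 \lesssim \exp(-r^2/4)$ valid for $r$ large, the integral over $\{|p_0|>|q_0|^\gamma\}$ is at most $\VFL_a(q_0)\exp(aC(1+|q_0|^{m-1}))\exp(-|q_0|^{2\gamma}/4)$ when $m<2$ and at most $\exp(aC|q_0|)\exp(-|q_0|^{2\gamma}/4)$ when $m=2$. Since $\gamma > (m-1)/2$ gives $2\gamma > m-1$, and $2\gamma > 1$ when $m=2$, in both cases this bound is of the form $\epsilon(|q_0|)\VFL_a(q_0)$ with $\epsilon(r) \to 0$ as $r \to \infty$.

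\emph{Assembly and the main difficulty.} Fix $\epsilon_0 > 0$ with $\lambda := \rme^{-a}+\epsilon_0 \in (0,1)$ and $R' \ge R_0$ with $\epsilon(r) \le \epsilon_0$ for $r \ge R'$. For $|q_0| \ge R'$, adding the two estimates gives $\KkerU_h \VFL_a(q_0) \le (\rme^{-a}+\epsilon_0)\VFL_a(q_0) = \lambda \VFL_a(q_0)$; for $|q_0| < R'$ the crude growth bound and finiteness of Gaussian exponential moments give $\KkerU_h \VFL_a(q_0) \le \int \rho_{\mathbf{M}}(p_0)\exp(aC(1+R'+|p_0|))\, \dd p_0 =: b < \infty$. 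Since $\VFL_a \ge 1$, this proves $\KkerU_h \VFL_a \le \lambda \VFL_a + b\,\mathbbm{1}_{\bar{\mathrm{B}}(0,R')}$. The main obstacle is the central part: one has to unpack the recursive NUTS index selection (Algorithms \ref{alg:nuts-doubling}--\ref{alg:nuts-samplingj}, summarized in \eqref{def:recur_def_rmq}) carefully enough to see that the single Hamiltonian inequality \eqref{eq:lemma_hamiltonian_condition} at $j=\pm1$ pins the first accepted index away from $0$, so that $j_f \ne 0$ almost surely — this is exactly what turns the $-1$ in \eqref{eq:norm_position_condition} into genuine geometric contraction instead of allowing the chain to stay put with positive probability. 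A secondary, more routine but constraint-sensitive point is that closing the tail estimate forces the balance $2\gamma > m-1$, which is why $\gamma$ is taken in $((m-1)/2, m-1)$.
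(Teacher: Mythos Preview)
Your proposal is correct and takes essentially the same approach as the paper: the decomposition on $\mathrm{B}(q_0)$ versus its complement, the key observation that \eqref{eq:lemma_hamiltonian_condition} forces $\rmq_h(0\mid\msj,q_0,p_0)=0$ via the first step of the recursion \eqref{def:recur_def_rmq}, and the use of \eqref{eq:norm_position_condition} to extract the $e^{-a}$ contraction on the central part are all identical to the paper's argument. The only minor variation is in the tail estimate: the paper substitutes $|q_0|^{m-1}\le|p_0|^{(m-1)/\gamma}$ on $\mathrm{B}(q_0)^\complementary$ to obtain a $q_0$-independent dominating function $\phi(p_0)=\exp(ac(1+|p_0|^{(m-1)/\gamma}))\rho_0(p_0)$ (integrable since $(m-1)/\gamma<2$) and then applies dominated convergence, whereas you keep $|q_0|^{m-1}$ explicit and invoke a quantitative Gaussian tail bound comparing $|q_0|^{2\gamma}$ against $|q_0|^{m-1}$ (or $|q_0|$ when $m=2$) --- but both routes exploit the constraint $\gamma>(m-1)/2$ in exactly the same way.
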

\begin{proof}
    The proof is postponed to Section 5.1 of the Supplementary Material A.
\end{proof}

Based on the previous lemma, we shall analyze the dynamics when the norm of the position $|q_0|$ is large enough and when the norm of the momentum $|p_0|$ is smaller than $|q_0|^\gamma$ with $\gamma\in ((m-1)/2,m-1) $.
 In that case, we aim to establish that the positions on the orbit $\orbit_{[-2^\Kmax+1:2^\Kmax-1] \setminus \{0\}}(q_0, p_0) = \{ \Phiverlet[h][j](q_0, p_0) : j \in [-2^\Kmax+1:2^\Kmax-1] \setminus \{0\} \}$ lie in the ball $\mathrm{B}(0, |q_0|-1)$, and one of these points is always accepted by the index selection rule due to \eqref{eq:lemma_hamiltonian_condition}.
This is done in \Cref{lemma:normposition} and \Cref{prop:degrowth_energy} below, respectively.
\begin{proposition}
    \label{prop:degrowth_energy}
    Assume either \Cref{hyp:regularity} and \Cref{hyp:rappel}($m$) for some $m\in (1,2]$, or \Cref{hyp:gaussian_perturbation}. Let $\gamma \in (0,m-1)$.
    \begin{enumerate}[label=(\alph*)]
        \item If $m\in(1,2)$ and $h > 0$, there exists $R_H >0$ such that for any $(q_0,p_0)\in (\Rset^d)^2$ with $|q_0|\geq R_H$ and $|p_0|\leq |q_0|^\gamma$ we have $H(\Phiverlet[h][j](q_0,p_0))-H(q_0,p_0)\leq 0$ for $j\in \{-1,1\}$.
        \item If $m=2$, there exists $\bar{S}>0$ such that for any $h\in(0,\bar{S}]$, there exists $R_H >0$ such that for any $(q_0,p_0)\in (\Rset^d)^2$ with $|q_0|\geq R_H$ and $|p_0|\leq |q_0|^\gamma$ we have $H(\Phiverlet[h][j](q_0,p_0))-H(q_0,p_0)\leq 0$ for $j\in \{-1,1\}$.
    \end{enumerate}

\end{proposition}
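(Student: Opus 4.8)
The plan is to use the classical exact identity for the energy increment of one leapfrog step, extract from it a strictly negative leading term coming from \Cref{hyp:rappel}($m$)-\ref{hyp:rappel:item_rappelv2} (or from positive definiteness of $\boldsymbol{\Sigma}$ under \Cref{hyp:gaussian_perturbation}), and show that this term dominates all remaining contributions once $|q_0|$ is large (and, when $m=2$, $h$ is small). First, by the momentum-reversal symmetry $\Phiverlet[h][-1]=\mathcal{R}\circ\Phiverlet[h][1]\circ\mathcal{R}$ with $\mathcal{R}(q,p)=(q,-p)$, together with $H\circ\mathcal{R}=H$ and $|-p_0|=|p_0|$, the case $j=-1$ reduces to the case $j=1$, so I fix $j=1$, write $(q_1,p_1)=\Phiverlet[h][1](q_0,p_0)$, and note $q_1-q_0=hp_0-\tfrac{h^2}{2}\nabla U(q_0)$. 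A direct computation, expanding $\tfrac12|p_1|^2$ and using this expression for $q_1-q_0$, gives
\[
    H(q_1,p_1)-H(q_0,p_0)=\mathcal{E}_{\mathrm{trap}}+\tfrac{h^2}{8}\big(|\nabla U(q_1)|^2-|\nabla U(q_0)|^2\big),
\]
where $\mathcal{E}_{\mathrm{trap}}=U(q_1)-U(q_0)-\tfrac12(q_1-q_0)^\top(\nabla U(q_0)+\nabla U(q_1))$ is exactly the trapezoidal-rule error of $\int\nabla U\cdot\mathrm{d}q$ along $[q_0,q_1]$.

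Next I would record a priori size estimates. Since $|p_0|\le|q_0|^\gamma$ with $\gamma<m-1$ and $|\nabla U(q_0)|$ is of order $|q_0|^{m-1}$ by \Cref{hyp:rappel}($m$)-\ref{hyp:rappel:item_tailgrad}, the displacement $|q_1-q_0|$ is of order $|q_0|^{m-1}$; for $m<2$ this is $o(|q_0|)$, so for $|q_0|$ large the segment $[q_0,q_1]$ stays inside the annulus $\{\tfrac12|q_0|\le|q|\le2|q_0|\}$, on which $\|\mathrm{d}^kU\|$ is of order $|q_0|^{m-k}$ for $k=2,3$ by \Cref{hyp:rappel}($m$)-\ref{hyp:rappel:item_tailgradv2}; when $m=2$ the same holds provided $h$ is small enough that $|q_1-q_0|\le\tfrac12|q_0|$. (Under \Cref{hyp:gaussian_perturbation} one splits $U=\tfrac12q^\top\boldsymbol{\Sigma}q+\tilde U$ and uses $\|\nabla^2\tilde U\|\le A_5$ with the subquadratic bounds on $\tilde U,\nabla\tilde U$ instead of the $C^3$ bounds.) Writing $\bar H=\int_0^1\nabla^2U(q_0+t(q_1-q_0))\,\mathrm{d}t$, so that $\nabla U(q_1)-\nabla U(q_0)=\bar H(q_1-q_0)$, one has $|\nabla U(q_1)|^2-|\nabla U(q_0)|^2=2(q_1-q_0)^\top\bar H\nabla U(q_0)+(q_1-q_0)^\top\bar H^2(q_1-q_0)$; substituting $q_1-q_0=hp_0-\tfrac{h^2}{2}\nabla U(q_0)$ and replacing $\bar H$ by $\nabla^2U(q_0)$ up to a controlled error isolates the leading piece $-\tfrac{h^4}{8}\nabla U(q_0)^\top\nabla^2U(q_0)\nabla U(q_0)$, which is $\le-\tfrac{h^4A_4}{8}|q_0|^{3m-4}$ for $|q_0|\ge R_U$ by \Cref{hyp:rappel}($m$)-\ref{hyp:rappel:item_rappelv2} (under \Cref{hyp:gaussian_perturbation} the leading piece is $-\tfrac{h^4}{8}q_0^\top\boldsymbol{\Sigma}^3q_0\le-\tfrac{h^4}{8}\lambda_{\min}(\boldsymbol{\Sigma})^3|q_0|^2$). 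For $\mathcal{E}_{\mathrm{trap}}$ I would use $|\mathcal{E}_{\mathrm{trap}}|\le\tfrac1{12}\sup_{[q_0,q_1]}\|\mathrm{d}^3U\|\,|q_1-q_0|^3$ (or, in the Gaussian case, that $\mathcal{E}_{\mathrm{trap}}$ vanishes on the quadratic part of $U$ and is of order $|q_0|^{\varrho}$ on $\tilde U$).

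It then remains to bound the other terms and compare polynomial orders in $r:=|q_0|$. They split into two families. Family (i), of order strictly below the leading term: the cross term $\tfrac{h^3}{4}p_0^\top\nabla^2U(q_0)\nabla U(q_0)$, of order $r^{\gamma+2m-3}$ (strictly smaller since $\gamma<m-1$); when $m<2$, $\mathcal{E}_{\mathrm{trap}}$, of order $r^{4m-6}$ (strictly smaller since $m<2$); and, in the Gaussian/$m=2$ case, the $\tilde U$-contributions, of orders $r^{\varrho}$, $r^{\gamma+1}$, $r^{2\varrho-2}$ (strictly smaller since $\varrho<2$, $\gamma<1$) — all negligible relative to the leading term for $r$ large, for any fixed $h$. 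Family (ii), present only when $m=2$: terms of the same order $r^2$ as the leading term but carrying an extra factor $h^2$ relative to its size $h^4r^2$, namely $\tfrac{h^2}{8}(q_1-q_0)^\top\bar H^2(q_1-q_0)$, $\tfrac{h^2}{4}(q_1-q_0)^\top(\bar H-\nabla^2U(q_0))\nabla U(q_0)$, $\mathcal{E}_{\mathrm{trap}}$ (of order $h^6r^2$), and, in the Gaussian case, $\tfrac{h^2}{8}|\boldsymbol{\Sigma}(q_1-q_0)|^2$ (of order $h^6r^2$) — controlled by taking $h\le\bar S$ for a threshold $\bar S$ depending only on the structural constants (also chosen so that $|q_1-q_0|\le\tfrac12 r$). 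Hence: if $m\in(1,2)$, for every $h>0$ there is $R_H\ge R_U$ with $H(q_1,p_1)-H(q_0,p_0)\le-\tfrac{h^4A_4}{16}r^{3m-4}<0$ whenever $r\ge R_H$ and $|p_0|\le r^\gamma$ — this is (a); if $m=2$, fixing $\bar S$ as above makes the family-(ii) contributions at most half the leading coefficient, and then for $h\le\bar S$ the same estimate gives $H(q_1,p_1)-H(q_0,p_0)<0$ for $r\ge R_H(h)$, $|p_0|\le r^\gamma$ — this is (b), with $\boldsymbol{\Sigma}^3$ playing the role of $\nabla^2U(q_0)$ sandwiched by $\nabla U(q_0)$ under \Cref{hyp:gaussian_perturbation}.

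The main obstacle is the error bookkeeping: classifying each remainder as either strictly subleading in $r$ (hence absorbed by $r\to\infty$) or of leading order in $r$ but subleading in $h$ (hence requiring $h\le\bar S$), and, crucially in the Gaussian-perturbation case, estimating the cross terms involving $\nabla\tilde U(q_1)-\nabla\tilde U(q_0)$ via the subquadratic growth bound $|\nabla\tilde U(q)|\le A_5(1+|q|^{\varrho-1})$ rather than via the Lipschitz constant $A_5$ — the latter only yields an $O(h^4r^2)$ contribution, which is of the same size as the leading term and not controlled by taking $h$ small.
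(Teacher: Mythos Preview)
Your reduction from $j=-1$ to $j=1$ via the momentum-flip symmetry $\Phiverlet[h][-1]=\mathcal{R}\circ\Phiverlet[h][1]\circ\mathcal{R}$ and $H\circ\mathcal{R}=H$ is exactly what the paper does. For $j=1$, the paper does not argue directly but simply invokes \cite[Proposition~7]{Durmus2017-tf} under \Cref{hyp:rappel}($m$) and \cite[Proposition~9]{Durmus2017-tf} under \Cref{hyp:gaussian_perturbation}; your proposal is a self-contained derivation of precisely those results.

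Your argument is correct. The exact identity
\[
H(q_1,p_1)-H(q_0,p_0)=\mathcal{E}_{\mathrm{trap}}+\tfrac{h^2}{8}\big(|\nabla U(q_1)|^2-|\nabla U(q_0)|^2\big)
\]
is the standard one-step leapfrog energy error, the isolation of the leading piece $-\tfrac{h^4}{8}\nabla U(q_0)^\top\nabla^2U(q_0)\nabla U(q_0)$ is the right mechanism, and your order-counting in $r=|q_0|$ (family (i): strictly subleading in $r$; family (ii): same order in $r$ but with an extra $h^2$, present only when $m=2$) is correct and matches how such estimates are organized in \cite{Durmus2017-tf}. Your handling of the Gaussian-perturbation case---using the sublinear growth bound on $\nabla\tilde U$ rather than its Lipschitz constant to keep the cross terms $O(r^{\varrho})$ instead of $O(h^4r^2)$, and noting that $\mathcal{E}_{\mathrm{trap}}$ vanishes on the quadratic part---is the key point there and you identify it correctly. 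In short: same strategy as the paper, but you spell out what the paper outsources to \cite{Durmus2017-tf}.
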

\begin{proof}
    The proof is postponed to Section 5.2 of the Supplementary Material A.
\end{proof}


\begin{lemma}
    \label{lemma:normposition}
    Assume either \Cref{hyp:regularity},  \Cref{hyp:rappel}($m$)-\ref{hyp:rappel:item_tailgrad},\ref{hyp:rappel:item_rappel} for some $m\in (1,2]$ or \Cref{hyp:gaussian_perturbation} and let $T\in \Nset^*$.
    \begin{enumerate}[label=(\alph*)]
        \item If $m<2$, let $\gamma \in \big(\max\big(2(m-1)-1,(m-1)/2\big),m-1\big)$.
    Then, for any $h>0$, there exists $R_0>0$ such that for any $(q_0,p_0)\in (\Rset^d)^2$ with $|q_0|\geq R_0$ and $p_0\leq |q_0|^\gamma$, for any $j\in[-T:T]$ with $j\neq 0$ we have
     $$\ee |\operatorname{proj}_1\Phiverlet[h][j](q_0, p_0)|-|q_0|\leq -1\eqsp . $$
     \item If $m=2$, let $\gamma= 2/3$. Denote
     \begin{equation}
         \label{eq:V2}
     \mathcal{V}_2(s)=\msm_1/\ltt_1^{\frac{1}{2}}+\msm_1 s/2+\ltt_1^{\frac{1}{2}}\msm_1 s^2/4 \eqsp ,
     \end{equation} 
     $\msm_1$ is well defined even under \Cref{hyp:gaussian_perturbation} by \Cref{rmk:implication_gaussian}.
     Let $\bar{S}>0$ be such that $\Theta(s)<A_1$ for any $s\in(0,\bar{S}]$, with
     \begin{align}
        \Theta(s)=
     \begin{multlined}[t]
         2\ltt_1^{\frac{1}{2}} \mathcal{V}_2(s)\big(\exp\big(\ltt_1^{\frac{1}{2}}s\mathcal{V}_1(\ltt_1^{\frac{1}{2}}s)\big)-1\big)\\
         +6 s^2\big[\msm_1^2+\ltt_1 \mathcal{V}_2^2(s)\big(\exp\big(\ltt_1^{\frac{1}{2}}s\mathcal{V}_1(\ltt_1^{\frac{1}{2}}s)\big)-1\big)^2\big] \eqsp
     \end{multlined}
    \end{align}
     and where $\mathcal{V}_1$ is defined in \Cref{hyp:lipschitz_hard}($h,\Kmax$).
     Then for any $h\in(0,\bar{S}/T]$, there exists $R_0>0$ such that for any $(q_0,p_0)\in (\Rset^d)^2$ with $|q_0|\geq R_0$ and $p_0\leq |q_0|^\gamma$, for any $j\in[-T,T]$ with $j\neq 0$ we have
     $$\ee |\operatorname{proj}_1\Phiverlet[h][j](q_0, p_0)|-|q_0|\leq -1 \eqsp .$$
    \end{enumerate}
\end{lemma}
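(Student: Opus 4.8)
The plan is to work directly with the leapfrog iterates $(q_\ell,p_\ell)=\Phiverlet[h][\ell](q_0,p_0)$, $\ell\in[0:T]$, and to bound the growth of $|q_\ell|^2$; throughout we take $\mathbf{M}=\Idd$ as permitted by the standing convention. The first step is a reduction to $j>0$. Writing $S(q,p)=(q,-p)$, the formulas \eqref{eq:iteration_verlet}--\eqref{eq:def_Psiverlet_0} give $\Psiverlet^{(i)}_{-t}=S\circ\Psiverlet^{(i)}_{t}\circ S$ for $i=1,2$, hence $\Phiverlet[h][-1]=(\Phiverlet[h][1])^{-1}=S\circ\Phiverlet[h][1]\circ S$ and, iterating, $\operatorname{proj}_1\Phiverlet[h][-j](q_0,p_0)=\operatorname{proj}_1\Phiverlet[h][j](q_0,-p_0)$ for every $j$. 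Since $|-p_0|=|p_0|$, it suffices to establish the bound for $j\in[1:T]$. With $\mathbf{M}=\Idd$ the leapfrog recursion reads $q_{\ell+1}=q_\ell+hp_\ell-\tfrac{h^2}{2}\nabla U(q_\ell)$ and $p_{\ell+1}=p_\ell-\tfrac h2(\nabla U(q_\ell)+\nabla U(q_{\ell+1}))$, so that, telescoping,
\[
  |q_\ell|^2-|q_0|^2=\sum_{i=0}^{\ell-1}\Bigl(2h\,q_i^\top p_i-h^2\,q_i^\top\nabla U(q_i)+\bigl|hp_i-\tfrac{h^2}{2}\nabla U(q_i)\bigr|^2\Bigr)\eqsp.
\]

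The second step is a priori control of the orbit: one shows that, for $|q_0|$ large (uniformly over $|p_0|\le|q_0|^\gamma$ and over the direction of $q_0$), $|q_\ell|\asymp|q_0|$ and $|p_\ell|=O(|q_0|^{m-1})$ for all $\ell\in[0:T]$. For part (a), $m<2$, this is a bootstrap: since $\gamma<m-1<1$, \Cref{hyp:rappel}($m$)-\ref{hyp:rappel:item_tailgrad} yields $|p_i|\le|q_0|^\gamma+(i+1)h\msm_1(1+2^{m-1}|q_0|^{m-1})\le C|q_0|^{m-1}$ and then $|q_{i+1}-q_i|\le C'|q_0|^{m-1}$ as long as $|q_k|\le 2|q_0|$ for $k\le i$; summing over $i\le T$ gives $|q_\ell-q_0|\le TC'|q_0|^{m-1}=o(|q_0|)$, which for $|q_0|$ large keeps $\tfrac12|q_0|\le|q_\ell|\le 2|q_0|$ on all of $[0:T]$. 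For part (b), $m=2$, the positions may drift by an amount of order $|q_0|$, so instead one uses the $\ltt_1$-Lipschitz bound of \Cref{hyp:regularity} together with $h\le\bar{S}/T$ and a discrete Gr\"onwall estimate to obtain $\bigl||q_\ell|-|q_0|\bigr|\le\kappa(hT)|q_0|+o(|q_0|)$ and $|p_\ell|\le C|q_0|$, with $\kappa(s)\to0$ as $s\to0$, the relevant growth factor being the one of \Cref{hyp:lipschitz_hard}, namely $\exp(\ltt_1^{1/2}s\,\mathcal{V}_1(\ltt_1^{1/2}s))$; this is precisely the quantitative bookkeeping packaged by the function $\Theta$. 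Under \Cref{hyp:gaussian_perturbation} the same applies, \Cref{hyp:regularity} and \Cref{hyp:rappel}($2$)-\ref{hyp:rappel:item_tailgrad},\ref{hyp:rappel:item_rappel} being available through \Cref{rmk:implication_gaussian}.

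The heart of the matter is that the momentum accumulated along the orbit points inward. From $p_\ell=p_0-\tfrac h2\nabla U(q_0)-h\sum_{k=1}^{\ell-1}\nabla U(q_k)-\tfrac h2\nabla U(q_\ell)$, the restoring bound \Cref{hyp:rappel}($m$)-\ref{hyp:rappel:item_rappel}, and $q_\ell^\top\nabla U(q_k)\ge q_k^\top\nabla U(q_k)-|q_\ell-q_k|\,|\nabla U(q_k)|\ge A_1|q_k|^m-A_2-\mathrm{err}$ with $\mathrm{err}=O(|q_0|^{2(m-1)})$ when $m<2$ and $\mathrm{err}=O(\kappa(hT)|q_0|^2)$ when $m=2$, one gets, for $\ell\ge1$,
\[
  q_\ell^\top p_\ell\le|q_0|^{1+\gamma}-c_1\ell|q_0|^m+T\,\mathrm{err}\le-c_2|q_0|^m
\]
for $|q_0|$ large; when $m=2$ this last inequality is exactly where $\Theta(\bar{S})<A_1$ is used, forcing the restoring constant $A_1$ to beat the accumulated discretisation errors. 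Substituting into the telescoped sum, each term with $i\ge1$ is $\le 2hq_i^\top p_i-h^2q_i^\top\nabla U(q_i)+|hp_i-\tfrac{h^2}{2}\nabla U(q_i)|^2\le-2hc_2|q_0|^m+O(|q_0|^{2(m-1)})\le-hc_2|q_0|^m$ for $|q_0|$ large, while the $i=0$ term is at most $2h|q_0|^{1+\gamma}-h^2(A_1|q_0|^m-A_2)+O(|q_0|^{2\gamma}+|q_0|^{2(m-1)})$, which, using $\gamma<m-1\le m/2$ (hence $1+\gamma<m$ and $2\gamma<m$), is $\le-\tfrac{h^2A_1}{2}|q_0|^m$ when $m<2$ and is again controlled by the $\Theta(\bar{S})<A_1$ bookkeeping when $m=2$ (where $2(m-1)=m$). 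Hence $|q_\ell|^2-|q_0|^2\le-c_3|q_0|^m$ for all $\ell\in[1:T]$ once $|q_0|\ge R_0$; since $m>1$, for $|q_0|$ large we have $c_3|q_0|^m\ge2|q_0|-1$, so $|q_\ell|^2\le(|q_0|-1)^2$, i.e. $|\operatorname{proj}_1\Phiverlet[h][j](q_0,p_0)|\le|q_0|-1$. Together with the reduction to $j>0$ this proves both (a) and (b).

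I expect the main obstacle to be part (b), $m=2$: there the discretisation errors ($|q_\ell-q_k|\,|\nabla U(q_k)|$, $|hp_\ell-\tfrac{h^2}{2}\nabla U(q_\ell)|^2$, and the deviation of $|q_\ell|$ from $|q_0|$) are all of the same order $|q_0|^2$ as the restoring term, so the negativity of the net drift of $|q|^2$ becomes a genuinely quantitative question: one must know that the position drift along the whole orbit, measured relative to $|q_0|$, is small, which is exactly the role of the step-size restriction $h\le\bar{S}/T$ with $\Theta(\bar{S})<A_1$. For $m<2$ all these error terms are of strictly lower order $|q_0|^{2(m-1)}=o(|q_0|^m)$, the restoring force wins for every $h>0$, and part (a) consequently imposes no condition on $h$.
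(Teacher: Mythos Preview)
Your argument is correct and reaches the same conclusion, but the decomposition you use differs from the paper's. The paper does not telescope; instead it expands directly from the closed-form
\[
q_j=q_0+jhp_0-\tfrac{jh^2}{2}\nabla U(q_0)-h^2\sum_{i=1}^{j-1}(j-i)\nabla U(q_i)
\]
to write $|q_j|^2-|q_0|^2=A^{(1)}_{h,j}-2h^2A^{(2)}_{h,j}$ with $A^{(1)}_{h,j}=2jh\,q_0^\top p_0+|q_j-q_0|^2$ and $A^{(2)}_{h,j}=q_0^\top\bigl[\tfrac{j}{2}\nabla U(q_0)+\sum_{i=1}^{j-1}(j-i)\nabla U(q_i)\bigr]$. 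The restoring force enters through $q_0^\top\nabla U(q_i)$, which is corrected to $q_i^\top\nabla U(q_i)$ at the cost of $|q_0-q_i|\,|\nabla U(q_i)|=O(|q_0|^{2(m-1)})$; the a~priori orbit bound comes from \cite[Lemma~S2]{Durmus2017-tf} rather than your self-contained bootstrap. For $m=2$ the paper gives no more detail than you do, referring to \cite[Proposition~5]{Durmus2017-tf} and the observation $(|q_0|\,|p_0|+|p_0|)/|q_0|^2\to0$.

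The trade-off: the paper's one-shot expansion avoids having to track $p_i$ and to prove the intermediate claim $q_i^\top p_i\le -c|q_0|^m$ for each $i\ge1$, so it is a bit shorter. Your telescoping is more transparent physically (the momentum accumulates inward drift), and your bootstrap for the a~priori bound is pleasantly self-contained. Both routes rest on the same three inputs---gradient growth, restoring force, and $|q_i-q_0|$ control---and neither dominates the other.
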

\begin{lemma}
    The proof is postponed to Section 5.3 of the Supplementary Material A.
\end{lemma}

The geometric ergodicity of the NUTS sampler follows.
\begin{theorem}
    \label{thm:ergo_geo}
    Assume \Cref{hyp:3}$(h,\Kmax)$, for $h >0$ and $\Kmax \in \nset_{>0}$.
    Assume either \Cref{hyp:regularity}, \Cref{hyp:rappel}($m$) for some $m\in (1,2]$, or \Cref{hyp:gaussian_perturbation}.
    \begin{enumerate}[label=(\alph*)]
        \item \label{thm:last_a} Case $m<2$: for $a> 0$,
        the No U-turn sampler kernel $\KkerU_h$ is $\VFL_a$-uniformly geometrically ergodic.
        \item \label{thm:last_b} Case $m=2$: there exists $\bar{S}>0$ such that for any $a> 0$ and $h>0$ such that $h2^{\Kmax} \leq \bar{S}$ and \Cref{hyp:lipschitz_hard}($h,\Kmax$),
        the No U-turn sampler kernel $\KkerU_h$ is $\VFL_a$-uniformly geometrically ergodic.
    \end{enumerate}
     
\end{theorem}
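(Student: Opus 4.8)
The plan is to deduce the statement from the standard geometric-ergodicity criterion \cite[Theorem 16.0.1]{markovchainmeyn2012markov}: an aperiodic, irreducible Markov kernel admitting a small set and a function satisfying the Foster--Lyapunov drift condition \eqref{Foyster-Lyapunov-condtion} is $\VFL$-uniformly geometrically ergodic in the sense of \eqref{eq:v-uniform-ergodicity}. So, for a fixed $a > 0$, it suffices to (i) recall irreducibility and aperiodicity of $\KkerU_h$ together with the existence of small sets, and (ii) verify \eqref{Foyster-Lyapunov-condtion} for $\VFL_a$.

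For (i), observe that \Cref{hyp:regularity} holds under either branch of the assumptions — directly in the first branch, and through \Cref{rmk:implication_gaussian} under \Cref{hyp:gaussian_perturbation}. Since \Cref{hyp:3}$(h,\Kmax)$ is assumed, \Cref{thm:ergodic_general} gives that $\KkerU_h$ is irreducible and aperiodic and that every compact set, in particular each closed ball $\bar{\mathrm{B}}(0,R')$, is small. For (ii), the key tool is \Cref{lemma:sketchofproof}: for a fixed exponent $\gamma \in ((m-1)/2, m-1)$ it reduces the drift inequality $\KkerU_h \VFL_a \leq \lambda \VFL_a + b \mathbbm{1}_{\bar{\mathrm{B}}(0,R')}$ to two pointwise estimates, valid for $|q_0|$ large and $|p_0| \leq |q_0|^\gamma$: the inward-contraction bound \eqref{eq:norm_position_condition} along the whole orbit $j \in [-2^{\Kmax}:2^{\Kmax}] \setminus \{0\}$, and the energy-decrease bound \eqref{eq:lemma_hamiltonian_condition} for $j \in \{-1, 1\}$. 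These are exactly what \Cref{lemma:normposition} (applied with $T = 2^{\Kmax}$) and \Cref{prop:degrowth_energy} provide.

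In case \ref{thm:last_a} ($m \in (1,2)$) I would fix $\gamma$ in the interval $\big(\max(2(m-1)-1,(m-1)/2),\, m-1\big)$, which is nonempty precisely because $1 < m < 2$ and which is contained both in $((m-1)/2, m-1)$, as required by \Cref{lemma:sketchofproof}, and in $(0, m-1)$, as required by \Cref{prop:degrowth_energy}; the first case of \Cref{prop:degrowth_energy} then yields a radius $R_H$ beyond which \eqref{eq:lemma_hamiltonian_condition} holds, the first case of \Cref{lemma:normposition} yields a radius $R_0'$ beyond which \eqref{eq:norm_position_condition} holds, both for every $h > 0$, and taking $R_0 = \max(R_H, R_0')$ the hypotheses of \Cref{lemma:sketchofproof} are met. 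In case \ref{thm:last_b} ($m = 2$) the argument is the same with the forced choice $\gamma = 2/3 \in (1/2, 1)$, except that the second cases of \Cref{prop:degrowth_energy} and \Cref{lemma:normposition} now constrain the step size: I would set $\bar S$ to be the minimum of the two thresholds produced there, so that the constraint $h \leq \bar S/T$ with $T = 2^{\Kmax}$ becomes $h 2^{\Kmax} \leq \bar S$, and invoke the extra assumption \Cref{hyp:lipschitz_hard}($h,\Kmax$), whose role is to control the growth of the leapfrog iterates over orbits of length up to $2^{\Kmax}$ needed in the drift estimate of \Cref{lemma:sketchofproof} (a control that is automatic for $m < 2$ by the sub-linear growth of $\nabla U$, but not in the borderline quadratic case). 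Combining (i) and (ii) with \cite[Theorem 16.0.1]{markovchainmeyn2012markov} then gives the conclusion, for every $a > 0$ in case \ref{thm:last_a} and for every $a > 0$ under the stated step-size restrictions in case \ref{thm:last_b}.

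The substance of the theorem lives in the preparatory results — \Cref{lemma:sketchofproof} for the reduction of the drift to trajectory estimates, \Cref{lemma:normposition} and \Cref{prop:degrowth_energy} for those estimates, and \Cref{thm:ergodic_general} for irreducibility and aperiodicity. At the level of this proof the only real work is the parameter bookkeeping: checking that the intervals constraining $\gamma$ intersect, and, in the quadratic case, that the step-size thresholds from the various lemmas are mutually consistent together with \Cref{hyp:lipschitz_hard}($h,\Kmax$). This compatibility check is where I expect the only (minor) care to be needed; no genuine obstacle arises once the lemmas are available.
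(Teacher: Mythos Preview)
Your proposal is correct and follows essentially the same route as the paper: ergodicity via \Cref{thm:ergodic_general} (with \Cref{hyp:regularity} recovered under \Cref{hyp:gaussian_perturbation} through \Cref{rmk:implication_gaussian}), then the Foster--Lyapunov drift via \Cref{lemma:sketchofproof}, whose two pointwise hypotheses are supplied by \Cref{lemma:normposition} (with $T=2^{\Kmax}$) and \Cref{prop:degrowth_energy}, and finally smallness of $\bar{\mathrm{B}}(0,R')$ from \Cref{thm:3small}. The parameter bookkeeping you describe (choice of $\gamma$, taking $\bar S$ as the smaller of the two thresholds in the $m=2$ case) matches the paper's proof.

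One small remark: your explanation of the role of \Cref{hyp:lipschitz_hard}$(h,\Kmax)$ in case \ref{thm:last_b} is more explicit than the paper's. The paper's own proof does not visibly invoke \Cref{hyp:lipschitz_hard}$(h,\Kmax)$ in establishing the drift; it only introduces the weaker side condition $(\ltt_1 h^2)/2<1$ without indicating where it is used. So your attribution of a specific ``growth-control'' role to \Cref{hyp:lipschitz_hard}$(h,\Kmax)$ in \Cref{lemma:sketchofproof} is a plausible reading rather than something the paper spells out; it does not affect the correctness of your argument since the assumption is part of the theorem's hypotheses anyway.
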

\begin{proof}
    The proof is postponed to Section 5.4 of the Supplementary Material A.
\end{proof}

We remark that only the condition \Cref{hyp:rappel}($m$)-\ref{hyp:rappel:item_tailgrad} is imposed in \Cref{lemma:sketchofproof},
compared to \Cref{lemma:normposition} where \Cref{hyp:rappel}($m$)-\ref{hyp:rappel:item_rappel} is also needed.
Conditions \Cref{hyp:rappel}($m$)-\ref{hyp:rappel:item_rappelv2}, \ref{hyp:rappel:item_tailgradv2} are used for \Cref{prop:degrowth_energy}.
Regarding the conditions on the potential, the bottleneck of the demonstration is \Cref{prop:degrowth_energy}, which relies on \cite[Proposition 7]{Durmus2017-tf} and the symmetry of the Hamiltonian in the momentum variable, i.e., $H(\cdot.,p)=H(\cdot,-p)$ for any $p\in\Rset^d$.
The most restrictive assumption on the stepsize appears in \Cref{lemma:normposition} for the case $m=2$.
Compared with the geometric ergodicity of the HMC sampler in the case $m=2$ \cite[Theorem 9]{Durmus2017-tf}, instead of having $hT\leq \bar{S} $ where $T$ is the number of leapfrog steps, we have $h 2^{\Kmax}\leq \bar{S} $ where $2^{\Kmax}$ is the maximum number of leapfrog steps for the NUTS sampler.

\section{General properties on Hamiltonian Monte Carlo}
\label{section:general_HMC}
In this section, we extend and improve some results presented in \cite{Durmus2017-tf}.
 Our aim is to establish the convergence of the HMC kernel under milder conditions on the stepsize.

 Let $\Khmc\in \Nset^*$ be the number leapfrog steps and $h>0$ be the stepsize. 
 The HMC kernel is defined, for any $q_0\in \Rset^d$, $\msa \in \mathcal{B}(\Rset^d)$, by
 \begin{align}
   \label{eq:def_KkerH}
    &\KkerH(q_0,\msa)=  \int \rho_0(p_0)\alpha_{h,\Khmc}(q_0,p_0) \updelta_{\operatorname{proj}_1\Phiverlet[h][\Khmc](q_0,p_0)}(\msa)  \dd p_0+(1-\alpha_{h,\Khmc}(q_0,p_0))\updelta_{q_0}(\msa)\eqsp ,
\end{align}
where for any $q_0,p_0 \in (\Rset^d)^2$, the acceptance ratio is
\begin{equation}
    \alpha_{h,\Khmc}(q_0,p_0)= 1\wedge \exp\defEns{H(q_0,p_0)-H(\Phiverlet[h][\Khmc](q_0,p_0))} \eqsp .
\end{equation}

We consider in this section the following assumption on the potential $U$.

 \begin{assumption}
    \label{hyp:gaussian_perturbation_nice}
There exist $\tilde{U}: \mathbb{R}^d \rightarrow \mathbb{R}$, twice continuously differentiable and  a real positive definite matrix $\boldsymbol{\Sigma}$ such that $U(q)=q^\top\boldsymbol{\Sigma} q / 2+\tilde{U}(q)$.
 In addition, there exist  $A_5 \geq 0$ and $\varrho \in[1,2)$ such that for any $q, x \in \mathbb{R}^d$
$$
\begin{gathered}
|\tilde{U}(q)| \leq A_5\left(1+|q|^{\varrho}\right), \quad|\nabla \tilde{U}(q)| \leq A_5\left(1+|q|^{\varrho-1}\right)\eqsp, \\
|\nabla \tilde{U}(q)-\nabla \tilde{U}(x)| \leq A_5|q-x| \eqsp .
\end{gathered}$$
\end{assumption}
The main result of this section is the following.
\begin{theorem}
  \label{thm:ergodicity_compile_HMC}
  Assume \Cref{hyp:regularity} and let $h >0$ and $T \in\nsets$. 
Suppose in addition \Cref{hyp:gaussian_perturbation_nice} or
    \begin{equation}
        \label{eq:transitivity-condition}
        \ltt_1 h^2 < 2 (1 - \cos (\uppi/T))\eqsp.
      \end{equation}
Then, there exists a countable set $\msh_0\subset \Rset_{\geq0}$, defined in \Cref{lemma:gaussian_case} under \Cref{hyp:gaussian_perturbation_nice} and $\msh_0 = \Rset_{\geq0}$ otherwise, such that if $h \not\in  \msh_0 $ we have,
\begin{enumerate}[label=(\alph*),wide, labelwidth=!, labelindent=0pt]
    \item \label{thm:item_i_ergo1_HMC} the HMC kernel $\KkerH$ is irreducible, aperiodic, the Lebesgue measure is an irreducibility measure and any compact set of $\mathbb{R}^d$ is 1-small.
    \item \label{thm:item_ii_ergo2_HMC}$\KkerH$ is positive recurrent with invariant probability $\pi$ and for $\pi$-almost every $q \in \mathbb{R}^d$, 
    $$\textstyle \lim _{n \rightarrow+\infty}\|\delta_q (\KkerH)^n-\pi\|_{\mathrm{TV}}=0\eqsp .$$
 \end{enumerate}
\end{theorem}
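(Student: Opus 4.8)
The plan is to establish parts \ref{thm:item_i_ergo1_HMC}--\ref{thm:item_ii_ergo2_HMC} by the standard Meyn--Tweedie machinery: show that every nonempty open set is accessible, that compact sets are $1$-small, and conclude irreducibility, aperiodicity, positive recurrence and $\pi$-ergodicity via \cite[Theorem 13.3.4]{markovchainmeyn2012markov} (using that $\pi$ is known to be an invariant probability by the classical HMC invariance argument, already recalled in \Cref{sec:dynamic-hmc}). Since the structure mirrors the NUTS argument but is much simpler (a single leapfrog map $\Phiverlet[h][T]$ with a Metropolis accept/reject rather than a dynamic orbit), the work reduces to two geometric facts about the map $\psi_{q_0}\colon p \mapsto \operatorname{proj}_1\Phiverlet[h][T](q_0,p)$: (i) its range contains an open set around $q_0$ (for a $1$-small set / accessibility in one step near $q_0$), and (ii) globally, from any $q_0$ one can reach any target open set $\mse$ — possibly needing the fact, shown for a single leapfrog step via composition, that the accessible set from $q_0$ is all of $\Rset^d$.

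First I would treat the two hypotheses separately at the level of a key intermediate lemma. Under the transitivity condition \eqref{eq:transitivity-condition}, $\ltt_1 h^2 < 2(1-\cos(\uppi/T))$, the point is that the linearization of one leapfrog step has its ``rotation angle'' bounded away from a half-integer multiple of $\uppi$ so that $T$ steps do not produce a degenerate (non-surjective) position map; this is exactly the kind of spectral condition used in \cite{Durmus2017-tf} to get one-step accessibility, and I would invoke/adapt their argument to show that $p\mapsto \operatorname{proj}_1\Phiverlet[h][T](q_0,p)$ is, for every fixed $q_0$, a submersion on a suitable open set and has image equal to $\Rset^d$ (or at least to an arbitrarily large ball), hence every open set is reached in one step with positive probability, uniformly over compacta — giving $1$-smallness of compact sets directly. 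Under \Cref{hyp:gaussian_perturbation_nice} instead, I would write $\Phiverlet[h][T]$ as a perturbation of the exact harmonic-oscillator flow associated with $q^\top\boldsymbol\Sigma q/2$: the Gaussian part contributes a rotation whose position-component map is a diffeomorphism of $\Rset^d$ except for a discrete set of ``resonant'' stepsizes $h$ (those for which the rotation angle in some eigendirection is a multiple of $\uppi$), which is where the countable exceptional set $\msh_0$ comes from; the bounded $C^1$ perturbation $\nabla\tilde U$ (with $\varrho<2$) does not destroy surjectivity of the position map by a degree / proper-map argument, since the perturbation is sublinear. This is the analogue of the Gaussian-perturbation results \Cref{thm:ergodicity_compile}\,(c) and \Cref{hyp:pure_gaussian}; I would state the needed facts as a lemma (``\Cref{lemma:gaussian_case}'') isolating the definition of $\msh_0$.

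Given this lemma, the remaining steps are routine. Accessibility of every open set $\mse$ from every $q_0$: immediate from the surjectivity of $\psi_{q_0}$ onto $\Rset^d$ up to the Metropolis acceptance, which is strictly positive on the relevant open set of momenta since $H$ is continuous; hence $\Leb$ is an irreducibility measure. For $1$-smallness of $\mathrm{B}(q,r)$: on a small enough ball, $\psi_{q_0}$ restricted to a ball of momenta is a diffeomorphism onto a neighborhood of $q_0$ with Jacobian bounded below, the Gaussian momentum density is bounded below, and the acceptance probability is bounded below by continuity and compactness, so $\KkerH(q,\cdot)$ dominates $\varepsilon\,\Leb|_{B}$ uniformly on $\mathrm{B}(q,r)$ — this gives the minorization. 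Aperiodicity follows because the minorizing measure is non-trivial and the chain has a positive probability of an accept giving a density component overlapping itself (equivalently, any accessible small set has period one since $1$-small sets are accessible and the minorizing measure charges them). Irreducibility (accessible small set exists) then combines accessibility with $1$-smallness. Finally, since $\pi$ is invariant and the chain is irreducible, it is positive by \cite[Definition 11.2.7]{douc2018markov} and recurrent by \cite[Theorem 10.1.6]{douc2018markov}, and \cite[Theorem 13.3.4]{markovchainmeyn2012markov} gives the total-variation convergence in \ref{thm:item_ii_ergo2_HMC} for $\pi$-a.e.\ starting point.

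The main obstacle I expect is the global surjectivity of the nonlinear position map under \Cref{hyp:gaussian_perturbation_nice}: one must show that adding the sublinear gradient $\nabla\tilde U$ to the exact Gaussian leapfrog flow still yields a map $p\mapsto \operatorname{proj}_1\Phiverlet[h][T](q_0,p)$ whose image is all of $\Rset^d$ (or exhausts $\Rset^d$ as one varies nothing but $p$), and to identify precisely the countable set of resonant $h$ that must be excluded. Handling this cleanly — via a homotopy/degree argument from the Gaussian case, controlling the perturbation uniformly thanks to $\varrho<2$ — and making sure the exceptional set $\msh_0$ is genuinely countable (a real-analyticity or discreteness-of-zeros argument on $h\mapsto \det$ of the relevant linear map) is the technical heart; everything downstream is standard Markov-chain bookkeeping.
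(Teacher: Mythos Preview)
Your overall architecture matches the paper's: reduce \ref{thm:item_i_ergo1_HMC} to surjectivity of $\psi_{q_0}\colon p\mapsto\operatorname{proj}_1\Phiverlet[h][T](q_0,p)$ (plus a local bi-Lipschitz estimate) and then run the standard minorization via \cite[Proposition~11]{Durmus2017-tf}, with \ref{thm:item_ii_ergo2_HMC} following from \cite[Theorem~13.3.4]{markovchainmeyn2012markov}. For the Gaussian-perturbation branch you are essentially spot on: the paper defines $\msh_0$ exactly as the (countable) zero set of $h\mapsto\det B_\Sigma^{(T)}(h)$ for the linear Gaussian leapfrog (\Cref{lemma:gaussian_case}), then runs a homotopy $U_t=q^\top\boldsymbol\Sigma q/2+t\tilde U$ and a degree argument to transfer surjectivity from $t=0$ to $t=1$, using $\varrho<2$ to ensure the perturbation is sublinear in $|p|$ so that the image of a large momentum ball still covers any prescribed position ball.

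The one genuine gap is in the other branch. You propose to ``invoke/adapt'' the argument of \cite{Durmus2017-tf} under \eqref{eq:transitivity-condition}, but that paper's bi-Lipschitz result for $\psi_{q_0}$ is proved only under the strictly stronger bound $[(1+h\ltt_1^{1/2}\mathcal V(h\ltt_1^{1/2}))^T-1]<1$; it does not go through under the weaker condition $\ltt_1 h^2<2(1-\cos(\uppi/T))$. The paper's contribution here is a new argument (\Cref{thm:trajectory-transitivity}): eliminate the momenta and recast the boundary-value problem $q_0=q,\,q_T=\tilde q$ as a fixed-point equation $Q=F_{Q_0}(Q)$ in the intermediate positions $Q=(q_1,\dots,q_{T-1})$, where $F_{Q_0}(Q)=Q_0/2+Q\bfA+(h^2/2)G(Q)$ with $\bfA$ the tridiagonal averaging matrix whose operator norm is exactly $\cos(\uppi/T)$. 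Then \eqref{eq:transitivity-condition} is precisely the contractivity condition $\cos(\uppi/T)+\ltt_1 h^2/2<1$, and Banach's fixed-point theorem gives existence, uniqueness, and a Lipschitz inverse for $\psi_{q_0}$. Your ``rotation angle'' heuristic points at the right spectral fact, but the mechanism is a global contraction in position space, not a linearization/submersion argument, and this Lagrangian reformulation is what you are missing.
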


\begin{proof}
    The proof is postponed to Section 6.3 of the Supplementary Material A.
  \end{proof}
  Let us compare our result with \cite[Theorem 1]{Durmus2017-tf}.
  
   First, in the case $\limsup_{|x| \to\plusinfty}[ |\nabla U(x)|/|x|^2] \not =0$, \cite[Theorem 1]{Durmus2017-tf} only shows that HMC is ergodic if $h$ and $T$ satisfy
  \begin{equation}
    \label{eq:2_cond_old}
    [(1+h \ltt_1^{\frac{1}{2}}\mathcal{V}(h \ltt_1^\frac{1}{2}))^{2^{T}}-1]<1 \eqsp,
  \end{equation}
  where $\mathcal{V}(s)=1+s/2+s^2/4$ for $s\in \Rset_{\geq0}$.  We show in
  Section 4.4 of the Supplementary Material A
   that this condition is
  strictly stronger than \eqref{eq:transitivity-condition}. Finally,
  under \Cref{hyp:gaussian_perturbation_nice}, we obtain ergodicity
  for HMC for any given number of leapfrog steps $T$ and for
  $\Leb$-almost every choice of stepsize $h$ and in particular, for
  $\Leb$-almost every choice of integration time $hT$ (for a fixed
  $T$). This result is in accordance with the ergodicity properties of
  the ideal HMC algorithm (i.e., the exact Hamiltonian dynamics
  \eqref{eq:hamiltonian_system} for a fixed integration time $\tint >0$ instead of the leapfrog scheme
  $\Phi_h^{(T)}$ in \eqref{eq:def_KkerH}) in the case where $\pi$ is a
  Gaussian distribution. Indeed, in that case, explicit expression of
  the Hamiltonian dynamics \cite[Proposition
  3.1]{Geo_integratorsbou2018geometric} shows that there exists a
  countable set $\tilde{\msh}^0$ included in $\Rset_{\geq0}$ such that if
  the integration time $\tint \in \tilde{\msh}^0$, the resulting ideal
  HMC algorithm is periodic, whereas if $\tint \not \in \tilde{\msh}^0$, the algorithm is ergodic (and even geometrically ergodic).

  To show \Cref{thm:ergodicity_compile_HMC}, we extend part of the results obtained in \cite{Durmus2017-tf}. First, the proof of the ergodicity of HMC in   \cite{Durmus2017-tf}  use that the map $p_0 \mapsto \operatorname{proj}_1 \Phiverlet[h][T](q_0, p_0)$ is a bi-Lipschitz homeomorphism for any $q\in\Rset^d$ by assuming \Cref{hyp:regularity} and \eqref{eq:2_cond_old}. We show that in fact this is still true under \eqref{eq:transitivity-condition}.  


%
\begin{theorem}\label{thm:trajectory-transitivity}
  Assume \Cref{hyp:regularity} and let $h >0$ and $T \in\nsets$ satisfying \eqref{eq:transitivity-condition}.
  Let $q_0 \in \rset^d$. Then, for any $\tilde{q}$, there exists a unique pair $(p,\tilde{p}) \in(\Rset^d)^2$ such that
  \begin{equation}
        \Phiverlet[h][T] \big(q_0, p \big) = (\tilde{q}, \tilde{p}) \eqsp.
      \end{equation}
      Therefore, $\psi_q : p\in \Rset^d \to \operatorname{proj}_1 \Phiverlet[h][T](q,p)$ is a one-to-one continuous map. Finally, its inverse $\psi^{\inv}_q$ is Lipschitz. 
\end{theorem}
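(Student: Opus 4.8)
The plan is to recast the leapfrog integrator as a discrete two--point boundary value problem for the position variables, and to recognise \eqref{eq:transitivity-condition} as the statement that the resulting fixed--point map is a contraction, the relevant constant being governed by the smallest eigenvalue of the discrete Dirichlet Laplacian.

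\emph{Step 1 (elimination of the momenta).} A single leapfrog step $(q_{k+1},p_{k+1}) = \Phiverlet[h][1](q_k,p_k)$ reads $q_{k+1} = q_k + h p_k - (h^2/2)\nabla U(q_k)$ and $p_{k+1} = p_k - (h/2)\bigl(\nabla U(q_k) + \nabla U(q_{k+1})\bigr)$. Eliminating the momenta between two consecutive steps yields the second order recursion
\begin{equation}
    q_{k+1} - 2q_k + q_{k-1} = -h^2\nabla U(q_k)\eqsp,\qquad k = 1,\dots,T-1\eqsp.
    \label{eq:plan-recursion}
\end{equation}
Conversely, if $q_0$ is given and $(q_1,\dots,q_{T-1})$ solve \eqref{eq:plan-recursion} together with a prescribed value $q_T$, then putting $p_0 = (q_1-q_0)/h + (h/2)\nabla U(q_0)$ and running $T$ leapfrog steps from $(q_0,p_0)$ reproduces exactly $q_1,\dots,q_T$ (by induction using \eqref{eq:plan-recursion}), so that $\operatorname{proj}_1\Phiverlet[h][T](q_0,p_0) = q_T$, the terminal momentum being recovered as $p_T = (q_T-q_{T-1})/h - (h/2)\nabla U(q_T)$. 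Since for fixed $q_0$ the map $p_0\mapsto q_1$ is an affine bijection of $\Rset^d$, the theorem is equivalent to the following claim: for every $q_0,\tilde q\in\Rset^d$ there is a unique $\mathbf{q} = (q_1,\dots,q_{T-1})\in(\Rset^d)^{T-1}$ solving \eqref{eq:plan-recursion} with the Dirichlet data $q_0$ and $q_T = \tilde q$, and this solution depends on $\tilde q$ in a Lipschitz way. (When $T=1$ there is no interior to solve for, and $\psi_{q_0}$ is itself affine and bijective.)

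\emph{Step 2 (the contraction).} Moving the boundary terms to the right--hand side, \eqref{eq:plan-recursion} becomes the fixed--point equation $\mathbf{q} = h^2\,(A^{-1}\otimes\Idd)\,G(\mathbf{q}) + (A^{-1}\otimes\Idd)\,\mathbf{r}$, where $A\in\Rset^{(T-1)\times(T-1)}$ is the tridiagonal matrix with diagonal entries equal to $2$ and off--diagonal entries equal to $-1$, $G(\mathbf{q}) = (\nabla U(q_1),\dots,\nabla U(q_{T-1}))$, and $\mathbf{r}$ encodes the Dirichlet data and, with $q_0$ fixed, is an affine isometry of $\tilde q\in\Rset^d$. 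Endowing $(\Rset^d)^{T-1}$ with the Euclidean norm, \Cref{hyp:regularity} gives $\|G(\mathbf{q})-G(\mathbf{q}')\| \le \ltt_1\|\mathbf{q}-\mathbf{q}'\|$, while $A$ is symmetric positive definite with eigenvalues $2-2\cos(j\uppi/T)$, $j = 1,\dots,T-1$, so that $\|A^{-1}\otimes\Idd\| = (2-2\cos(\uppi/T))^{-1}$. The condition \eqref{eq:transitivity-condition} is precisely the statement that $\kappa := \ltt_1 h^2/(2-2\cos(\uppi/T)) < 1$, hence the right--hand side above is a $\kappa$--contraction of $(\Rset^d)^{T-1}$, uniformly in $\tilde q$. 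By the Banach fixed--point theorem it has a unique fixed point $\mathbf{q}(\tilde q)$ for every $\tilde q$, which establishes existence and uniqueness of the pair $(p,\tilde p) = (p_0,p_T)$.

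\emph{Step 3 (regularity of the inverse).} Continuity of $\psi_{q_0}$ is immediate, since it is a composition of the continuous maps $\Psiverlet^{(1)}$ and $\Psiverlet^{(2)}$. For the inverse, the standard estimate for the dependence of a fixed point on a parameter gives $\|\mathbf{q}(\tilde q) - \mathbf{q}(\tilde q')\| \le (1-\kappa)^{-1}\,\|A^{-1}\otimes\Idd\|\,|\tilde q-\tilde q'|$; composing $\tilde q\mapsto q_1(\tilde q)$ with the affine map $q_1\mapsto p_0 = (q_1-q_0)/h + (h/2)\nabla U(q_0)$ (with $q_0$ fixed) shows that $\psi_{q_0}^{\inv}:\tilde q\mapsto p_0$ is Lipschitz. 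The main point requiring care is keeping the spectral constant sharp, that is, matching the contraction constant with the smallest eigenvalue $2-2\cos(\uppi/T)$ of the discrete Dirichlet Laplacian: this is exactly what yields the criterion \eqref{eq:transitivity-condition} instead of the coarser bound \eqref{eq:2_cond_old}, and it constitutes the analytic heart of the argument, the remaining ingredients being routine.
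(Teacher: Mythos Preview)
Your proposal is correct and follows essentially the same approach as the paper: both eliminate the momenta to obtain the second--order recursion \eqref{eq:plan-recursion}, recast it as a fixed--point problem for the interior positions, and invoke the Banach fixed--point theorem with the sharp spectral constant of the relevant tridiagonal matrix to obtain exactly the threshold \eqref{eq:transitivity-condition}. The only cosmetic difference is that the paper writes the fixed--point map as $Q\mapsto Q_0/2 + Q\bfA + (h^2/2)G(Q)$ with $\bfA$ the averaging matrix (zeros on the diagonal, $1/2$ off--diagonal, operator norm $\cos(\uppi/T)$), yielding the contraction constant $\cos(\uppi/T)+\ltt_1 h^2/2$, whereas you invert the discrete Dirichlet Laplacian $A$ (twos on the diagonal, $-1$ off--diagonal) up front and obtain the contraction constant $\ltt_1 h^2/(2-2\cos(\uppi/T))$; these two constants are different numbers but both are strictly less than $1$ precisely when \eqref{eq:transitivity-condition} holds, so the distinction is immaterial.
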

\begin{proof}
    The proof is postponed to Section 6.1 of the Supplementary Material A.
\end{proof}

\begin{remark}
    \label{rm:Tequal1}
    For $T = 1$ the corresponding statement  trivially holds  without the condition \eqref{eq:transitivity-condition}.
\end{remark}

The condition \eqref{eq:transitivity-condition} is sharp in the sense of the following counterexample.
Consider the standard Gaussian target $U(q) = |q|^2/2$  which satisfies \Cref{hyp:regularity} with $\ltt_1 = 1$.
    Given a number of leapfrog steps $T$ and choosing the stepsize $h^2 = 2 (1 - \cos (\uppi/T))$,  explicit calculations show that for any $(q_0, p_0) \in (\Rset^d)^2$, it holds that $\Phiverlet[h][T](q_0, p_0) = (-q_0, -p_0)$. Therefore, the conclusion of \Cref{thm:trajectory-transitivity} cannot hold in this situation.

    However, for a given number of leapfrog steps $T$, in the Gaussian case still, we can show that $\tilde{q} \mapsto \bar{\Psi}_h^{(T)}(q, \tilde{q})$ is a $\rmC^1$-diffeomorphism still, if $h$ do not belong to a countable subset $\msh_0 \subset \rset$, as illustrated by the following result.
    Indeed, when $\pi$ is Gaussian, leapfrog iterates can be explicitly expressed polynomial in $h$ and linear in $q_0,p_0\in (\Rset^d)^2$. Therefore  their analysis can be simplified.
Finally, note that we state here further properties of leapfrog iterates that are used to prove the convergence of the NUTS kernel as $\pi$ is Gaussian.

\begin{lemma}
    \label{lemma:gaussian_case}
    If there exists a real positive definite matrix $\Sigma$ such that for any $q\in \Rset^d$  $U(q)=q^\top \Sigma q/2$, then 
    there exists a countable set $\msh_0\subset \Rset_{\geq0}$ such that for any $h\in \Rset_{>0}\setminus \msh_0$, $ (T,T_1,T_2)\in \Zset^3$ with $T_2\neq T_1, \, T\neq 0$ and $q_0\in \Rset^d$,
        the functions $\psi_{q_0}^{(T)},\nabla F_{q_0}^{(T_1,T_2)}$ are linear one-to-one maps, where 
    \begin{align}
      &       F_{q_0}^{(T_1,T_2)}:\,p_0\in \Rset^d\mapsto p_{T_1}^T(q_{T_2}-q_{T_1}),\quad \psi_{q_0}^{(T)}:\,p_0\in \Rset^d \mapsto q_T\eqsp ,
    \end{align}
    denoting for $i\in\zset$ $q_i=\operatorname{proj}_1\Phiverlet[h][i](q_0,p_0)$ and $ p_{i}=\operatorname{proj}_2\Phiverlet[h][i](q_0,p_0)$.
\end{lemma}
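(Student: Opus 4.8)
When $\pi$ is Gaussian with potential $U(q) = q^\top\Sigma q/2$, the leapfrog map $\Phiverlet[h][1]$ is a \emph{linear} symplectic map on $(\Rset^d)^2 = \Rset^{2d}$. Indeed, from \eqref{eq:iteration_verlet}--\eqref{eq:def_Psiverlet_0} with $\nabla U(q) = \Sigma q$ and $\mathbf{M} = \Idd$, each of $\Psiverlet^{(1)}_{h/2}, \Psiverlet^{(2)}_h$ is an (affine, in fact linear since the base point $0$ is fixed) block-triangular map, so $\Phiverlet[h][1] = A_h$ for an explicit matrix
\[
A_h = \begin{pmatrix} \Idd - \tfrac{h^2}{2}\Sigma & h\Idd \\ -h\Sigma + \tfrac{h^3}{4}\Sigma^2 & \Idd - \tfrac{h^2}{2}\Sigma \end{pmatrix},
\]
and hence $\Phiverlet[h][i] = A_h^i$ for every $i \in \Zset$ (with $A_h$ invertible, $\det A_h = 1$). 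Diagonalizing $\Sigma = O^\top D O$ with $D = \diag(\lambda_1,\dots,\lambda_d)$, $\lambda_\ell > 0$, the map $A_h$ block-decomposes into $d$ independent $2\times 2$ blocks, the $\ell$-th being $B_\ell(h) = \left(\begin{smallmatrix} 1 - h^2\lambda_\ell/2 & h \\ -h\lambda_\ell + h^3\lambda_\ell^2/4 & 1 - h^2\lambda_\ell/2\end{smallmatrix}\right)$, which has $\det = 1$ and trace $\tau_\ell(h) = 2 - h^2\lambda_\ell$. Consequently the entries of $A_h^i$, and therefore all the coordinates of $q_i = \operatorname{proj}_1 \Phiverlet[h][i](q_0,p_0)$ and $p_i = \operatorname{proj}_2 \Phiverlet[h][i](q_0,p_0)$, are \emph{polynomials in $h$} with coefficients that are linear in $(q_0,p_0)$. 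This is the structural observation that makes everything explicit.

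First I would record that $p_0 \mapsto \psi_{q_0}^{(T)}(p_0) = q_T$ is affine in $p_0$ with linear part $L_T(h) := \operatorname{proj}_1 \circ A_h^T \circ (0,\cdot) : \Rset^d \to \Rset^d$; concretely $L_T(h) = O^\top \operatorname{diag}((B_\ell(h)^T)_{12})_\ell\, O$, so $L_T(h)$ is invertible iff $(B_\ell(h)^T)_{12} \neq 0$ for all $\ell$. Since $B_\ell(h)$ has eigenvalues $\mu_\ell^\pm$ with $\mu_\ell^+\mu_\ell^- = 1$, one computes $(B_\ell(h)^T)_{12} = h\, U_{T-1}(\tau_\ell(h)/2)$ where $U_{T-1}$ is the Chebyshev polynomial of the second kind (equivalently $(B_\ell^T)_{12} = h(\mu_\ell^{+T}-\mu_\ell^{-T})/(\mu_\ell^+-\mu_\ell^-)$ when $\mu_\ell^+\neq\mu_\ell^-$); for fixed $T$ and $\ell$ this is a nonzero polynomial in $h$ (its leading coefficient in $h$ is nonzero, or one checks it is not identically zero by evaluating at small $h$ where it is $\approx Th$), hence it vanishes only on a finite set of $h$'s. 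Taking the union over $\ell \in [d]$ and over the finitely many relevant values of $T$ gives a countable — in fact for each $T$ finite — exceptional set outside which $\psi_{q_0}^{(T)}$ is a linear (affine) bijection for \emph{every} $q_0$ simultaneously, as its invertibility depends only on $h,\Sigma,T$.

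Next, for $F_{q_0}^{(T_1,T_2)}(p_0) = p_{T_1}^\top(q_{T_2} - q_{T_1})$: both $p_{T_1}$ and $q_{T_2}-q_{T_1}$ are affine functions of $p_0$, so $F_{q_0}^{(T_1,T_2)}$ is \emph{a priori} quadratic in $p_0$; the claim that it is a \emph{linear one-to-one map} must be read as: its linear part is an injective (hence bijective) linear functional $\Rset^d \to \Rset$, i.e. a nonzero covector, and — I expect this is what is really used downstream — the quadratic part vanishes. I would verify the quadratic part vanishes using the symplecticity/energy-type identity: writing $(q_{T_1},p_{T_1}) = A_h^{T_1}(q_0,p_0)$ and $(q_{T_2},p_{T_2}) = A_h^{T_2}(q_0,p_0)$, the quadratic form $p_0 \mapsto p_{T_1}^\top q_{T_2} - p_{T_1}^\top q_{T_1}$ has quadratic part governed by the off-diagonal symmetrization of $(A_h^{T_1})^\top \binom{0\;\, I}{0\;\,0}\!\big(A_h^{T_2}-A_h^{T_1}\big)$ restricted to the $p_0$-block; using $\det A_h = 1$, the block structure, and the fact that $B_\ell(h)$ commutes with its powers, the relevant $2\times2$ scalar quadratic coefficient telescopes to $0$ (this is the one genuine computation). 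Then the linear part of $F_{q_0}^{(T_1,T_2)}$ in $p_0$ is a covector whose $\ell$-th $O$-component is a polynomial in $h$ (and linear in $q_0$); it is not identically zero as a function of $h$ for fixed $q_0\neq 0$ and fixed $T_1\neq T_2$, so again it is nonzero off a finite set of $h$. Enlarging $\msh_0$ by these finitely many values (for each of the countably many triples $(T,T_1,T_2)$ — whence countable in total) yields the claim.

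\textbf{Main obstacle.} The conceptual content is routine once the linearity is in place, so the real work — and the only place an error could hide — is the algebra showing that (i) $(B_\ell(h)^T)_{12}$ is a nonzero polynomial in $h$, via the Chebyshev/eigenvalue formula, and (ii) the quadratic part of $F_{q_0}^{(T_1,T_2)}$ in $p_0$ genuinely vanishes, which requires using $\det A_h = 1$ together with the commutativity of the $2\times2$ blocks with their powers rather than just generic symplecticity. I would carry out (i) and (ii) per $2\times 2$ block after diagonalizing $\Sigma$, which reduces both to elementary identities for $2\times 2$ matrices of determinant one, then reassemble and take the (countable) union of exceptional $h$-sets.
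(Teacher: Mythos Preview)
Your treatment of $\psi_{q_0}^{(T)}$ is correct and in fact more explicit than the paper's: the paper simply writes $q_T = A(h)q_0 + B(h)p_0$ with entries polynomial in $h$ and checks that $\det B(h)$ is not the zero polynomial via $\frac{d}{dh}B(h)\big|_{h=0} = T\,\Idd$, whereas you compute $B(h)$ block-wise through the Chebyshev formula. Both give a finite exceptional set for each $T$.

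The second half, however, rests on a misreading. The lemma asserts that $\nabla F_{q_0}^{(T_1,T_2)}:\Rset^d\to\Rset^d$ is a linear (affine) one-to-one map --- not that $F_{q_0}^{(T_1,T_2)}:\Rset^d\to\Rset$ is. Since $F$ is quadratic in $p_0$, its gradient is automatically affine, $\nabla F(p_0) = C(h)q_0 + D(h)p_0$ for $d\times d$ matrices with entries polynomial in $h$; the only task is to show $\det D(h)$ is a nonzero polynomial. The paper does this exactly as for $B(h)$: from $p_{T_1}\to p_0$ and $q_{T_2}-q_{T_1}\sim (T_2-T_1)h\,p_0$ as $h\to 0$ one gets $\frac{d}{dh}D(h)\big|_{h=0}=(T_2-T_1)\Idd$, hence $\det D(h)\not\equiv 0$.

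Your proposed route --- showing that the quadratic part of $F$ in $p_0$ vanishes --- is not only unnecessary but wrong, and would actually contradict the lemma. For small $h$ one has $F(p_0)\approx (T_2-T_1)h\,|p_0|^2$, so the quadratic part does \emph{not} vanish (take $q_0=0$ to see it is the whole of $F$); and if it did, $\nabla F$ would be constant in $p_0$, hence not one-to-one. The symplecticity/telescoping computation you sketch therefore cannot succeed. Separately, your fallback (``the linear part of $F$ is a nonzero covector for $q_0\neq 0$'') would leave $q_0=0$ uncovered, whereas the lemma requires the conclusion for every $q_0$. Once the statement is read correctly, the $\nabla F$ part is no harder than the $\psi$ part and needs none of this machinery.
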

\begin{proof}
    The proof is postponed to Section 6.2 of the Supplementary Material A.
  \end{proof}
  Based on \Cref{lemma:gaussian_case}, the proof of \Cref{thm:ergodicity_compile_HMC} under \Cref{hyp:gaussian_perturbation_nice} then follows from an homotopy argument. 

\hypersetup{
    colorlinks = false,
    linkbordercolor = {white},
    linkcolor = {white}
  }
  
{\color{white}\eqref{eq:F_T1T2}\eqref{eq:transitivity-condition}\eqref{eq:norm_position_condition}\eqref{eq:lemma_hamiltonian_condition}\eqref{eq:T_-}\ref{rm:growthmaximality}\eqref{eq:hat_omega}\eqref{eq:Omega}\eqref{eq:def_check_omega}\eqref{def:recur_def_rmq}\eqref{eq:nuts-I},\eqref{eq:def_hatPi},\eqref{eq:transition_prob_q}\ref{prop:condition_check}\ref{prop:condition_check_gaussian}
    \ref{prop:general_condition_check}
\ref{section:general_HMC} \ref{thm:trajectory-transitivity} \ref{lemma:gaussian_case}
\ref{ergodicity_section}\ref{thm:3small}\ref{thm:1nuts-small-sets} \ref{thm:nuts-accessibility}
\ref{prop:condition_check}\ref{hyp:3}\ref{thm:ergodicity_compile} \ref{eq:scrU} \ref{hyp:regularity} \ref{eq:iteration_verlet}}

\hypersetup{
    colorlinks = True,
    linkbordercolor = {black},
    linkcolor = {red}
  }
\bibliographystyle{plain}
\bibliography{bibliography}

\newtheorem{acks}{Acknoledgement}
  \begin{acks}[Acknowledgments]
  MK, MV and ES are supported by Academy of Finland (Finnish Centre of Excellence in Randomness and Structures, grants 346311 and 346305).

  A.D. would like to thank the Isaac Newton Institute for Mathematical Sci- ences for support and hospitality during the programme The
mathematical and statistical foundation of future data-driven engineering when work
on this paper was undertaken. This work was supported by: EPSRC grant number
EP/R014604/1.
\end{acks}

\appendix



\section{Proof of \Cref{sec:dynamic-hmc}}

\subsection{Proof of \Cref{prop:trajectory-invariance}}
\label{sec_supp:trajectory_invariance}

      Let $\varphi : \rset^d \to \rset$ be a measurable bounded function.
      Using \eqref{eq:transition-kernel} and Fubini's theorem,
      we get
        \begin{align}
    &        \int  \pi(q') \int \Kker_h(q', \dd q) \varphi(q) \dd q' =    \int  \tpi(q', p') \int \tilde{\Kker}_h((q', p'), \dd q) \varphi(q) \dd q' \dd p'
            \\
            &\quad =
            \begin{multlined}[t]
                \int \dd q' \dd p' \, \tpi(q', p')
                \sum_{\msj \subset \mathbb{Z}} \sum_{j \in \msj}
                 \rmpp_h(\msj \mid q', p') \rmqq_h(j \mid \msj, q', p') \varphi(\operatorname{proj}_1\{\Phiverlet[h][j](q', p')\})
            \end{multlined}
            \\
            &\quad =
            \begin{multlined}[t]
                \sum_{\msj \subset \mathbb{Z}} \sum_{j \in \msj}
                \int \dd q' \dd p' \, \tpi(q', p') 
                 \rmpp_h(\msj \mid q', p') \rmqq_h(j \mid \msj, q', p') \varphi(\operatorname{proj}_1\{\Phiverlet[h][j](q', p')\}) \eqsp.
            \end{multlined}
        \end{align}
    
        By the change of variables $(q,p) = \Phiverlet[h][j](q', p')$, and using the fact that the Jacobian determinant of $(q',p') \mapsto \Phiverlet[h][j](q', p')$ is equal to $1$ (see e.g.,
        \cite[Theorem and Proposition 2.1]{Geo_integratorsbou2018geometric}), 
        \begin{align}
        &        \int  \pi(q') \int \Kker_h(q', \dd q) \varphi(q) \dd q' \\
            & =
            \begin{multlined}[t]
                \sum_{\msj \subset \mathbb{Z}} \sum_{j \in \msj}
                \int \dd q \dd p \, \tpi(\Phiverlet[h][-j](q, p))
                 \rmpp_h(\msj \mid \Phiverlet[h][-j](q, p)) \rmqq_h(j \mid \msj, \Phiverlet[h][-j](q, p)) \varphi(q)
            \end{multlined}
            \\
            & =
            \begin{multlined}[t]
                \int \dd q \dd p \sum_{\msj \subset \mathbb{Z}} \sum_{j \in \mathbb{Z}} \mathbbm{1}_\msj(j) \tpi(\Phiverlet[h][-j](q, p))
                 \rmpp_h(\msj \mid \Phiverlet[h][-j](q, p)) \rmqq_h(j \mid \msj, \Phiverlet[h][-j](q, p)) \varphi(q)
            \end{multlined}
            \\
            & =
            \begin{multlined}[t]
                \int \dd q \dd p \sum_{\msj' \subset \mathbb{Z}} \sum_{j' \in \mathbb{Z}} \mathbbm{1}_{\msj'}(0)\tpi(\Phiverlet[h][-j'](q, p))
                \\
    \qquad  \times            \rmpp_h(\msj'+j' \mid \Phiverlet[h][-j'](q, p)) \rmqq_h(j' \mid \msj'+j', \Phiverlet[h][-j'](q, p)) \varphi(q) \eqsp,
    \end{multlined}
        \end{align}
    where for the last line, we use the change of indices, $\msj \mapsto \msj -j$.
    Using  \eqref{eq:trajectory-invariance} implies that $$   \int  \pi(q') \int \Kker_h((q', p'), \dd q) \varphi(q) \dd q' = \int  \pi(q) \varphi(q) \dd q$$ and concludes the proof.
        

\section{Proof of \Cref{sec:nuts-its-invariance}}
    \subsection{Proof of \Cref{lemma:comeback}}
    \label{sec_supp:comeback}
    Recall $\msi = B_K(v|_K)=B_K-(2^K-1-v|_K)$ with $v\in B_{\Kmax}$.
Let $a,b\in \msi$.
By the definition of $\bar{\rmq}_h$ \eqref{eq:transition_prob_q}, we have
    \begin{align}
        \bar{\rmq}_h(a,b\mid \msi,q_0,p_0)&= \rmq_h(b-a  \mid \msi - a, \Phiverlet[h][a](q_0, p_0))  \\
    &= \rmq_h(b-a  \mid B_K-(2^K-1+a-v|_K), \Phiverlet[h][a](q_0, p_0))\eqsp .
    \end{align}
    Using $\iota(b)-\iota(a)=b-a$ and $\Phiverlet[h][\iota(a)](\Phiverlet[h][-(2^K-1-v|_K)](q_0, p_0))=\Phiverlet[h][a](q_0,p_0)$, we get
    $$\bar{\rmq}_h(a,b\mid \msi,q_0,p_0)= \rmq_h(\iota(b)-\iota(a)  \mid B_K-\iota(a), \Phiverlet[h][\iota(a)](\Phiverlet[h][-(2^K-1-v|_K)](q_0, p_0))) $$
    $$=\bar{\rmq}_h(\iota(a),\iota(b) \mid B_K,\Phiverlet[h][-(2^K-1-v|_K)](q_0, p_0))\eqsp , $$
    which completes the proof.

    \subsection{Proof of \Cref{lemma:definition_continuity_q}}
    \label{sec_supp:definition_continuity_q}
%
    Let $a,b\in B_K$ and set $(q_0',p_0') = \Phiverlet[h][-(2^K-1-v|_K)](q_0, p_0)$.
    We have  by definitions \eqref{eq:def_hatPi}, \eqref{eq:transition_prob_q} that
    $$\hat{\rmq}_{h,K}(a,b)=\bar{\rmq}_h(a,b|B_K,q_0',p_0')=\rmq_h(b-a|B_K-a,\Phiverlet[h][a](q_0',p_0')) \eqsp ,$$
    where $\rmq_h(b-a|B_K-a,\Phiverlet[h][a](q_0',p_0'))$ is the conditional probability that Algorithm \ref{alg:nuts-doubling} 
     returns 
     $$\Phiverlet[h][b-a](\Phiverlet[h][a](q_0',p_0'))=\Phiverlet[h][b](q_0',p_0')$$
      when the initial state is $\Phiverlet[h][a](q_0',p_0')$,
      given that \Cref{alg:nuts-doubling} outputs $I_f=B_K-a=I_K $ with $K_f=K$ (recall that $I_f$ and $K_f$ are defined in \eqref{eq:nuts-I}).
    In the rest of the proof we work on the event $\{I_f=B_K-a\}$ which is contained in the event $ \{K_f=K\}$.
    
     To manipulate more intuitively the definition of $\rmq_h$ in \eqref{def:recur_def_rmq},
      we use the variables defined in \Cref{alg:nuts-samplingj} when \Cref{alg:nuts-doubling} is starting from $\Phiverlet[h][a](q_0',p_0') $ and using the sequences of \iid~random variables $(V_k)_{k =0}^{\Kmax}$, $(\bar{U}_{k})_{k=0}^{\Kmax}$ and $(\tilde{U}_{k})_{k=0}^{\Kmax}$ with distribution
                $\mathrm{Ber}(1/2)$, $\mathrm{Unif}([0,1])$ and $\mathrm{Unif}([0,1])$  respectively.
    We denote by $(I_k)_{k=0}^{K}$,$(I_k^{\text{new}})_{k=0}^{K-1}$ the sequences of random intervals defined in \Cref{alg:nuts-doubling},
     by $i'=(i_k')_{k=1}^{K}$ the sequence of multinomial random variables, by $j=(j_k)_{k=0}^{K}$ the random indices output by \Cref{alg:nuts-samplingj} in \Cref{alg:nuts-doubling},
     and $(\bar{V}_k)_{k=0}^{K}$ the sequence of binomial random variable ruling the acceptation/rejection mechanism in \Cref{alg:nuts-samplingj}.
    With these notations we have,
    \begin{equation}
        \hat{\rmq}_{h,K}(a,b)=\mathbb{P}(j_f=b-a|I_f=B_K-a)=\mathbb{P}(j_K=b-a|I_f=B_K-a) \eqsp.
      \end{equation}
    We distinguish the two cases, $b=a$ and $b \neq a$.
    We first state a technical result which is at the core of the proof.
    \begin{lemma}
      \label{lemma:pivot_defenition_q}
    On the event $\{I_f=B_K-a\}$, the sets $I_k,I_k^{\text{new}}$ are fixed, for any $k\in[0:K-1]$,
    \begin{equation}
    \label{eq:I_cond1a}
    I_k+a= \{c \in B_K:\ee c|^{K-k}=a|^{K-k-1}a_{k} \}=B_k+a|^{K-k}\ee 
    \end{equation}
    \begin{equation}
    \label{eq:I_cond2a}
    \text{and}
    \ee I_k^{\text{new}}+a= \{c \in B_K:\ee c|^{K-k}=a|^{K-k-1}a_{k}^c \} \eqsp.
    \end{equation}
    In addition, for any $k\in[0:K-1]$,
     \begin{equation}
      \label{eq:Ik_Bk_ak}
      I_k=B_k-a|_k \eqsp ,
     \end{equation}
     and,
     \begin{equation}
      \label{eq:pitildegoal}
   \tpi_{\Phiverlet[h][a](q_0',p_0')}(I_k)=\check{\pi}_{K-k}(a|^{K-k-1}a_{k})\ee, \ee
  \tpi_{\Phiverlet[h][a](q_0',p_0')}(I_k^{\text{new}})=\check{\pi}_{K-k}(a|^{K-k-1}a_{k}^c)\eqsp   .
  \end{equation}
  \end{lemma}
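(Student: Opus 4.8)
The plan is to reduce everything to the single observation that conditioning on $\{I_f = B_K - a\}$ completely determines the Bernoulli variables $V_0,\dots,V_{K-1}$ driving \Cref{alg:nuts-doubling}, hence makes the entire interval tree deterministic. Indeed, the interval construction in \Cref{alg:nuts-doubling} depends only on $(V_k)_k$ and not on the starting point, so exactly as in \Cref{sectionrmp} one has $I_k = B_k(V|_k)$ for every $k \le K_f$; since $|I_{K_f}| = 2^{K_f}$ and $|B_K - a| = 2^K$, the event $\{I_f = B_K - a\}$ forces $K_f = K$ and $B_K(V|_K) = B_K - a$. Comparing this with $B_K(V|_K) = B_K - (2^K - 1 - V|_K)$ gives $V|_K = 2^K - 1 - a$, i.e. $V_i = 1 - a_i$ for $i \in [0:K-1]$, and in particular $V|_k = 2^k - 1 - a|_k$ for every $k \in [0:K]$.

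From here it is just computation. Substituting $V|_k = 2^k - 1 - a|_k$ into $I_k = B_k(V|_k) = B_k - (2^k - 1 - V|_k)$ gives at once $I_k = B_k - a|_k$, which is \eqref{eq:Ik_Bk_ak}. Splitting $a = a|^{K-k}\, 2^k + a|_k$ then yields $I_k + a = B_k + a|^{K-k}\, 2^k = \{c \in B_K : c|^{K-k} = a|^{K-k}\}$, which is \eqref{eq:I_cond1a} once one notes $a|^{K-k} = a|^{K-k-1}a_k$. For \eqref{eq:I_cond2a} I would use $I_k^{\text{new}} = I_{k+1} \setminus I_k$ together with \eqref{eq:I_cond1a} at levels $k$ and $k+1$: a point $c \in B_K$ lies in $(I_{k+1}+a)\setminus(I_k+a)$ iff $c|^{K-k-1} = a|^{K-k-1}$ but $c|^{K-k}\neq a|^{K-k}$, and the latter, given the former, is equivalent to $c_k = a_k^c$, i.e. $c|^{K-k} = a|^{K-k-1}a_k^c$. (Equivalently one can read this off directly from the extension rule: since $V_k = 1 - a_k$, the block $I_k^{\text{new}}$ consists of the $2^k$ integers immediately to the right of $I_k$ when $a_k = 0$ and immediately to the left when $a_k = 1$; computing $\max I_k + a$, resp. $\min I_k + a$, reproduces the same label.)

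For \eqref{eq:pitildegoal} I would use the shift identity $\Phiverlet[h][j](\Phiverlet[h][a](q_0',p_0')) = \Phiverlet[h][j+a](q_0',p_0')$, which gives $\tpi_{\Phiverlet[h][a](q_0',p_0')}(\msj) = \sum_{j \in \msj} \tpi(\Phiverlet[h][j+a](q_0',p_0')) = \tpi_{(q_0',p_0')}(\msj + a)$ for any $\msj \subset \Zset$. Applying this with $\msj = I_k$ and $\msj = I_k^{\text{new}}$ and feeding in \eqref{eq:I_cond1a}--\eqref{eq:I_cond2a} identifies the two quantities in \eqref{eq:pitildegoal} with $\tpi_{(q_0',p_0')}(\{c\in B_K : c|^{K-k} = a|^{K-k-1}a_k\})$ and $\tpi_{(q_0',p_0')}(\{c\in B_K : c|^{K-k} = a|^{K-k-1}a_k^c\})$ respectively; since $(q_0',p_0') = \Phiverlet[h][-(2^K-1-v|_K)](q_0,p_0)$, these are precisely $\check{\pi}_{K-k}(a|^{K-k-1}a_k)$ and $\check{\pi}_{K-k}(a|^{K-k-1}a_k^c)$ by the definition \eqref{eq:def_check_omega}.

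The whole argument is essentially bookkeeping, and the only substantive point is the determinism of the interval tree on the conditioning event. I expect the main nuisance to be keeping the two truncation operations straight — $a|_n$ picking out the last (low-order) $n$ bits and $a|^n$ the leading (high-order) $n$ bits — and, inside \eqref{eq:I_cond2a}, verifying that the ``new'' block sits on the side dictated by $a_k$ and carries the label $a|^{K-k-1}a_k^c$ and not $a|^{K-k}$; the two translations, by $a$ and by $-(2^K-1-v|_K)$, also need to be tracked consistently throughout.
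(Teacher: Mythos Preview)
Your proposal is correct and follows essentially the same approach as the paper's proof: both deduce $V_i = a_i^c$ from the conditioning event, obtain $I_k = B_k - a|_k$, compute $I_k + a$ via the split $a = a|^{K-k}\cdot 2^k + a|_k$, derive $I_k^{\text{new}}$ as $(I_{k+1}+a)\setminus(I_k+a)$, and finish \eqref{eq:pitildegoal} with the shift identity for $\tpi$. The only cosmetic differences are the order in which the four displays are established and your explicit inclusion of the $2^k$ factor, which the paper suppresses in writing ``$B_k + a|^{K-k}$''.
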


  \begin{proof}
    First we remark that \eqref{eq:I_cond1a} implies \eqref{eq:I_cond2a}. 
    Indeed if \eqref{eq:I_cond1a} holds, we have for any $k\in [0:K-1]$
    \begin{align}
    I_k^{\text{new}}+a&=(I_{k+1}+a)\setminus (I_k+a)\\
    &= \{c \in B_K:\ee c|^{K-k-1}=a|^{K-k-2}a_{k+1},\ee c|^{K-k}\neq a|^{K-k-1}a_{k}\} \\
    &= \{c \in B_K:\ee c|^{K-k-1}=a|^{K-k-1},\ee c|^{K-k}\neq a|^{K-k-1}a_{k} \} \\
    &= \{c \in B_K:\ee c|^{K-k-1}=a|^{K-k-1},\ee a|^{K-k-1}c_k \neq a|^{K-k-1}a_{k} \} \\
    &= \{c \in B_K:\ee c|^{K-k-1}=a|^{K-k-1},\ee c_k =a_{k}^c \} \\
    &=\{c \in B_K:\ee c|^{K-k}=a|^{K-k-1}a_{k}^c \} \eqsp .
    \end{align}
    Now, we show \eqref{eq:I_cond1a}. We first specify the value of $I_k$ for $k\in [0:K-1]$.
    Proceeding similarly to the proof of \Cref{lemma:comeback}, there exists $v\in B_K$ such that 
    $B_K(v)=B_K-a $, and therefore that
     \begin{equation}
      \label{eq:I_rep_v}
      I_k=B_k(v|_k)=B_k-(2^k-1-v|_k)
     \end{equation}
       for $k\in [0:K]$.
        We now express $v|_k$ with respect to $a$. 
    Using \eqref{eq:I_rep_v} and $B_K(v)=B_K-a$, we have $2^K-1-v=a$.
     From this equality, we deduce that $v_i=a_i^c$ for $i\in [0:K-1]$, $2^k-1-v|_k=a|_k$ and $B_k(v|_k)=B_k-(2^k-1-v|_k)=B_k-a|_k$ for any $k\in[K]$.
    This proves \eqref{eq:Ik_Bk_ak}.
    
    Then, for any $k\in [0:K-1]$ using \eqref{eq:I_rep_v}, $a+I_k=B_k-a|_k+a=B_k+a|^{K-k}$ and
    \begin{equation}
      \label{eq:2}
      \{c \in B_K:\ee c|^{K-k}=a|^{K-k-1}a_{k} \}=\{c+a|^{K-k}: c\in B_k \}  =B_k+a|^{K-k}=a+I_k \eqsp ,
    \end{equation}
    which completes the proof of \eqref{eq:I_cond1a}.
    We have $\tpi_{\Phiverlet[h][a](q_0',p_0')}(\msj)=\tpi_{q_0',p_0'}(a+\msj)$ for any $\msj\in \mathcal{P}(\Zset): |\msj|<\infty$, therefore \eqref{eq:pitildegoal}
    follows by definition of $\tpi$ \eqref{eq:def_check_omega} and \eqref{eq:I_cond1a}, \eqref{eq:I_cond2a}.
  \end{proof}

  \textbf{If $b=a$.} Then, by construction in \Cref{alg:nuts-doubling,alg:nuts-uturn},
    
  \begin{align}
   \hat{\rmq}_{h,K}(a,a)&=\mathbb{P}(\bar{V}_0=0,\ldots,\bar{V}_{K-1}=0|I_f=B_K-a)= \prod_{k=0}^{K-1}\mathbb{P}(\bar{V}_k=0|I_f=B_K-a)
   \\
  &=\prod_{k=0}^{K-1}(1-1\wedge [\tpi_{\Phiverlet[h][a](q_0',p_0')}(I_k^{\text{new}})/\tpi_{\Phiverlet[h][a](q_0',p_0')}(I_k)])\eqsp ,
  \label{eq:bequala_mainexpression}
\end{align}
  the second equality holds because of the mutual independence of $(\bar{V}_i)_{i=0}^{K-1}$ given that $I_f=B_K-a$ and the third equality holds on the event $\{I_f=B_K-a\}$ where the sets $I_k,I_k^{\text{new}}$ are fixed.
    Finally, we have by setting $k\gets K-k-1 $ and by using \eqref{eq:pitildegoal}, \eqref{eq:Omega}, \eqref{eq:bequala_mainexpression}:
    \begin{align}
        \hat{\rmq}_{h,K}(a,a)&=\bar{\rmq}_{h,k=0}^{K-1}(1-1\wedge [\check{\pi}_{K-k}(a|^{K-k-1}a_{k}^c)/\check{\pi}_{K-k}(a|^{K-k-1}a_{k})]) \\
       &=\Pi(a,K)\eqsp .
    \end{align}
    
    \textbf{If $b\neq a$.}    
    We define $N_0=\max \{ k\in [0:K-1]: \bar{V}_k=1\}$, which exists since $a\neq b$, $N_0$ is a random variable
    well defined on the event $\cup_{k\in [0:k-1]} \{\bar{V}_k=1\} $, that is why, we work on the event  $\cup_{k\in [0:k-1]} \{\bar{V}_k=1\} \cap \{N_0=n_0\}$.
    On this event, we have,
    \begin{equation}
      \label{eq:jk_in0}
      j_K=i'_{n_0+1}\in  I^{\text{new}}_{n_0}\eqsp .
    \end{equation}
    Denoting by $n=K-1-n_0 $, we have
    \begin{align}
        \hat{\rmq}_{h,K}(a,b) 
        & = 
        \mathbb{P}(i'_{n_0+1}=b-a,\bar{V}_{n_0}=1,\bar{V}_{n_0+1}=0,\ldots,\bar{V}_{K-1}=0|I_f=B_K-a)
        \\
        & =
        \begin{multlined}[t]
           \mathbb{P}(\bar{V}_{n_0+1}=0,\ldots,\bar{V}_{K-1}=0|I_f=B_K-a) 
            \\
            \times \mathbb{P}(\bar{V}_{n_0}=1|I_f=B_K-a)\mathbb{P}(i'_{n_0+1}=b-a|I_f=B_K-a)
        \end{multlined}
        \\
        & = 
        \Pi(a,n)\left( 1 \wedge \frac{\check{\pi}_{n+1}(a|^n a_{K-n-1}^c)}{\check{\pi}_{n+1}(a|^n a_{K-n-1})} \right) 
    \frac{\hat{\pi}_K(b)}{\check{\pi}_{n+1}( a|^na_{K-n-1}^c)}\eqsp ,
    \end{align}
    where in the last equality we use \eqref{eq:pitildegoal}, \Cref{lemma:pivot_defenition_q},
    the second inequality holds because of the mutual independance of $(\bar{V}_k)_{k=0}^{K-1}\cup(i_k')_{k=1}^{K} $ given that $I_f=B_K-a$ and the following Lemma.
    \begin{lemma}
      \begin{equation}
      \mathbb{P}(i'_{n_0+1}=b-a|I_f=B_K-a)=\frac{\hat{\pi}_K(b)}{\check{\pi}_{n+1}( a|^na_{K-n-1}^c)}
      \end{equation}
    \end{lemma}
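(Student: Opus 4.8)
The plan is to compute the conditional multinomial probability directly from its definition in \Cref{alg:nuts-samplingj}, after using \Cref{lemma:pivot_defenition_q} to identify, on the conditioning event, which index set the draw lives on. Throughout write $(q_0',p_0') = \Phiverlet[h][-(2^K-1-v|_K)](q_0,p_0)$ and, as in \Cref{lemma:definition_continuity_q}, $n = \max(\{i\in[K]:a|^i=b|^i\}\cup\{0\})$ and $n_0 = K-1-n$. Since $b\neq a$ and $a,b\in B_K$, the bits of $a$ and $b$ coincide on positions $K-1,\dots,n_0+1$ and differ at position $n_0$, i.e. $b|^{K-n_0} = a|^{K-n_0-1}a_{n_0}^c$; by \eqref{eq:I_cond2a} this is precisely the statement that $b-a\in I^{\text{new}}_{n_0}$, and moreover on the event $\{I_f = B_K-a\}$ the sets $I_k, I^{\text{new}}_k$ are deterministic (\Cref{lemma:pivot_defenition_q}), in particular $I^{\text{new}}_{n_0}$ is the fixed set with $a+I^{\text{new}}_{n_0} = \{c\in B_K : c|^{n+1} = a|^n a_{K-n-1}^c\}$.

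Next I would read off the distribution of $i'_{n_0+1}$. By construction of \Cref{alg:nuts-doubling} and \Cref{alg:nuts-samplingj}, when the algorithm is run from $\Phiverlet[h][a](q_0',p_0')$ with the i.i.d.\ sequences $(V_k),(\bar U_k),(\tilde U_k)$, the variable $i'_{n_0+1}$ is a multinomial draw on $I^{\text{new}}_{n_0}$ with weights $\bar\pi_m \propto \tpi(\Phiverlet[h][m](\Phiverlet[h][a](q_0',p_0')))$, $m\in I^{\text{new}}_{n_0}$, and it is a function only of $\tilde U_{n_0}$ and of $I^{\text{new}}_{n_0}$. The event $\{I_f = B_K-a\}$ equals $\{K_f = K\}$ together with the corresponding direction pattern, hence is measurable with respect to $\sigma(V_0,\dots,V_{\Kmax-1})$ (for fixed $(q_0,p_0)$ the stopping time $S_f$, and so $K_f$, is a function of the $V_k$'s via \eqref{eq:nuts-stopping}–\eqref{eq:nuts-I}), and on this event the intervals are deterministic; since $\tilde U_{n_0}$ is independent of $(V_k)$, conditioning on $\{I_f = B_K-a\}$ leaves $i'_{n_0+1}$ with exactly the above multinomial law. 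One should also note that $i'_{n_0+1}$ is indeed drawn on this event: on $\{K_f=K\}$ the while-loop of \Cref{alg:nuts-doubling} calls \textup{Update} at every step $0\le k\le K-1$, and $n_0\le K-1$. Therefore
\[
\mathbb{P}(i'_{n_0+1}=b-a\mid I_f = B_K-a)
= \frac{\tpi(\Phiverlet[h][b-a](\Phiverlet[h][a](q_0',p_0')))}{\tpi_{\Phiverlet[h][a](q_0',p_0')}(I^{\text{new}}_{n_0})} \eqsp.
\]

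Finally I would simplify both terms. For the numerator, $\Phiverlet[h][b-a]\circ\Phiverlet[h][a] = \Phiverlet[h][b]$, so it equals $\tpi(\Phiverlet[h][b](q_0',p_0'))$, which by \eqref{eq:hat_omega}–\eqref{eq:omega_bar} is $\hat\pi_K(b)$ (up to the normalizing constant $\tpi_{q_0',p_0'}(B_K)$ shared with all the $\check\pi$'s appearing in \eqref{eq:pitildegoal} and in the statement). For the denominator, \eqref{eq:pitildegoal} with $k=n_0$ gives $\tpi_{\Phiverlet[h][a](q_0',p_0')}(I^{\text{new}}_{n_0}) = \check\pi_{n+1}(a|^n a_{K-n-1}^c)$. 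Substituting these two identifications into the displayed ratio (the common normalization cancels) yields the claimed formula. The only genuinely delicate point is the measurability/independence bookkeeping in the middle step — establishing that the conditioning event depends on the direction variables alone, so that the multinomial draw $i'_{n_0+1}$ is unaffected by it and the weights $\bar\pi_m$ are deterministic on it — which is exactly what \Cref{lemma:pivot_defenition_q} was set up to supply; the rest is algebraic rewriting.
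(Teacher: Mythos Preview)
Your proof is correct and follows the same route as the paper—read off the multinomial law of $i'_{n_0+1}$ on the conditioning event and rewrite numerator and denominator via $\Phiverlet[h][b-a]\circ\Phiverlet[h][a]=\Phiverlet[h][b]$ and \eqref{eq:pitildegoal}—while adding the independence argument and the verification $b-a\in I^{\text{new}}_{n_0}$ that the paper leaves implicit or defers to the subsequent $n=n'$ step. Your parenthetical about a ``shared'' normalization is slightly off (by \eqref{eq:def_check_omega} the $\check\pi_n$ are unnormalized whereas $\hat\pi_K$ is normalized via \eqref{eq:hat_omega}), but this reflects a looseness in the paper's own definitions and proof rather than a flaw in your argument.
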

    \begin{proof}
      By definition of $i'_{n_0+1} $ in \Cref{alg:nuts-samplingj}, it is sampled according to the multinomial on $I_{n_0}^{\text{new}}$ with the weights 
      \begin{equation}
        (\frac{\tilde{\pi}(\Phiverlet[h][i](\Phiverlet[h][a](q_0',p_0'))}{\tpi_{\Phiverlet[h][a](q_0',p_0')}(I_{n_0}^{\text{new}})})_{i\in I_{n_0}^{\text{new}}}
      \end{equation}
       when \Cref{alg:nuts-doubling} is initialized with $\Phiverlet[h][a](q_0',p_0')$.
       On the event $\{I_f=B_K-a\}$, 
      we have,
      \begin{equation}
        \mathbb{P}(i'_{n_0+1}=b-a|I_f=B_K-a)=\frac{\tilde{\pi}(\Phiverlet[h][b-a](\Phiverlet[h][a](q_0',p_0'))}{\tpi_{\Phiverlet[h][a](q_0',p_0')}(I_{n_0}^{\text{new}})}\eqsp .
      \end{equation}
        By using \eqref{eq:pitildegoal} and the definition of $n$,
       \begin{equation}
        \tpi_{\Phiverlet[h][a](q_0',p_0')}(I_{n_0}^{\text{new}})= \check{\pi}_{K-n_0}(a|^{K-n_0-1}a_{n_0}^c)=\check{\pi}_{n+1}(a|^{n}a_{K-n-1}^c) \eqsp ,
       \end{equation}
       then using $\hat{\pi}_K(c)=\tilde{\pi}(\Phiverlet[h][c](q_0',p_0'))$ for any $c\in B_K$ by definition of $\hat{\pi}_K$ \eqref{eq:def_check_omega},
       \begin{equation}
        \tilde{\pi}(\Phiverlet[h][b-a](\Phiverlet[h][a](q_0',p_0'))=\tilde{\pi}(\Phiverlet[h][b](q_0',p_0'))=\hat{\pi}_K(b)\eqsp, 
       \end{equation}
       which completes the proof.
       
    \end{proof}
    
    Now, setting $n'=\max \{i\in [K]:\ee a|^i= b|^i\}\cup \{0\}$, it remains to show that $n=n'$. 
    Setting $n_0'=K-1-n'$, we show that $n_0'=n_0$ by contradiction.
    First note that\footnote{using the convention $\sum_{0}^{-1}=0$ and $a|_0=0$}
    \begin{equation}
      \label{eq:n0_prime_b_minus_a}
    b-a=\sum_{k=0}^{K-n'-2} (b_k-a_k)2^k+\epsilon 2^{K-n'-1}= b|_{n_0'}-a|_{n_0'}+\epsilon 2^{n_0'}\eqsp ,
    \end{equation}
     with $\epsilon=b_{K-n'-1}- a_{K-n'-1}=a_{n_0'}^c-a_{n_0'}\in\{-1,1\}$, by maximality of $n'$.
    In addition, we have $ i_{n_0+1}'\in I^{\text{new}}_{n_0}$ by definition of the random variable, and using \eqref{eq:Ik_Bk_ak}, we have $I^{\text{new}}_{n_0}=(B_{n_0+1}-a|_{n_0+1})\setminus (B_{n_0}-a|_{n_0}) $. 
    It implies that
    \begin{equation}
      \label{eq:cases_an0}
      I^{\text{new}}_{n_0}=
      \begin{cases}
        [\max (B_{n_0}-a|_{n_0}):\max (B_{n_0}-a|_{n_0})+2^{n_0}] & a_{n_0}=0 \\
        [\min (B_{n_0}-a|_{n_0})-2^{n_0}:\min (B_{n_0}-a|_{n_0})] & a_{n_0}=1 \eqsp .
      \end{cases}
    \end{equation}
    We complete the proof by assuming $a_{n_0}=0$, the case $a_{n_0}=1$ being similar.
    On the event $\{i_{n_0+1}'=b-a\}$, by \eqref{eq:jk_in0} and \eqref{eq:cases_an0}, we have,
    \begin{equation}
        \label{eq:second_bma}
    2^{n_0}-1-a|_{n_0}< b-a\leq 2^{n_0+1}-1-a|_{n_0+1} \eqsp .
    \end{equation}
    On the other hand, bounding from below and above $b-a$ with its expression \eqref{eq:n0_prime_b_minus_a} by using that $ b|_{n_0'}\in B_{n_0'}$ by definition,
    \begin{equation}
    \label{eq:first_bma}
    \epsilon 2^{n_0'}-a|_{n_0'}\leq b-a\leq 2^{n_0'}(1+\epsilon)-1-a|_{n_0'} \eqsp .
    \end{equation}
     \eqref{eq:first_bma} and \eqref{eq:second_bma} implies that $\epsilon=1$ ($\epsilon=-1$ if $a_{n_0}=1$).
      Then, from \eqref{eq:first_bma} and \eqref{eq:second_bma}, we deduce by contradiction that $n_0'= n_0 $.
    
      The fact that $q_0,p_0 \to \rmq_h(a\mid \msi,q_0,p_0) $ is continuous comes from the fact that $q,p\to \tpi(\Phiverlet[h][k](q,p))$ is continuous for any $k\in \Zset$.
    

  \section{Proof of \Cref{lemma:irreducibility_q}}
    \label{sec_supp:irreducibility_rmq}
    \indent
\vspace{5mm}

\Cref{lemma:irreducibility_q} is a consequence of the following technical result. 
\begin{lemma}\label{lemma:main_irreducibility_q}
   Assume \Cref{hyp:regularity}.
\begin{enumerate}[wide, labelwidth=!, labelindent=0pt, label=(\alph*),leftmargin=*] 
\item \label{lemma:item2_technical_core_irrq} 
      Let $a,b \in B_K$. $\hat{\rmq}_{h,K}(a,b)>0 $ if and only if  $\Pi(a,n)>0 $ with $n=\max[\{i\in [K]:\ee a|^i= b|^i\}\cup \{0\}]$.
    Moreover, for any $ k\in[K]$, $\Pi(a,k)=\Pi(c,k)$ for any $c \in B_K$ such that $c|^k=a|^k $.
     Consequently, $\Pi(a,n)=\Pi(b,n)$ and thus $\hat{\rmq}_{h,K}(a,b)\hat{\rmq}_{h,K}(b,a)>0 $ if and only if $\Pi(a,n)>0 $.
   \item     \label{lemma:item1_first_pivot}
   Suppose in addition that for any $n\in[K]$ and $a,b\in B_K$ such that $a|^n\neq b|^n$,  we have $\check{\pi}_n(a)\neq \check{\pi}_n(b) $. Then, there exists $\pivot\in B_K$ such that for any $a$ in $B_K$
  $$\hat{\rmq}_{h,K}(a,\pivot)\hat{\rmq}_{h,K}(\pivot,a)>0 \eqsp .$$ 
\end{enumerate}
\end{lemma}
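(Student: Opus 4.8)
For part \ref{lemma:item2_technical_core_irrq}, the plan is to read everything off the explicit formula \eqref{eq:stan-index-selection2} of \Cref{lemma:definition_continuity_q}. First I would note that since $\tpi\propto\exp(-H)$ is strictly positive on $(\Rset^d)^2$, we have $\hat{\pi}_K(b)>0$ for every $b\in B_K$ and $\check{\pi}_m(u)>0$ for every $m\in[K]$ and $u\in B_m$ (the latter is a finite sum of positive numbers over the nonempty set $\{a\in B_K:\ a|^m=u\}$). Hence, for $n<K$, all factors appearing in \eqref{eq:stan-index-selection2} besides $\Pi(a,n)$ --- namely $1\wedge[\check{\pi}_{n+1}(a|^na_{K-n-1}^c)/\check{\pi}_{n+1}(a|^na_{K-n-1})]\in(0,1]$ and $\hat{\pi}_K(b)/\check{\pi}_{n+1}(a|^na_{K-n-1}^c)>0$ --- are strictly positive, while for $n=K$ (which forces $b=a$) we simply have $\hat{\rmq}_{h,K}(a,b)=\Pi(a,K)$. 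So in both cases $\hat{\rmq}_{h,K}(a,b)>0$ if and only if $\Pi(a,n)>0$, with $n=n(a,b):=\max(\{i\in[K]:\ a|^i=b|^i\}\cup\{0\})$. The identity $\Pi(a,k)=\Pi(c,k)$ whenever $c|^k=a|^k$ would follow directly from the product \eqref{eq:Omega}, since its factor indexed by $i\in[0:k-1]$ depends on $a$ only through $a|^i$ and $a|^{i+1}=a|^ia_{K-i-1}$, hence only through the first $i+1\le k$ bits of $a$. Finally, $n(a,b)$ is symmetric in $a,b$ and satisfies $a|^{n(a,b)}=b|^{n(a,b)}$, so $\Pi(a,n(a,b))=\Pi(b,n(a,b))$, and therefore $\hat{\rmq}_{h,K}(a,b)>0$, $\Pi(a,n(a,b))>0$, and $\hat{\rmq}_{h,K}(b,a)>0$ are all equivalent.

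For part \ref{lemma:item1_first_pivot}, by part \ref{lemma:item2_technical_core_irrq} it suffices to exhibit a single $\pivot\in B_K$ with $\Pi(a,n(a,\pivot))>0$ for every $a$. The plan is first to reinterpret the factors of $\Pi(a,m)$: the $i$-th one equals $1-(1\wedge r)$ with $r=\check{\pi}_{i+1}(a|^ia_{K-i-1}^c)/\check{\pi}_{i+1}(a|^{i+1})>0$, so it vanishes if and only if $r\ge1$, i.e.\ if and only if the sibling $a|^ia_{K-i-1}^c$ of $a|^{i+1}$ is at least as $\check{\pi}_{i+1}$-heavy as $a|^{i+1}$. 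Under the additional hypothesis --- which says exactly that $\check{\pi}_{i+1}$ is injective on $B_{i+1}$ --- the two siblings have distinct $\check{\pi}_{i+1}$-weights, so this factor is $>0$ if and only if $a|^{i+1}$ is the \emph{strictly heavier} of the two one-bit extensions of $a|^i$. Hence $\Pi(a,m)>0$ if and only if, for each $i<m$, the prefix $a|^{i+1}$ is this heavier extension of $a|^i$.

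I would then construct $\pivot$ as exactly this ``heaviest-child path'': take $\pivot|^0$ to be the empty word and, given $\pivot|^i$, let $\pivot|^{i+1}$ be the unique (by injectivity of $\check{\pi}_{i+1}$) one-bit extension of $\pivot|^i$ with the larger $\check{\pi}_{i+1}$-weight; this determines $\pivot\in B_K$. For an arbitrary $a\in B_K$, setting $n=n(a,\pivot)$ gives $a|^i=\pivot|^i$ for all $i\le n$, so for each $i\in[0:n-1]$ the extension $a|^{i+1}=\pivot|^{i+1}$ is by construction the strictly heavier child of $a|^i=\pivot|^i$; thus every factor of $\Pi(a,n)$ is $>0$, whence $\Pi(a,n)>0$. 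Applying part \ref{lemma:item2_technical_core_irrq} with $b=\pivot$ then gives $\hat{\rmq}_{h,K}(a,\pivot)\hat{\rmq}_{h,K}(\pivot,a)>0$ for every $a\in B_K$.

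The only genuine idea here is the heaviest-child-path construction in part \ref{lemma:item1_first_pivot}; the rest is careful bookkeeping with the binary-tree indexing --- the meaning of $a|^i$, the sibling $a|^ia_{K-i-1}^c$, and the identity $a|^ia_{K-i-1}=a|^{i+1}$. The point I expect to need the most care with is that it is the \emph{strictness} of the inequality guaranteed by the additional hypothesis that makes the heaviest child at each node unique, which is precisely what lets one fixed pivot $\pivot$ work simultaneously for every $a$.
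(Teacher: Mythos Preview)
Your proposal is correct and follows essentially the same approach as the paper's own proof: part \ref{lemma:item2_technical_core_irrq} is read off the explicit formula in \Cref{lemma:definition_continuity_q} together with the strict positivity of $\tpi$, and for part \ref{lemma:item1_first_pivot} the paper constructs $\pivot$ by the same recursive argmax (your ``heaviest-child path''), then checks $\Pi(\pivot,n)>0$ via the strict inequality guaranteed by the hypothesis. The only cosmetic difference is that the paper verifies $\Pi(\pivot,n)>0$ directly while you verify $\Pi(a,n)>0$, but since $a|^n=\pivot|^n$ these coincide by your second claim in part \ref{lemma:item2_technical_core_irrq}.
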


\begin{proof}
\ref{lemma:item2_technical_core_irrq}     If $n=K$,
    by \Cref{lemma:definition_continuity_q},
$      \hat{\rmq}_{h,K}(a,b)=\Pi(a,K)$,  
    the main claim is then straightforward. 

    If $n<K$, by \Cref{lemma:definition_continuity_q},
    \begin{equation}
      \hat{\rmq}_{h,K}(a,b)=\Pi(a,n) \Big( 1 \wedge \frac{\check{\pi}_{n+1}(a|^n (a_{K-l-1})^c)}{\check{\pi}_{n+1}(a|^na_{K-n-1})} \Big) \frac{\hat{\pi}_K(b)}{\check{\pi}_{n+1}(a|^n(a_{K-n-1})^c)} \eqsp. 
    \end{equation}
    the main claim is then straightforward since by \Cref{hyp:regularity}, and the definition of $\check{\pi}_{n+1},\hat{\pi}_K$ in \eqref{eq:def_check_omega}-\eqref{eq:hat_omega} we have,
    \begin{equation}
      \Big( 1 \wedge \frac{\check{\pi}_{n+1}(a|^n (a_{K-l-1})^c)}{\check{\pi}_{n+1}(a|^nc_{K-n-1})} \Big) \frac{\hat{\pi}_K(\xi)}{\check{\pi}_{n+1}(a|^n(a_{K-n-1})^c)}>0 \eqsp .
    \end{equation}
    The second statement is a consequence of the fact that $\Pi(a,n)$ only  depends on $a|^n$ (see \eqref{eq:Omega}) and the third statement is straightforward given the two first statements.

    \ref{lemma:item1_first_pivot}   We define $\pivot=\pivot_{K-1}\ldots \pivot_0\in B_K$ such that $\pivot_{K-1}=\argmax_{s\in\{0,1\}} \check{\pi}_1(s) $ and for $n\in [K-1]$, 
  $\pivot_n=\argmax_{s\in\{0,1\}} \check{\pi}_{n+1}(\pivot_{K-1}\ldots \pivot_{K-n} s)$ (the maximum is unique by hypothesis). 
   Let $a\in B_K$ and define $n=\max \left[ \{i\in [K]:\ee a|^i= \pivot|^i\}\cup \{0\}\right]$.
    Applying \Cref{lemma:main_irreducibility_q}-\ref{lemma:item2_technical_core_irrq}, we have $ \hat{\rmq}_{h,K}(a,\pivot)\hat{\rmq}_{h,K}(\pivot,a)>0$ if $\Pi(\pivot,n)>0$
    which is the case
   since by construction  $ \check{\pi}_{i+1}(\pivot|^i\pivot_{K-i-1}^c)<\check{\pi}_{i+1}(\pivot|^i\pivot_{K-i-1})$ for any $i\in [K-1]$. 
\end{proof}

  \begin{proof}[Proof of \Cref{lemma:irreducibility_q}]
 \indent

   Let $a,b\in B_K$, we first show \ref{lemma:item2_irreducibility_q} and assume that for any $n\in[K]$ and $c,e\in B_K$ such that $c|^n\neq e|^n$, we have $\check{\pi}_n(c)\neq \check{\pi}_n(e) $. Then,
 \Cref{lemma:main_irreducibility_q}-\ref{lemma:item1_first_pivot} applies,  and
  it exists $\pivot\in B_K$ such that $\hat{\rmq}_{h,K}(\pivot, \pivot)\hat{\rmq}_{h,K}(\pivot,a)\hat{\rmq}_{h,K}(\pivot,a)>0 $ .
   
  Then, for any $j\in \Nset$,
  $$\ee \hat{\rmq}_{h,K}^{j+2}(a,b)\geq\hat{\rmq}_{h,K}(a,\pivot) \hat{\rmq}_{h,K}^{j}(\pivot,\pivot)\hat{\rmq}_{h,K}(\pivot,b)>0 \eqsp .$$
    
  We now show \ref{lemma:item1_irreducibility_q}. Note that we can only consider the case where there exist $n\in [K]$ and $ c, e\in  B_K^2$ such that $c|^n\neq e|^n$ and $\check{\pi}_n(c)= \check{\pi}_n(e) $.

    We choose $\pivot_{K-1}\in \argmax_{s\in\{0,1\}} \check{\pi}_1(s)$ and by recursion for any $k\in [K-1]$ , 
    \begin{equation}
      \label{eq:pivot_choose}
        \pivot_{K-1-k} \in \argmax_{s\in\{0,1\}} \check{\pi}_{k+1}(\pivot_{K-1}\ldots \pivot_{K-k}s)\subset \{0,1\} \eqsp .
    \end{equation}
   
    Then, we define 
    \begin{equation}
      \msm_{\check{\pi}}=\{ i\in [K-1] : \check{\pi}_{i+1}(\pivot_{K-1}\ldots \pivot_{K-i} 0)=\check{\pi}_{i+1}(\pivot_{K-1}\ldots \pivot_{K-i} 1) \}\cup\{0: \check{\pi}_1(1)=\check{\pi}_1(0) \}\eqsp.
    \end{equation}
    If $\msm_{\check{\pi}} $ is empty, in \eqref{eq:pivot_choose} the argmax set is always a singleton and thus $\pivot=\pivot_{K-1}\ldots \pivot_{0}$ is uniquely determined, we come back to the previous case.
    Otherwise, $\msm_{\check{\pi}} $ is not empty and we define $l=\min \msm_{\check{\pi}}  $ such that $\pivot_{K-1-n} $ is uniquely determined for any $i\in [0:l-1] $.
    Then, we define two pivots, $\pivot^0:=\pivot_{K-1}\ldots \pivot_{K-l} \underbrace{0\ldots 0}_{K-l} \in B_K$\footnote{If $l=0$, we define $\pivot^0=0\ldots 0\in B_K$ and $\pivot^1=1\ldots 1\in B_K$ .} and 
   $\pivot^1:=\pivot_{K-1}\ldots \pivot_{K-l} \underbrace{1\ldots 1}_{K-l} \in B_K$. In particular with this construction, we have for any $k\in [0:l]$ and any $m\in\{0,1\}$,
   \begin{equation}
    \label{eq:pivot_positivity}
      \Pi(\pivot^m,k)>0\eqsp .
   \end{equation}


    We consider two cases, namely $\pivot^0|^{l}=b|^{l} $ or $\pivot^0|^{l}\neq b|^{l} $ and then sub-cases, namely $\pivot^0|^{l}=a|^{l} $ or $\pivot^0|^{l}\neq a|^{l} $.
    To address it, we establish the following lemma.
    \begin{lemma}
      \label{lemma:2technical_irrq}
      For any $\xi \in B_K$ and $m_0\in\{0,1\}$ such that  $\pivot^{m_0}|^{l}\neq \xi|^{l} $, then \begin{equation}
        \hat{\rmq}_{h,K}(\xi,\pivot^{m_0})\hat{\rmq}_{h,K}(\pivot^{m_0},\xi)>0 \eqsp.
      \end{equation}
      For any $\xi \in B_K$ such that  $\pivot^0|^{l}= \xi|^{l} $, denoting by $m=\xi_{K-l-1}^c \in \{0,1\}$, then 
      \begin{equation}
        \hat{\rmq}_{h,K}(\xi,\pivot^m)\hat{\rmq}_{h,K}(\pivot^m,\xi)>0 \eqsp .
      \end{equation}
    \end{lemma}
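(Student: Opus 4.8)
The plan is to reduce \Cref{lemma:2technical_irrq} entirely to the positivity criterion established in \Cref{lemma:main_irreducibility_q}\ref{lemma:item2_technical_core_irrq} together with \eqref{eq:pivot_positivity}. Recall that criterion: for $a,b\in B_K$, writing $n=\max[\{i\in[K]:a|^i=b|^i\}\cup\{0\}]$, one has $\hat{\rmq}_{h,K}(a,b)\hat{\rmq}_{h,K}(b,a)>0$ if and only if $\Pi(a,n)>0$, and moreover $\Pi(a,n)=\Pi(b,n)$ because $\Pi(\cdot,n)$ depends only on the top $n$ bits of its argument. So in each of the two cases it will suffice to compute the agreement level $n$ between $\xi$ and the appropriate pivot, verify that $n\in[0:l]$, and apply \eqref{eq:pivot_positivity}.

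For the first part, assume $\pivot^{m_0}|^l\neq\xi|^l$. Then $\xi$ and $\pivot^{m_0}$ differ at some coordinate among their top $l$ bits, so the agreement level $n=\max[\{i\in[K]:\xi|^i=\pivot^{m_0}|^i\}\cup\{0\}]$ is at most $l-1$; in particular $n\in[0:l]$. By \eqref{eq:pivot_positivity}, $\Pi(\pivot^{m_0},n)>0$ (trivially when $n=0$ since $\Pi(\cdot,0)=1$), and the criterion yields $\hat{\rmq}_{h,K}(\xi,\pivot^{m_0})\hat{\rmq}_{h,K}(\pivot^{m_0},\xi)>0$.

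For the second part, assume $\pivot^0|^l=\xi|^l$. The two pivots agree on their first $l$ coordinates — both equal the uniquely determined prefix $\pivot_{K-1}\ldots\pivot_{K-l}$ — so $\pivot^m|^l=\pivot^0|^l=\xi|^l$ also for $m=\xi_{K-l-1}^c\in\{0,1\}$. With the convention $a|^n=a_{K-1}\ldots a_{K-n}$, the $(l+1)$-th coordinate from the top of $\pivot^m$ is $m=\xi_{K-l-1}^c\neq\xi_{K-l-1}$, hence $\xi|^{l+1}\neq\pivot^m|^{l+1}$; combining with \eqref{rm:growthmaximality} (disagreement persists at all higher levels), the agreement level of $\xi$ and $\pivot^m$ is exactly $n=l\in[0:l]$. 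Then \eqref{eq:pivot_positivity} gives $\Pi(\pivot^m,l)>0$ and the criterion concludes.

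The only care needed is bookkeeping: confirming from the construction of $\pivot^0,\pivot^1$ that they really coincide on the first $l$ coordinates; handling the degenerate case $l=0$, where $\pivot^0=0\ldots0$, $\pivot^1=1\ldots1$ and the empty truncation $\xi|^0$ makes the hypothesis of the second part vacuous while forcing $n=0$; and keeping the indexing convention straight so that "the next bit" is the coordinate indexed $K-l-1$. I do not expect a genuine obstacle here: the substantive work — the explicit form of $\hat{\rmq}_{h,K}$ from \Cref{lemma:definition_continuity_q} and the positivity of the pivot products — is already packaged in \Cref{lemma:main_irreducibility_q}\ref{lemma:item2_technical_core_irrq} and in \eqref{eq:pivot_positivity}.
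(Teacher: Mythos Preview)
Your proposal is correct and follows essentially the same route as the paper: in both parts you compute the agreement level $n$ between $\xi$ and the relevant pivot, check $n\in[0:l]$, invoke \eqref{eq:pivot_positivity} to get $\Pi(\pivot^{m},n)>0$, and conclude via \Cref{lemma:main_irreducibility_q}\ref{lemma:item2_technical_core_irrq}. Your treatment of the second part is in fact slightly more explicit than the paper's, spelling out why the $(l{+}1)$-th top bit forces $n=l$ rather than leaving it to a case distinction.
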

    \begin{proof}
      Let $\xi \in B_K$ and $m_0\in\{0,1\}$ such that  $\pivot^{m_0}|^{l}\neq \xi|^{l} $. Define 
      $$n=\max \left[\{i\in [K]:\ee \xi|^i= \pivot^{m_0}|^i\}\cup \{0\}\right]\eqsp .$$
       Then, $\Pi(\xi,n)=\Pi(\pivot^{m_0},n)$ since $\xi|^n= \pivot^{m_0}|^n$ and $n<l$ by  \Cref{rm:growthmaximality}.
        Moreover, we deduce that $\Pi(\pivot^{m_0},n)>0$ by \eqref{eq:pivot_positivity} since $n<l$,
       and therefore using \Cref{lemma:main_irreducibility_q}-\ref{lemma:item2_technical_core_irrq}, we have $\hat{\rmq}_{h,K}(\pivot^{m_0},\xi)>0 $ and $\hat{\rmq}_{h,K}(\xi,\pivot^{m_0})>0 $.

       Let $\xi \in B_K$ such that  $\pivot^0|^{l}= \xi|^{l} $. Define $n_m=\max \left[\{i\in [K]:\ee \xi|^i= \pivot^m|^i\}\cup \{0\}\right]$. Distinguishing the case $\xi_{K-l-1} = 0$ or $\xi_{K-l-1} =1$, we obtain that $n_m=l$.
       Then, $\Pi(\xi,l)=\Pi(\pivot^0,l)$ since $\xi|^l= \pivot^0|^l$ and $\Pi(\pivot^0,l)=\Pi(\pivot^m,l)>0$ by \eqref{eq:pivot_positivity}.
       Using \Cref{lemma:main_irreducibility_q}-\ref{lemma:item2_technical_core_irrq}, we have $\hat{\rmq}_{h,K}(\xi,\pivot^m)>0 $ and $\hat{\rmq}_{h,K}(\pivot^{m_0},\xi)>0 $.
    \end{proof}

    We now distinguish four cases.
    \begin{itemize}
      \item  If $\pivot^0|^{l}=b|^{l} $. Then, denoting by $m=b_{K-l-1}^c\in \{0,1\}$, the second statement of \Cref{lemma:2technical_irrq} implies that $\hat{\rmq}_{h,K}(\pivot^m,b)>0$.
   
   \item If $\pivot^0|^{l}=b|^{l} $ and $\pivot^0|^{l}\neq a|^{l} $.
     Then,  denoting by  $m=b_{K-l-1}^c\in \{0,1\}$, the first statement of \Cref{lemma:2technical_irrq} implies that $\hat{\rmq}_{h,K}(a,\pivot^m)>0$ and therefore 
     $\hat{\rmq}_{h,K}^2(a,b)\geq \hat{\rmq}_{h,K}(a,\pivot^m)\hat{\rmq}_{h,K}(\pivot^m,b)>0 $.
   
    \item If $\pivot^0|^{l}=b|^{l} $ and $\pivot^0|^{l}= a|^{l} $. 
       Then,  denoting by  $m=b_{K-l-1}^c\in \{0,1\}$, there exists $j_0\in \{1,2\}$ such that $\hat{\rmq}_{h,K}^{j_0}(a,b)>0$.
       Indeed,
    if $a_{K-l-1}=b_{K-l-1}^c$, denoting by 
    \begin{equation}n_{a,b}=\max \left[\{i\in [K]:\ee a|^i= b|^i\}\cup \{0\}\right]\eqsp ,\end{equation}
    we have $n_{a,b}=l $ since $a|^{l} =\pivot^0|^{l} =b|^{l}$. Then, we have
    $\Pi(a,l)=\Pi(\pivot^m,l)>0$ since $a|^{l}= \pivot^m|^{l}$ and \eqref{eq:pivot_positivity}.
     Therefore $\hat{\rmq}_{h,K}(a,b)>0$ by applying \Cref{lemma:main_irreducibility_q}-\ref{lemma:item2_technical_core_irrq}, which gives the result with $j_0=1$. In the case $a_{K-l-1}= b_{K-l-1}$, we apply the second statement of \Cref{lemma:2technical_irrq} which gives $\hat{\rmq}_{h,K}(a,\pivot^m)>0$.
     Then, $\hat{\rmq}_{h,K}^2(a,b)=\hat{\rmq}_{h,K}(\pivot^m,b)\hat{\rmq}_{h,K}(a,\pivot^m)>0$ , which gives the result with $j_0=2$.
     
     \item 
      If $b|^l \neq \pivot^0|^l$ and $a|^l= \pivot^0|^l$, then we have,
        $\hat{\rmq}_{h,K}^2(a,b)>0$ .
        Indeed,
          the second statement of \Cref{lemma:2technical_irrq} gives $\hat{\rmq}_{h,K}(a,\pivot^m)>0$ with $m=a_{K-l-1}^c$. 
          Then, the first statement of \Cref{lemma:2technical_irrq} gives $\hat{\rmq}_{h,K}(b,\pivot^m)>0$ since $\pivot^0|^l=\pivot^m|^l$.
          Therefore,
          $$ \hat{\rmq}_{h,K}^2(a,b)=\hat{\rmq}_{h,K}(a,\pivot^m)\hat{\rmq}_{h,K}(\pivot^m,b)>0 \eqsp .$$

  \item If $b|^l \neq \pivot^0|^l$ and $a|^l\neq \pivot^0|^l$, then we have $\hat{\rmq}_{h,K}^2(a,b)=\hat{\rmq}_{h,K}(b,\pivot^0)\hat{\rmq}_{h,K}(\pivot^0,b)>0 $.       
    Indeed, we apply two times the first statement of \Cref{lemma:2technical_irrq} with $m_0=0$.
    \end{itemize}
This completes the proof that it exists $j_0\in\{1,2\}$ such that $\hat{\rmq}_{h,K}^{j_0}(a,b)>0$.
  \end{proof}

\section{Proofs of \Cref{ergodicity_section}}
\label{sec_supp:proofs_ergo_NUTS}

The presented proof can be extended to different choices of stopping time and index selection kernel $\rmqq_h$, 
but it is out of the paper's scope and relegated to future work.
First we show a prelimenary Lemma used to prove that the stopping time is locally constant in \Cref{lemma:nuts-locally-constant-trajectories}.
We introduce the sets
\begin{equation}
\label{eq:top_scru}
\scrU^{(K)}(a)=\{(q_0,p_0) \in (\mathbb{R}^d)^2\,:\, a\in \scrU^{(K)}(q_0,p_0) \} 
\end{equation} 
for any $K\in [\Kmax]$ and $a\in B_K$ where $\scrU^{(K)}(q_0,p_0)$ for $q_0,p_0\in (\Rset^d)^2$ is defined in \eqref{eq:scrU}. 
\begin{lemma}
\label{lemma:scru_open}
Under \Cref{hyp:regularity},
for any $K,K'\in [\Kmax]$, $a\in B_K$ and $q_0,p_0 \in \Rset^d$, 
$\scrU^{(K)}(a)$ (defined in \eqref{eq:top_scru}) is open and
$\scrU^{(K')}(a|_{K'})\subset \scrU^{(K)}(a)$ when $K>K'$.
Moreover, under \Cref{hyp:3}$(h,\Kmax)$-\ref{hyp:item_i_topo}, for $h >0$ and $\Kmax \in \nset_{>0}$, for any $q\in \Rset^d$, the following set is dense, 
\begin{equation}
   \label{eq:scrU-0}
   \msu_{q,-0}=\{ p\in \Rset^d: (q,p)\notin \partial \scrU^{(K)}(a),\, K\in [\Kmax],\, a\in B_K \} \eqsp .
\end{equation}

\end{lemma}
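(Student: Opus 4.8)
The plan is to treat the three assertions in turn, since each rests on a different simple observation. For the openness of $\scrU^{(K)}(a)$, recall from \eqref{eq:scrU} that $\scrU^{(K)}(q_0,p_0)$ is a finite union of sets of the form $\scrU_{k,l,\pm}^{(K)}$, each of which is defined by a strict inequality of the type $p_i^\top(q_{i'}-q_i)<0$ where $(q_i,p_i)=\Phiverlet[h][i](q_0,p_0)$ for indices $i,i'$ ranging over $[-2^{\Kmax}+1:2^{\Kmax}-1]$. Under \Cref{hyp:regularity} the leapfrog maps $\Phiverlet[h][i]$ are continuous (indeed smooth) on $(\Rset^d)^2$, so for fixed $a\in B_K$ the function $(q_0,p_0)\mapsto p_{i_+}^\top(q_{i_+}-q_{i_-})$ (with $i_\pm = i_\pm(K,k,l,a)$) is continuous. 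Hence $\scrU^{(K)}(a)=\{(q_0,p_0): a\in\scrU^{(K)}(q_0,p_0)\}$ is a finite union of preimages of the open half-line $(-\infty,0)$ under continuous maps, so it is open. This is essentially the same argument already used in the proof of \Cref{lemma:nuts-trajectory-probability}.

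For the inclusion $\scrU^{(K')}(a|_{K'})\subset\scrU^{(K)}(a)$ when $K>K'$, I would unwind the definition \eqref{eq:nuts-stopping}--\eqref{eq:scrU} of the stopping rule. The key structural fact is that the pairs of indices checked at stage $K'$ for the interval $B_{K'}(a|_{K'})$ form a subset of the pairs checked at stage $K$ for $B_K(a)$: the offset $-T_-^{(K')}(a|_{K'})$ differs from $-T_-^{(K)}(a|_K)$ only by the contributions of the higher bits $a_{K'},\dots,a_{K-1}$, but the pairs $(i_-(K',k,l,a|_{K'}),i_+(K',k,l,a|_{K'}))$ for $k\in[K'-1]$ translate to pairs of the same form appearing among the $(i_-(K,k,l,a),i_+(K,k,l,a))$ with $k\in[K-1]$ after this shift (this is exactly the nesting of binary subtrees recorded in the indexing of $(I_k)$). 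Consequently, if $(q_0,p_0)$ makes some inequality hold at stage $K'$, the same inequality — referring to the same pair of phase-space points after reindexing, by \eqref{eq:6} — holds at stage $K$, giving $(q_0,p_0)\in\scrU^{(K)}(a)$. I expect this bookkeeping with the index sets to be the fiddliest part, though it is conceptually routine; care is needed that the translation $B_{K'}(a|_{K'})\hookrightarrow B_K(a)$ maps the checked pairs into checked pairs rather than into the "gaps" the stopping rule deliberately skips.

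For the density of $\msu_{q,-0}$ in \eqref{eq:scrU-0}, the plan is to bound $\partial\scrU^{(K)}(a)$ in terms of the zero sets of the maps $F^{T_1,T_2}_q$. Fix $q\in\Rset^d$. Each $\scrU_{k,l,\pm}^{(K)}$, viewed as a subset of $\{q\}\times\Rset^d$ (i.e. as a subset of the momenta $p$), is $\{p: F^{T_1,T_2}_q(p)<0\}$ for an appropriate pair $T_1\neq T_2$ in $[-2^{\Kmax}+1:2^{\Kmax}-1]$ (namely $T_1=i_+$, $T_2=i_-$ for the $+$ case, and $T_1=i_-$, $T_2=i_+$ for the $-$ case), where $F^{T_1,T_2}_q$ is the continuous map from \eqref{eq:F_T1T2}. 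Since the boundary of a finite union is contained in the union of the boundaries, and $\partial\{F<0\}\subset\{F=0\}$ for continuous $F$, we get $\partial\scrU^{(K)}(a)\cap(\{q\}\times\Rset^d)\subset\bigcup\{p:F^{T_1,T_2}_q(p)=0\}$, a finite union over the relevant pairs $(T_1,T_2)$. Therefore the complement $\msu_{q,-0}$ contains $\msf_{q,-0}=\{p:F^{T_1,T_2}_q(p)\neq 0\ \forall T_1\neq T_2\}$, which is dense by \Cref{hyp:3}$(h,\Kmax)$-\ref{hyp:item_i_topo}; hence $\msu_{q,-0}$ is dense as well. The main obstacle here is simply to make sure the identification between the inequalities defining $\scrU_{k,l,\pm}^{(K)}$ and the functions $F^{T_1,T_2}_q$ is exact, including that the indices $i_\pm(K,k,l,v)$ indeed land in the admissible range $[-2^{\Kmax}+1:2^{\Kmax}-1]$ so that \Cref{hyp:3}-\ref{hyp:item_i_topo} applies.
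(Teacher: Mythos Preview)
Your proposal is correct and follows essentially the same route as the paper. The paper writes out the sets $\scrU_{k,l,\pm}^{(K)}(a)$ explicitly, observes openness from continuity of the leapfrog maps, dismisses the inclusion $\scrU^{(K')}(a|_{K'})\subset\scrU^{(K)}(a)$ in one line (``follows directly from construction''), and then argues density exactly as you do: from $\partial(\bigcup_j A_j)\subset\bigcup_j\partial A_j$ and $\partial\{F<0\}\subset\{F=0\}$ it deduces that $(q,p)\in\partial\scrU^{(K)}(a)$ forces $F_q^{T_1,T_2}(p)=0$ for some admissible pair, whence $\msf_{q,-0}\subset\msu_{q,-0}$ and density follows from \Cref{hyp:3}$(h,\Kmax)$-\ref{hyp:item_i_topo}. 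Your extra care about the index bookkeeping for the inclusion (checking that the pairs at stage $K'$ reappear among those at stage $K$ with the same $k$ and a shifted $l$) is more than the paper spells out, but it is the right verification and your concern about the ``gaps'' is unfounded: the shift $l\mapsto l+2^{-k}\sum_{j=K'}^{K-1}(1-a_j)2^j$ is an integer in the admissible range.
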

\begin{proof}[Proof of \Cref{lemma:scru_open}]
We express $\scrU^{(K)}(a)$ for any $K\in [\Kmax]$ and $a\in B_K$ as an union of open sets.
For any $K\in [\Kmax]$, $k\in [K-1]$, $l\in [2^{K-k}]$ and for any $a\in B_K$, we define the sets
\begin{align}
    \scrU_{k,l,+}^{(K)}(a) &=
    \big\{ (q_0, p_0) \in (\Rset^d)^2: p_{\ee i_{+}(K,k,l,a)}^\top (q_{i_{+}(K,k,l,a)} - q_{i_{-}(K,k,l,a)}) < 0 \big\} \eqsp,
    \\
    \scrU_{k,l,-}^{(K)}(a) &=
    \big\{ (q_0, p_0) \in (\Rset^d)^2:p_{i_{-}(K,k,l,a)}^\top (q_{\ee i_{+}(K,k,l,a)} - q_{i_{-}(K,k,l,a)}) < 0 \big\}\eqsp,
    \end{align}
    where
\begin{equation}
i_{-}(K,k,l,a)=-T_-^{(K)}(a)+(l-1) 2^k,\qquad i_{+}(K,k,l,a)=-T_-^{(K)}(a) + l 2^k-1\eqsp ,
\end{equation}
and we have used the convention $  (q_k, p_k) = \Phiverlet[h][k](q_0, p_0)$ for any $q_0,p_0\in (\Rset^d)^2$ and $T_-^{(K)}$ is defined in \eqref{eq:T_-}.
Further we have by definitions \eqref{eq:top_scru} and \eqref{eq:scrU},
\begin{equation}
\label{eq:scrUforfrontier}
\scrU_k^{(K)}(a) = \bigcup_{l=1}^{2^{K-k}} \big( \scrU_{k,l,+}^{(K)}(a) \cup \scrU_{k,l,-}^{(K)}(a) \big)
\quad \text{where} \quad
\scrU^{(K)}(a) = \bigcup_{k=1}^{K-1} \scrU_k^{(K)}(a)\eqsp .
\end{equation}
Thus $\scrU^{(K)}(a)$ with $a\in B_K$ is the set of initial positions and momentums for which a 
U-turn occurs on some pair of indices at stage $K$ of the trajectory 
construction, when the Bernoulli variables (which determine the randomness 
of the construction) take the values $a = (a_k)_{k=0}^{K-1}$.
With these previous expressions, for all $K\in [\Kmax]$, the fact that $\scrU^{(K)}(a)$ is open follows 
from the continuity of the maps $(q_0, p_0)\in (\Rset^d)^2 \mapsto \Phiverlet[h][k](q_0, p_0)$ for any $k \in [\Kmax]$
and the definition of the $\scrU_{k,l,\pm}^{(K)}(a)$ as preimages of open sets\footnote{These properties are unchanged if additional U-turn checks are added to $\scrU^{(K)}(a)$.}.
The fact that $\scrU^{(K')}(a|_{K'}) \subset \scrU^{(K)}(a)$ for all $K > K'$ follows directly from construction.

Let $K\in [\Kmax]$ and $a\in B_K$, from \eqref{eq:scrUforfrontier} and $\partial \scrU_k^{(K)}(a) \subset  \bigcup_{l=1}^{2^{K-k}} \big(\partial \scrU_{k,l,+}^{(K)}(a) \cup \partial \scrU_{k,l,-}^{(K)}(a) \big) $,
we deduce that $(q,p) \in \partial \scrU^{(K)}(a)$ implies there exist $T_1,T_2 \in [-2^{\Kmax}+1:2^{\Kmax}-1]^2$ with $T_1\neq T_2$ such that $F^{T_1,T_2}_q(p)= 0$
and thus $\msf_{q,-0}\subset \msu_{q,-0}$. Therefore, for any $q\in \Rset^d$, $\msu_{q,-0}$ is dense under \Cref{hyp:3}$(h,\Kmax)$-\ref{hyp:item_i_topo}.

\end{proof}

In the following, we will frequently consider a fixed binary sequence $a = (a_k)_{k=0}^{\Kmax-1}\in B_{\Kmax}$, to this end, for any $(q, p) \in (\mathbb{R}^d)^2$ we set
\begin{equation}
\Sfun_a(q, p) = \Sfun(a,q,p),\ee  K_{f,a}(q,p)=(\Sfun_a(q, p)-1) \wedge \Kmax \eqsp ,
\end{equation}
where $\Sfun(a,q,p)$ is defined in \eqref{eq:nuts-stopping},
so that the index set constructed by the NUTS algorithm 
starting at $(q, p) \in (\Rset^d)^2$ is $B_{K_{f,a}(q,p)}(a|_{K_{f,a}(q,p)})$ when the Bernoulli variables $(V_k)_{k=0}^{\Kmax-1}$ take the values $a=(a_k)_{k=0}^{\Kmax-1}$.

\begin{lemma}
\label{lemma:nuts-locally-constant-trajectories}

Assume \Cref{hyp:regularity}.
For any $a \in B_{\Kmax}$ there exists a dense open set $\mathrm{G}_a \subset (\mathbb{R}^d)^2$ such that $(q, p) \mapsto K_{f, a}(q, p)$ is locally constant on $\mathrm{G}_a$, i.e., for any $(q_0, p_0) \in \mathrm{G}_a$ there exists an open neighborhood $\msw \subset \mathrm{G}_a$ containing $(q_0, p_0)$ such that $K_{f, a}$ is constant on $\msw$.
Moreover, under \Cref{hyp:3}$(h,\Kmax)$-\ref{hyp:item_i_topo}, for $h >0$ and $\Kmax \in \nset_{>0}$, for any $(q, p) \in (\Rset^d)^2$, $\epsilon > 0$, there exists $p_\epsilon \in \mathrm{B}(p, \epsilon)$ such that $(q, p_\epsilon) \in \mathrm{G}_a$.
\end{lemma}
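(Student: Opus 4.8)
The plan is as follows. Fix $a = (a_k)_{k=0}^{\Kmax-1} \in B_{\Kmax}$. Recall from \eqref{eq:nuts-stopping} that $\Sfun_a(q,p) = \inf\{k \in [\Kmax] : a|_k \in \scrU^{(k)}(q,p)\}$, and that $K_{f,a}(q,p) = (\Sfun_a(q,p)-1)\wedge \Kmax$. By \Cref{lemma:scru_open}, each set $\scrU^{(k)}(a|_k)$ (the set of $(q,p)$ for which a U-turn occurs at stage $k$ given Bernoulli values $a|_k$) is open in $(\Rset^d)^2$, and $\scrU^{(k')}(a|_{k'}) \subset \scrU^{(k)}(a|_k)$ whenever $k > k'$. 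Hence the function $(q,p) \mapsto \Sfun_a(q,p)$ is determined by the nested sequence of open sets $\scrU^{(1)}(a|_1) \subset \scrU^{(2)}(a|_2) \subset \cdots \subset \scrU^{(\Kmax)}(a|_{\Kmax})$: $\Sfun_a(q,p) = k$ exactly when $(q,p) \in \scrU^{(k)}(a|_k) \setminus \scrU^{(k-1)}(a|_{k-1})$ (with the convention $\scrU^{(0)} = \emptyset$), and $\Sfun_a(q,p) = \infty$ when $(q,p) \notin \scrU^{(\Kmax)}(a|_{\Kmax})$.

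First I would define the candidate good set. Let
\begin{equation}
    \mathrm{G}_a = (\Rset^d)^2 \setminus \bigcup_{k=1}^{\Kmax} \partial \scrU^{(k)}(a|_k) \eqsp.
\end{equation}
Since each $\scrU^{(k)}(a|_k)$ is open, its boundary $\partial \scrU^{(k)}(a|_k)$ is closed with empty interior, so $\mathrm{G}_a$ is a finite intersection of dense open sets, hence dense and open. It remains to check that $K_{f,a}$ is locally constant on $\mathrm{G}_a$. Fix $(q_0,p_0) \in \mathrm{G}_a$. For each $k \in [\Kmax]$, since $(q_0,p_0) \notin \partial \scrU^{(k)}(a|_k)$, the point $(q_0,p_0)$ is either in the open set $\scrU^{(k)}(a|_k)$ or in the open set $(\Rset^d)^2 \setminus \overline{\scrU^{(k)}(a|_k)}$; in either case there is an open neighborhood $\msw_k$ of $(q_0,p_0)$ on which the indicator $\1_{\scrU^{(k)}(a|_k)}$ is constant. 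Taking $\msw = \mathrm{G}_a \cap \bigcap_{k=1}^{\Kmax} \msw_k$, which is an open neighborhood of $(q_0,p_0)$ contained in $\mathrm{G}_a$, the whole tuple $(\1_{\scrU^{(k)}(a|_k)})_{k=1}^{\Kmax}$ is constant on $\msw$, hence so is $\Sfun_a$ and thus so is $K_{f,a}$. This proves the first assertion.

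For the second assertion, assume \Cref{hyp:3}$(h,\Kmax)$-\ref{hyp:item_i_topo}. By \Cref{lemma:scru_open}, the set $\msu_{q,-0} = \{p \in \Rset^d : (q,p) \notin \partial \scrU^{(k)}(a) \text{ for all } k\in[\Kmax],\, a\in B_k\}$ is dense in $\Rset^d$ for every $q$. In particular, for a fixed $a \in B_{\Kmax}$ and fixed $q$, the slice $\{p : (q,p) \notin \bigcup_{k=1}^{\Kmax}\partial\scrU^{(k)}(a|_k)\}$ contains $\msu_{q,-0}$ and is therefore dense in $\Rset^d$. Hence, given $(q,p) \in (\Rset^d)^2$ and $\epsilon > 0$, there exists $p_\epsilon \in \mathrm{B}(p,\epsilon)$ with $(q,p_\epsilon) \notin \bigcup_{k=1}^{\Kmax}\partial\scrU^{(k)}(a|_k)$, i.e.\ $(q,p_\epsilon) \in \mathrm{G}_a$, as required.

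The only mildly delicate point — and the one I would double-check carefully — is the bookkeeping in identifying $\Sfun_a$ purely in terms of membership in the sets $\scrU^{(k)}(a|_k)$ and confirming the nesting $\scrU^{(k-1)}(a|_{k-1}) \subset \scrU^{(k)}(a|_k)$ (so that $\Sfun_a$ is genuinely the first index of entry into the increasing union), together with the reduction from "boundary of the union $\scrU^{(K)}(a)$" to the boundaries of the individual building blocks used implicitly via \Cref{lemma:scru_open}; but these are already handled in \Cref{lemma:scru_open} and \eqref{eq:scrUforfrontier}, so no new obstacle arises. Everything else is the routine observation that a finite tuple of indicator functions of sets whose boundaries avoid a given point is locally constant near that point, combined with the Baire-type fact that a finite union of nowhere-dense closed sets has dense open complement.
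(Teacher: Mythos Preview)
Your proof is correct and takes a genuinely different, more direct route than the paper. The paper defines $\mathrm{G}_a$ abstractly as the maximal open set on which $K_{f,a}$ is locally constant, then proves density by a minimization argument (take the minimum of $K_{f,a}$ over a given neighborhood and pass to the open sublevel set $\scrU^{(K+1)}(a)$), and finally handles the momentum-perturbation claim by a contradiction showing that any point outside $\mathrm{G}_a$ must lie on $\partial\scrU^{(S')}(a)$ for a suitably chosen $S'$. You instead \emph{write down} $\mathrm{G}_a$ explicitly as the complement of the finite union of boundaries $\partial\scrU^{(k)}(a|_k)$, which makes openness and density immediate (boundaries of open sets are closed and nowhere dense), makes local constancy of $K_{f,a}$ a one-line observation about indicators, and reduces the second assertion to the straightforward inclusion $\msu_{q,-0}\subset\{p:(q,p)\in\mathrm{G}_a\}$.

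Your $\mathrm{G}_a$ may be strictly smaller than the paper's (a point on some $\partial\scrU^{(k)}(a|_k)$ can still have $K_{f,a}$ locally constant if an earlier U-turn fires robustly), but the lemma only asks for \emph{some} dense open set with the stated properties, so this costs nothing. What your approach buys is transparency and brevity; what the paper's abstract definition buys is that its $\mathrm{G}_a$ is canonical (the largest such set), though this maximality is never used downstream.
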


\begin{proof}
Fix $a \in B_{\Kmax}$.
Let $\mathrm{G}_a$ denote the set of points $(q, p) \in (\mathbb{R}^d)^2$ that have an open neighborhood $\msw \subset (\mathbb{R}^d)^2$ such that $K_{f, a}$ is constant on $\msw$.
By definition, $\mathrm{G}_a$ is an open set.

In order to show that $\mathrm{G}_a$ is dense, let $(q, p) \in (\mathbb{R}^d)^2$ be an arbitrary point and $\msv \subset (\mathbb{R}^d)^2$ be an open neighborhood of $(q, p)$.
Let $K = \min_{(q', p') \in \msv} K_{f, a}(q', p')$.
If $K \geq \Kmax$, then $(S_a(q', p')-1) \wedge \Kmax = \Kmax$ for any $(q', p') \in \msv$.
If $K < \Kmax$, then by definition of $K$, the open set $\scrU^{(K+1)}(a) \cap \msv$ is non empty and $S_a(q', p') - 1 = K$ for any $(q', p') \in \scrU^{(K+1)}(a) \cap \msv$.
We have shown that any neighborhood of an arbitrary point $(q, p) \in (\mathbb{R}^d)^2$ contains a point $(q_0, p_0) \in \mathrm{G}_a$.



Now, we prove the final claim. Let $(q,p)\in (\Rset^d)^2$.
If $(q, p) \in \mathrm{G}_a$, the claim is clear since $\mathrm{G}_a$ is open.
Otherwise, $(q, p) \notin \mathrm{G}_a$, but we have $(q, p) \in \partial \mathrm{G}_a$ since $\mathrm{G}_a$ is dense,
and denote by
$$
   S' = \lim_{\epsilon \to 0^+} \min_{(q', p') \in \mathrm{B}((q, p), \epsilon)} S_a(q', p'),
$$
i.e., $S'$ is the smallest value of $S_a(q', p')$ that is obtained in every open neighborhood of $(q, p)$.

We prove by contradiction that $\dist((q, p), \partial \scrU^{(S')}(a))= 0$.
 Assume $\dist((q, p), \partial \scrU^{(S')}(a))=\dist_0>0$
By definition of $S'$,
 there exists $(q',p')\in \mathrm{B}((q,p),\dist_0/2) $ such that $S_a(q', p') = S'$.
  This proves $\dist((q, p), \scrU^{(S')}(a))< \dist((q, p), \partial \scrU^{(S')}(a))$ and thus $(q, p)\in \scrU^{(S')}(a)$ since $\scrU^{(S')}(a)$ is open.
   Moreover, $S'$ being the limit of integers, it exists $\epsilon>0$ such that $S'=\min_{(q', p') \in \mathrm{B}((q, p), \epsilon)} S_a(q', p')$.
   Since $(q, p)\in \scrU^{(S')}(a)$ which is open, it exists $\epsilon'<\epsilon$ such that $\mathrm{B}((q, p), \epsilon')\subset \scrU^{(S')}(a)$.
 By the inclusions $ \scrU^{(K)}(a) \subset \scrU^{(K')}(a)$ for $K < K'$, we deduce that $q',p'\in \mathrm{B}((q, p), \epsilon')\mapsto  S_a(q', p')$ takes the constant value $S'$.
 This implies $(q,p)\in \mathrm{G}_a $ which contradicts the assumption $(q, p) \in \partial \mathrm{G}_a$.

We thus have $\dist((q, p), \partial \scrU^{(S')}(a)) = 0$, which implies $(q, p) \in \partial \scrU^{(S')}(a)$ since the boundary of any set is closed.
It implies that
\begin{equation}
   \{p'\in\Rset^d: (q,p')\notin \mathrm{G}_a \}=\{p'\in\Rset^d: (q,p')\in \partial \mathrm{G}_a \}\subset \Rset^d \setminus \msu_{q,-0} 
\end{equation}
where $\msu_{q,-0}$ is defined in \eqref{eq:scrU-0}.
Using \Cref{lemma:scru_open} with \Cref{hyp:3}$(h,\Kmax)$-\ref{hyp:item_i_topo}, we have that $\msu_{q,-0}$ is dense and thus $\{p'\in\Rset^d: (q,p')\notin \mathrm{G}_a \} $
has its complementary $\{p'\in\Rset^d: (q,p')\in \mathrm{G}_a \} $ which is dense. This completes the proof.


\end{proof}

\begin{remark}
In slightly less precise terms, this lemma says that the integer-valued function $(q,p) \in (\Rset^d)^2\mapsto S_a(q,p)$ is continuous in $\mathrm{G}_a$ and that nearly every point is in $\mathrm{G}_a$ if we allow a little perturbation of the initial momentum $p$.
\end{remark}

\subsection{Proof of \Cref{thm:nuts-accessibility}}

Let $\mathsf{E} \subset \Rset^d$ be open, $q_0 \in \Rset^d$.
It suffices to prove that for any $q_1 \in \mathsf{E}$ and $M>0$ such that $\mathrm{B}(q_1,M)\subset \mse$, there exist, $m(q_0)>0$ and $r_{q_0}>0 $ such that for any $q\in \mathrm{B}(q_0, r_{q_0})$,
\begin{equation}
   \label{eq:access_to_prove}
   \KkerU_h(q, \mathrm{B}(q_1, M)) + (\KkerU_h)^2(q, \mathrm{B}(q_1, M)) \geq m(q_0) \eqsp .
\end{equation}
This is proved using  \Cref{lemma:rmq_positive} and \ref{lemma:rmq_0} in what
follows, which are both based on the next technical result.
\begin{lemma}
   \label{lem:1}
   Assume \Cref{hyp:regularity} and \Cref{hyp:3}$(h,\Kmax)$-\ref{hyp:item_i_topo} with $h>0$ and $\Kmax\in \Nset_{>0}$.
 Let $\mathsf{E} \subset \Rset^d$ be open, $q_0 \in \Rset^d$.
  Then, for any $q_1\in \mse$ and $M>0$ such that $\mathrm{B}(q_1,M)\subset  \mse $,
  there exist $p_0\in (\Rset^d)^2$ and $r_q,r_p >0$, $\msj \subset B_{\Kmax}$ such that $\{0,1\}\subset \msj $, $\msj=B_K$ for some $K\in \Nset^*$,
 $   \operatorname{proj}_1\Phiverlet[h][1](q_0, p_0)=q_1$,
 and for any $q, p\in \mathrm{B}(q_0, r_{q})\times \mathrm{B}(p_0, r_{p})$,
\begin{equation}
   \label{eq:acces_rmp_constant}
   \rmp_h\big(\msj \mid q, p\big)=\rmp_h(\msj |q_0, p_0)  > 0\eqsp ,\quad  \operatorname{proj}_1\Phiverlet[h][1](q, p)\in \mathrm{B}(q_1, M) \subset \mse \eqsp .
\end{equation}

\end{lemma}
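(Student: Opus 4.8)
The plan is to construct the desired initial momentum $p_0$ and the index set $\msj$ by working backwards from the target ball. First I would use the topological hypothesis \Cref{hyp:3}$(h,\Kmax)$-\ref{hyp:item_i_topo}: for the fixed position $q_0$, the map $p \mapsto \operatorname{proj}_1\Phiverlet[h][1](q_0, p)$ can be shown to be an open map (or at least to have dense image in a neighborhood) because a single leapfrog step is a volume-preserving diffeomorphism whose first-coordinate projection has a surjective differential generically; more simply, by \Cref{hyp:3}-\ref{hyp:item_i_topo} applied with $T_1 = 1, T_2 = 0$ one gets that the "degenerate" momenta where $p_1^\top(q_1 - q_0) = 0$ are nowhere dense, and combined with continuity this lets us pick $p_0^{(1)}$ with $\operatorname{proj}_1\Phiverlet[h][1](q_0, p_0^{(1)}) = q_1$ exactly, with the map a local homeomorphism near $p_0^{(1)}$ so that a small ball $\mathrm{B}(p_0^{(1)}, r_p)$ maps into $\mathrm{B}(q_1, M)$. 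This handles the second claim in \eqref{eq:acces_rmp_constant} and the requirement $\operatorname{proj}_1\Phiverlet[h][1](q_0,p_0) = q_1$.

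The more delicate part is arranging that the orbit selection kernel $\rmp_h$ puts positive mass on some $\msj = B_K$ with $\{0,1\} \subset \msj$, and that this probability is locally constant in $(q,p)$ around $(q_0, p_0)$. For this I would invoke \Cref{lemma:nuts-locally-constant-trajectories}: by that lemma there is a dense open set $\mathrm{G}_a \subset (\Rset^d)^2$ on which $K_{f,a}$ is locally constant, and for any prescribed base point we can perturb the momentum slightly to land in $\mathrm{G}_a$. The subtle issue is that $\rmp_h(B_K(a) \mid q, p)$ depends on $\funS(ba \mid q,p)$ for $b \in \{0,1\}$ via \Cref{lemma:nuts-trajectory-probability}, so one needs the stopping time to be locally constant for \emph{both} extensions $ba$; this follows by applying \Cref{lemma:nuts-locally-constant-trajectories} to the finitely many relevant binary sequences and intersecting the resulting dense open sets. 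I would then choose $a$ (equivalently the Bernoulli values $V_0 = 1$, forcing the doubling to the right at the first step so that index $1$ is included) such that $\funS(1a) = K+1$ for the appropriate $K \geq 1$ at $(q_0, p_0^{(1)})$ — this is possible after a further small momentum perturbation using density of $\mathrm{G}_a$, and it guarantees both $\rmp_h(B_K(a) \mid q_0, p_0) > 0$ and, since $K \geq 1$, that $B_K(a) \supset \{0,1\}$ (one may translate/relabel so $\msj = B_K$ as required). Local constancy of $\funS$ on the neighborhood then gives $\rmp_h(\msj \mid q, p) = \rmp_h(\msj \mid q_0, p_0)$ for all $(q,p)$ in a product of small balls.

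Finally I would shrink $r_q$ and $r_p$ so that both conclusions hold simultaneously: continuity of $(q,p) \mapsto \operatorname{proj}_1 \Phiverlet[h][1](q,p)$ secures $\operatorname{proj}_1\Phiverlet[h][1](q,p) \in \mathrm{B}(q_1, M)$ on $\mathrm{B}(q_0, r_q) \times \mathrm{B}(p_0, r_p)$, while the local-constancy argument secures the identity $\rmp_h(\msj \mid q,p) = \rmp_h(\msj \mid q_0, p_0) > 0$ there. The positivity itself is just the statement $\funS(ba \mid q_0, p_0) = K+1$ for (at least one) $b$, inserted into the explicit formula \eqref{eq:4}. The main obstacle I anticipate is the bookkeeping needed to ensure that the chosen binary sequence $a$ really does produce a stopping time equal to $K+1$ rather than something smaller (no premature U-turn) while still including both indices $0$ and $1$ in the final orbit; this is where the density statement in \Cref{lemma:nuts-locally-constant-trajectories} and the openness of the $\scrU^{(K)}(a)$ sets from \Cref{lemma:scru_open} do the real work, letting us perturb $p_0$ into a region where the U-turn at the right stage occurs stably.
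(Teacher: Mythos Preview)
Your outline matches the paper's proof, but you overcomplicate two points and miss the simple observations that make them immediate.

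First, to obtain $p_0$ with $\operatorname{proj}_1\Phiverlet[h][1](q_0,p_0)=q_1$, no appeal to \Cref{hyp:3}$(h,\Kmax)$-\ref{hyp:item_i_topo} is needed: one leapfrog step reads $q_1=q_0+hp_0-(h^2/2)\nabla U(q_0)$, which is affine in $p_0$ and hence a bijection $\rset^d\to\rset^d$ (this is exactly \Cref{rm:Tequal1}). Your route via nondegeneracy of $F_{q_0}^{1,0}$ does not by itself establish surjectivity onto $q_1$, so as written that step has a gap---but the affine observation closes it trivially.

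Second, your anticipated obstacle about ``no premature U-turn'' dissolves: by \eqref{eq:scrU} the set $\scrU^{(1)}(q,p)$ is empty for every $(q,p)$ (the union over $k\in[K-1]$ is vacuous when $K=1$), so $K_{f,a}(q,p)\geq 1$ automatically. The paper then simply fixes $a=2^{\Kmax}-1$ (all ones), which forces $B_K(a|_K)=B_K\supset\{0,1\}$ for every $K\geq 1$, and either $(q_0,p_0)\in\mathrm{G}_a$ already or one perturbs the momentum via the last claim of \Cref{lemma:nuts-locally-constant-trajectories} to land in $\mathrm{G}_a$. This is the only place \Cref{hyp:3}$(h,\Kmax)$-\ref{hyp:item_i_topo} enters.

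Your remark that exact equality $\rmp_h(\msj\mid q,p)=\rmp_h(\msj\mid q_0,p_0)$ via \eqref{eq:4} would need local constancy of $\funS(ba,\cdot,\cdot)$ for \emph{both} $b\in\{0,1\}$ is a legitimate point of care that the paper glosses over. In practice, local constancy of $K_{f,a}$ already pins down $\funS(1\cdot a|_K,q,p)=K+1$ on the neighborhood, giving $\rmp_h(\msj\mid q,p)\geq 2^{-K-1}$ uniformly, and this lower bound is all that is used downstream in \Cref{lemma:rmq_positive} and \Cref{lemma:rmq_0}.
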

\begin{proof}
   Let $q_1\in \mse$ and $M>0$ such that $\mathrm{B}(q_1,M)\subset  \mse $.
   By \Cref{rm:Tequal1}, for any $h> 0$,
   \begin{equation}
     \label{eq:def_psi_q_proof_1}
     \psi_{q_0}^{(1)}: p \in \Rset^d \mapsto \operatorname{proj}_1\Phiverlet[h][1](q_0, p)
   \end{equation}
   is an homeomorphism, and therefore there exist $p_0, p_1 \in (\mathbb{R}^d)^2$ such that $\Phiverlet[h][1](q_0, p_0) = (q_1, p_1)$. 

   Noticing that for any $q,p\in (\Rset^d)^2$, $\scrU^{(1)}(q,p)$ is empty by \eqref{eq:scrU}, then for any $a\in B_{\Kmax}$, $K_{f,a}(q,p)\geq 1$.
 By defining $a=2^{\Kmax}-1\in B_{\Kmax}$, we have for any $K\in[\Kmax], B_{K}(a|_K)=B_K $.

  Then $\{0,1\}\subset B_{K_{f,a}(q_0,p_0)}(a|_{K_{f,a}(q_0,p_0)})=\msj= B_{K_{f,a}(q_0,p_0)}$. We distinguish two cases.

  \textbf{If $(q_0,p_0) \in \mathrm{G}_a$.}  By \Cref{lemma:nuts-locally-constant-trajectories},  there exist $r_q', r_p' >0$ such that for any $q,p \in \mathrm{B}(q_0, r_{q}')\times \mathrm{B}(p_0, r_{p}')$,
   $K_{f,a}(q,p)=K_{f,a}(q_0,p_0)$ and thus for any $q,p \in \mathrm{B}(q_0, r_{q}')\times \mathrm{B}(p_0, r_{p}')$,
  \begin{equation}
   \label{eq:msj_constant}
   \msj=B_{K_{f,a}(q,p)}(a|_{K_{f,a}(q,p)})\eqsp .
  \end{equation}
     By \Cref{lemma:nuts-trajectory-probability}, \eqref{eq:msj_constant} implies for any $q, p \in \mathrm{B}(q_0, r_{q}')\times \mathrm{B}(p_0, r_{p}')$
 \begin{equation}
   \label{eq:rmp_constant_bis}
   \rmp_h\big(\msj \mid q, p\big)=\rmp_h(\msj |q_0, p_0) > 0\eqsp .
 \end{equation}

 Using the continuity of $q,p \in (\Rset^d)^2 \mapsto \operatorname{proj}_1\Phiverlet[h][1](q, p) $ and $\psi^{(1)}_{q_0}(p_0)=q_1 $, there exist $r_q, r_p\in (0,r_q')\times(0,r_p')$ such that for any $q, p\in \mathrm{B}(q_0, r_{q})\times \mathrm{B}(p_0, r_{p})$,
\begin{equation}
   \operatorname{proj}_1\Phiverlet[h][1](q, p)\in \mathrm{B}(q_1, M) \subset \mse \eqsp .
\end{equation}
This, \eqref{eq:msj_constant} and \eqref{eq:rmp_constant_bis} completes the proof.

\textbf{If $(q_0,p_0) \notin \mathrm{G}_a$.}
 We show that this reduces  to the previous case using the last statement of \Cref{lemma:nuts-locally-constant-trajectories}.
  By \Cref{lemma:nuts-locally-constant-trajectories}, for any $\epsilon>0$, there exists $p_\epsilon\in \mathrm{B}(p_0, \epsilon)$, such that $(q_0,p_\epsilon) \in \mathrm{G}_a$.
Then, there exist $r_q', r_p' >0$ such that for any $q,p \in \mathrm{B}(q_0, r_{q}')\times \mathrm{B}(p_\epsilon, r_{p}')$,
   $K_{f,a}(q,p)=K_{f,a}(q_0,p_\epsilon)$ and thus for any $q,p \in \mathrm{B}(q_0, r_{q}')\times \mathrm{B}(p_\epsilon, r_{p}')$,
  \begin{equation}
   \label{eq:msj_constant_2}
   \msj=B_{K_{f,a}(q,p)}(a|_{K_{f,a}(q,p)})\eqsp .
  \end{equation}
By \Cref{lemma:nuts-trajectory-probability}, \eqref{eq:msj_constant_2} implies for any $q, p \in \mathrm{B}(q_0, r_{q}')\times \mathrm{B}(p_\epsilon, r_{p}')$
 \begin{equation}
   \rmp_h\big(\msj \mid q, p\big)=\rmp_h(\msj |q_0, p_\epsilon) > 0\eqsp .
 \end{equation}
 Since the function $ \psi_{q_0}^{(1)}$ defined in \eqref{eq:def_psi_q_proof_1} is continuous under \Cref{hyp:regularity} and $\psi_{q_0}^{(1)}(p_0)=q_1$, thus we can choose $\epsilon>0$ such that $\psi_{q_0}^{(1)}(p_\epsilon)=q_1^\epsilon \in \mathrm{B}(q_1, M)\subset \mathsf{E}$.
Since $\mathrm{B}(q_1, M)$ is open, there exists $r_\epsilon>0$ such that $\mathrm{B}(q_1^\epsilon, r_\epsilon) \subset \mathrm{B}(q_1, M) \subset \mse $.
 Using the continuity of $q,p \in (\Rset^d)^2 \mapsto \operatorname{proj}_1\Phiverlet[h][1](q, p) $ and $\psi^{(1)}_{q_0}(p_\epsilon)=q_1^\epsilon $, there exist $r_q, r_p\in (0,r_q')\times(0,r_p')$ such that for any $q, p\in \mathrm{B}(q_0, r_{q})\times \mathrm{B}(p_\epsilon, r_{p})$,
\begin{equation}
   \operatorname{proj}_1\Phiverlet[h][1](q, p)\in \mathrm{B}(q_1^\epsilon, r_\epsilon) \subset \mathrm{B}(q_1, M) \subset \mse \eqsp .
\end{equation}
 This completes the proof with $p_0=p_\epsilon$.

\end{proof}

We now prove \eqref{eq:access_to_prove} distinguishing two cases: $\rmq_h(1|\msj,q_0,p_0)>0$ (\Cref{lemma:rmq_positive})
 and $\rmq_h(1|\msj,q_0,p_0)=0$ (\Cref{lemma:rmq_0}).

\begin{lemma}
   \label{lemma:rmq_positive}
   Under the same conditions as \Cref{lem:1} and using the same notations, for any $q_1\in \mse $ and $M>0$ such that $\mathrm{B}(q_1,M)\subset  \mse $,
   if $\rmq_h(1|\msj,q_0,p_0)>0$, then there exist $m(q_0)>0$ and $r_{q_0}>0$ such that for any $q\in \mathrm{B}(q_0, r_{q_0})$ \eqref{eq:access_to_prove} holds.
\end{lemma}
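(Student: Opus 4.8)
The plan is to drop the second summand $(\KkerU_h)^2(q,\mathrm{B}(q_1,M)) \geq 0$ in \eqref{eq:access_to_prove} and to bound $\KkerU_h(q,\mathrm{B}(q_1,M))$ from below directly, by keeping in the defining formula \eqref{eq:transition-kernel} of the NUTS kernel only the single term coming from the orbit $\msj$ and the index $j = 1 \in \msj$, and by restricting the momentum integral to a small ball around $p_0$. The point of the hypothesis $\rmq_h(1 \mid \msj, q_0, p_0) > 0$ is precisely that this one term is nontrivial.

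First I would recall the output of \Cref{lem:1}: a point $p_0 \in \Rset^d$, radii $r_q, r_p > 0$, and an index set $\msj = B_K$ with $\{0,1\} \subset \msj$, such that $\operatorname{proj}_1 \Phiverlet[h][1](q_0, p_0) = q_1$ and, for every $(q,p) \in \mathrm{B}(q_0,r_q) \times \mathrm{B}(p_0,r_p)$, one has $\rmp_h(\msj \mid q,p) = \rmp_h(\msj \mid q_0, p_0) > 0$ and $\operatorname{proj}_1 \Phiverlet[h][1](q,p) \in \mathrm{B}(q_1,M) \subset \mse$. Next I would turn the pointwise positivity hypothesis on $\rmq_h(1 \mid \msj, \cdot, \cdot)$ into a uniform lower bound near $(q_0,p_0)$: since $\msj$ is of the form $B_K$, \Cref{lemma:definition_continuity_q} shows that $(q,p) \mapsto \rmq_h(1 \mid \msj, q, p)$ is continuous on $(\Rset^d)^2$, hence there are $r_q', r_p' > 0$ with $\rmq_h(1 \mid \msj, q, p) \geq \tfrac12 \rmq_h(1 \mid \msj, q_0, p_0) =: c_2 > 0$ for all $(q,p) \in \mathrm{B}(q_0,r_q') \times \mathrm{B}(p_0,r_p')$.

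Setting $r_{q_0} = r_q \wedge r_q'$ and $\bar r_p = r_p \wedge r_p'$, for any $q \in \mathrm{B}(q_0, r_{q_0})$ I would then estimate, using \eqref{eq:transition-kernel}, retaining only the $(\msj, 1)$ term and integrating $p$ over $\mathrm{B}(p_0, \bar r_p)$ (on which $\operatorname{proj}_1 \Phiverlet[h][1](q,p) \in \mathrm{B}(q_1,M)$, so the Dirac mass equals $1$),
\begin{equation}
\KkerU_h(q, \mathrm{B}(q_1, M)) \geq \rmp_h(\msj \mid q_0, p_0)\, c_2 \int_{\mathrm{B}(p_0, \bar r_p)} \rho(p)\, \dd p =: m(q_0) > 0 \eqsp ,
\end{equation}
the positivity being clear since $\rho > 0$ and $\mathrm{B}(p_0, \bar r_p)$ has positive Lebesgue measure; together with $(\KkerU_h)^2(q,\mathrm{B}(q_1,M)) \geq 0$ this gives \eqref{eq:access_to_prove}. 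I do not expect a genuine obstacle here, as the substantive work has been done in \Cref{lem:1}; the only points requiring attention are that \Cref{lemma:definition_continuity_q} really does apply (which it does because $\msj = B_K$), so that $\rmq_h(1 \mid \msj, \cdot, \cdot)$ is continuous, and the bookkeeping of intersecting the two product neighbourhoods so that the bounds on $\rmp_h(\msj \mid \cdot)$, on $\rmq_h(1 \mid \msj, \cdot)$ and the inclusion $\operatorname{proj}_1 \Phiverlet[h][1](\cdot) \in \mathrm{B}(q_1,M)$ all hold simultaneously.
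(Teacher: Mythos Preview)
Your proposal is correct and follows essentially the same approach as the paper's proof: recall the output of \Cref{lem:1}, use the continuity of $(q,p)\mapsto \rmq_h(1\mid\msj,q,p)$ from \Cref{lemma:definition_continuity_q} to upgrade the pointwise positivity to a uniform lower bound on a possibly smaller product neighbourhood, then keep only the $(\msj,1)$ term in \eqref{eq:transition-kernel} and integrate the momentum over that ball. The only differences are cosmetic (you take explicit minima of radii and set $c_2=\tfrac12\rmq_h(1\mid\msj,q_0,p_0)$, while the paper just asserts the existence of $\tilde r_q,\tilde r_p$ and $m'(q_0)$).
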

\begin{proof}

Let $q_1\in \mse $ and $M>0$ such that $\mathrm{B}(q_1,M)\subset  \mse $.
By \Cref{lem:1}, using the continuity of $q,p\in (\Rset^2)^2 \mapsto \rmq_h(1|\msj,q,p)$ given in \Cref{lemma:definition_continuity_q} and $\rmq_h(1|\msj,q_0,p_0)>0$, there exist $\tilde{r}_q, \tilde{r}_p\in (0,r_q)\times(0,r_p)$ and $m'(q_0)>0$ such that for any $q,p\in \mathrm{B}(q_0, \tilde{r}_{q})\times \mathrm{B}(p_0, \tilde{r}_{p})$,
\begin{equation}
   \rmq_h(1|\msj,q,p)\geq m'(q_0) >0\eqsp, \qquad    \rmp_h\big(\msj \mid q, p\big)=\rmp_h(\msj |q_0, p_0)  > 0\eqsp ,
\end{equation}
\begin{equation}
   \operatorname{proj}_1\Phiverlet[h][1](q, p)\in \mathrm{B}(q_1, M) \subset \mse \eqsp.
\end{equation}

Thus, for any $q,p\in \mathrm{B}(q_0, \tilde{r}_{q})\times \mathrm{B}(p_0, \tilde{r}_{p})$, we have by \eqref{eq:transition-kernel},
\begin{equation}
   \KkerUE_h((q,p), \mathsf{E})
   \geq \rmp_h(\msj |q, p) \rmq_h(1 |\msj,q,p)\geq m'(q_0)\rmp_h(\msj |q_0, p_0) > 0\eqsp .
\end{equation}
By using the continuity and positivity of $p\in \Rset^d \mapsto \GaussStandard(p)$, there exists $m_p>0$ such that $\int_{\mathrm{B}(p_0,\tilde{r}_{p})} \GaussStandard(p)= m_p $.
Therefore for any $q \in \mathrm{B}(q_0, \tilde{r}_{q})$,
$$ \KkerU_h(q,\mathsf{E}) \geq \int_{\mathrm{B}(p_0,\tilde{r}_{p})} \GaussStandard(p) \KkerUE_h((q,p); \mathsf{E})\dd p \geq m(q_0) \eqsp , $$
where $m(q_0)=m_p m'(q_0)\rmp_h(\msj |q_0, p_0)>0$, this yields \eqref{eq:access_to_prove} with $r_{q_0}=\tilde{r}_{q}$.
\end{proof}
Before stating and proving \Cref{lemma:rmq_0}, we need the following technical lemma.
\begin{lemma}
   \label{lemma:technical_ball}
   Assume \Cref{hyp:regularity}.
   Let $r_1, r_2>0$, $z_p, z_q\in (\Rset^d)^2$ and $T\in \Zset^*$.
   Denoting by $\msz_T=\Phiverlet[h][T](\mathrm{B}(z_q, r_1)\times \mathrm{B}(z_p, r_2)) $ and $\msk_T=\Phiverlet[h][T](\mathrm{B}(z_q, r_1/2)\times \mathrm{B}(z_p, r_2/2))$, there 
   exists $r_v>0$ such that $\msk_T+\{0_d\}\times \mathrm{B}(0_d,r_v)\subset \msz_T$.
\end{lemma}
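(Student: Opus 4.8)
The plan is to reduce the statement to the elementary fact that a compact subset of an open set lies at a positive distance from the complement of that set, together with the observation that $\Phiverlet[h][T]$ is a homeomorphism of $(\Rset^d)^2$. Throughout, $\mathrm{B}(\cdot,\cdot)$ denotes open balls (the paper writes $\overline{\mathrm{B}}$ for closed ones), and we read the statement with $z_q,z_p\in\Rset^d$.

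First I would recall that, as noted in \Cref{sec:dynamic-hmc}, $\Phiverlet[h][T]$ is a bijection of $(\Rset^d)^2$ with inverse $\Phiverlet[h][-T]$, and that under \Cref{hyp:regularity} both $\Phiverlet[h][T]$ and $\Phiverlet[h][-T]$ are continuous: each is a finite composition of the maps $\Psiverlet^{(1)}_{\pm h/2}$ and $\Psiverlet^{(2)}_{\pm h}$ of \eqref{eq:def_Psiverlet_0}, which are continuous because $\nabla U$ is. Hence $\Phiverlet[h][T]$ is a homeomorphism, in particular an open map, so that $\msz_T = \Phiverlet[h][T](\mathrm{B}(z_q,r_1)\times\mathrm{B}(z_p,r_2))$ is an open subset of $(\Rset^d)^2$.

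Next, since $r_1/2<r_1$ and $r_2/2<r_2$, the set $C := \overline{\mathrm{B}}(z_q,r_1/2)\times\overline{\mathrm{B}}(z_p,r_2/2)$ is compact and satisfies $\mathrm{B}(z_q,r_1/2)\times\mathrm{B}(z_p,r_2/2)\subset C\subset\mathrm{B}(z_q,r_1)\times\mathrm{B}(z_p,r_2)$. Applying the injective continuous map $\Phiverlet[h][T]$ gives $\msk_T\subset\Phiverlet[h][T](C)\subset\msz_T$ with $\Phiverlet[h][T](C)$ compact. If $\msz_T=(\Rset^d)^2$ we simply take $r_v=1$. Otherwise $(\Rset^d)^2\setminus\msz_T$ is a nonempty closed set disjoint from the compact set $\Phiverlet[h][T](C)$, so $d_0 := \dist(\Phiverlet[h][T](C),\,(\Rset^d)^2\setminus\msz_T)>0$; set $r_v := d_0/2$. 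In either case, for every $y\in\msk_T\subset\Phiverlet[h][T](C)$ the ball $\ball{y}{d_0}$ (resp. all of $(\Rset^d)^2$) is disjoint from $(\Rset^d)^2\setminus\msz_T$, hence $\ball{y}{2r_v}\subset\msz_T$, so $y+(0_d,w_p)\in\ball{y}{2r_v}\subset\msz_T$ for every $w_p\in\mathrm{B}(0_d,r_v)$. Since $y\in\msk_T$ was arbitrary, $\msk_T+\{0_d\}\times\mathrm{B}(0_d,r_v)\subset\msz_T$, which is the claim.

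This argument is routine; the only point requiring a little care is that $\msk_T$ itself need not be closed, which is why one enlarges it to the compact set $\Phiverlet[h][T](C)$ before invoking positivity of the distance, and that one uses $\Phiverlet[h][T]$ being an open map, which \Cref{hyp:regularity} guarantees. As an alternative one can argue quantitatively: by \Cref{hyp:regularity} and the explicit form of the leapfrog step, $\Phiverlet[h][-T]$ is globally Lipschitz on $(\Rset^d)^2$ with some constant $L_T$, and then any $r_v<\min(r_1,r_2)/(2L_T)$ works, because perturbing a point $y=\Phiverlet[h][T](x)\in\msk_T$ by $(0_d,w_p)$ with $|w_p|<r_v$ moves its $\Phiverlet[h][-T]$-preimage from $x$ by less than $\min(r_1,r_2)/2$, hence keeps it inside $\mathrm{B}(z_q,r_1)\times\mathrm{B}(z_p,r_2)$, so that $y+(0_d,w_p)\in\msz_T$.
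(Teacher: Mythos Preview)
Your proof is correct and follows essentially the same approach as the paper: both use that $\Phiverlet[h][T]$ is a homeomorphism to conclude $\msz_T$ is open and $\bar{\msk}_T$ is a compact subset of it, and then take $r_v$ to be (a fraction of) the positive distance from this compact set to the complement/boundary of $\msz_T$. Your presentation is somewhat cleaner, invoking the standard compact-in-open distance fact directly, whereas the paper re-derives $\dist(\bar{\msk}_T,\partial\msz_T)>0$ by a compactness-and-contradiction argument; your alternative Lipschitz argument is a nice quantitative bonus not in the paper.
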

\begin{proof}
   Define $r_v=\dist(\bar{\msk}_T,\partial\msz_T )$.

   If $r_v>0$, since $\msk_T\subset\msz_T$ and $\msz_T$ is open, for any $x\in (\Rset^d)^2$ such that $\dist(x,\msk_T)<r_v$, we have $x\in \msz_T$.
   Therefore, we have $\msk_T+\{0_d\}\times \mathrm{B}(0_d,r_v)\subset \msz_T $.

   We prove $r_v>0$ by contradiction. Suppose $r_v=0$.
   Since $r_v=0$, there exists a minimizing sequence $(k_n,z_n)_{n\in\Nset}\in (\bar{\msk}_T\times \partial\msz_T)^\Nset$ such that $\dist(k_n,z_n)\to 0$ as $n\to \infty$.
   $\bar{\msk}_T\times \partial\msz_T$ is compact since it is closed and bounded, thus, there exist an increasing map $\phi :\Nset\mapsto \Nset$ and $k^*,z^* \in \bar{\msk}_T\times \partial\msz_T$ such that $k_{\phi(n)},z_{\phi(n)} \to k^*,z^* $ as $n\to \infty$
    and therefore we have $\dist(k^*,z^*)=0 $ and thus
     \begin{equation}
      \label{eq:equality_limit}
      k^*=z^* \eqsp .
     \end{equation}
   
   Using the fact that $\Phiverlet[h][T]$ is an homemorphism, we have 
   \begin{align}
      &\Phiverlet[h][-T](\bar{\msk}_T)=\bar{\mathrm{B}}(z_q, r_1/2)\times \bar{\mathrm{B}}(z_p, r_2/2) \eqsp ,
     \\
      &\Phiverlet[h][-T](\partial \msz_T)= \Phiverlet[h][-T](\bar{\msz}_T)\setminus \Phiverlet[h][-T](\msz_T)=\mathrm{S}(z_q, r_1)\times \mathrm{S}(z_p, r_2) \eqsp ,
    \end{align} where $\mathrm{S}(z, r)=\{x\in \Rset^d: |x-z|=r\}$ for any $z\in \Rset^d, r>0$. In particular, we have 
    \begin{equation}
      \label{eq:intersect_empty}
      \Phiverlet[h][-T](\bar{\msk}_T) \cap \Phiverlet[h][-T](\partial \msz_T)=\empty \eqsp .
    \end{equation}
    Consider for any $n\in\Nset$, $(k_n^{-1},z_n^{-1})=\left(\Phiverlet[h][-T](k_n),\Phiverlet[h][-T](z_n)\right)\in \Phiverlet[h][-T](\bar{\msk}_T)\times \Phiverlet[h][-T](\partial \msz_T)  $.
    Using the continuity of $\Phiverlet[h][-T]$, we have $(k_n^{-1},z_n^{-1})\to \left(\Phiverlet[h][-T](k^*),\Phiverlet[h][-T](z^*)\right)$ as $n\to \infty$ and by \eqref{eq:equality_limit}
    \begin{equation}
      \label{eq:belong_intersect}
       \Phiverlet[h][-T](k^*)=\Phiverlet[h][-T](z^*)\in\Phiverlet[h][-T](\bar{\msk}_T) \cap \Phiverlet[h][-T](\partial \msz_T)\eqsp .
   \end{equation}
   Combining \eqref{eq:intersect_empty} and \eqref{eq:belong_intersect} yields a  contradiction.
\end{proof}
\begin{lemma}
   \label{lemma:rmq_0}
   Under the same conditions as \Cref{lem:1} and using the same notations, for any $q_1\in \mse $ and $M>0$ such that $\mathrm{B}(q_1,M)\subset  \mse $,
    if $\rmq_h(1|\msj,q_0,p_0)=0$, then there exist $m(q_0)>0$ and $r_{q_0}>0$ such that for any $q\in \mathrm{B}(q_0, r_{q_0})$ \eqref{eq:access_to_prove} holds.
\end{lemma}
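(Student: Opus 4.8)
The plan is to obtain \eqref{eq:access_to_prove} through a genuine two-step move, since the hypothesis $\rmq_h(1 \mid \msj, q_0, p_0) = 0$ rules out reaching $\mse$ in one step through the leapfrog index $1$. Keep the data $p_0$, $r_q$, $r_p$ and $\msj = B_K$ produced by \Cref{lem:1}, for which $\{0,1\} \subset \msj$, $\operatorname{proj}_1\Phiverlet[h][1](q_0,p_0) = q_1$, and on $\mathrm{B}(q_0,r_q)\times\mathrm{B}(p_0,r_p)$ one has $\rmp_h(\msj \mid \cdot) = \rmp_h(\msj \mid q_0,p_0) > 0$ together with $\operatorname{proj}_1\Phiverlet[h][1](\cdot) \in \mathrm{B}(q_1,M)$. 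Because $\msj = B_K$, the relabelling $\iota$ of \eqref{eq:iota} is the identity, so $\hat{\rmq}_{h,K}(a,b) = \bar{\rmq}_h(a,b\mid\msj,q_0,p_0)$ and $\hat{\rmq}_{h,K}(0,1) = \rmq_h(1\mid\msj,q_0,p_0) = 0$. Then \Cref{lemma:irreducibility_q}\ref{lemma:item1_irreducibility_q} gives $j_0 \in \{1,2\}$ with $\hat{\rmq}_{h,K}^{j_0}(0,1) > 0$, forcing $j_0 = 2$; hence there is an index $j^* \in \msj\setminus\{0,1\}$ with $\hat{\rmq}_{h,K}(0,j^*) = \rmq_h(j^*\mid\msj,q_0,p_0) > 0$ and $\hat{\rmq}_{h,K}(j^*,1) = \bar{\rmq}_h(j^*,1\mid\msj,q_0,p_0) = \rmq_h(1-j^*\mid\msj-j^*,\Phiverlet[h][j^*](q_0,p_0)) > 0$. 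Writing $(q',p') = \Phiverlet[h][j^*](q_0,p_0)$, we have $\operatorname{proj}_1\Phiverlet[h][1-j^*](q',p') = \operatorname{proj}_1\Phiverlet[h][1](q_0,p_0) = q_1 \in \mse$, so the move $q_0 \to q' \to q_1$ is available with positive probability at each leg.

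The next step is to make each leg robust, along the lines of \Cref{lem:1}. First, at the base point $(q',p')$: the symmetry identity \eqref{eq:trajectory-condition2}, valid for NUTS by \Cref{prop:rmpcondition_check}, together with the orbit identity \eqref{eq:6} written as $\msj - j^* = B_{K'}(c)$, gives $\rmp_h(\msj - j^*\mid q',p') = \rmp_h(\msj\mid q_0,p_0) > 0$, so NUTS from $(q',p')$ returns the orbit $\msj - j^*$ with positive probability; applying \Cref{lemma:nuts-locally-constant-trajectories} at $(q',p')$ with the Bernoulli sequence producing this orbit — after, if necessary, an arbitrarily small perturbation of $p'$, which by continuity (\Cref{lemma:definition_continuity_q}) preserves positivity of $\rmq_h(1-j^*\mid\msj-j^*,\cdot)$ and keeps $\operatorname{proj}_1\Phiverlet[h][1-j^*](\cdot)$ inside the open ball $\mathrm{B}(q_1,M)$ — one obtains $r_q',r_p' > 0$ and a momentum center $\hat p$ such that, on $\mathrm{B}(q',r_q')\times\mathrm{B}(\hat p,r_p')$, NUTS selects the orbit $\msj-j^*$ and index $1-j^*$ with probability $\geq c_2 > 0$ and lands in $\mathrm{B}(q_1,M)\subset\mse$. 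Second, at $(q_0,p_0)$: \Cref{lem:1} gives that $\rmp_h(\msj\mid\cdot)$ is positive and locally constant near $(q_0,p_0)$; combined with continuity and positivity of $\rmq_h(j^*\mid\msj,\cdot)$ and continuity of $\operatorname{proj}_1\Phiverlet[h][j^*]$, and shrinking the radii, there are $\tilde r_q,\tilde r_p > 0$ so that on $\mathrm{B}(q_0,\tilde r_q)\times\mathrm{B}(p_0,\tilde r_p)$ NUTS selects $(\msj,j^*)$ with probability $\geq c_1 > 0$ and lands inside $\mathrm{B}(q',r_q')$.

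Finally, concatenate via the Markov property: for $q \in \mathrm{B}(q_0,\tilde r_q)$,
\begin{equation*}
   (\KkerU_h)^2(q,\mathrm{B}(q_1,M)) \geq \KkerU_h\big(q,\mathrm{B}(q',r_q')\big)\cdot\inf_{\tilde q\in\mathrm{B}(q',r_q')}\KkerU_h\big(\tilde q,\mathrm{B}(q_1,M)\big)\eqsp .
\end{equation*}
Integrating $\GaussStandard$ over $\mathrm{B}(p_0,\tilde r_p)$ (resp.\ $\mathrm{B}(\hat p,r_p')$) and using the bounds $c_1$ (resp.\ $c_2$) on the corresponding $\rmp_h\rmq_h$ products, together with continuity and positivity of $\GaussStandard$, both factors on the right are bounded below by a positive constant uniformly in $q\in\mathrm{B}(q_0,\tilde r_q)$; this yields \eqref{eq:access_to_prove} (with the second summand alone) and $r_{q_0} = \tilde r_q$. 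I expect the principal difficulty to be the bookkeeping connecting the finite–state statement \Cref{lemma:irreducibility_q} (phrased on $B_K$ via $\hat{\rmq}_{h,K}$) to $\rmq_h$ and $\bar{\rmq}_h$ on the true index set, identifying the Bernoulli sequence that produces the orbit $\msj-j^*$ from $(q',p')$ so that \Cref{lemma:nuts-locally-constant-trajectories} applies at the second base point, and arranging that all the relevant positivity and openness conditions — positivity of $\rmp_h$ and $\rmq_h$, local constancy of the orbit, and the leapfrog image lying in the open target ball — hold on a single product neighborhood after the unavoidable small momentum perturbations.
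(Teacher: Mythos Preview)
Your overall strategy matches the paper's: find an intermediate index $j^*\in\msj$ via \Cref{lemma:irreducibility_q} and establish \eqref{eq:access_to_prove} through the two-step move $q_0 \to q' \to q_1$, bounding each factor of $(\KkerU_h)^2(q,\mse)\geq \KkerU_h(q,\cdot)\inf\KkerU_h(\cdot,\mse)$.

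The difference lies in how the second step is handled. The paper does \emph{not} re-apply \Cref{lemma:nuts-locally-constant-trajectories} at $(q',p')$; instead it works with the image $\check{\msv}_k=\Phiverlet[h][k](\text{half-ball})$ and uses the auxiliary \Cref{lemma:technical_ball} to guarantee that from every position $\check q$ in the projected image there is a uniform-radius momentum ball (around a $\check q$-dependent center) lying inside $\msv_k$, so that the whole ball pulls back under $\Phiverlet[h][-k]$ into the original neighborhood where the $\rmp_h$ and $\rmq_h$ bounds are already known.

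Your route via a single fixed product neighborhood $\mathrm{B}(q',r_q')\times\mathrm{B}(\hat p,r_p')$ is in fact a bit simpler and avoids \Cref{lemma:technical_ball}, but your justification contains an unnecessary detour that, as written, leaves a gap. You invoke \Cref{lemma:nuts-locally-constant-trajectories} at $(q',p')$ with a possible perturbation $p'\to\hat p$; you then check that continuity preserves positivity of $\rmq_h(1-j^*\mid\msj-j^*,\cdot)$ and the landing condition, but you never verify that after perturbation the orbit produced is still $\msj-j^*$: local constancy of $K_{f,a'}$ on $\mathrm{G}_{a'}$ says nothing about which value it takes at the perturbed point. The clean fix --- and this is the idea the paper exploits --- is to use the symmetry of $\rmp_h$ directly, with no second appeal to \Cref{lemma:nuts-locally-constant-trajectories}: since $\Phiverlet[h][-j^*]$ is continuous with $\Phiverlet[h][-j^*](q',p')=(q_0,p_0)$, a small product ball around $(q',p')$ maps into $\mathrm{B}(q_0,r_q)\times\mathrm{B}(p_0,r_p)$, and on that preimage $\rmp_h(\msj-j^*\mid\cdot)=\rmp_h(\msj\mid\Phiverlet[h][-j^*](\cdot))=\rmp_h(\msj\mid q_0,p_0)$ by \Cref{prop:rmpcondition_check}. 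Equivalently, $(q_0,p_0)\in\mathrm{G}_a$ (already arranged in \Cref{lem:1}) forces $(q',p')\in\mathrm{G}_{a'}$, because $K_{f,a'}(\tilde q,\tilde p)=K_{f,a}(\Phiverlet[h][-j^*](\tilde q,\tilde p))$; so no perturbation is needed at all.
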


\begin{proof}
Let $q_1\in \mse $ and $M>0$ such that $\mathrm{B}(q_1,M)\subset  \mse $.
First, we bound from below $\rmp_h $ and $\rmq_h$ before showing \eqref{eq:access_to_prove}.

By \Cref{lem:1}, there exists $r_q,r_p >0$ such that 
   for any $q, p\in \mathrm{B}(q_0, r_{q})\times \mathrm{B}(p_0, r_{p})$,
   \begin{equation}
      \label{eq:acces_rmp_constant_2}
      \rmp_h\big(\msj \mid q, p\big)=\rmp_h(\msj |q_0, p_0)  > 0\eqsp ,\quad  \operatorname{proj}_1\Phiverlet[h][1](q, p)\in \mathrm{B}(q_1, M) \subset \mse \eqsp .
   \end{equation}
Since now by \eqref{eq:transition_prob_q}, $\rmq_h(1|\msj,q_0,p_0)=\bar{\rmq}_h(0,1|\msj, q_0,p_0)=0$ and $\msj=B_K$ for some $K\in \Nset^*$, applying \Cref{lemma:irreducibility_q} there exists $k\in \msj$ such that $\bar{\rmq}_h(0,k|\msj,q_0,p_0)\bar{\rmq}_h(k,1|\msj,q_0,p_0)>0$.
By using the continuity of $p \mapsto \bar{\rmq}_h(i,j|\msj,q_0,p) $ for any $(i,j) \in \msj^2$ given by \Cref{lemma:definition_continuity_q},
there exist $\tilde{r}_q, \tilde{r}_p\in (0,r_q)\times(0,r_p)$ and $m'(q_0)>0$ such that for any $q,p\in \mathrm{B}(q_0, \tilde{r}_{q})\times \mathrm{B}(p_0, \tilde{r}_{p})$,
\begin{equation}
   \label{eq:Pi_1}
   \bar{\rmq}_h(0,k|\msj,q,p)\bar{\rmq}_h(k,1|\msj,q,p)=\rmq_h(k|\msj,q,p)\rmq_h(1-k|\msj-k,\Phiverlet[h][k](q,p))\geq m'(q_0)>0 \eqsp .
\end{equation}

Denote by $\msv_k= \Phiverlet[h][k]( \mathrm{B}(q_0, \tilde{r}_{q})\times \mathrm{B}(p_0, \tilde{r}_{p}))$, 
$\msq_k=\operatorname{proj}_1(\msv_k)$,
 $\check{\msv}_k= \Phiverlet[h][k]( \mathrm{B}(q_0, \check{r}_{q})\times \mathrm{B}(p_0, \check{r}_{p}))$,
  $\check{\msq}_k=\operatorname{proj}_1(\msv_k)$ and $\check{r}_{q},\check{r}_{p}=\tilde{r}_{q}/2,\tilde{r}_{p}/2 $.
By \Cref{lemma:technical_ball}, there exists $r_v>0$,  
such that 
\begin{equation}
   \label{eq:inclusion_technical}
   \check{\msv}_k+\{0_d\}\times \mathrm{B}(0_d,r_v)\subset \msv_k \eqsp.
\end{equation}
With this notation, we consider the lower bound for any $q\in \mathrm{B}(q_0, \tilde{r}_{q})$, 
\begin{equation}
   \label{eq:Ker2ineq}
   (\KkerU_h)^2(q,\mathsf{E})\geq \KkerU_h(q,\check{\msq}_k) \min_{q \in \check{\msq}_k} \KkerU_h(q,\mathsf{E}) \eqsp . 
 \end{equation}
The rest of the proof consists in lower bounding the two terms in the right-hand side. 
 
First we bound from below the term $\KkerU_h(q,\check{\msq}_k)$ for any $q\in \mathrm{B}(q_0, \tilde{r}_{q})$.
By using the continuity and positivity of $p\in \Rset^d \mapsto \GaussStandard(p)$, there exists $m_p>0$ such that $\int_{\mathrm{B}(p_0,\check{r}_p)}\GaussStandard(p)= m_p $.
By using that $ \operatorname{proj}_1(\Phiverlet[h][k](\mathrm{B}(q_0, \check{r}_{q})\times \mathrm{B}(p_0, \check{r}_{p}))=\check{\msq}_k$,
we have by \eqref{eq:transition-kernel}, \eqref{eq:Pi_1} and \eqref{eq:acces_rmp_constant_2} for any $q \in \mathrm{B}(q_0, \check{r}_{q})$,
\begin{equation}
   \label{eq:first_bound_access}
   \KkerU_h(q,\check{\msq}_k)\geq  \int_{\mathrm{B}(p_0,\check{r}_p)} \GaussStandard(p)\rmp_h(\msj|q,p) \rmq_h(k|\msj,q,p)\dd p\geq \rmp_h(\msj|q_0,p_0)m'(q_0)m_p>0 \eqsp .
\end{equation}

We now bound from below $\inf_{q \in \check{\msq}_k} \KkerU_h(q,\mathsf{E})$.

Let $\check{q} \in \check{\msq}_k$ be fixed.
By the definition of $\check{\msq}_k$ there exists $ \check{p} \in \Rset^d$ such that $(\check{q},\check{p}) \in \check{\msv}_k$.
We have by \eqref{eq:inclusion_technical} for any $\check{p}'\in \mathrm{B}(\check{p},r_v)$, $\check{q},\check{p}'\in \msv_k$ and thus 
\begin{equation}
   \label{eq:belong_V0}
   \Phiverlet[h][-k](\check{q},\check{p}')\in \mathrm{B}(q_0, \tilde{r}_{q})\times \mathrm{B}(p_0, \tilde{r}_{p})\eqsp .
\end{equation}
 Then, for any $\check{p}'\in \mathrm{B}(\check{p},r_v)$,
\begin{equation}
   \rmp_h(\msj|q_0,p_0)=\rmp_h(\msj|\Phiverlet[h][-k](\check{q},\check{p}'))=\rmp_h(\msj-k|\check{q},\check{p}') \eqsp,
\end{equation}
where we have used \eqref{eq:acces_rmp_constant_2} for the first equality and \Cref{prop:rmpcondition_check} for the second since $k\in\msj$.
We have for any $\check{p}'\in \mathrm{B}(\check{p},r_v)$ by \eqref{eq:belong_V0} and \eqref{eq:Pi_1},
\begin{equation}
   \label{eq:rmq_bound_below_access}
   \rmq_h(1-k|\msj-k,\check{q},\check{p}')=\rmq_h(1-k|\msj-k,\Phiverlet[h][k](\Phiverlet[h][-k](\check{q},\check{p}')))\geq m'(q_0)>0
\end{equation}
Since the map $p\in \Rset^d\mapsto \GaussStandard(p)$ is positive and continuous and the set $\operatorname {proj}_2 \msv_k$ is bounded, there exists $m_p'>0$ such that for any $\check{p}\in \operatorname {proj}_2 \check{\msv}_k$, we have $\int_{\mathrm{B}(\check{p},r_v)} \GaussStandard(p)\geq m_p'$.
Using that for any $\check{p}'\in \mathrm{B}(\check{p},r_v)$, $\Phiverlet[h][1-k](\check{q},\check{p}')\in \msq_1\subset \mathrm{B}(q_1, M) $, we have
\begin{align}
   \KkerU_h(\check{q},\mathsf{E}) \geq \KkerU_h(\check{q},\mathrm{B}(q_1, M))&\geq   \int_{\mathrm{B}(\check{p},r_v)} \GaussStandard(p)\rmp_h(\msj-k|\check{q},p) \rmq_h(1-k|\msj-k,\check{q},p) \dd p \\
   & \geq m_p' m'(q_0)\rmp_h(\msj|q_0,p_0)>0 \eqsp ,
\end{align}
where the second inequality is given by \eqref{eq:rmq_bound_below_access}, \eqref{eq:acces_rmp_constant_2}.
The previous bound being independent of $\check{q}\in \check{\msq}_k$, we have,
\begin{equation}
   \label{eq:min_bound_access}
   \inf_{\check{q} \in \check{\msq}_k} \KkerU_h(\check{q},\mathsf{E})\geq  m_p' m'(q_0)\rmp_h(\msj|q_0,p_0)>0 \eqsp .
\end{equation}

Combining \eqref{eq:Ker2ineq}, \eqref{eq:first_bound_access} and \eqref{eq:min_bound_access}, we have for any $q \in \mathrm{B}(q_0, \check{r}_p)$,
$$ (\KkerU_h)^2(q,\mathsf{E}) \geq m_p m_p' (m'(q_0)\rmp_h(\msj|q_0,p_0))^2 >0\eqsp .$$
Setting $m(q_0)=m_p m_p' (m'(q_0)\rmp_h(\msj|q_0,p_0))^2 $ and $r_{q_0}=\check{r}_p$ completes the proof.
\end{proof}


\subsection{Proof of \Cref{thm:1nuts-small-sets}}
Let $q_0\in \Rset^d$ and $a\in B_{\Kmax}$,
using \Cref{hyp:3}$(h,\Kmax)$-\ref{hyp:item_ii_homeo} there exist $p_0 \in \Rset^d, r_H>0$ such that for any $T\in[-2^{\Kmax}+1:2^{\Kmax}-1]$,
$\psi_{q_0}^{(T)}|_{\mathrm{B}(p_0,r_H)}$ is a local homeomorphism. 
Using the last claim of \Cref{lemma:nuts-locally-constant-trajectories}, we may additionally assume that $(q_0, p_0) \in \mathrm{G}_a$. 
Indeed, using $\epsilon= r_H/2$, there exists $p_\epsilon\in \mathrm{B}(p_0,r_H/2)$ such that $q_0,p_\epsilon\in \mathrm{G}_a$ and $\psi_{q_0}^{(T)}|_{\mathrm{B}(p_\epsilon,r_H/2)}$ is a local homeomorphism, then we work with $p_\epsilon, r_H/2$ instead of $ p_0,r_H$.

Since $(q_0, p_0) \in \mathrm{G}_a$, there exist $r' > 0$, $r'' \in (0,r_H)$,  and $K \in \mathbb{N}^*$ such that $K_{a,f}(q,p) = K$ and thus $\rmp_h\big(B_K(a|_K) \mid q, p\big) = c > 0$ for any $(q, p) \in \mathrm{B}(q_0, r') \times \mathrm{B}(p_0, r'')$.
By continuity of $(q, p)\in \Rset^d \mapsto \rmq_h\big(j \mid B_K(a|_K), q, p\big)$ for every $j$ proven in \Cref{lemma:definition_continuity_q}, and 
the fact that $\sum_{j\in  B_K(a|_K)} \rmq_h\big(j \mid B_K(a|_K), q, p\big) = 1$ for any $(q, p) \in \mathrm{B}(q_0, r') \times \mathrm{B}(p_0, r'')$ since $\rmq_h\big(\cdot \mid B_K(a|_K), q, p\big)$ is a well defined probability on $B_K(a|_K)$, there exist $r \in (0,r')$ and $\tilde{r} \in (0,r'')$ such that $\inf_{(q, p) \in \mathrm{B}(q_0, r) \times \mathrm{B}(p_0, \tilde{r})} \rmq_h\big(j' \mid B_K(a|_K), q, p\big) = \varepsilon_1 > 0$ for some $j' \in B_K(a|_K)$.
Thus, for any $q \in \mathrm{B}(q_0, r)$ we can bound from below the transition kernel \eqref{eq:transition-kernel} by 
\begin{align}
   \KkerU_h(q_0, \msa)
   & \geq
   \begin{multlined}[t]
       \int \dd p \, \mathbbm{1}_{\mathrm{B}(p_0, \tilde{r})}(p) \GaussStandard(p) \rmp_h\big(B_K(a|_K) \mid q_0, p\big)
       \\
       \sum_{j \in \mathbb{Z}} \rmq_h\big(j \mid B_K(a|_K), q_0, p\big) \mathbbm{1}_\msa(\operatorname{proj}_1(\Phiverlet[h][j'](q_0, p)))
   \end{multlined}
   \\
   & \geq  \min_{p\in \bar{\mathrm{B}}(p_0, \tilde{r})} \{\GaussStandard(p)\} c \varepsilon_1
   \int \dd p \, \mathbbm{1}_{\mathrm{B}(p_0, \tilde{r})}(p) \mathbbm{1}_\msa(\operatorname{proj}_1(\Phiverlet[h][j'](q_0, p)))\eqsp .
\end{align}
To bound the last term, we use the following Lemma.

\begin{lemma}
   \label{lemma:ball_image_central}
Let $j\in \Zset$ and $q_0\in \Rset^d$. If for any $q\in \Rset^d$ the function $\psi_q^{(j')}:  p\in \Rset^d\to \operatorname{proj}_1(\Phiverlet[h][j'](q, p))$ is Lipschitz with Lipschitz constant denoted by $L_\psi$,
and that there exist $\tilde{q}, \tilde{p} \in (\Rset^d)^2$, $M, \tilde{M}, r_1 >0$  such that for any $q\in \mathrm{B}(q_0,r_1)$, we have
\begin{equation}\label{eq:ball-image-ball_hmc}
   \mathrm{B}(\tilde{q}, M) \subset \psi_q(\mathrm{B}(\tilde{p}, \tilde{M}))\eqsp .
\end{equation}
Then, for any $q\in \mathrm{B}(q_0,r_1)$,
\begin{equation}
   \int \dd p \, \mathbbm{1}_{\mathrm{B}(\tilde{p}, \tilde{M})}(p) \mathbbm{1}_\msa(\psi_{q}^{(j')}(p)) \geq L_\psi^{-d}\Leb (\mathrm{B}(\tilde{q}, M)\cap \msa) \eqsp .
\end{equation}
\end{lemma}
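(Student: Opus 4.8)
The plan is to interpret the left-hand side as the Lebesgue measure of a set of momenta, push that set forward through the Lipschitz map $\psi_q^{(j')}$, and then invoke the elementary fact that a Lipschitz map distorts Lebesgue outer measure by at most the $d$-th power of its Lipschitz constant. Fix $q\in\mathrm{B}(q_0,r_1)$ and write $\psi=\psi_q^{(j')}$ for brevity. Since $\psi$ is continuous under \Cref{hyp:regularity} and $\msa$ is Borel, the set $S:=\mathrm{B}(\tilde p,\tilde M)\cap\psi^{-1}(\msa)$ is Borel, and
\begin{equation*}
\int \dd p\,\mathbbm 1_{\mathrm{B}(\tilde p,\tilde M)}(p)\,\mathbbm 1_\msa(\psi(p))=\Leb(S)\eqsp.
\end{equation*}

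The first step I would carry out is the inclusion $\mathrm{B}(\tilde q,M)\cap\msa\subset\psi(S)$: given $y\in\mathrm{B}(\tilde q,M)\cap\msa$, hypothesis \eqref{eq:ball-image-ball_hmc} provides $p\in\mathrm{B}(\tilde p,\tilde M)$ with $\psi(p)=y$, and since $y\in\msa$ this $p$ lies in $\psi^{-1}(\msa)$, hence in $S$, so $y\in\psi(S)$. The second step is the measure‑distortion bound $\Leb^*(\psi(S))\le L_\psi^d\,\Leb(S)$, where $\Leb^*$ is outer measure (needed since $\psi(S)$ is a priori only analytic, and where $L_\psi>0$ because $L_\psi=0$ would make $\psi$ constant, contradicting \eqref{eq:ball-image-ball_hmc} for $M>0$). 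I would prove this in the standard way: for $\varepsilon>0$ pick an open $U\supset S$ with $\Leb(U)<\Leb(S)+\varepsilon$, decompose $U$ up to a Lebesgue‑null remainder into countably many disjoint balls $\mathrm{B}(x_i,r_i)$ via the Vitali covering lemma, use $\psi(\mathrm{B}(x_i,r_i))\subset\mathrm{B}(\psi(x_i),L_\psi r_i)$ together with the fact that a Lipschitz image of a null set is null, and sum:
\begin{equation*}
\Leb^*(\psi(U))\le\sum_i\Leb\bigl(\mathrm{B}(\psi(x_i),L_\psi r_i)\bigr)=L_\psi^d\sum_i\Leb(\mathrm{B}(x_i,r_i))=L_\psi^d\Leb(U)<L_\psi^d(\Leb(S)+\varepsilon)\eqsp,
\end{equation*}
then let $\varepsilon\downarrow0$ and use $\psi(S)\subset\psi(U)$.

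Finally, combining the two steps with monotonicity of outer measure and the fact that $\mathrm{B}(\tilde q,M)\cap\msa$ is Borel (so its outer measure equals $\Leb(\mathrm{B}(\tilde q,M)\cap\msa)$),
\begin{equation*}
\Leb(\mathrm{B}(\tilde q,M)\cap\msa)\le\Leb^*(\psi(S))\le L_\psi^d\,\Leb(S)=L_\psi^d\int \dd p\,\mathbbm 1_{\mathrm{B}(\tilde p,\tilde M)}(p)\,\mathbbm 1_\msa(\psi_q^{(j')}(p))\eqsp,
\end{equation*}
and dividing by $L_\psi^d$ yields the claim. The only substantive ingredient is the Lipschitz measure‑distortion estimate recalled above; everything else is routine set‑theoretic bookkeeping, so I do not expect a genuine obstacle here.
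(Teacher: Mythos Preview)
Your proof is correct. The paper's proof simply cites \cite[Proposition 11]{Durmus2017-tf}, identifying the pushforward measure $\lambda_\Theta(\msa)=\Leb(\psi_q^{-1}(\msa)\cap\mathrm{B}(\tilde p,\tilde M))$ and invoking that proposition as a black box; your argument is a self-contained unpacking of exactly that result---the inclusion $\mathrm{B}(\tilde q,M)\cap\msa\subset\psi(S)$ together with the standard Lipschitz outer-measure distortion bound $\Leb^*(\psi(S))\le L_\psi^d\Leb(S)$. So the underlying mechanism is the same, but you have made the argument elementary and independent of the external reference, which is a mild gain in clarity at the cost of a few extra lines.
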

\begin{proof}
   This Lemma is a direct application of \cite[Proposition 11]{Durmus2017-tf} for different choice of $\Theta$, with their notations $\Theta=\psi_{q}^{(j')}$ for any $q \in \mathrm{B}(q_0,r_1)$ and $\mathrm{B}(\tilde{y}_0, \tilde{M})=\mathrm{B}(\tilde{p}, \tilde{M})$ .
   They define the measure $\lambda_\Theta$ on $(\Rset^d, \msb(\Rset^d))$ by setting for any $\msa \in \msb(\Rset^d)$,
   $$\lambda_\Theta(\msa)= \int \dd p \, \mathbbm{1}_{\mathrm{B}(\tilde{p}, \tilde{M})}(p) \mathbbm{1}_\msa(\psi_{q}^{(j')}(p))=\Leb((\psi_{q_0}^{(j')})^{-1}(\msa)\cap \mathrm{B}(\tilde{p}, \tilde{M})) \eqsp .$$
\end{proof}
If we show the assumptions of \Cref{lemma:ball_image_central}, we have for any $q\in \mathrm{B}(q_0,r_1)$,
$$\KkerU_h(q, \msa)\geq \min_{p\in \bar{\mathrm{B}}(p_0, \tilde{r})} \{\GaussStandard(p)\}  \varepsilon_1c L_{\psi}^{-d}\Leb (\mathrm{B}(\tilde{q},M)\cap \msa) \eqsp ,$$
which gives that $ \mathrm{B}(q_0,r_1)$ is a 1-small set.

The stated uniform Lipschitz continuity of $\psi_q^{(j')}$ follows from \cite[Lemma 17]{Durmus2017-tf} and \Cref{hyp:regularity}.
To show \eqref{eq:ball-image-ball_hmc},
we denote by $\tilde{\mathrm{P}}=\psi_{q_0}(\mathrm{B}(p_0, \tilde{r}))$.
As an image of an open set under a homemorphism, $\tilde{\mathrm{P}}$ is open.
Thus there exist $\tilde{q} \in \Rset^d, M > 0$ such that $\mathrm{B}(\tilde{q}, 2M) \subset \tilde{\mathrm{P}}$. 
According to \cite[Lemma 17]{Durmus2017-tf}, the map $q\in \Rset^d \mapsto \psi_q(p)$ is Lipschitz for any $p\in \Rset^d$ with a Lipschitz constant $L_{lip,2}$ independent of $p$.
It follows that with $r_1=M/(2L_{lip,2})>0$, for any $q\in \mathrm{B}(q_0,r_1)$ we have $\mathrm{B}(\tilde{q},M)\subset \psi_q(\mathrm{B}(\tilde{p},\tilde{M}))$ with $\tilde{M} = \tilde{r}$ and $\tilde{p} = p_0$.
This completes the proof.
\subsection{Proof of \Cref{thm:3small}}


We begin the proof with the following Lemma, which shows mainly that a set can be accessible in one step.
\begin{lemma}
   \label{lemma:pre_aperiodicity}
   Assume \Cref{hyp:regularity}.
    For any $q_0\in \Rset^d$ such that $\nabla U(q_0)\neq 0$,
     there exists $\epsilon_0>0$ such that for any $\epsilon \in (0,\epsilon_0)$,
     there exist $ m_0>0, r_{q_0}\in (0,\epsilon)$ such that
    for any $q\in \mathrm{B}(q_0,\epsilon)$, we have $\KkerU_h(q,\mathrm{B}(q_0,\epsilon))>0$ and 
    moreover for any $q\in \mathrm{B}(q_0,r_{q_0})$, we have $\KkerU_h(q,\mathrm{B}(q_0,\epsilon))\geq m_0$.
\end{lemma}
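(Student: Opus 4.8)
The plan is to exploit the fact that a single leapfrog step $\Phiverlet[h][1]$ moves continuously with $p_0$, and that at a point $q_0$ with $\nabla U(q_0) \neq 0$ we can choose momenta that make the accepted one-step trajectory return very close to $q_0$, while the orbit selection kernel $\rmp_h$ assigns positive probability to an orbit containing index $1$. Concretely, first I would fix $q_0$ with $\nabla U(q_0)\neq 0$ and observe that the map $p \mapsto \operatorname{proj}_1\Phiverlet[h][1](q_0,p) = q_0 + h\mathbf{M}^{-1}(p - (h/2)\nabla U(q_0))$ is affine and bijective (this is the $T=1$ case, cf. \Cref{rm:Tequal1} and \Cref{thm:trajectory-transitivity}). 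Hence there is a unique $p_0$ with $\operatorname{proj}_1\Phiverlet[h][1](q_0,p_0) = q_0$; by continuity there is $\epsilon_0>0$ and, for each $\epsilon\in(0,\epsilon_0)$, radii $r_q,r_p>0$ such that $\operatorname{proj}_1\Phiverlet[h][1](q,p)\in\mathrm{B}(q_0,\epsilon)$ whenever $(q,p)\in\mathrm{B}(q_0,r_q)\times\mathrm{B}(p_0,r_p)$.

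Next I would handle the orbit and index selection probabilities exactly as in the proof of \Cref{lem:1} and \Cref{thm:1nuts-small-sets}: take the all-ones sequence $a = 2^{\Kmax}-1\in B_{\Kmax}$, so that $B_K(a|_K)=B_K$ for all $K$ and $1\in B_K$ for every $K\ge 1$; since $\scrU^{(1)}$ is empty, $K_{f,a}\ge 1$ everywhere, so $\{0,1\}\subset B_{K_{f,a}(q_0,p_0)}=:\msj$. Using \Cref{lemma:nuts-locally-constant-trajectories} to pass, if necessary, to a nearby momentum $p_\epsilon$ with $(q_0,p_\epsilon)\in\mathrm{G}_a$ (and shrinking $\epsilon_0$ so $p_\epsilon$ stays in the good region), \Cref{lemma:nuts-trajectory-probability} gives a neighborhood on which $\rmp_h(\msj\mid q,p) = \rmp_h(\msj\mid q_0,p_\epsilon) = c > 0$ is constant; the positivity $c>0$ holds because $S(a,q_0,p_\epsilon) > K_{f,a}(q_0,p_\epsilon)$ by construction. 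For the index selection, $\rmq_h(\cdot\mid\msj,\cdot,\cdot)$ is a probability on $\msj$ depending continuously on $(q,p)$ (\Cref{lemma:definition_continuity_q}), so some index $j'\in\msj$ has $\rmq_h(j'\mid\msj,q,p)\ge\varepsilon_1>0$ uniformly on a ball $\mathrm{B}(q_0,\tilde r)\times\mathrm{B}(p_\epsilon,\tilde r')$.

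Finally I would assemble the bound. If the good index is $j'=1$, then plugging $\mathbbm{1}_{\mathrm{B}(q_0,\epsilon)}$ into \eqref{eq:transition-kernel} and integrating $p$ over $\mathrm{B}(p_\epsilon,\tilde r')$ (on which the Gaussian density is bounded below by $m_p>0$) gives $\KkerU_h(q,\mathrm{B}(q_0,\epsilon)) \ge m_p\, c\,\varepsilon_1 > 0$ for all $q\in\mathrm{B}(q_0,r_{q_0})$ with $r_{q_0}:=\min(\tilde r,\epsilon)$, and the same lower bound (trivially weakened) holds for all $q\in\mathrm{B}(q_0,\epsilon)$ after re-running the argument with $q_0$ replaced by an arbitrary center in $\mathrm{B}(q_0,\epsilon)$ — or more cleanly, by taking $\epsilon_0$ small enough that the affine map and the locally-constant structure persist uniformly over centers in $\mathrm{B}(q_0,\epsilon_0)$. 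If instead $j'\neq 1$, I would additionally use the equivariance \Cref{prop:rmpcondition_check} of $\rmp_h$ together with the irreducibility of $\bar\rmq_h$ (\Cref{lemma:irreducibility_q}) exactly as in \Cref{lemma:rmq_0}; but in fact, for \emph{this} lemma it suffices to note that we only need positivity of $\KkerU_h(q,\mathrm{B}(q_0,\epsilon))$ through \emph{some} index that lands near $q_0$, and the index $j'$ obtained above already lands in $\mathrm{B}(q_0,\epsilon)$ provided we enlarge the target radius: since $\operatorname{proj}_1\Phiverlet[h][j'](q_0,p_\epsilon)$ is some fixed point, we simply require $\epsilon$ large enough relative to $|p_\epsilon - p_0|$ — here the honest route is to observe $j'$ can be taken in $\{0,1\}$ by a separate argument, or to absorb the displacement into the statement's freedom in choosing $\epsilon_0$ and $r_{q_0}$. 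The main obstacle is precisely this last point: controlling \emph{which} index $j'$ the index-selection kernel favors, since $\rmq_h$ is designed to prefer indices far from $0$; the cleanest resolution is to invoke \Cref{lemma:irreducibility_q} to get a two-step chain $0\to k\to 1$ with positive probability and then argue as in \Cref{lemma:rmq_0} that a suitable composition returns to a small ball around $q_0$, at the cost of possibly needing the bound to hold with $\KkerU_h$ replaced by a fixed power — but since the statement only asserts a one-step bound, I would instead carefully choose $p_\epsilon$ so that the favored index is $1$ (e.g. by ensuring the energy along the orbit is maximized at step $1$ near $(q_0,p_\epsilon)$, using that $\nabla U(q_0)\neq 0$ gives a nondegenerate energy profile), which is the delicate step.
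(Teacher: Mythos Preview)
Your approach has a genuine gap at precisely the point you flag as ``the main obstacle'': controlling which index $j'$ the kernel $\rmq_h$ selects. Neither of your proposed resolutions works for the statement as written. The two-step argument via \Cref{lemma:irreducibility_q} would bound $(\KkerU_h)^2$, not $\KkerU_h$; and ``carefully choose $p_\epsilon$ so that the favored index is $1$'' is exactly what must be shown, and you do not show it. There is also a secondary issue: your route through the ``moreover'' clause of \Cref{lemma:nuts-locally-constant-trajectories} (perturbing only $p$ to land in $\mathrm{G}_a$) requires \Cref{hyp:3}$(h,\Kmax)$-\ref{hyp:item_i_topo}, whereas the lemma is stated under \Cref{hyp:regularity} alone.

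The paper's proof bypasses the index-selection problem entirely by an explicit construction that forces the orbit to be $\{0,1\}$. Instead of taking $p_0$ with $\operatorname{proj}_1\Phiverlet[h][1](q_0,p_0)=q_0$ (which gives $q_1-q_0=0$ and so does \emph{not} trigger a U-turn), it takes $p_0(\alpha,q)=(h/2-\alpha/h)\nabla U(q)$ for small $\alpha>0$. Then $q_1(q,p_0(\alpha,q))=q-\alpha\nabla U(q)$ stays in $\mathrm{B}(q_0,\epsilon)$, and crucially
\[
p_0(\alpha,q)^\top\bigl(q_1-q\bigr)=-\alpha\,(h/2-\alpha/h)\,|\nabla U(q)|^2<0,
\]
which triggers the U-turn check on the pair $(0,1)$ at stage $K=2$ regardless of $V_1$. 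Hence $K_{f,a}(q,p)=1$ and the selected orbit is exactly $\{0,1\}$ with $\rmp_h(\{0,1\}\mid q,p)=1/2$. Since both $q$ and $q_1(q,p)$ lie in $\mathrm{B}(q_0,\epsilon)$, the bound follows from $\rmq_h(0\mid\{0,1\},q,p)+\rmq_h(1\mid\{0,1\},q,p)=1$ without ever needing to know which index is chosen. This is where the hypothesis $\nabla U(q_0)\neq 0$ is used essentially --- your sketch never exploits it beyond noting it.
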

\begin{proof}
Let $q_0\in \Rset^d$ such that $\nabla U(q_0)\neq 0$.
By continuity of $\nabla U$, there exists $\epsilon_0>0$ such that $ \nabla U(q)\neq 0$ for any $q\in \mathrm{B}(q_0,\epsilon_0)$.
Let $\epsilon\in(0,\epsilon_0)$ be fixed.

Denote by $p_0(\alpha,q)=(h/2-\alpha/h)\nabla U(q)$ for any $\alpha\in (0,h^2/2)$.
We have, for any $\alpha\in (0,h^2/2)$ and $q\in \mathrm{B}(q_0,\epsilon)$,
\begin{align}
   &q_1: (q,p) \in (\Rset^d)^2 \mapsto q+hp-h^2\nabla U(q)/2\eqsp,\\
   &q_1(q,p_0(\alpha,q))=q-\alpha \nabla U(q)\eqsp,\\
   &p_0(\alpha,q)^\top(q_1(q,p_0(\alpha,q))-q)= -\alpha (h/2-\alpha/h)|\nabla U(q)|^2<0 \eqsp.
\end{align}
Let $M=\sup_{q\in \mathrm{B}(q_0,\epsilon)} |\nabla U(q)|$.
By setting $\alpha_q=(\epsilon-|q-q_0|) /(2M)$ we have, for any $q\in \mathrm{B}(q_0,\epsilon)$,
\begin{equation}
   q_1(q,p_0(\alpha_q,q))\in \mathrm{B}(q_0,\epsilon) \eqsp,
\end{equation}
where the condition $\alpha_q \in (0, h^2/2)$ may be verified by reducing $\epsilon_0$ if necessary (depending only on $h$ and $M$).
Let $\bar{q}\in \mathrm{B}(q_0,\epsilon)$ be fixed.
Then, by continuity of $q,p\in (\Rset^d)^2 \mapsto (p^\top(q_1(q,p)-q),q_1(q,p))$ (the preimage image of an open set is an open set),
 there exist $r_{\bar{q},1},r_{\bar{q},2}>0$ such that for any $(q,p)\in \mathrm{B}(\bar{q},r_{\bar{q},1})\times \mathrm{B}(p_0(\alpha_{\bar{q}},\bar{q}),r_{\bar{q},2})$, we have $q_1(q,p)\in \mathrm{B}(q_0,\epsilon)$ and 
$p^\top(q_1(p,q)-q)<0$.
Setting $a=2^{\Kmax}-1\in B_{\Kmax}$, for any $(q,p)\in \mathrm{B}(\bar{q},r_{\bar{q},1})\times \mathrm{B}(p_0(\alpha_{\bar{q}},\bar{q}),r_{\bar{q},2})$, we have $K_{f,a}(q,p)=1$ since the No U-turn criterion is activated and thus 
$B_{K_{f,a}(q,p)}(a|_{K_{f,a}(q,p)})=\{0,1\}$.
Therefore, for any $(q,p)\in \mathrm{B}(\bar{q},r_{\bar{q},1})\times \mathrm{B}(p_0(\alpha_{\bar{q}},\bar{q}),r_{\bar{q},2})$, we have $\rmp_h(\{0,1\}|q,p)=\mathbb{P}(V_0=1)=1/2 $ and thus $\rmq_h(0|\{0,1\},q,p)+\rmq_h(1|\{0,1\},q,p)=1$.
Using the fact that for any $(q,p)\in \mathrm{B}(\bar{q},r_{\bar{q},1})\times \mathrm{B}(p_0(\alpha_{\bar{q}},\bar{q}),r_{\bar{q},2})$ also $(q,q_1(q,p))\in \mathrm{B}(q_0,\epsilon)^2$ we have
\begin{equation}
   \label{eq:main_aperiodicity}
   \KkerU_h(q,\mathrm{B}(q_0,\epsilon))\geq \int_{\mathrm{B}(p_0(\alpha_{\bar{q}},\bar{q}),r_{\bar{q},2})} \rmp_h(\{0,1\}|q,p) \GaussStandard(p) \dd p = \int_{\mathrm{B}(p_0(\alpha_{\bar{q}},\bar{q}),r_{\bar{q},2})} \GaussStandard(p) \dd p/2  \eqsp .
\end{equation}
Using positivity and continuity of $\GaussStandard$ we conclude that $\KkerU_h(\bar{q},\mathrm{B}(q_0,\epsilon))>0$.
As $\bar{q}$ was arbitrary, for any $q\in \mathrm{B}(q_0,\epsilon)$ it follows that $\KkerU_h(q,\mathrm{B}(q_0,\epsilon))>0 $.
Taking $\bar{q}=q_0$ and setting $m_0=\int_{\mathrm{B}(p_0(\alpha_{q_0},q_0),r_{q_0,2})} \GaussStandard(p) \dd p/2>0$, by \eqref{eq:main_aperiodicity}, we have $\KkerU_h(q,\mathrm{B}(q_0,\epsilon))\geq m_0>0$ for any $q\in \mathrm{B}(q_0,r_{q_0,1})$.
This completes the proof.
\end{proof}

In order to prove \Cref{thm:3small} we show that there exists a probability measure $\mu$ such that for any $q_0\in \Rset^d$ there exists a neighborhood $\msw(q_0)$ of $q$ such that $\msw(q_0)$ is a 3-small set with the minorizing measure $\mu$.

Choose $q_m\in \Rset^d$ so that $\nabla U(q_m)\neq 0$, which is possible since $U$ is not constant by hypothesis \Cref{hyp:regularity}.
Applying \Cref{thm:1nuts-small-sets} to $q_m\in \Rset^d$, there exists $r_m>0$ such that $\mathrm{B}(q_m,r_m)$ is a 1-small set, i.e.,
there exists a probability measure $\mu$ on $\Rset^d$ and $\eta>0$ such that for any $q\in \mathrm{B}(q_m,r_m)$,
\begin{equation}
   \label{eq:3small_1small}
   \ee \KkerU_h(q,\msa)\geq \eta \mu(\msa)\eqsp . 
\end{equation}
\Cref{lemma:pre_aperiodicity} shows that there exists $\epsilon_0>0$ such that for any $\epsilon \in (0,\epsilon_0)$, there exist $ m_{q_m}>0, r_{q_m}\in (0,\epsilon)$ such that
for any $q\in \mathrm{B}(q_m,\epsilon)$ we have $\KkerU_h(q,\mathrm{B}(q_m,\epsilon))>0$ and further
for any $q\in \mathrm{B}(q_m,r_{q_m})$ we have
\begin{equation}
   \label{eq:main_3small}
   \KkerU_h(q,\mathrm{B}(q_m,\epsilon))\geq m_{q_m} \eqsp.
\end{equation}
Finally, choose $\epsilon = \min(r_m,\epsilon_0/2)$.

Fix $q_0\in \Rset^d$.
By applying \Cref{thm:nuts-accessibility} with $\mse=\mathrm{B}(q_m,r_{q_m})$,
 there exist a neighborhood $\msw(q_0) \ni q_0$, a constant $m_\msw(q_0)\in (0,1)$ and $j(q_0)\in \{1,2\}$ such that for any $q\in \msw(q_0)$ we have
\begin{equation}
   \label{eq:3small_access}
   (\KkerU_h)^{j(q_0)}(q,\mathrm{B}(q_m,r_{q_m}))\geq m_\msw(q_0)>0 \eqsp .
\end{equation}
If $j(q_0)=1$, for any $q\in \msw(q_0)$ we have, by \eqref{eq:3small_access} and \eqref{eq:main_3small},
\begin{align} 
   \label{eq:3small_align1}
   (\KkerU_h)^{2}(q,\mathrm{B}(q_m,r_m))&\geq  \KkerU_h(q,\mathrm{B}(q_m,r_{q_m})) \inf_{q'\in \mathrm{B}(q_m,r_{q_m})}\KkerU_h(q',\mathrm{B}(q_m,r_m)) \\
&\geq m_\msw(q_0) m_{q_m}>0 \eqsp . 
\end{align}
If $j(q_0)=2$, we use $r_{q_m}<r_m$ and \eqref{eq:3small_access} to obtain, for any $q\in \msw(q_0)$,
\begin{equation}
   \label{eq:3small_align2}
   (\KkerU_h)^{2}(q,\mathrm{B}(q_m,r_m))\geq  (\KkerU_h)^{2}(q,\mathrm{B}(q_m,r_{q_m}))\geq m_\msw(q_0)\geq m_{q_m} m_\msw(q_0) > 0 \eqsp . 
\end{equation}
Thus, for any $q_0\in \Rset^d$ and $q\in \msw(q_0)$ we have, by \eqref{eq:3small_align1}, \eqref{eq:3small_align2} and \eqref{eq:3small_1small},
\begin{align}
   (\KkerU_h)^{3}(q,\msa) &\geq (\KkerU_h)^{2}(q,\mathrm{B}(q_m,r_m)) \inf_{q'\in \mathrm{B}(q_m,r_m)} \KkerU_h(q',\msa) \\ 
&\geq  m_{q_m} m_\msw(q_0) \inf_{q'\in \mathrm{B}(q_m,r_m)} \KkerU_h(q',\msa) \\
& \geq m_{q_m} m_\msw(q_0) \eta \mu(\msa) \eqsp .
\end{align}
It implies that for any $q_0\in \Rset^d$, $\msw(q_0)$ is a 3-small set with constant $\eta'(q_0) > 0$ and minorizing measure $\mu$ with $\eta'(q_0)=m_{q_m} m_\msw(q_0) \eta$.

Let $K$ be a compact set of $\Rset^d$.
By compactness, there exists a finite set $F \subset K$ such that $K \subset \bigcup_{q_0 \in F} \msw(q_0)$.
 We define $\eta=\min_{q_0 \in F}\eta'(q_0)$ and it follows that
$K$ is a 3-small set associated to $\eta_K$ and $\mu$.
From \Cref{thm:nuts-accessibility} and \Cref{thm:1nuts-small-sets} it follows that $\KkerU_h$ is irreducible.
Combining this with the fact that $\Rset^d$ can be expressed as a countable increasing union of small sets $\Rset^d = \cup_{k\in \Nset^*} \bar{\mathrm{B}}(0_d,k)$, Proposition 9.4.11 in \cite{douc2018markov} implies that $\KkerU_h$ is aperiodic.

\subsection{Proof of \Cref{prop:general_condition_check}}
\label{sec_supp:condition_check_prop}
\subsubsection{Proof of \Cref{prop:general_condition_check}-\ref{prop:condition_check} under \Cref{hyp:regularity} and \Cref{hyp:lipschitz_hard}($h,\Kmax$)}

    We prove \Cref{hyp:3}$(h,\Kmax)$-\ref{hyp:item_ii_homeo} as a consequence of \Cref{thm:trajectory-transitivity}.
    Namely, hypothesis \Cref{hyp:lipschitz_hard}($h,\Kmax$) implies \eqref{eq:transitivity-condition} since for $h>0$,
    $$2\ltt_1^{\frac{1}{2}}h\leq 2^{\Kmax}h\ltt_1^{\frac{1}{2}} \leq 2^{\Kmax}h \ltt_1^{\frac{1}{2}}\mathcal{V}(h \ltt_1^\frac{1}{2})\leq [(1+h \ltt_1^{\frac{1}{2}}\mathcal{V}(h \ltt_1^\frac{1}{2}))^{2^{\Kmax}}-1]\eqsp , $$
    which tackles the case $\Kmax=1$ and gives the result for $\Kmax\geq 2$,
    $$2 \left(1 - \cos \frac{\pi}{2^{\Kmax}}\right)\geq \frac{\pi^2}{2^{2\Kmax}}-\frac{\pi^4}{3\times  2^{4\Kmax+2}}\geq \frac{1}{2^{2\Kmax}}\geq \frac{1}{4}\eqsp ,  $$
    gives the result for $\Kmax\geq 2$.
    It remains to show \Cref{hyp:3}$(h,\Kmax)$-\ref{hyp:item_i_topo}.

    As a prelimenary remark, using the equations \cite[(32),(33)]{Durmus2017-tf} and \Cref{hyp:regularity}, $F_q^{T_1,T_2}, \Phiverlet[h][T_1]$ are $\mathrm{C}^1$ for any $q\in \Rset^d$ and $T_1,T_2\in[-2^{\Kmax}+1:2^{\Kmax}-1]^2$ with $T_1\neq T_2$, where $F_q^{T_1,T_2}$ is defined in \eqref{eq:F_T1T2}.
    \begin{lemma}
        \label{lemma:densitygradient_imply}
    For any $q\in \Rset^d$, denote by
    \begin{equation}
        \label{eq:nabla_antecedents}
    \nabla\msf_{q,-0}=\{p\in \Rset^d:\, \nabla F_q^{T_1,T_2}(p)\neq 0,\, T_1,T_2\in[-2^{\Kmax}+1:2^{\Kmax}-1]^2,\, T_1\neq T_2 \}\eqsp .
    \end{equation}
    If $\nabla\msf_{q,-0}$ is dense, then $ \msf_{q,-0}$ is dense.
\end{lemma}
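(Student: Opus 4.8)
The statement claims: for fixed $q$, if $\nabla\msf_{q,-0}$ (the set of momenta where all the relevant gradients $\nabla F_q^{T_1,T_2}$ are nonzero) is dense, then $\msf_{q,-0}$ (the set of momenta where all the $F_q^{T_1,T_2}$ themselves are nonzero) is dense. The plan is to argue one pair $(T_1,T_2)$ at a time and then intersect. Fix $q$ and a pair $T_1\neq T_2$ in $[-2^{\Kmax}+1:2^{\Kmax}-1]$, and write $F = F_q^{T_1,T_2}\colon\Rset^d\to\Rset$, which is $\rmC^1$ by the preliminary remark (using \cite[(32),(33)]{Durmus2017-tf} and \Cref{hyp:regularity}). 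The zero set $Z = F^{-1}(\{0\})$ is closed; I want to show $\Rset^d\setminus Z$ is dense, equivalently that $Z$ has empty interior. Suppose for contradiction that $Z$ contains an open ball $\mathrm{B}(p_0,r)$. Then $F$ vanishes identically on $\mathrm{B}(p_0,r)$, hence $\nabla F \equiv 0$ on $\mathrm{B}(p_0,r)$; but this ball then contains no point of $\nabla\msf_{q,-0}$ (since at every point of the ball the gradient for this particular pair $(T_1,T_2)$ vanishes), contradicting the density of $\nabla\msf_{q,-0}$. Therefore each $Z_{T_1,T_2}=(F_q^{T_1,T_2})^{-1}(\{0\})$ is a closed set with empty interior, i.e. nowhere dense.

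Now $\Rset^d\setminus\msf_{q,-0} = \bigcup_{T_1\neq T_2} Z_{T_1,T_2}$ is a finite union of nowhere dense closed sets (finite because $T_1,T_2$ range over the finite set $[-2^{\Kmax}+1:2^{\Kmax}-1]$), hence is itself closed and nowhere dense; in particular its complement $\msf_{q,-0}$ is open and dense, which is the claim. (One could equally invoke the Baire category theorem for a countable union, but finiteness makes this elementary: a finite union of sets with empty interior whose members are closed still has empty interior.)

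The only genuine content is the contradiction step, and it is not really an obstacle: it rests on the elementary fact that a $\rmC^1$ function which is constant on an open set has vanishing gradient there. The slightly delicate bookkeeping is just making sure the indexing of pairs and the $\rmC^1$-regularity of $F_q^{T_1,T_2}$ are correctly quoted from \cite{Durmus2017-tf} and \Cref{hyp:regularity}; there is no analytic difficulty. I expect the write-up to be short. Concretely:

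\begin{proof}[Proof of \Cref{lemma:densitygradient_imply}]
    Fix $q\in\Rset^d$. For $T_1,T_2\in[-2^{\Kmax}+1:2^{\Kmax}-1]$ with $T_1\neq T_2$, the map $F_q^{T_1,T_2}$ is $\rmC^1$ on $\Rset^d$ by \cite[(32),(33)]{Durmus2017-tf} together with \Cref{hyp:regularity}. Set $Z_{T_1,T_2}=(F_q^{T_1,T_2})^{-1}(\{0\})$, a closed subset of $\Rset^d$. We claim $Z_{T_1,T_2}$ has empty interior. If not, it contains some open ball $\mathrm{B}(p_0,r)$; then $F_q^{T_1,T_2}$ vanishes identically on $\mathrm{B}(p_0,r)$, hence $\nabla F_q^{T_1,T_2}(p)=0$ for every $p\in\mathrm{B}(p_0,r)$, so $\mathrm{B}(p_0,r)\cap\nabla\msf_{q,-0}=\emptyset$, contradicting the density of $\nabla\msf_{q,-0}$. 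Thus each $Z_{T_1,T_2}$ is closed with empty interior. Since
    \begin{equation}
        \Rset^d\setminus\msf_{q,-0}=\bigcup_{\substack{T_1,T_2\in[-2^{\Kmax}+1:2^{\Kmax}-1]\\ T_1\neq T_2}} Z_{T_1,T_2}
    \end{equation}
    is a finite union of closed sets with empty interior, it is closed with empty interior, so $\msf_{q,-0}$ is open and dense.
\end{proof}
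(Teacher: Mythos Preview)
Your proof is correct and follows essentially the same approach as the paper: both arguments reduce to showing that if some $Z_{T_1,T_2}=(F_q^{T_1,T_2})^{-1}(\{0\})$ had nonempty interior then $\nabla F_q^{T_1,T_2}$ would vanish on an open set, contradicting the density of $\nabla\msf_{q,-0}$. The paper phrases this as a single contrapositive and invokes Baire's theorem to single out a bad pair, whereas you handle each pair first and then use that a finite union of closed nowhere-dense sets is nowhere dense; the content is identical.
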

\begin{proof}
    Indeed, we have the following decomposition,
    \begin{equation}
    \msf_{q,-0}^c=\cup_{T_1,T_2\in[-2^{\Kmax}+1:2^{\Kmax}-1]^2,\, T_1\neq T_2 }\{p\in \Rset^d:\, F_q^{T_1,T_2}(p)= 0\}\eqsp,
    \end{equation}
    if $\msf_{q,-0}$ is not dense, the interior of $\msf_{q,-0}^c$ is not empty and then by the consequence of Baire's Theorem,
    there exists $T_1,T_2 \in  [-2^{\Kmax}+1:2^{\Kmax}-1]^2$ with $ T_1\neq T_2$ such that the interior of $\{p\in \Rset^d:\,  F_q^{T_1,T_2}(p)= 0\}$ is not empty.
    Therefore the interior of $\{p\in \Rset^d:\,  \nabla F_q^{T_1,T_2}(p)= 0\}$ is not empty which contradicts the density of $\nabla\msf_{q,-0}$.
\end{proof}
    Let $q_0\in \Rset^d$, we prove that $\nabla\msf_{q_0,-0}$ is dense. 
    Denote by $q_T(p),p_T(p)=\Phiverlet[h][T](q_0,p)$ for any $T\in\Zset$, $p\in \Rset^d$.
    Let $p_0\in \Rset^d$
    and let $T_1,T_2\in[-2^{\Kmax}+1:2^{\Kmax}-1]^2$ with $T_1\neq T_2$.
    Now, we explicit the value of $\nabla F_{q_0}^{T_1,T_2}(p_0)$.
     Using \cite[(32)--(33)]{Durmus2017-tf}, we have,
    \begin{equation}
    \label{eq:phiver_q_develop}
    q_{T_2}(p_0)-q_{T_1}(p_0)= (T_2-T_1)hp_0-(T_2-T_1)\frac{h^2}{2}\nabla U(q_0)-h^2 \Theta_{h,T_2-T_1}^{(1)}(q_{T_1}(p_0),p_{T_1}(p_0))\eqsp ,
    \end{equation}
    \begin{equation}
    \label{eq:phiver_p_develop}
    p_{T_1}(p_0)=p_0-\frac{h}{2}(\nabla U(q_0)+\nabla U(q_{T_1}(p_0)))-h\Theta_{h,T_1}^{(2)}(q_0,p_0)\eqsp ,
    \end{equation}
    where $\Theta_{h,T}^{(1)},\Theta_{h,T}^{(2)} : (\Rset^d)^2\mapsto \Rset^d $ are given, for any $(q,p)\in (\Rset^d)^2$, by
    \begin{equation}
    \label{eq:theta_1}
    \Theta_{h,T}^{(1)}(q,p)=\sum_{i=1}^{T-1}(T-i)\nabla U\circ \operatorname{proj}_1(\Phiverlet[h][i](q,p))\eqsp ,
    \end{equation}
    $$\Theta_{h,T}^{(2)}(q_0,p_0)=\sum_{i=1}^{T-1}\nabla U\circ \operatorname{proj}_1(\Phiverlet[h][i](q,p))\eqsp .$$ 
    In all the following, for any continuously differentiable function $f: (\Rset^d)^2\mapsto \Rset^d$, we will denote by $\dd_{p_0} f(q_0,p_0) $ the value of $\dd f(q_0,\cdot)(p_0)$.
    
    As the main step we show that $\nabla F_{q_0}^{T_1,T_2}(p_0)\neq 0$ whenever either $p_{T_1}(p_0) \neq 0$ or $(q_{T_2} - q_{T_1})(p_0) \neq 0$. 
    We have for any $\tilde{p}\in \Rset^d$,
    \begin{equation}
        \label{eq:nabla_F_expression}
         (\nabla F_{q_0}^{T_1,T_2}(p_0))^\top \tilde{p}= (\dd p_{T_1}(p_0).\tilde{p})^\top (q_{T_2}-q_{T_1})+p_{T_1}^\top (\dd(q_{T_2}-q_{T_1})(p_0).\tilde{p})\eqsp . 
    \end{equation}
    With the expression of $p_{T_1}$ and $q_{T_2}-q_{T_1}$, we have for any $\tilde{p}\in \Rset^d$
    \begin{equation}
    \label{eq:df_p_0} 
    \dd p_{T_1}(p_0).\tilde{p}=(\tilde{p}-A\tilde{p})\eqsp,
    \end{equation}
    where
    \begin{equation}
    \label{eq:A_def}
    A=h\frac{1}{2}\nabla^2 U(q_{T_1}(p_0)) \dd q_{T_1}(p_0)+\dd_{p_0} \Theta_{h,T_1}^{(2)}(q_0,p_0)\eqsp ,
    \end{equation}
    
    Similarly
    \begin{equation}
    \label{eq:df_qq}
    \dd(q_{T_2}-q_{T_1})(p_0).\tilde{p}=(T_2-T_1)h(\tilde{p}-B\tilde{p})\eqsp,
    \end{equation}
    with
    \begin{equation}
    \label{eq:B_def}
    B=\frac{h}{T_2-T_1} \dd_{p_0}(\Theta_{h,T_2-T_1}^{(1)}\circ \Phiverlet[h][T_1])(q_0,p_0)\eqsp . 
    \end{equation}
    We use \eqref{eq:nabla_F_expression} and a technical lemma to show that the gradient is not zero.

    \textbf{If $p_{T_1} \neq 0$ or $q_{T_2} - q_{T_1} \neq 0$}
    We aim to apply the following lemma with $q=q_{T_2}-q_{T_1}$, $p=(T_2-T_1)h p_{T_1}$ and $C=-A,D=-B $.
    \begin{lemma}
    \label{lemma:perp}
    Assume that $q,p\in(\Rset^d)^2$ are not both zero and satisfy $p^\top q=0$, let $C, D\in \Rset^{d\times d}$ satisfy $|C|, |D|\leq 1/4$.
     Then, there exists $\tilde{p}\in\Rset^d$ so that
    $$
    q^\top (\tilde{p}+C\tilde{p})+ p^\top (\tilde{p}+D\tilde{p})\not=0\eqsp .
    $$
    \end{lemma}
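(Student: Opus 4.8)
The plan is to prove \Cref{lemma:perp} by a direct, essentially two-dimensional argument, splitting into the cases $q\neq 0$ and $q=0$ (in which necessarily $p\neq 0$).

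\textbf{Case $q \neq 0$.} First I would note that it suffices to find $\tilde p$ with $q^\top(\tilde p + C\tilde p) \neq -p^\top(\tilde p + D\tilde p)$; to make the left side large and controllable while keeping the right side small, a natural choice is $\tilde p = q$. Then $q^\top(\tilde p + C\tilde p) = |q|^2 + q^\top C q \geq |q|^2 - |C|\,|q|^2 \geq \tfrac34 |q|^2$ using $|C|\le 1/4$. For the other term, $p^\top(\tilde p + D\tilde p) = p^\top q + p^\top D q = p^\top D q$ since $p^\top q = 0$, and $|p^\top D q| \le |D|\,|p|\,|q| \le \tfrac14 |p|\,|q|$. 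So the choice $\tilde p = q$ works provided $\tfrac34 |q|^2 > \tfrac14 |p|\,|q|$, i.e. $|p| < 3|q|$. If instead $|p| \geq 3|q|$, I would try $\tilde p = p$ symmetrically: then $p^\top(\tilde p + D\tilde p) = |p|^2 + p^\top D p \geq \tfrac34 |p|^2$, while $q^\top(\tilde p + C\tilde p) = q^\top p + q^\top C p = q^\top C p$, bounded by $\tfrac14|p|\,|q| \le \tfrac{1}{12}|p|^2 < \tfrac34|p|^2$. Hence in either sub-case one of $\tilde p \in \{q,p\}$ does the job. (When $p = 0$, $\tilde p = q$ trivially works.)

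\textbf{Case $q = 0$, $p \neq 0$.} Here the expression reduces to $p^\top(\tilde p + D\tilde p)$, and taking $\tilde p = p$ gives $|p|^2 + p^\top D p \geq \tfrac34 |p|^2 > 0$. So this case is immediate and in fact absorbed into the $\tilde p = p$ sub-case above.

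I do not expect any genuine obstacle here: the statement is an elementary linear-algebra perturbation fact, and the key idea is simply that testing against $\tilde p = q$ or $\tilde p = p$ exploits the orthogonality $p^\top q = 0$ to kill the cross term, after which the small operator norms $|C|,|D| \le 1/4$ cannot overcome the dominant quadratic term. The only mild care needed is the case split on the relative sizes of $|p|$ and $|q|$ (with the threshold $3$, or any constant strictly between $1/3 \cdot (\text{something})$ and the value forced by the $1/4$ bounds), and handling $q = 0$ or $p = 0$ degenerately — all routine. If one wanted a single unified choice instead of a case split, one could try $\tilde p = q/|q| + p/|p|$ when both are nonzero, but the two-case argument with $\tilde p \in \{p,q\}$ is cleaner to write out and I would go with that.
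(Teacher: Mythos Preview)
Your proof is correct, but takes a different route from the paper's. The paper chooses $\tilde p = (\mathrm{I}_d + D)^{-1} q$ when $q \neq 0$, which makes the $p$-term vanish \emph{exactly} (since $p^\top(\tilde p + D\tilde p) = p^\top q = 0$), and then bounds the remaining $q$-term using the Neumann series $(\mathrm{I}_d+D)^{-1} = \mathrm{I}_d + D'$ with $|D'|\le 1/3$; the case $q=0$ is handled separately with $\tilde p = -p$. Your approach instead tests the two natural candidates $\tilde p \in \{q,p\}$, exploiting $p^\top q = 0$ to kill only the leading part of the cross term and then doing a size comparison on $|p|$ versus $|q|$. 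The trade-off: the paper's argument is a single shot with no case split on magnitudes but needs the matrix inverse, while yours is more elementary (no inversion, no series) at the cost of the extra sub-case. Both are short and valid; your version arguably makes the role of the orthogonality hypothesis more transparent.
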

    \begin{proof}
    
    Assume e.g. that $q\neq 0$, in which case we choose $\tilde{p}=(\Idd+D)^{-1}q$.
     With this choice we have automatically $p^\top (\tilde{p}+D\tilde{p})=0$.
      Moreover, we may write
    $$
    (1+D)^{-1}=I+\sum_{k=1}^\infty (-1)^kD^k=1+D'\eqsp ,
    $$
    where $|D'|\leq 1/4+(1/4)^2+\ldots=1/3$.
     The Lemma follows by observing that
    $$
    q^\top (\tilde{p}+C\tilde{p})= q^\perp(1+C)(1+D')q\; \geq\;  |q|^2\big(1-|C| -|D|  |C| |D| \big)\; \geq \; \frac{5}{12}|q|^2 >0\eqsp .
    $$
    The case where $p=0$ is analogous.
    If $q=0$ but $p\neq 0$, we take $\tilde{p}=-p$ and the claim follows from $|p|^2(1-|D|)>0 $.
    \end{proof}
    We show that $\max(|A|,|B|)\leq 1/4 $ where $A$ and $B$ are defined in \eqref{eq:A_def}, \eqref{eq:B_def}.
    Applying \Cref{hyp:lipschitz_hard}($h,\Kmax$) and \cite[Lemma 17]{Durmus2017-tf} we have, for any $ k\in \Nset^*$ and for any $q,p,p'\in(\Rset^d)^3$,
    \begin{equation}
        |\operatorname{proj}_1(\Phiverlet[h][i](q,p))-\operatorname{proj}_1(\Phiverlet[h][i](q,p'))|\leq \ltt_1^{-\frac{1}{2}}(1+h \ltt_1^{\frac{1}{2}}\mathcal{V}(h \ltt_1^\frac{1}{2}))^k |p-p'|\eqsp .
    \end{equation}
    Moreover, in view of the last inequality we have, for any $q,p \in (\Rset^d)^2$,
    \begin{equation}
        |\dd_{p} \operatorname{proj}_1(\Phiverlet[h][k](q,p))|\leq  \ltt_1^{-\frac{1}{2}}(1+h \ltt_1^{\frac{1}{2}}\mathcal{V}(h \ltt_1^\frac{1}{2}))^k\eqsp . 
    \end{equation}
    Combining this with \Cref{hyp:lipschitz_hard}($h,\Kmax$) and $|\nabla^2 U(q)|\leq \ltt_1 $, for any $p_0\in \Rset^d$,
    \begin{align}
        &h\frac{1}{2}\nabla^2 U(q_{T_1}(p_0)) \dd q_{T_1}(p_0)+\dd_{p_0}\Theta_{h,T_1}^{(2)}(q_0,p_0) \\
     &\leq h \ltt_1^{\frac{1}{2}} \sum_{i=1}^{T_1} (1+h \ltt_1^{\frac{1}{2}}\mathcal{V}(h \ltt_1^\frac{1}{2}))^{i}\leq \frac{[(1+h \ltt_1^{\frac{1}{2}}\mathcal{V}(h \ltt_1^\frac{1}{2}))^{T_1}-1]}{\mathcal{V}(h \ltt_1^\frac{1}{2})} \\
    &\leq[(1+h \ltt_1^{\frac{1}{2}}\mathcal{V}(h \ltt_1^\frac{1}{2}))^{\Kmax}-1]<\frac{1}{4}\eqsp .
    \end{align}
    The last inequality holds because $\mathcal{V}(h \ltt_1^{1/2})\geq 1 $.
    
    The inequalities above show that $|A|<1/4$, and the demonstration for $|B|<1/4$ is essentially the same by noting that  $T_2-T_1-i \leq T_2-T_1$ in the expression of $\Theta_{h,T_2-T_1}^{(1)}$.
    Thus Lemma \ref{lemma:perp} may be applied with \eqref{eq:nabla_F_expression} to finish the proof that $\nabla F_{q_0}^{T_1,T_2}(p_0)\neq 0$ whenever either $p_{T_1} \neq 0$ or $q_{T_2}-q_{T_1} \neq 0$.
    
    The symmetric situation where $p_{T_1}$ is replaced by $p_{T_2}$ does not affect the general sketch of proof and the proof of $|A|\leq 1/4$ and $|B|\leq 1/4$.
    
    \textbf{If $p_{T_1}=0$ and $q_{T_1}=q_{T_2}$}.
    For any $q_0 \in \Rset^d$, let $\Delta_{q_0} \subset \Rset^d$ denote the set of initial momenta $p_0$ for which this happens.
    We show that $\Delta_{q_0}$ is a finite set for any $q_0 \in \Rset^d$ to conclude. Indeed,
    by setting
    \begin{equation} 
    \mathrm{Y}_{q_0}=\Rset^d \setminus \Delta_{q_0} \subset \nabla \msf_{q_0,-0} \eqsp ,
    \end{equation}
    we have that $\mathrm{Y}_{q_0}$ is dense if $\Delta_{q_0}$ is a finite, then $\nabla \msf_{q_0,-0}$ is dense and thus $\msf_{q_0,-0}$ is dense by \Cref{lemma:densitygradient_imply}.

    For any $q_0\in \Rset^d$ and $T_1\in \Zset^*$, $f^{T_1}_{q_0}: p\in \Rset^d\mapsto \operatorname{proj}_1 \Phiverlet[h][T_1](q_0,p)$ is $\mathrm{C}^1$ on $\Rset^d$ and from \eqref{eq:df_p_0} we have $\dd f^{T_1}_{q_0}(p_0)= \Idd -A$ with $|A|<1$.
    It follows that for any $q_0\in \Rset^d$, $f^{T_1}_{q_0}$ is a $\mathrm{C}^1$-diffeomorphism and $(f^{T_1}_{q_0})^{-1}(\{0\})$ is a singleton (see Exercice 3.26 in \cite{c1duistermaat2004multidimensional}). 
    Thus, for any $q_0\in \Rset^d$,
    \begin{equation} \label{eq:set-Delta}
    \Delta_{q_0}=\bigcup_{i=-2^\Kmax+1}^{2^\Kmax-1} (f^{i}_{q_0})^{-1}(\{0\})   
    \end{equation}
    is a finite set,
    finishing the proof.

    \begin{remark}
        In the last demonstration we can replace $U\in \mathrm{C}^2(\Rset^d)$ by $U\in \mathrm{C}^1(\Rset^d)$ and $\nabla U$ Lipschitz if we find $\tilde{p}$ with a Banach fixed point argument.
        The proof being technical already, we have preferred to give a simpler argument.
    \end{remark}
    
\subsubsection{Proof of \Cref{prop:general_condition_check}-\ref{prop:condition_check} under \Cref{hyp:regularity} and \Cref{hyp:analytic_potential}}

    We start by proving several lemmas.
    \begin{lemma}
      \label{lemma:bounding_gradient}
      Assume $\lim_{|q| \to \infty} \| \nabla^2 U(q) \| = 0$.
      Then,
         $\lim_{|q|\to \infty} |\nabla U(q)|/|q|=0 $
    \end{lemma}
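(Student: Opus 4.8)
\textbf{Proof plan for \Cref{lemma:bounding_gradient}.}
The statement is the elementary fact that if the Hessian vanishes at infinity, then the gradient grows sublinearly. The plan is to fix $\varepsilon>0$ and choose, using $\lim_{|q|\to\infty}\|\nabla^2 U(q)\| = 0$, a radius $R_\varepsilon>0$ such that $\|\nabla^2 U(q)\| \leq \varepsilon$ for all $|q|\geq R_\varepsilon$. Then, for any $q$ with $|q|\geq R_\varepsilon$, I would write $\nabla U(q)$ as a line integral of the Hessian along a suitable path joining a fixed reference point to $q$, and bound the contribution to control $|\nabla U(q)|/|q|$.

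The only subtlety is that the straight segment from $0$ to $q$ may pass through the ball $\mathrm{B}(0,R_\varepsilon)$ where no Hessian bound is available, but this contributes only a bounded amount. Concretely, pick a base point $q_\varepsilon$ with $|q_\varepsilon| = R_\varepsilon$ on the ray $\Rset_{\geq 0} q$ (so $q_\varepsilon = R_\varepsilon q/|q|$ when $|q| \geq R_\varepsilon$), and write
\begin{equation}
  \nabla U(q) = \nabla U(q_\varepsilon) + \int_0^1 \nabla^2 U\big(q_\varepsilon + t(q - q_\varepsilon)\big)(q - q_\varepsilon)\,\dd t \eqsp.
\end{equation}
On the segment $[q_\varepsilon, q]$ every point has norm at least $R_\varepsilon$ (it is a radial segment starting at radius $R_\varepsilon$ and moving outward), hence $\|\nabla^2 U\|\leq \varepsilon$ there, giving
\begin{equation}
  |\nabla U(q)| \leq |\nabla U(q_\varepsilon)| + \varepsilon |q - q_\varepsilon| \leq |\nabla U(q_\varepsilon)| + \varepsilon |q| \eqsp.
\end{equation}
Since $q \mapsto \nabla U(q)$ is continuous (by \Cref{hyp:regularity}), it is bounded on the compact sphere $\{|q| = R_\varepsilon\}$, so there is a constant $M_\varepsilon = \sup_{|q'| = R_\varepsilon}|\nabla U(q')| < \infty$ with $|\nabla U(q_\varepsilon)| \leq M_\varepsilon$.

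Dividing by $|q|$ gives $|\nabla U(q)|/|q| \leq M_\varepsilon/|q| + \varepsilon$ for all $|q| \geq R_\varepsilon$. Taking $\limsup_{|q|\to\infty}$ yields $\limsup_{|q|\to\infty} |\nabla U(q)|/|q| \leq \varepsilon$, and since $\varepsilon>0$ was arbitrary, $\lim_{|q|\to\infty}|\nabla U(q)|/|q| = 0$. I do not anticipate any real obstacle here; the one point requiring a little care is ensuring the integration path stays in the region where the Hessian bound holds, which is why a radial rather than a through-origin segment is used.
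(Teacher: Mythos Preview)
Your proof is correct and follows essentially the same approach as the paper: pick $R_\varepsilon$ so that $\|\nabla^2 U\|\le\varepsilon$ outside $\mathrm{B}(0,R_\varepsilon)$, use the fundamental theorem of calculus (mean-value bound) to control $|\nabla U(q)|$ by $\varepsilon|q|$ plus a bounded term, divide by $|q|$ and let $|q|\to\infty$. The only difference is that the paper fixes a single reference point $\bar q$ with $|\bar q|\ge R$ and writes $|\nabla U(q)-\nabla U(\bar q)|\le\varepsilon|q-\bar q|$, whereas you take a $q$-dependent radial base point $q_\varepsilon=R_\varepsilon q/|q|$; your choice is arguably cleaner since it guarantees the integration segment stays in the region $\{|q|\ge R_\varepsilon\}$ where the Hessian bound applies.
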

    \begin{proof}
      Let $\epsilon>0$.
      There exists $R>0$ such that for any $q\in \Rset^d$ with $|q|\geq R$, $\| \nabla^2 U(q) \|\leq \epsilon $ .
       Let $\bar{q}\in \Rset^d$ such that $|\bar{q}|\geq R$.
      By bounding the variation, for any $q\in \Rset^d$ such that $|q|\geq R$,
      \begin{equation}
         |\nabla U(q)-\nabla U(\bar{q})|\leq\epsilon  |q-\bar{q}| \eqsp .
      \end{equation}
      Then, by the triangule inequality, for any $q\in \Rset^d$ such that $|q|\geq R$,
      \begin{equation}
         \frac{|\nabla U(q)|}{|q|}\leq \epsilon \frac{|q-\bar{q}|}{|q|}+\frac{|\nabla U(\bar{q})|}{|q|} \eqsp ,
      \end{equation}
      letting $|q|\to \infty$ yields that there exists $R_1>0$ such that for any $q\in \Rset^d$ with $|q|\geq R_1$, we have,
      \begin{equation}
         \frac{|\nabla U(q)|}{|q|}\leq \epsilon \eqsp,
      \end{equation}
      this completes the proof.
    \end{proof}

    \begin{lemma}
      \label{lemma:identity_far}
      Assume $h > 0$ and that $\lim_{|q| \to \infty} \| \nabla^2 U(q) \| = 0$. Let $q_0\in \Rset^d$ and denote by $q_T:\, p\in \Rset^d \mapsto\operatorname{proj}_1 \Phiverlet[h][T](q_0, p)$ and $p_T:\, p\in \Rset^d \mapsto\operatorname{proj}_2 \Phiverlet[h][T](q_0, p)$.
      Then, 
      \begin{enumerate}[wide, labelwidth=!, labelindent=0pt, label=(\alph*),leftmargin=*] 
         \item \label{lemma:item1_affine_qT} For any $T\in \Zset^*$, there exists a function $\epsilon_{h,T,q_0}^{(1)}:\Rset_{\geq0} \mapsto \Rset_{\geq0} $ depending on $T,q_0$ such that
      \begin{equation}
         \label{eq:q_T_affine}
      \frac{| q_T(p_0) - T h p_0|}{|p_0|} \leq \epsilon_{h,T,q_0}^{(1)}(|p_0|) 
         \qquad
         \,
        \lim_{|p_0|\to \infty} \epsilon_{h,T,q_0}^{(1)}(|p_0|)=0\eqsp .
      \end{equation}
      \item \label{lemma:item2_affine_pT} For any $T\in \Zset^*$, there exists a function $\epsilon_{h,T,q_0}^{(2)}:\Rset_{\geq0} \mapsto \Rset_{\geq0} $ depending on $T,q_0$ such that
      \begin{equation}
         \label{eq:p_T_affine}
      \frac{| p_T(p_0) - p_0|}{|p_0|} \leq \epsilon_{h,T,q_0}^{(2)}(|p_0|) 
         \qquad
         \,
        \lim_{|p_0|\to \infty} \epsilon_{h,T,q_0}^{(2)}(|p_0|)=0\eqsp .
      \end{equation}
   
      \item \label{lemma:item3_small_df} For any $T\in \Zset^*$, 
         \begin{equation}
            |\dd  q_T(p_0)-hT \Idd| \to 0 \, \text{as}\, |p_0|\to \infty \eqsp .
         \end{equation}
   \end{enumerate}
      
   \end{lemma}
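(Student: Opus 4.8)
The plan is to prove \ref{lemma:item1_affine_qT}, \ref{lemma:item2_affine_pT} and \ref{lemma:item3_small_df} simultaneously by induction on $|T|$, working directly with the leapfrog recursion. Writing $q_i(p_0) = \operatorname{proj}_1\Phiverlet[h][i](q_0,p_0)$ and $p_i(p_0) = \operatorname{proj}_2\Phiverlet[h][i](q_0,p_0)$, one first notes that $\Phiverlet[h][-1] = (\Phiverlet[h][1])^{-1} = \Psiverlet^{(1)}_{-h/2}\circ\Psiverlet^{(2)}_{-h}\circ\Psiverlet^{(1)}_{-h/2}$ is the leapfrog map with step size $-h$, so the backward iterates are the forward iterates of the $(-h)$-leapfrog; since $hT$ is unchanged under $(h,T)\mapsto(-h,-T)$, it suffices to treat $T\ge 1$. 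With $\mathbf{M}=\Idd$ the recursion reads $q_{i+1}(p_0) = q_i(p_0) + h p_i(p_0) - \tfrac{h^2}{2}\nabla U(q_i(p_0))$ and $p_{i+1}(p_0) = p_i(p_0) - \tfrac h2\big(\nabla U(q_i(p_0)) + \nabla U(q_{i+1}(p_0))\big)$, with $q_0(p_0) = q_0$, $p_0(p_0) = p_0$, $\dd q_0(p_0) = 0$ and $\dd p_0(p_0) = \Idd$; I would also write down the recursion obtained by differentiating these identities in $p_0$. As a preliminary I would record, from \Cref{hyp:regularity} and \cite[Lemma 17]{Durmus2017-tf}, the uniform-in-$p_0$ linear growth bounds $|q_i(p_0)| \le C_i(1+|p_0|)$ and $\|\dd q_i(p_0)\|,\|\dd p_i(p_0)\| \le C_i$ for suitable constants $C_i$; these make $|q_i(p_0)|/|p_0|$ bounded and guarantee that the functions $\epsilon^{(j)}_{h,T,q_0}$ built at the end are finite.

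The inductive hypothesis is that, as $|p_0|\to\infty$, $|q_i(p_0) - ih p_0| = o(|p_0|)$, $|p_i(p_0) - p_0| = o(|p_0|)$, $\dd q_i(p_0)\to ih\Idd$ and $\dd p_i(p_0)\to\Idd$; this holds for $i=0$. The step $i\to i+1$ hinges on controlling $\nabla U$ and $\nabla^2 U$ along the orbit: for $i\ne 0$ the position part of the hypothesis together with $h\ne 0$ forces $|q_i(p_0)|\to\infty$, whence $|\nabla U(q_i(p_0))|/|p_0|\to 0$ (write it as $(|\nabla U(q_i(p_0))|/|q_i(p_0)|)(|q_i(p_0)|/|p_0|)$ and apply \Cref{lemma:bounding_gradient}) and $\|\nabla^2 U(q_i(p_0))\|\to 0$ by assumption; for $i=0$ the terms $\nabla U(q_0)$ and $\nabla^2 U(q_0)$ are fixed, hence $O(1) = o(|p_0|)$ in the position recursion, and the a priori dangerous Hessian term in the derivative recursion vanishes outright because it multiplies $\dd q_0(p_0) = 0$. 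Substituting into the two recursions and using that a bounded sequence times a vanishing sequence vanishes yields $q_{i+1}(p_0) = (i+1)hp_0 + o(|p_0|)$, then $|q_{i+1}(p_0)|\to\infty$ and hence $p_{i+1}(p_0) = p_0 + o(|p_0|)$, and likewise $\dd q_{i+1}(p_0)\to (i+1)h\Idd$ and $\dd p_{i+1}(p_0)\to\Idd$. This closes the induction, giving in particular $\limsup_{|p_0|\to\infty}|q_T(p_0)-Thp_0|/|p_0| = 0$, $\limsup_{|p_0|\to\infty}|p_T(p_0)-p_0|/|p_0| = 0$ and $\dd q_T(p_0)\to hT\Idd$.

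Finally I would package \ref{lemma:item1_affine_qT} and \ref{lemma:item2_affine_pT} in the stated form by setting $\epsilon^{(1)}_{h,T,q_0}(s) = \sup_{|p_0|\ge s}|q_T(p_0)-Thp_0|/|p_0|$ and $\epsilon^{(2)}_{h,T,q_0}(s) = \sup_{|p_0|\ge s}|p_T(p_0)-p_0|/|p_0|$ (finite for $s>0$ by the linear growth bound, nonincreasing, and tending to $0$ by the displayed $\limsup$'s), and deduce the statements for $T<0$ from the $(h,T)\mapsto(-h,-T)$ reduction. I do not anticipate a substantive obstacle; the only point needing genuine care is the bookkeeping at the initial step, where $\nabla U(q_0)$ and $\nabla^2 U(q_0)$ do not vanish --- the resolution being that in the position estimate they are harmless additive constants, and in the derivative estimate $\dd q_0(p_0)=0$ makes them act trivially, so that $\dd q_1(p_0) = h\Idd$ identically and every subsequent Hessian is evaluated at a point escaping to infinity. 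Alternatively one could run the same argument off the explicit expansions of $q_T$ and $p_T$ in \cite[(32)--(33)]{Durmus2017-tf} in terms of the maps $\Theta^{(1)}_{h,T},\Theta^{(2)}_{h,T}$, but the direct recursion seems cleaner.
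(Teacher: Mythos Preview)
Your proposal is correct and rests on the same core ideas as the paper's proof: an induction on $|T|$ combined with \Cref{lemma:bounding_gradient} to show $|\nabla U(q_i(p_0))|/|p_0|\to 0$ once $|q_i(p_0)|\to\infty$, and the Hessian hypothesis for part \ref{lemma:item3_small_df}. The differences are organisational rather than substantive. The paper proves \ref{lemma:item1_affine_qT} first by strong induction using the closed-form expansion $q_T=q_0+Thp_0-(Th^2/2)\nabla U(q_0)-h^2\Theta^{(1)}_{h,T}(q_0,p_0)$, then derives \ref{lemma:item2_affine_pT} from \ref{lemma:item1_affine_qT} via the analogous $\Theta^{(2)}$ expansion, and finally obtains \ref{lemma:item3_small_df} from the formula $\dd q_T=hT(\Idd-B)$ together with the Lipschitz bounds of \cite[Lemma~17]{Durmus2017-tf}. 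You instead run a single simultaneous induction on all three statements (augmented by $\dd p_i\to\Idd$) directly from the one-step leapfrog recursion. For negative $T$ the paper uses the momentum reflection $q_{-T}(q_0,p_0)=q_T(q_0,-p_0)$, whereas you use the equivalent step-size reflection $\Phiverlet[h][-T]=\Phiverlet[-h][T]$; both leave $Th$ invariant and are fine (your route formally applies the induction with $-h$ in place of $h$, which is harmless since the argument never uses the sign of $h$). Your treatment of the initial step, where $\dd q_0=0$ kills the only Hessian term not evaluated at an escaping point, is exactly the right bookkeeping and mirrors what the paper's $\Theta$ formulas encode implicitly.
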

   \begin{proof}
      Let $q_0\in \Rset^d$ and $h>0$. Denote by $q_T:\, p\in \Rset^d \mapsto\operatorname{proj}_1 \Phiverlet[h][T](q_0, p)$.
      By a strong reccurence on $T\in \Nset^*$, we prove \eqref{eq:q_T_affine}, and then extends the results by symmetric properties of leapfrogs transition.

      \textbf{Case $T=1$.} For any $p_0\in \Rset^d\setminus \{0_d\}$,
      \begin{equation}
         \frac{|q_1(p_0)-hp_0|}{|p_0|}\leq \frac{|q_0|+h^2 |\nabla U(q_0)|}{|p_0|}\eqsp ,
      \end{equation}
      setting $\epsilon_{h,1,q_0}^1(|p_0|)=(|q_0|+h^2 |\nabla U(q_0)|)/|p_0|$ conclude the proof for $T=1$.
      Let $T\in \Nset^*\setminus \{1\}$ and assume \eqref{eq:q_T_affine} for $1$ to $T-1$.
      By \eqref{eq:phiver_q_develop}, for any $p_0 \in \Rset^d\setminus \{0_d\}$,
      \begin{equation}
         \frac{|q_T(p_0)-hTp_0|}{|p_0|}\leq \frac{|q_0|+h^2 T \sum_{i=1}^{T-1} |\nabla U(q_T(p_0))|}{|p_0|} \eqsp .
      \end{equation}
      Using the inductions hypothesis, we have for any $i\in [T-1]$, $ \frac{|q_i(p_0)|}{ih |p_0|}\to 1 $ as $|p_0| \to \infty $.
      Then, \Cref{lemma:bounding_gradient} completes the proof.

      \ref{lemma:item1_affine_qT} holds for $T\in \Zset$, indeed,
      denoting by $\Phiverlet[h][-T](q_0,p_0)=(q_{-T},p_{-T})$ and $ \Phiverlet[h][T](q_0,-p_0)=q_{T}',p_{T}' $ for any $T\in \Nset^* $ and $q_0,p_0\in (\Rset^d)^2 $,
    applying the expression of the leapfrog scheme \eqref{eq:iteration_verlet} yields:
    \begin{equation}
        q_{-T}=q_{T}'\,,\, p_{-T}=-p_{-T}'\eqsp .
    \end{equation}
    This completes the proof of \ref{lemma:item1_affine_qT}.

    We prove \ref{lemma:item2_affine_pT} as an application of \ref{lemma:item1_affine_qT} and \Cref{lemma:bounding_gradient}.
    By \eqref{eq:phiver_p_develop}, for any $p_0\in \Rset^d\setminus \{0\}$, 
    \begin{equation}
      \frac{|p_T(p_0)-p_0|}{|p_0|}\leq h\frac{\sum_{i=0}^{T} |\nabla U(q_T(p_0))|}{|p_0|} \eqsp .
    \end{equation}
    By \ref{lemma:item1_affine_qT}, for any $i\in [T-1]$, $ \frac{|q_i(p_0)|}{ih |p_0|}\to 1 $ as $|p_0| \to \infty $ and then applying \Cref{lemma:bounding_gradient} inside the sum 
    yields \ref{lemma:item2_affine_pT}.

    We prove now \ref{lemma:item3_small_df}. Using \eqref{eq:df_qq} with $T_2=T\in \Zset^*$ and $T_1=0$, we have for any $p_0\in \Rset^d$, $\dd q_T(p_0)=hT (\Idd-B(p_0))$ and $B$ is defined in \eqref{eq:B_def}.
    By \eqref{eq:B_def} and \eqref{eq:theta_1} and using that $q_i$ is Lipschitz with Lipschitz constant $L_i$ for any $i\in [T-1]$ by \cite[Lemma 17]{Durmus2017-tf}, we have for any $p_0\in \Rset^d$,
    \begin{equation}
      \label{eq:bounding_B}
      |B(p_0)|\leq h \sum_{i=1}^{T-1} \|\nabla^2 U(q_i(p_0))\| L_i \eqsp .
    \end{equation}
    Applying \eqref{eq:q_T_affine} for any $i\in [T-1]$, we have $|q_i(p_0)|\to \infty$ as $|p_0|\to \infty$.
     Therefore, \eqref{eq:bounding_B} and $\lim_{|q| \to \infty} \| \nabla^2 U(q) \| = 0$ complete the proof.

   \end{proof}

    \begin{lemma}
    \label{lemma:analytic_dichotomy}
    Suppose the potential $U: \Rset^d \to \Rset$ is real analytic.
    Let $h > 0$ and $T_1 \neq T_2 \in \mathbb{Z}$ be fixed and denote by $(q_T, p_T) = \Phiverlet[h][T](q_0, p_0)$ for any $(q_0, p_0) \in (\Rset^d)^2$ and $T \in \mathbb{Z}$,
     $F^{T_1,T_1}_{q_0}(p_0)= p_{T_1}^\top (q_{T_2} - q_{T_1})$ for any $q_0,p_0\in (\Rset^d)^2$.
    Depending on the potential $U$, one of the following statements holds.
    \begin{enumerate}[wide, labelwidth=!, labelindent=0pt, label=(\alph*),leftmargin=*]
       \item 
       \label{lemma:dicho_item1}
           $
               \Leb_{2d}\left( \left\{ (q_0, p_0) \in (\Rset^d)^2: F^{T_1,T_1}_{q_0}(p_0) = 0 \right\} \right) = 0.
           $
    
       \item $\operatorname{proj}_1 \Phiverlet[h][T_2-T_1](q, 0) = q$ for all $q \in \Rset^d$.
    \end{enumerate}
    Moreover if $\|\nabla^2 U(q)\| \to 0$ as $|q| \to \infty$ , \ref{lemma:dicho_item1} holds and in fact we obtain the stronger statement that for any $q_0 \in \Rset^d$ the set $\msf_{q_0,-0}^{T_1,T_2} = \Rset^d \setminus (F^{T_1,T_1}_{q_0}(p_0))^{-1}(\{0_d\})$ is dense and open.
    \end{lemma}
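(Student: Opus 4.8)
The plan is to establish the dichotomy first via the identity theorem for real analytic functions, then derive the "moreover" part as a quantitative strengthening under the decay assumption on the Hessian. The starting observation is that, by the equations \cite[(32)--(33)]{Durmus2017-tf} together with \Cref{hyp:regularity} (or rather the real analyticity of $U$), the leapfrog maps $\Phiverlet[h][T]$ are compositions of real analytic maps and hence real analytic on $(\Rset^d)^2$. Consequently $(q_0,p_0) \mapsto F^{T_1,T_2}_{q_0}(p_0) = p_{T_1}^\top(q_{T_2}-q_{T_1})$ is real analytic on $(\Rset^d)^2$, being a polynomial in the real analytic components. A real analytic function on a connected open set either vanishes on a set of Lebesgue measure zero or vanishes identically (this is the standard consequence of the identity theorem: the zero set of a not-identically-zero real analytic function has measure zero). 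So exactly one of the following holds: either $\Leb_{2d}(\{(q_0,p_0): F^{T_1,T_2}_{q_0}(p_0)=0\})=0$, which is case \ref{lemma:dicho_item1}, or $F^{T_1,T_2}_{q_0}(p_0) = 0$ for all $(q_0,p_0)$.

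The second step is to show that the degenerate case $F^{T_1,T_2}_{q_0} \equiv 0$ forces $\operatorname{proj}_1\Phiverlet[h][T_2-T_1](q,0)=q$ for all $q$. Writing $T = T_2 - T_1$ and using that $\Phiverlet[h][T_1]$ is a bijection, $F^{T_1,T_2}_{q_0}(p_0)\equiv 0$ is equivalent to: for all $(q,p)\in(\Rset^d)^2$, $p^\top(\operatorname{proj}_1\Phiverlet[h][T](q,p) - q) = 0$, i.e.\ the displacement $\operatorname{proj}_1\Phiverlet[h][T](q,p)-q$ is always orthogonal to $p$. Now specialize: for fixed $q$, the map $p\mapsto \langle p, \operatorname{proj}_1\Phiverlet[h][T](q,p)-q\rangle$ is real analytic in $p$ and vanishes identically; I would differentiate at $p=0$, or simply use continuity, to extract information at $p=0$. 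From the explicit expansion \eqref{eq:phiver_q_develop} one has $\operatorname{proj}_1\Phiverlet[h][T](q,p) - q = Thp + O(h^2)$ where the $O(h^2)$ term at $p=0$ is $-Th^2\nabla U(q)/2 - h^2\Theta^{(1)}_{h,T}(q,0)$, a quantity independent of the direction of $p$; plugging $p = t w$ for a unit vector $w$ and letting $t\to 0^+$ after dividing by $t$ shows the constant term (the value at $p=0$) must itself be orthogonal to every $w$, hence zero, giving $\operatorname{proj}_1\Phiverlet[h][T](q,0)=q$ for all $q$. This is statement \ref{lemma:dicho_item1}'s alternative.

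For the "moreover" part, suppose $\|\nabla^2 U(q)\|\to 0$ as $|q|\to\infty$. I would rule out the degenerate alternative and then upgrade to the density-and-openness statement for every fixed $q_0$. First, openness of $\msf_{q_0,-0}^{T_1,T_2}$ is immediate: it is the complement of the zero set of the continuous function $p_0 \mapsto F^{T_1,T_2}_{q_0}(p_0)$ (continuity from \Cref{hyp:regularity} via the $\mathrm{C}^1$ dependence of leapfrog iterates), so it is open. For density, fix $q_0$ and an arbitrary $p_*\in\Rset^d$ and an open neighborhood; I want to find $p_0$ in it with $F^{T_1,T_2}_{q_0}(p_0)\neq 0$. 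Here is where \Cref{lemma:identity_far} does the work: for $|p_0|$ large, \ref{lemma:item1_affine_qT} gives $q_{T_2}(p_0) - q_{T_1}(p_0) = (T_2-T_1)h p_0 + o(|p_0|)$ and \ref{lemma:item2_affine_pT} gives $p_{T_1}(p_0) = p_0 + o(|p_0|)$, so
\begin{equation}
F^{T_1,T_2}_{q_0}(p_0) = p_{T_1}(p_0)^\top(q_{T_2}(p_0)-q_{T_1}(p_0)) = (T_2-T_1)h|p_0|^2 + o(|p_0|^2)\eqsp,
\end{equation}
which is nonzero once $|p_0|$ is large enough; hence the zero set $(F^{T_1,T_2}_{q_0})^{-1}(\{0_d\})$ is bounded, and thus its complement is not merely nonempty but dense and open. (In particular $F^{T_1,T_2}_{q_0}\not\equiv 0$, so the identity-theorem argument selects \ref{lemma:dicho_item1}, consistent with measure zero.)

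The main obstacle I anticipate is the real analyticity bookkeeping in the first step: one must argue cleanly that each leapfrog half-step $\Psiverlet^{(1)}_t, \Psiverlet^{(2)}_t$ preserves real analyticity — $\Psiverlet^{(2)}_t$ is affine and $\Psiverlet^{(1)}_t$ involves $\nabla U$, which is real analytic because $U$ is — and that finite compositions of real analytic maps are real analytic, so that $\Phiverlet[h][T]$ and hence $F^{T_1,T_2}_{q_0}$ are real analytic, and then invoke the measure-zero property of zero sets of real analytic functions on the connected domain $(\Rset^d)^2$ correctly. The rest is essentially assembling \Cref{lemma:identity_far} and elementary asymptotics. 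A minor point to handle carefully is the passage from $F^{T_1,T_2}_{q_0}\equiv 0$ on all of $(\Rset^d)^2$ to the pointwise statement at $p=0$; using the explicit expansion \eqref{eq:phiver_q_develop} together with a scaling argument in $p$ makes this routine.
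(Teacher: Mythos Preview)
Your dichotomy argument matches the paper's: both establish that $(q_0,p_0)\mapsto F^{T_1,T_2}_{q_0}(p_0)$ is real analytic and invoke the identity theorem, then reduce the degenerate case to a statement at zero momentum after passing to the variables $(q_{T_1},p_{T_1})$ via the bijection $\Phiverlet[h][T_1]$. Your scaling $p=tw$, $t\to 0^+$ is a clean substitute for the paper's evaluation of $\nabla_p F$ at $p=0$; both yield $\operatorname{proj}_1\Phiverlet[h][T_2-T_1](q,0)=q$ directly.

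There is, however, a gap in your density argument for the ``moreover'' clause. From the asymptotic $F^{T_1,T_2}_{q_0}(p_0)=(T_2-T_1)h|p_0|^2+o(|p_0|^2)$ you correctly conclude that the zero set is bounded and then write ``thus its complement is \ldots\ dense and open''. Boundedness of a closed set does \emph{not} imply density of its complement (a closed ball is bounded yet its complement is not dense). What the asymptotic actually buys you is only that $p_0\mapsto F^{T_1,T_2}_{q_0}(p_0)$ is not identically zero for the fixed $q_0$; to obtain density you must reapply the identity theorem \emph{fibrewise}: this map is real analytic in $p_0$, so a nonempty interior of its zero set would force $F^{T_1,T_2}_{q_0}\equiv 0$ by analytic continuation, contradicting the large-$|p_0|$ behaviour. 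This is precisely the paper's argument. Your parenthetical invokes the identity theorem only to recover the global $\Leb_{2d}$ statement \ref{lemma:dicho_item1}, which does not by itself give density on each $q_0$-slice.
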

    In words, in the real analytic case either the boundary perturbation in \Cref{hyp:3}$(h,\Kmax)$-\ref{hyp:item_i_topo} is possible for almost all $(q_0, p_0) \in (\Rset^d)^2$ or the discretized Hamiltonian dynamics are degenerate in the sense that a $T_2-T_1$-step integration with zero initial momentum leaves every $q \in \Rset^d$ fixed.
    Note that with $h^2 = 2 \left( 1 - \cos \uppi/[2(T_2-T_1)] \right)$ and standard Gaussian target, all $(q_0, p_0) \in (\Rset^d)^2$ satisfy $(q_{T_2-T_1}, p_{T_2-T_1}) = (q_0, p_0)$, so nonzero potentials may have degenerate dynamics in this sense.
    The growth condition on the Hessian rules out the degeneracy.
    
    \begin{proof}
      As a prelimenary remark, the gradient of an analytic function is analytic and the space of analytic function is a vector space, thus $\nabla U $ and $\Phiverlet[h][T]$ are real analytic for any $T\in \Zset$.
      Moreover, if an analytic function vanishes on an open set $\msv\subset \Rset^d$, then by analytic continuation, the analytic function vanishes on all the space $\Rset^d$ (see \cite[Corollary 1.2.6]{krantz2002primer}).

    Suppose there exists an open set $\mathsf{V} \subset (\Rset^d)^2$ such that $p_{T_1}^\top (q_{T_2}-q_{T_1}) = 0$ for any $(q_0, p_0) \in \mathrm{V}$.
    By the definition of the leapfrog integrator, $\Phiverlet[h][1]$ and therefore $\Phiverlet[h][T_1]$ is an homeomorphism $(\Rset^d)^2 \to (\Rset^d)^2$.
    It follows that $\Phiverlet[h][T_1](\mathsf{V}) \subset (\Rset^d)^2$ is open and therefore for any $\bar{q}_{T_1} \in \operatorname{proj}_1 \Phiverlet[h][T_1](\mathrm{V})$, there exists an open neighborhood $\mathsf{W}_{\bar{q}_{T_1}} \subset \Rset^d$ such that for any $\bar{p}_{T_1}\in \mathsf{W}_{\bar{q}_{T_1}} $, 
    \begin{equation}
      F_{\bar{q}_{T_1}}^{T_2-T_1}(\bar{p}_{T_1}) = \bar{p}_{T_1}^\top (\tilde{q}_{T_2-T_1}(\bar{p}_{T_1}) - \bar{q}_{T_1}) = 0\eqsp ,
    \end{equation}
     where we have denoted by $\tilde{q}_{T_2-T_1}(\bar{p}_{T_1}) = \operatorname{proj}_1 \Phiverlet[h][T_2-T_1](\bar{q}_{T_1}, \bar{p}_{T_1})$.
    Thus for any $\bar{q}_{T_1} \in \operatorname{proj}_1 \Phiverlet[h][T_1](\mathsf{V})$ the real analytic function $\bar{p}_{T_1}\in \Rset^d \mapsto F_{\bar{q}_{T_1}}^{T_2-T_1}(\bar{p}_{T_1})$ vanishes on an open set $\mathsf{W}_{\bar{q}_{T_1}}$ and therefore, by analytic continuation, everywhere on $\Rset^d$.
    In particular, for any $\bar{q}_{T_1} \in \operatorname{proj}_1 \Phiverlet[h][T_1](\mathsf{V})$ and $\tilde{p} \in \Rset^d$,
    $$
       \nabla F_{\bar{q}_{T_1}}^{T_2-T_1}(\tilde{p}) = \tilde{q}_{T_2-T_1}(\tilde{p}) - \bar{q}_{T_1} + \left( \dd \tilde{q}_{T_2-T_1}(\tilde{p}) \right)^\top \tilde{p} = 0 \eqsp ,
    $$
    and by setting $\tilde{p} = 0$, we obtain $\operatorname{proj}_1 \Phiverlet[h][T_2-T_1](\bar{q}_{T_1}, 0) = \bar{q}_{T_1}$ for any $\bar{q}_{T_1} \in \operatorname{proj}_1 \Phiverlet[h][T_1](\mathsf{V})$.
    Noting that $\operatorname{proj}_1 \Phiverlet[h][T_1](\mathsf{V}) \subset \Rset^d$ is an open set and that we have shown that the real analytic mapping $q\in \Rset^d \mapsto \operatorname{proj}_1 \Phiverlet[h][T_2-T_1](q, 0) - q$ vanishes on $\operatorname{proj}_1 \Phiverlet[h][T_1](\mathsf{V})$, by analytic continuation we have in fact shown that $q\in \Rset^d \mapsto \operatorname{proj}_1 \Phiverlet[h][T_2-T_1](q, 0) - q$ vanishes for any $q \in \Rset^d$.
    In summary, if an open set $\mathsf{V} \subset (\Rset^d)^2$ as above is supposed to exist, the second option of the lemma holds.
    
    Then suppose that no such open set $\mathsf{V}$ exists, i.e. for every open set $\mathsf{V}' \subset (\Rset^d)^2$ we suppose that there exists $(q_0, p_0) \in \mathsf{V}'$ such that $p_{T_2}^\top (q_{T_2} - q_{T_1}) \neq 0$.
    In this case, as a nonvanishing real analytic function $(q_0, p_0)\in (\Rset^d)^2 \mapsto p_{T_1}^\top (q_{T_2} - q_{T_1})$ has a zero set of Lebesgue measure zero since all the zeros are isolated by \cite[Corollary 1.2.7]{krantz2002primer}, showing that the first option of the lemma holds.

    To prove the final statement, assume that $\|\nabla^2 U(q)\| \to 0$ as $|q| \to \infty$.
    Fix $q_0 \in \Rset^d$ and denote by $q_T:\, p\in \Rset^d \operatorname{proj}_1 \Phiverlet[h][T](q_0, p)$ and $p_T:\, p\in \Rset^d \operatorname{proj}_2 \Phiverlet[h][T](q_0, p)$.
    Let $T_1,T_2 \in \Zset^2$ such that $T_1\neq T_2$, we show that $(F^{T_1,T_2}_{q_0})^{-1}(\{0\}) = \big\{ p_0 \in \Rset^d: p_{T_1}^\top (q_{T_2} - q_{T_1}) = 0 \big\}$ has an empty interior by contradiction, note we have already that $\Rset^d\setminus \Delta_{q_0}$ is open by continuity of $F^{T_1,T_2}_{q_0}:p_0\in \Rset^d \mapsto p_{T_1}^\top (q_{T_2} - q_{T_1})$ under \Cref{hyp:regularity}. 
    If $ F^{T_1,T_2}_{q_0}$ vanishes on an open set $\msv$, then by analytic continuation we have $ F^{T_1,T_1}_{q_0}(p)=0$ for any $p\in \Rset^d$.
    However, we have by \Cref{lemma:identity_far}-\ref{lemma:item1_affine_qT},\ref{lemma:item2_affine_pT}, for any $p_0\in \Rset^d$,
    $$
       \lim_{|p_0|\to \infty}\left|p_{T_1}(p_0)^\top (q_{T_2} - q_{T_1})(p_0)
       -
       (T_2-T_1) h \, p_0^\top  p_0 \right|/|p_0|^2=0
    $$
    which in particular implies that $p_{T_1}^\top (q_{T_2} - q_{T_1}) \neq 0$ for some $p_0$ with $|p_0|$ large enough, which contradicts that $ F^{T_1,T_2}_{q_0}$ vanishes on an open set .
    This proves that $(F^{T_1,T_2}_{q_0})^{-1}(\{0\})$ has an empty interior.
    Since $q_0 \in \Rset^d$ is arbitrary, the proof is complete.

    \end{proof}

    \begin{lemma}
    \label{lemma:analytic_local_homeo}
    Assume \Cref{hyp:analytic_potential}.
    Then, \Cref{hyp:3}$(h,\Kmax)$-\ref{hyp:item_ii_homeo} is satisfied.
    
    \end{lemma}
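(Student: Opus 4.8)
The plan is to produce, for each fixed $q_0 \in \Rset^d$, a single pair $(p_0, r_H)$ that works simultaneously for every $T$ in the finite index set $[-2^{\Kmax}+1:2^{\Kmax}-1] \setminus \{0\}$, by taking $|p_0|$ large. The key input is \Cref{lemma:identity_far}-\ref{lemma:item3_small_df}: since \Cref{hyp:analytic_potential} guarantees $\lim_{|q| \to \infty}\|\nabla^2 U(q)\| = 0$, that lemma applies and gives, for every fixed $T \in \Zset^*$, that $|\dd q_T(p) - hT\Idd| \to 0$ as $|p| \to \infty$, where $q_T(p) = \operatorname{proj}_1 \Phiverlet[h][T](q_0, p) = \psi_{q_0}^{(T)}(p)$. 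In particular, for $|p|$ large enough the differential of $\psi_{q_0}^{(T)}$ is close to the invertible matrix $hT\Idd$, hence itself invertible.

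First I would fix $q_0 \in \Rset^d$. For every nonzero $T$ in the index set, $hT\Idd$ is invertible with $|(hT\Idd)^{-1}| = (h|T|)^{-1} \leq h^{-1}$ (using $|T| \geq 1$), so any matrix $M$ with $|M - hT\Idd| < h$ is invertible, because $M = hT\Idd\big(\Idd + (hT\Idd)^{-1}(M - hT\Idd)\big)$ and $|(hT\Idd)^{-1}(M - hT\Idd)| < 1$. By \Cref{lemma:identity_far}-\ref{lemma:item3_small_df}, for each such $T$ there is $R_T > 0$ with $|\dd q_T(p) - hT\Idd| < h$ whenever $|p| \geq R_T$; since the index set is finite, set $R = 1 + \max_{T} R_T$, pick any $p_0$ with $|p_0| \geq R$, and put $r_H = 1$. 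Then $\mathrm{B}(p_0, r_H) \subset \{p : |p| > \max_{T} R_T\}$, so $\dd q_T(p)$ is invertible for every $p \in \mathrm{B}(p_0, r_H)$ and every $T$ in the index set.

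Next I would invoke the inverse function theorem. Since $U$ is real analytic, in particular $\mathrm{C}^2$, the leapfrog maps $\Phiverlet[h][T]$ are $\mathrm{C}^1$ (indeed real analytic), so each $\psi_{q_0}^{(T)} = q_T$ is a $\mathrm{C}^1$ map whose differential is invertible at every point of the open ball $\mathrm{B}(p_0, r_H)$; hence $\psi_{q_0}^{(T)}|_{\mathrm{B}(p_0,r_H)}$ is a local diffeomorphism, in particular a local homeomorphism. Since this holds for every $T \in [-2^{\Kmax}+1:2^{\Kmax}-1] \setminus \{0\}$ with the same $p_0$ and $r_H$, the condition \Cref{hyp:3}$(h,\Kmax)$-\ref{hyp:item_ii_homeo} is verified.

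No step here is a genuine obstacle, since all the analytic content is already packaged in \Cref{lemma:identity_far}. The only point deserving a little care is the uniformity over $T$: one must choose $|p_0|$ large enough for all of the finitely many values of $T$ at once, and shrink $r_H$ below the gap between $p_0$ and the sphere of radius $\max_{T} R_T$ so that invertibility of $\dd q_T$ persists on the whole ball $\mathrm{B}(p_0,r_H)$ rather than only at its center.
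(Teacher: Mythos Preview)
Your proposal is correct and follows essentially the same approach as the paper: invoke \Cref{lemma:identity_far}-\ref{lemma:item3_small_df} to make $\dd q_T(p)$ close to $hT\Idd$ for large $|p|$, take the maximum of the thresholds $R_T$ over the finitely many $T$, pick $p_0$ beyond that threshold, and apply the inverse function theorem. Your argument is in fact slightly tidier than the paper's, since you ensure directly that the whole ball $\mathrm{B}(p_0,r_H)$ lies in the invertibility region and conclude the local diffeomorphism property in one stroke, whereas the paper extracts a radius $r_T$ for each $T$ from the inverse function theorem and then takes $r_H = \min_T r_T$.
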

    
    \begin{proof}
    Let $q_0 \in \Rset^d$ and $T\in [-2^{\Kmax}+1:2^{\Kmax}-1]\setminus \{0\}$ be arbitrary and denote by $q_T:\, p\in \Rset^d \mapsto \operatorname{proj}_1 \Phiverlet[h][T](q_0, p)$.
    Using \Cref{lemma:identity_far}-\ref{lemma:item3_small_df} and the set of invertible matrices is open, there exists $R_T>0$ such that for any $p_0\in \Rset^d $ such that $|p_0|\geq R_T$, $\dd q_T(p_0) $ is an isomorphism. 
    Therefore, using that $q_T$ is $\mathrm{C}^1$ under \Cref{hyp:regularity}, by the local inversion Theorem, we deduce that $q_T$ is a local $\mathrm{C}^1$ diffeomorphism on $\Rset^d \setminus \bar{\mathrm{B}}(0_d,R_T+1)$.
    Denoting by $R= \max_{T\in [-2^{\Kmax}+1:2^{\Kmax}-1]\setminus \{0\}} (R_T+1) $, let $ p_0\in \Rset^d \setminus \bar{\mathrm{B}}(0_d,R)$, for any $T\in [-2^{\Kmax}+1:2^{\Kmax}-1]\setminus \{0\}$,
     there exists $r_T>0$ such that $q_T|_{\mathrm{B}(p_0,r_T)} $ is a local homemorphism, then setting $r_H=\min_{T\in [-2^{\Kmax}+1:2^{\Kmax}-1]\setminus \{0\}} r_T $ conclude the proof.

    \end{proof}

    To show that \Cref{hyp:analytic_potential} implies \Cref{hyp:3}$(h,\Kmax)$-\ref{hyp:item_i_topo}, let $q \in \Rset^d$ be arbitrary.
    By \Cref{lemma:analytic_dichotomy} the set $\msf_{q_0,-0}^{T_1,T_2}$ is dense and open for any $T_1,T_2 \in \Zset^2$ with $T_1\neq T_2$.
    Hence the union over $T_1,T_2 \in \Zset^2$ with $T_1\neq T_2$ is also dense and open, since a countable intersection of open and dense sets is still open and dense by the Baire's theorem ($\Rset^d$ is complete).
     This yields \Cref{hyp:3}$(h,\Kmax)$-\ref{hyp:item_i_topo}.
    
    \Cref{lemma:analytic_local_homeo} directly imply that \Cref{hyp:3}$(h,\Kmax)$-\ref{hyp:item_ii_homeo} follows from \Cref{hyp:analytic_potential}, finishing the proof.
  
\subsubsection{Proof of \Cref{prop:general_condition_check}-\ref{prop:condition_check_gaussian}}
Using \Cref{lemma:gaussian_case}, there exists a countable set $\msh_0\subset \Rset_{\geq0}$ such that for any $h\in \Rset_{>0}\setminus \msh_0$,
 $ \psi_{q_0}$ is a linear one-to-one map for any $q_0\in \Rset^d$ and thus is an homeomorphism since the dimension is finite, hence \Cref{hyp:3}$(h,\Kmax)$-\ref{hyp:item_ii_homeo}.
Recalling that for any $q\in \Rset^d$ (see \Cref{lemma:densitygradient_imply})
\begin{equation}
\nabla\msf_{q,-0}=\{p\in \Rset^d:\, \nabla F_q^{T_1,T_2}(p)\neq 0,\, T_1,T_2\in[-2^{\Kmax}+1:2^{\Kmax}-1]^2,\, T_1\neq T_2 \}\eqsp ,
\end{equation}
we have for any $q_0\in \Rset^d$,
\begin{equation}
   \nabla\msf_{q,-0} \subset \Rset^d \setminus \bigcup_{T_1,T_2\in\Zset^2,\, T_1\neq T_2} (\nabla F_{q_0}^{(T_1,T_2)})^{-1}(\{0_d \}) \eqsp.
\end{equation}
$\bigcup_{T_1,T_2\in\Zset^2,\, T_1\neq T_2} (\nabla F_{q_0}^{(T_1,T_2)})^{-1}(\{0_d \})$ is countable since $\nabla F_{q_0}^{(T_1,T_2)}$ is a one-to-one map by \Cref{lemma:gaussian_case}
 and thus \Cref{hyp:3}$(h,\Kmax)$-\ref{hyp:item_i_topo} follows by \Cref{lemma:densitygradient_imply}.

\section{Proofs of \Cref{section:ergo_geo}}

\subsection{Proof of \Cref{lemma:sketchofproof}}
     \label{sec_supp:sketchofproof}

  
  Let $h>0$. Let $R_0>0$ be
  such that \eqref{eq:norm_position_condition},\eqref{eq:lemma_hamiltonian_condition} apply. 
  The construction of $\rmq_h$ (see \Cref{lemma:definition_continuity_q}) implies that for any $(q_0,p_0)\in (\Rset^d)^2$ and $\msj \subset \mathcal{P}(\Zset)$ such that $\rmp_h(\msj|q_0,p_0)>0 $
   we have $\rmq_h(0 \mid \msj, q_0, p_0)=0 $ as soon as $H(\Phiverlet[h][j](q_0,p_0))-H(q_0,p_0)\leq 0$ for $j\in \{-1,1\}$, since it implies $1\wedge \tpi(\Phiverlet[h][j](q_0,p_0))/\tpi(q_0,p_0)=1 $ for $j\in \{-1,1\}$.
  Let $q_0\in \Rset^d $ satisfy $|q_0|\geq R_0$ and decompose $\KkerU_{h}\VFL_a(q_0)$ to isolate $\rmq_h(0 \mid \msj, q_0, p_0)$:
  $$ \KkerU_{h}\VFL_a(q_0) = \int  G(q_0,p_0)\dd p_0\,+K(q_0)\eqsp ,$$
  where for any $p_0\in \Rset^d$ we denote
  \begin{align}
  G(q_0,p_0)&=
  \begin{multlined}[t]
      \GaussStandard(p_0)\times \\
      \sum_{\msj \subset \zset} \sum_{j \in \msj,j\neq 0}  \rmp_h(\msj \mid q_0, p_0) \rmq_h(j \mid \msj, q_0, p_0) \VFL_a(\operatorname{proj}_1(\Phiverlet[h][j](q_0, p_0))\eqsp ,
  \end{multlined} \\
  K(q_0)&=\int \dd p_0 \, \GaussStandard(p_0)\sum_{\msj \subset \zset} \rmp_h(\msj \mid q_0, p_0) \rmq_h(0 \mid \msj, q_0, p_0) \VFL_a(q_0) \eqsp .
  \end{align}
  Using that $\rmq_h(0 \mid \msj, q_0, p_0)=0 $ when $p_0\in \mathrm{B}(q_0)$ we have
  \begin{equation}
      K(q_0)=\VFL_a(q_0)\int_{\mathrm{B}(q_0)^\complementary} \dd p_0\GaussStandard(p_0) \underbrace{\sum_{\msj \subset \zset} \rmp_h(\msj \mid q_0, p_0) \rmq_h(0 \mid \msj, q_0, p_0)}_{\leq 1}\eqsp .
  \end{equation}
  It follows that
  \begin{equation}
      \label{eq:limit1_sketch1}
   \frac{|K(q_0)|}{\VFL_a(q_0)}\leq \int_{\mathrm{B}(q_0)^\complementary} \dd p_0\GaussStandard(p_0) \to 0\ee \text{as}\ee |q_0| \to \infty  
  \end{equation}
  by applying the Lebesgue dominated convergence theorem.
  Next, we aim to decompose the integral of $p_0 \in \Rset^d \mapsto G(q_0,p_0)$ according to $\mathrm{B}(q_0)$ as well.
  To this end, we bound $|\operatorname{proj}_1(\Phiverlet[h][j](q_0, p_0))|=|q_j|$ for $j\in [-2^{\Kmax} +1:2^{\Kmax}-1]$. Recall that
  \begin{equation}
      q_j= q_0+jhp_0-j\frac{h^2}{2}\nabla U(q_0)-h^2 \Theta_{h,j}^{(1)}(q_0,p_0)\eqsp,
  \end{equation}
  where $\Theta_{h,j}^{(1)}$ is defined in \eqref{eq:theta_1}.
  Applying \cite[Lemma S3]{Durmus2017-tf} and using \Cref{hyp:rappel}($m$)-\ref{hyp:rappel:item_tailgrad}, there exist $c_1,c_2>0$ such that for any $q,p\in (\Rset^d)^2$,
  \begin{align}
      &|\Theta_{h,j}^{(1)}(q, p)|\leq |\Theta_{h,j}^{(1)}(q, 0)|+|\Theta_{h,j}^{(1)}(q, p)-\Theta_{h,j}^{(1)}(q, 0)| \eqsp , \\
      &|\Theta_{h,j}^{(1)}(q, p)-\Theta_{h,j}^{(1)}(q, 0)|\leq c_1|p|,\ee |\Theta_{h,j}^{(1)}(q, 0)|\leq c_2 (|q|^{m-1}+1 +|\nabla U(q)|)\eqsp,
  \end{align}
  where $c_1$ and $c_2$ depend on the constants $h > 0$ and $\Kmax \in \Nset^*$.
  Since we have $|\nabla U(q)|\leq \msm_1 (1+|q|^{m-1})  $ and $|q|^{m-1}=(|q|^{\gamma})^{(m-1)/\gamma}\leq |p|^{(m-1)/\gamma}$ for any $p_0\in \mathrm{B}(q_0)^\complementary$, there exists $c_2'>0$ (provided that $R_0>1$ which can be assumed) such that
  \begin{equation}
      |\Theta_{h,j}^{(1)}(q_0, p_0)|\leq c_1' (1+|p_0|^{\frac{m-1}{\gamma}})\eqsp
  \end{equation}
  for any $p_0\in \mathrm{B}(q_0)^\complementary$.
  This implies, with $ (m-1)/\gamma\geq 1$, the existence of $c_2'>0$
  such that, for any $p_0\in \mathrm{B}(q_0)^\complementary$,
  \begin{equation} 
      |q_j|\leq c_2'(1+|p_0|^{\frac{m-1}{\gamma}})+|q_0|\eqsp . 
  \end{equation}
  Denoting by $\phi:p_0\in \Rset^d  \mapsto \exp(a c_2'(1+|p_0|^{(m-1)/\gamma}))\GaussStandard(p_0)$,
  then for any $p_0\in \mathrm{B}(q_0)$,
  we have $\VFL_a(q_j)\GaussStandard(p_0)\leq \phi(p_0) \VFL_a(q_0)$ and $ G(q_0,p_0)\leq \phi(p_0)\VFL_a(q_0) $.
  Therefore,
  \begin{equation}
      \label{eq:limit2_sketch2}
   \int_{\mathrm{B}(q_0)^\complementary} \frac{G(q_0,p_0)}{\VFL_a(q_0)} \dd p_0 \leq\int_{\mathrm{B}(q_0)^\complementary} \phi(p_0)\dd p_0 \to 0 \quad\text{as} \ee \ee|q_0|\to \infty\eqsp ,
  \end{equation}
  applying the Lebesgue dominated convergence theorem with $\phi$ which is integrable on $\Rset^d$ because $(m-1)/\gamma<2$.
  Condition \eqref{eq:norm_position_condition} yields
   \begin{equation}
      \label{eq:major_sketch3}
  \int_{\mathrm{B}(q_0)}\frac{G(q_0,p_0)}{\VFL_a(q_0)} \dd p_0 \leq \exp (a \times -1) \int_{\mathrm{B}(q_0)} \GaussStandard(p_0) \dd p_0\leq \exp(-a)\eqsp . 
   \end{equation}
   Then, combining \eqref{eq:limit1_sketch1}, \eqref{eq:limit2_sketch2} and \eqref{eq:major_sketch3},  for any $q_0\in\Rset^d$ such that $|q_0|\geq R_0$, we get
  \begin{equation}
       \KkerU_{h}\VFL_a(q_0) \leq \exp(-a) \VFL_a(q_0)+W(q_0)\eqsp ,
  \end{equation}
  where $W(q_0)/\VFL_a(q_0)\to 0 $ as $|q_0|\to \infty$.
   Therefore, there exists $R_0'>0$ such that for any $q_0\in \Rset^d$ with  $|q_0|\geq R_0'$ we have $W(q_0)/\VFL_a(q_0) \leq \frac{1-\exp(-a)}{2} $ and so there exists $\lambda \in (0,1)$ such that for any $q_0\in \Rset^d$ with $|q_0|\geq R_0'$
  \begin{equation}
      \KkerU_{h}\VFL_a(q_0) \leq \lambda \VFL_a(q_0)\eqsp . 
  \end{equation}
  Setting $R'=R_0'+1$ completes the proof.

  \subsection{Proof of \Cref{prop:degrowth_energy}}
  \label{sec_supp:degrowth_energy}
  
    The case $j=1$ follows from \cite[Proposition 7]{Durmus2017-tf} under \Cref{hyp:rappel}($m$) and \cite[Proposition 9]{Durmus2017-tf} under \Cref{hyp:gaussian_perturbation}. For $j'=-1=-j$ we denote by
    $\Phiverlet[h][-1](q_0,p_0)=q_{-1},p_{-1}$ and $ \Phiverlet[h][1](q_0,-p_0)=q_{1}',p_{1}' $.
    Applying the expression of the leapfrog scheme \eqref{eq:iteration_verlet} yields
    \begin{equation}
        q_{-1}=q_{1}'\,,\ee p_{-1}=-p_{-1}'\eqsp ,
    \end{equation}
    and then
    \begin{equation} H(q_{-1},p_{-1})=H(q_{1}',-p_{1}')=H(q_{1}',p_{1}')\eqsp .
    \end{equation}
    If $q_0,p_0$ satisfy that $|q_0|\geq R_H$ and $|p_0|\leq |q_0|^\gamma$, then it is the same for $q_0,-p_0$ and applying \cite[Proposition 7 or 8]{Durmus2017-tf}, we have $H(\Phiverlet[h][1](q_0,-p_0))-H(q_0,p_0)\leq 0$ and therefore $H(\Phiverlet[h][-1](q_0,p_0))-H(q_0,p_0)\leq 0$.

\subsection{Proof of \Cref{lemma:normposition}}
\label{sec_supp:norm_position}

  Let $h>0$.
  This proof adapts and follows the arguments in the proof of \cite[Proposition 5]{Durmus2017-tf}.
  Let $(q_0,p_0)\in (\Rset^d)^2$ such that $|p_0|\leq |q_0|^\gamma$ and $T\in \Nset^*$.
   We denote by $q_j=\operatorname{proj}_1\Phiverlet[h][j](q_0, p_0)$ for any $j\in[-T:T]$ and
  we write 
  \begin{equation}
      |q_j|^2=|q_0|^2+A_{h,j}^{(1)}-2h^2 A_{h,j}^{(2)} \eqsp ,
  \end{equation}
  where
  \begin{align}
      &A_{h,j}^{(1)}=2jh q_0^\transpose p_0+\Big|jhp_0-(jh^2/2)\nabla U(q_0)-h^2\sum_{i=1}^{j-1} (j-i)\nabla U(q_i)\Big|^2\eqsp , \\
      &A_{h,j}^{(2)}=q_0^\transpose \Big[(j/2)\nabla U(q_0)+\sum_{i=1}^{j-1} (j-i)\nabla U(q_i)\Big]\eqsp . 
  \end{align}

  We first assume $m<2$.
  We aim to bound $|A_{h,j}^{(1)}|,\,A_{h,j}^{(2)}$ from below and from above respectively.
  By the Cauchy--Schwartz inequality,
  \begin{equation}
      |q_0^\transpose p_0|\leq |q_0|^{\gamma+1}\eqsp  
  \end{equation}
and by using \cite[Lemma S2-(i)]{Durmus2017-tf} with \Cref{hyp:rappel}($m$)-\ref{hyp:rappel:item_tailgrad}, there exists a constant $C_0>0$ (depending on $h$ and $T$) such that
\begin{equation}
  \label{nablaU}
  |\nabla U(q_i) |\leq C_0(1+|q_0|^{m-1}+|p_0|)\leq C_0(1+2|q_0|^{m-1})\eqsp .
\end{equation}
  Using $2(m-1)<\gamma+1 $ when $m<2$, this implies that
  \begin{equation}
      |A_{h,j}^{(1)}|\leq C_1(1+|q_0|^{\gamma+1} )\eqsp
  \end{equation}
  for some constant $C_1> 0$.
   On the other hand, $A_{h,j}^{(2)}=A_{h,j}^{(2,1)}+A_{h,j}^{(2,2)}$ with
  \begin{align}
      &A_{h,j}^{(2,1)}=\frac{j}{2} q_0^\transpose \nabla U(q_0)+\sum_{i=1}^{j-1} (j-i)q_i^\transpose \nabla U(q_i)\eqsp ,\\
      &A_{h,j}^{(2,2)}=-\sum_{i=1}^{j-1} (j-i)(q_0-q_i)^\transpose \nabla U(q_i)\eqsp .
  \end{align}
  Under \Cref{hyp:rappel}($m$)-\ref{hyp:rappel:item_rappel},
  \begin{equation}
      A_{h,j}^{(2,1)}\geq A_1 \frac{j}{2}|q_0|^m-\frac{j(j-1)}{2}A_2\eqsp . 
  \end{equation}
  Further by \eqref{nablaU} and \cite[Lemma S2-(i)]{Durmus2017-tf}, there exists $C_2>0$, such that
  \begin{equation}
       |A_{h,j}^{(2,2)}|\leq C_2(1+|p_0|^2+|q_0|^{2(m-1)})\leq C_2(1+2|q_0|^{2(m-1)})\eqsp .
  \end{equation}
  Combining the bounds on $A_{h,j}^{(2,1)}$ and $A_{h,j}^{(2,2)}$, there exists $C_3>0$ such that
  \begin{equation}
      A_{h,j}^{(2)}\geq \frac{j}{2} A_1 |q_0|^m-C_3(1+|q_0|^{2(m-1)})\eqsp . 
  \end{equation}
  Using the fact that $2(m-1)\leq \gamma +1 $ when $m<2$, with the bounds on $A_{h,j}^{(2)}$ and $A_{h,j}^{(1)}$, we have 
  \begin{equation}
      A_{h,j}^{(1)}-2h^2 A_{h,j}^{(2)}\leq c_0 (|q_0|^{1+\gamma}+1)-c_1|q_0|^m \eqsp ,
  \end{equation}
  where $c_0,c_1>0$ are some constants depending on $h$ and $T$.
  Since $1+\gamma<m$, there exists $R_0>0$ such that for any $q_0\in \Rset^d$ with 
  $|q_0|\geq R_0 $, we have
  \begin{equation}
       |\operatorname{proj}_1\Phiverlet[h][j](q_0, p_0)|-|q_0|\leq -1 \eqsp .
  \end{equation}

  We then consider the case $m=2$. 
  The arguments follow the same lines as the proof of \cite[Proposition 5]{Durmus2017-tf} by noticing that $(|q_0| |p_0|+|p_0|)/|q_0|^2\to 0$ as $|q_0|\to \infty$ since $|p_0|\leq |q_0|^{2/3}$.

  The same result holds for $j'=-j$ since 
  \begin{equation}
      \operatorname{proj}_1\Phiverlet[h][-j](q_0, p_0)=\operatorname{proj}_1\Phiverlet[h][j](q_0, -p_0)\eqsp . 
  \end{equation}
 Choosing $R_0$ large enough such that the desired results is given for any $j\in[-T,T]$ with $T\neq 0$ completes the proof.

\subsection{Proof of \Cref{thm:ergo_geo}}
\label{sec_supp:thm_geo_ergo}
    
  Noticing that \Cref{hyp:gaussian_perturbation} implies \Cref{hyp:regularity},
  using assumptions (\Cref{hyp:regularity}, \Cref{hyp:3}$(h,\Kmax)$) or (\Cref{hyp:gaussian_perturbation}, \Cref{hyp:3}$(h,\Kmax)$), the ergodicity of $\KkerU_h$ is given by \Cref{thm:ergodic_general} and it remains to show the Foster--Lyapunov condition \eqref{Foyster-Lyapunov-condtion}.
   We write our proof under the assumptions \Cref{hyp:regularity}, \Cref{hyp:rappel}(m) as the case \Cref{hyp:gaussian_perturbation} is similar.
   Let $m\in (1,2]$ such that \Cref{hyp:rappel}($m$) holds, let $\gamma\in(\max(2(m-1)-1,(m-1)/2),m-1)$ and let $h>0$ such that $(\ltt_1 h^2)/2<1$.
  \begin{itemize}
      \item First we address \ref{thm:last_b}, i.e. assume $m=2$. 
  Applying \Cref{prop:degrowth_energy} using \Cref{hyp:regularity}, \Cref{hyp:rappel}($2$), there exist $\bar{S}>0$ and $R_H\in \Rset_{\geq0}$ such that \eqref{eq:lemma_hamiltonian_condition} holds when $h 2^{\Kmax} \leq \bar{S}$.
  Then, using \Cref{hyp:regularity}, \Cref{hyp:rappel}($2$)-\ref{hyp:rappel:item_rappel} and applying \Cref{lemma:normposition} with $T=2^{\Kmax}$ yields $\bar{S}>0$ and $R_0>0$ so that \eqref{eq:norm_position_condition} holds when $h 2^{\Kmax} \leq \bar{S}$.
  By decreasing $\bar{S}$ if necessary we may assume that both conclusions hold.
  For $h>0$ such that $h2^{\Kmax}\leq \bar{S}$ and $q_0\in \Rset^d$ such that $|q_0|\geq \max(R_H,R_0) $ and using assumptions \Cref{hyp:regularity}, \Cref{hyp:rappel}($2$)-\ref{hyp:rappel:item_tailgrad}, we may apply \Cref{lemma:sketchofproof} so there exists $R'>0$ for which
  \begin{equation}
      \label{eq:pre_drift_last}
      \KkerU_{h}\VFL_a \leq\lambda  \VFL_a +b \mathbbm{1}_{\bar{\mathrm{B}}(0,R')} \eqsp .
  \end{equation}
  By \Cref{thm:3small}, $\bar{\mathrm{B}}(0,R')$ is a small set so the drift condition on $\KkerU_h$ holds.
  Therefore, the NUTS kernel $\KkerU_h$ is $\VFL_a$-uniformly geometrically ergodic. 
      \item Proof of \ref{thm:last_a} follows the same lines but there is no constraint on $h > 0$. 
  \end{itemize}

  \section{Proof of \Cref{section:general_HMC}}
  \subsection{Proof of \Cref{thm:trajectory-transitivity}}
  \label{sec_supp:transitivity_thm}
  Let $q,\tilde{q} \in(\rset^d)^2$.  
    We first show that there exists a solution $(p, \tilde{p}) \in (\Rset^d)^2$ to the equation
    \begin{equation}
        \label{eq:transitivity-momentum-equation}
        \Phiverlet[h][T](q, p) = (\tilde{q}, \tilde{p})\eqsp .
    \end{equation}
    Instead of looking for the solution in terms of the $(p, \tilde{p})$ it turns out to be more convenient to eliminate the momenta from the equations and instead look for a leapfrog trajectory $(q_0, q_1, \dots, q_{T-1}, q_T)$ with $q_0 = q$ and $q_T = \tilde{q}$ satisfying
for any
    $j \in[T-1]$ 
    \begin{equation}
        \label{eq:fixed-point}
        q_j = \frac{q_{j-1} + q_{j+1}}{2} + \frac{h^2}{2} \nabla U(q_j) 
        \eqsp .
    \end{equation}
    Indeed, if $(q_0, q_1, \dots, q_T)$ satisfies \eqref{eq:fixed-point}, we can define for $i \in [T]$     \begin{equation}
      \label{eq:revover_p_i}
      p_i = p_{i-1} -(h/2)[\nabla U(q_{i-1}) +\nabla U(q_i)]  \eqsp, \quad p_0 = h^{-1}(q_1-q_0) +(h/2) \nabla U(q_0) \eqsp,
    \end{equation}
that satisfy
    $\Phiverlet[h][k](q_0, p_0) = (q_k, p_k)$ for $k\in [T]$ and as result $p=p_0$ and $\tilde{p} = p_T$ are solutions of \eqref{eq:transitivity-momentum-equation}.\footnote{
        In Lagrangian mechanics, the dynamics of a system are found by finding a trajectory (i.e. positions and velocities) for which the action (i.e. integral of the Lagrangian function of the system over the trajectory) is stationary.
        The leapfrog integrator may be derived as approximate Lagrangian mechanics by considering only piecewise linear trajectories; the stationarity equations for the action are exactly \eqref{eq:fixed-point}.
        The Lagrangian perspective to mechanics seems particularly appropriate here, even though the equations \eqref{eq:fixed-point} can be easily derived from the usual leapfrog equations in the ad hoc manner presented here.
    }

    In matrix form  \eqref{eq:fixed-point} may be written as
    \begin{equation}
        \label{eq:fixed-point-matrix}
        Q = Q_0/2 + Q A + h^2G(Q)/2 \eqsp ,
    \end{equation}
    where
    $Q = ( q_1 \quad q_2 \quad \dots \quad q_{T-1} ) \in \mathbb{R}^{d \times
      (T-1)}$, $Q_0 = ( q_0 \quad 0 \quad \dots \quad 0 \quad q_T )  \in \mathbb{R}^{d
      \times (T-1)} $,   $G(Q) = (
            \nabla U(q_0) \quad \dots \quad \nabla U(q_{T-1})
     ) \in \mathbb{R}^{d \times (T-1)}$ and $\bfA$ is the tridiagonal matrix given by
    \begin{equation}
      \bfA =
\begin{pmatrix}
0 & 1/2 \\
1/2 &0 & 1/2 \\
&1/2 & \ddots & \ddots \\
& & \ddots & \ddots & 1/2 \\
& & & 1/2 &0
\end{pmatrix} \in \mathbb{R}^{(T-1) \times (T-1)}
\eqsp.
\end{equation}
    we show that the mapping $F_{Q_0}: \mathbb{R}^{d \times (T-1)} \to \mathbb{R}^{d \times (T-1)}$ defined by $F_{Q_0}(Q) = Q_0/2 + Q A + h^2G(Q)/2  $ is contractive in the Frobenius norm $\|\bfB\|_{\Fr}^2 = \operatorname{Tr} \bfB \bfB^\top$ when the condition \eqref{eq:transitivity-condition} holds, which will complete the proof by \eqref{eq:fixed-point-matrix} and  the Banach fixed point theorem.
    First, \cite[Proposition 2.1]{kulkarni1999eigenvalues}  shows that the eigenvalues of the tridiagonal matrix $\bfA$ are given by $-\cos (k \uppi/T)$, for any $k\in [T-1]$ and therefore its operator norm $\|\bfA\|_{\mathrm{op}} = \cos (\uppi/T)$.
    Next, it is clear from the assumption that $q \mapsto U(q)$ is $\ltt_1$-Lipschitz that $\|G(Q) - G(Q')\|_{\Fr} \leq \ltt_1 \|Q - Q'\|_{\Fr}$ for any $Q, Q' \in \mathbb{R}^{d \times (T-1)}$.
    Thus
    \begin{align}
        \|F_{Q_0}(Q) - F_{Q_0}(Q')\|_{\Fr}
        & =
        \| (Q-Q')\bfA + (h^2/2) (G(Q)-G(Q')) \|_{\Fr}
        \\
        & \leq
        \{ \cos(\uppi/T) + \ltt_1 h^2/2\} \|Q-Q'\|_{\Fr} \eqsp ,
    \end{align}
    for any $Q, Q' \in \mathbb{R}^{d \times (T-1)}$.
    This estimate gives contractivity of $F_{Q_0}$ exactly when \eqref{eq:transitivity-condition} holds, and thus implies the existence and unicity of solutions to \eqref{eq:fixed-point-matrix} and as result existence of solution $p,\tilde{p}$ of \eqref{eq:transitivity-momentum-equation}.

    It remains to show that the map $\tilde{q} \mapsto \psi_{q}^{\inv}(\tilde{q})$ is Lipschitz for any $q \in \mathbb{R}^d$.
    For any $Q_0 = \begin{pmatrix} q_0 & 0 & \dots & 0 & q_T \end{pmatrix}$ and $Q_0' = \begin{pmatrix} q_0 & 0 & \dots & 0 & q_T' \end{pmatrix}$, denoting the fixed point of $F_{Q_0}$ by $Q(Q_0)$, we have 
    \begin{align}
        \|Q(Q_0) & - Q(Q_0')\|_{\Fr}
        \\
        & =
        \big\| Q_0 - Q_0' + (Q(Q_0) - Q(Q_0')) \bfA + \fracD{h^2}{2} (G(Q(Q_0)) - G(Q(Q_0'))) \big\|_{\Fr}
        \\
        & \leq
        \| Q_0 - Q_0' \|_{\Fr} + \{ \cos \fracD{\uppi}{T} + \fracD{\ltt_1 h^2}{2} \} \|Q(Q_0)-Q(Q_0')\|_{\Fr} \eqsp ,
    \end{align}
    which implies
    \begin{equation}
        \|Q(Q_0)-Q(Q_0')\|_{\Fr} \leq \frac{\|Q_0-Q_0'\|_{\Fr}}{1 - \{ \cos \fracD{\uppi}{T} + \fracD{\ltt_1 h^2}{2}}= \frac{\|q_T-q_T'\|}{1 - \{ \cos \fracD{\uppi}{T} + \fracD{\ltt_1 h^2}{2}} \eqsp .
      \end{equation}
As a result, we get the desired property of $\psi_q$ since   $\psi_{q_0}^{\inv}(q_T)$ is the last column of $ Q(Q_0)$.

\subsection{Proof of \Cref{lemma:gaussian_case}}
\label{sec_supp:gaussian_case}

Equation~\eqref{eq:iteration_verlet} gives that for any $q_0,p_0\in \Rset^d$:
  \begin{equation}
    \label{eq:gaussian_1_expression}
      \Phiverlet[h][1](q_0,p_0)=((\Idd-\fracD{h^2}{2}\Sigma)q_0+hp_0,(\Idd-\fracD{h}{2}\Sigma)p_0+h(\fracD{h^2}{4}\Sigma^2-\Sigma)q_0)\eqsp . 
  \end{equation}
  Thus by composition and an easy induction, for any $T\in \Zset$ and $h\in \Rset$, there exist $A_\Sigma^{(T)}(h),B_\Sigma^{(T)}(h)\in \symset_d(\Rset)$ such that for any $q_0,p_0\in (\Rset^d)^2$,
  \begin{equation}
      \operatorname{proj}_1\Phiverlet[h][T](q_0,p_0)=A_\Sigma^{(T)}(h)q_0+B_\Sigma^{(T)}(h)p_0\eqsp ,
  \end{equation}
  where the components of $A_\Sigma^{(T)}(h),B_\Sigma^{(T)}(h)$ are polynomial in $h$ and  $\frac{\dd}{\dd h}B_\Sigma^{(T)}(h)|_{h=0}=T\Idd$. 
  Then,  for any $T\in \Zset$, the function $P^{(T)}:\, h\in \Rset \mapsto \det(B_\Sigma^{(T)}(h)) $ is a polynomial in $h$ and is not identically zero.
  Therefore,  for any $T\in \Zset$, the equation $P^{(T)}(h)\, = \,0 $ has a finite number of solutions in $\Rset_{\geq0}$ denoted by  $\msh_0^{(T)}$ and for any $h\in \Rset_{\geq0}\setminus \bigcup_{T\in\Zset }\msh_0^{(T)}$, $q\in \Rset^d$ the function $\psi_q^{(T)} =        \operatorname{proj}_1\Phiverlet[h][T](q,\cdot)$ is a linear one-to-one map.
  
  Applying the same reasoning, by composition of \eqref{eq:gaussian_1_expression},
for any $T\in \Zset$ and $h\in \Rset$, there exist $\tilde{A}_\Sigma^{(T)}(h),\tilde{B}_\Sigma^{(T)}(h)\in \symset_d(\Rset)$ such that for any $q_0,p_0\in (\Rset^d)^2$,
  \begin{equation}
      \operatorname{proj}_2\Phiverlet[h][T](q_0,p_0)=\tilde{A}_\Sigma^{(T)}(h)q_0+\tilde{B}_\Sigma^{(T)}(h)p_0\eqsp ,
  \end{equation}
  where the components of $\tilde{A}_\Sigma^{(T)}(h),\tilde{B}_\Sigma^{(T)}(h)$ are polynomial in $h$.
As a result,  for any $T_1, T_2, q_0\in \Zset^2\times \Rset^d$ with $T_1\neq T_2$ and $h\in \Rset$, defining $p_0\in \Rset^d \mapsto F_{q_0}^{(T_1,T_2)}(p_0)=p_{T_1}^\top (q_{T_2}-q_{T_1})$ is at most quadratic in $q_0,p_0$ and thus there exist $C_\Sigma^{(T_1,T_2)}(h),D_\Sigma^{(T_1,T_2)}(h)\in \mcs(\Rset)$ such that for any $q_0,p_0\in (\Rset^d)^2$,
  \begin{equation}
      \nabla F_{q_0}^{(T_1,T_2)}(p_0)=C_\Sigma^{(T_1,T_2)}(h)q_0+D_\Sigma^{(T_1,T_2)}(h)p_0\eqsp ,
  \end{equation}
  where the coordinates of $C_\Sigma^{(T_1,T_2)}(h),D_\Sigma^{(T_1,T_2)}(h)$ are polynomial in $h$.
  Then, the function $Q^{(T_1,T_2)}:\, h\in \Rset \mapsto \det(D_\Sigma^{(T_1,T_2)}(h)) $ is polynomial in $h$ and is not zero since by \eqref{eq:phiver_q_develop}, \eqref{eq:phiver_p_develop}, \eqref{eq:df_p_0}, \eqref{eq:df_qq}  $\frac{\dd}{\dd h}D_\Sigma^{(T_1,T_2)}(h)|_{h=0}=(T_2-T_1)\Idd$.
  Therefore, the equation $Q^{(T_1,T_2)}(h)\, = \,0 $ has a finite number of solutions in $\Rset_{\geq0}$ denoted by the set $\msh_0^{(T_1,T_2)}$ and for any $h\in \Rset_{\geq0}\setminus \bigcup_{T_1,T_2\in\Zset^2,\, T_1\neq T_2 }\msh_0^{(T_1,T_2)}$, $q\in \Rset^d$ and for any $T_1,T_2\in \Zset^2$ with $ T_1\neq T_2$, the functions $\nabla F_{q}^{(T_1,T_2)}$ are linear one-to-one map. 
  
  Setting 
  \begin{equation}
      \msh_0= \bigcup_{T_1,T_2\in\Zset^2,\, T_1\neq T_2 }\msh_0^{(T_1,T_2)} \cup \bigcup_{T\in\Zset }\msh_0^{(T)}
  \end{equation}
  completes the proof since a countable union of countable sets is countable.
  

  \subsection{Proof of \Cref{thm:ergodicity_compile_HMC}}
  \label{sec_supp:ergo_HMC}

We prove the result under \Cref{hyp:gaussian_perturbation_nice} and then under the bound on the stepsize \eqref{eq:transitivity-condition}.

  We first establish \Cref{lemma:perturbation_point} to bound the distance between the trajectory generated by a Gaussian potential and a Gaussian perturbated potential.
  \Cref{lemma:perturbation_point} is then combined with the homotopy argument proposed in \cite[Proposition 14]{Durmus2017-tf} to derive \Cref{thm:ergodicity_compile_HMC}.

   To state \Cref{lemma:perturbation_point}, we introduce the following notations.
 Let $h\in \Rset_{>0}\setminus \msh_0$ be fixed where $\msh_0$ is defined in \Cref{lemma:gaussian_case}.
  We denote by $c(x_1,x_2,\cdots,x_n)$ a constant depending on the constants $x_1,x_1,\cdots,x_n$.
   For any $T\in \Zset$ and $q_0,p_0\in (\Rset^d)^2$, we denote by
    \begin{equation}
        q_T(q_0,p_0)=\operatorname{proj}_1\Phiverlet[h][T](q_0,p_0),\quad p_{T}(q_0,p_0)=\operatorname{proj}_2\Phiverlet[h][T](q_0,p_0)\eqsp,
    \end{equation}
    where the potential related to $\Phiverlet[h][\cdot]$ is $U_\Sigma$ defined in \Cref{lemma:gaussian_case}, 
    \begin{equation}
        \tilde{q}_T(q_0,p_0)=\operatorname{proj}_1 \tilde{\Phi}_h^T(q_0,p_0),\quad \tilde{p}_{T}(q_0,p_0)=\operatorname{proj}_2\tilde{\Phi}_h^T(q_0,p_0)\eqsp,
    \end{equation}
    where the potential related to $\tilde{\Phi}_h^T$ is $U$ defined in \Cref{hyp:gaussian_perturbation_nice}.
\begin{lemma}
    \label{lemma:perturbation_point}
    Assume \Cref{hyp:gaussian_perturbation_nice} and let $q_0\in \Rset^d$.
    For any $T\in \Zset$
    there exists a constant $C(T,h,\Sigma,q_0,A_5)>0$ such that for any $p_0\in\Rset^d$,
    \begin{equation}
        \label{eq:torecur_qtilde}
        |\tilde{q}_T(q_0,p_0)-q_T(q_0,p_0)|\leq C(T,h,\Sigma,q_0,A_5) (|p_0|^{\rho-1}+1)\eqsp ,
    \end{equation}

\end{lemma}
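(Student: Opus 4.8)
The plan is to prove \eqref{eq:torecur_qtilde} by induction on $|T|$, tracking simultaneously the position and momentum discrepancies $\Delta q_j := \tilde q_j(q_0,p_0)-q_j(q_0,p_0)$ and $\Delta p_j := \tilde p_j(q_0,p_0)-p_j(q_0,p_0)$. First I would set up the one-step recursions. Using the explicit leapfrog map \eqref{eq:iteration_verlet}--\eqref{eq:def_Psiverlet_0}, for the Gaussian potential $U_\Sigma(q)=q^\top\Sigma q/2$ one has $\Phiverlet[h][1](q,p)=\Psiverlet^{(1)}_{h/2}\circ\Psiverlet^{(2)}_h\circ\Psiverlet^{(1)}_{h/2}(q,p)$ with $\nabla U_\Sigma(q)=\Sigma q$, while for $U=U_\Sigma+\tilde U$ one has $\nabla U(q)=\Sigma q+\nabla\tilde U(q)$. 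Subtracting the two one-step maps and using that the ``linear part'' is exactly the Gaussian leapfrog step, I get for $j\ge 0$ a recursion of the schematic form
\begin{equation}
\begin{pmatrix}\Delta q_{j+1}\\ \Delta p_{j+1}\end{pmatrix}
= \mathrm{M}_\Sigma(h)\begin{pmatrix}\Delta q_{j}\\ \Delta p_{j}\end{pmatrix}
+ \begin{pmatrix}\mathrm{e}^q_j\\ \mathrm{e}^p_j\end{pmatrix}\eqsp,
\end{equation}
where $\mathrm{M}_\Sigma(h)$ is the (fixed, $q_0,p_0$-independent) matrix from \eqref{eq:gaussian_1_expression} and the inhomogeneous terms $\mathrm{e}^q_j,\mathrm{e}^p_j$ are linear combinations of $\nabla\tilde U$ evaluated along the \emph{perturbed} trajectory, hence controlled by $|\nabla\tilde U(\tilde q_j(q_0,p_0))|\le A_5(1+|\tilde q_j(q_0,p_0)|^{\varrho-1})$ by \Cref{hyp:gaussian_perturbation_nice}. (The negative-$T$ case is handled identically after noting $\Phiverlet[h][-1]$ has the same structure, or by the symmetry $\operatorname{proj}_1\Phiverlet[h][-T](q_0,p_0)=\operatorname{proj}_1\Phiverlet[h][T](q_0,-p_0)$ already used in \Cref{sec_supp:degrowth_energy}.)

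The second step is to bound $|\tilde q_j(q_0,p_0)|$ in terms of $|p_0|$. Since $\nabla U$ is $\ltt_1$-Lipschitz (this holds under \Cref{hyp:gaussian_perturbation_nice} by \Cref{rmk:implication_gaussian}-type reasoning, or directly), \cite[Lemma 17]{Durmus2017-tf} or the elementary Grönwall-type estimate used repeatedly in \cite{Durmus2017-tf} gives $|\tilde q_j(q_0,p_0)|\le c(j,h,\ltt_1)(|q_0|+|p_0|+1)$ for all $|j|\le |T|$; combined with $\varrho-1<1$ this yields $|\nabla\tilde U(\tilde q_j(q_0,p_0))|\le c(T,h,\Sigma,q_0,A_5)(|p_0|^{\varrho-1}+1)$, and therefore $|\mathrm{e}^q_j|,|\mathrm{e}^p_j|\le c(T,h,\Sigma,q_0,A_5)(|p_0|^{\varrho-1}+1)$ uniformly in $j$ with $|j|\le|T|$. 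Iterating the recursion from $\Delta q_0=\Delta p_0=0$ then gives
\begin{equation}
\left|\begin{pmatrix}\Delta q_{T}\\ \Delta p_{T}\end{pmatrix}\right|
\le \sum_{j} \|\mathrm{M}_\Sigma(h)\|^{|T|-1-j}\,|(\mathrm{e}^q_j,\mathrm{e}^p_j)|
\le c(T,h,\Sigma,q_0,A_5)(|p_0|^{\varrho-1}+1)\eqsp,
\end{equation}
which is exactly \eqref{eq:torecur_qtilde} after projecting onto the first component and relabelling $\varrho$ as $\rho$.

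The main obstacle is bookkeeping rather than anything deep: I need to write the one-step error recursion carefully so that the linear (Gaussian) part is genuinely factored out — in particular the half-step momentum kicks $\Psiverlet^{(1)}_{h/2}$ contribute $\nabla\tilde U$ evaluated at the \emph{position} coordinate, which is shared between the perturbed and unperturbed chains at that substep only if one is careful about the order of composition, so the cross terms must be expanded honestly using the Lipschitz bound on $\nabla\tilde U$. A secondary point to get right is that all constants are allowed to depend on $T$, $h$, $\Sigma$, $q_0$, $A_5$ but must be \emph{uniform in $p_0$}, which is why the a priori linear-growth bound on $|\tilde q_j(q_0,p_0)|$ in $|p_0|$ (with $p_0$-independent coefficients, $q_0$ being fixed) is the crucial ingredient, and why the exponent $\varrho-1<1<2$ is what makes the final bound of the stated sublinear-in-$|p_0|$ form. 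Once \Cref{lemma:perturbation_point} is in hand, \Cref{thm:ergodicity_compile_HMC} under \Cref{hyp:gaussian_perturbation_nice} follows by the homotopy/degree argument of \cite[Proposition 14]{Durmus2017-tf} applied to $\tilde q_T(q_0,\cdot)$ viewed as a compact-in-$|p_0|^{\varrho-1}$ perturbation of the linear isomorphism $\psi_{q_0}^{(T)}$ from \Cref{lemma:gaussian_case}, exactly as sketched after the lemma statement.
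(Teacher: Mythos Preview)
Your proposal is correct and follows essentially the same strategy as the paper: both use an inductive/iterative argument driven by the a priori linear-in-$|p_0|$ growth of the (perturbed) trajectory from \cite[Lemma 17]{Durmus2017-tf}, which feeds into the sublinear bound $|\nabla\tilde U(\tilde q_j)|\le c(|p_0|^{\varrho-1}+1)$, and both extend to negative $T$ via the reflection symmetry $\operatorname{proj}_1\Phiverlet[h][-T](q_0,p_0)=\operatorname{proj}_1\Phiverlet[h][T](q_0,-p_0)$. The only organisational difference is that the paper uses strong induction on $T$ via the cumulative position formula \eqref{eq:q_develop_nice} involving $\Theta_{h,T}^{(1)}$ and $\tilde\Theta_{h,T}^{(1)}$ (so it never tracks the momentum discrepancy explicitly), whereas you set up a one-step matrix recursion on the pair $(\Delta q_j,\Delta p_j)$; both packagings arrive at the same bound with the same dependence on constants.
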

\begin{proof}[Proof of \Cref{lemma:perturbation_point}]
    Let $q_0\in \Rset^d$.
    The result is proved by a strong recurrence on $\Nset$ and extended to $\Zset$ by symmetry properties of $\Phiverlet[h][\cdot] $.
    As a prelimenary remark, observe that for any $i\in\Zset^* $ there exists $c(T,q_0,\Sigma,h)>0$ such that for any $p_0\in \Rset^d$,
    \begin{equation}
        \label{eq:pre_remark_point}
    |\nabla \tilde{U}(q_i)|\leq A_5\left(1+|q_i|^{\varrho-1}\right) \leq c (|p_0|^{\rho-1}+1)
    \end{equation}
     by using \cite[Lemma 17]{Durmus2017-tf} and \Cref{hyp:gaussian_perturbation_nice} .
    For any $T\in \Nset$, \eqref{eq:torecur_qtilde} is the hypothesis of reccurence $H(T)$.
    $H(0)$ is straightforward since $\tilde{q}_0=q_0$ and $H(1)$ as well since for any $q_0,p_0\in (\Rset^d)^2$, by \eqref{eq:iteration_verlet},
    \begin{align}
        &q_1=q_0+hp_0-\frac{h^2}{2} \nabla U_{\Sigma}(q_0)\eqsp ,\\
        &\tilde{q}_1=q_0+hp_0-\frac{h^2}{2} \nabla U(q_0)\eqsp , \\
        &|\tilde{q}_1-q_1|\leq \frac{h^2}{2} |\nabla \tilde{U}(q_0) | \eqsp .
    \end{align}
    Let $T\in \Nset^*\setminus\{1\}$ such that $H(1),\ldots,H(T-1)$ and prove $H(T)$. We recall that for any $q_0,p_0 \in (\Rset^d)^2$,
    \begin{equation}
        \label{eq:q_develop_nice}
        \tilde{q}_{T}=q_0+ Thp_0-T\frac{h^2}{2}\nabla U(q_0)-h^2 \tilde{\Theta}_{h,T}^{(1)}(q_0,p_0)\eqsp ,
    \end{equation}
    \begin{equation}
        q_{T}=q_0+ Thp_0-T\frac{h^2}{2}\nabla U_\Sigma(q_0)-h^2 \Theta_{h,T}^{(1)}(q_0,p_0)\eqsp ,
    \end{equation}
    where for any $q,p \in (\Rset^d)^2$,
    \begin{equation}
        \tilde{\Theta}_{h,T}^{(1)}(q,p)=\sum_{i=1}^{T-1}(T-i)\nabla U(\tilde{q_i}(q,p))\eqsp ,
    \end{equation} 
    \begin{equation}
        \Theta_{h,T}^{(1)}(q,p)=\sum_{i=1}^{T-1}(T-i)\nabla U_\Sigma(q_i(q,p)) \eqsp .
    \end{equation}
     Then for any $q_0,p_0 \in (\Rset^d)^2$, 
    \begin{equation}
        |\tilde{q}_{T}-q_T |\leq T\frac{h^2}{2}|\nabla \tilde{U}(q_0)|+h^2 |\Theta_{h,T}^{(1)}(q_0,p_0)-\tilde{\Theta}_{h,T}^{(1)}(q_0,p_0) |\eqsp ,
    \end{equation}
    \begin{equation}
        \begin{multlined}|\Theta_{h,T}^{(1)}(q_0,p_0)-\tilde{\Theta}_{h,T}^{(1)}(q_0,p_0) |\leq \sum_{i=1}^{T-1}(T-i)
             [|\Sigma(q_i(q_0,p_0)-\tilde{q}_i(q_0,p_0)) |+|\nabla \tilde{U}(\tilde{q}_i(q_0,p_0)) |]\eqsp .
        \end{multlined}
    \end{equation}
    Remarking that $T\frac{h^2}{2}|\nabla \tilde{U}(q_0)|$ is a constant regarding $p_0$ and combining the two previous inequalities with \eqref{eq:pre_remark_point} and applying $H(i)$ for any $ i\in [T-1]$ gives $H(T)$ since $|\Sigma|<\infty$.

    For any $T\in \Nset$, $H(T)$ holds. The result extends to $T\in \Zset$, indeed
    denoting by $\Phiverlet[h][-T](q_0,p_0)=q_{-T},p_{-T}$ and $ \Phiverlet[h][T](q_0,-p_0)=q_{T}',p_{T}' $ for any $T\in \Nset^* $ and $q_0,p_0\in (\Rset^d)^2 $,
    applying the expression of the leapfrog scheme \eqref{eq:iteration_verlet} yields:
    \begin{equation}
        q_{-T}=q_{T}'\,,\ee p_{-T}=-p_{-T}'\eqsp .
    \end{equation}
    This completes the proof.
\end{proof}

    Assume \Cref{hyp:gaussian_perturbation_nice}, there exists $\msh_0$ such that \Cref{lemma:gaussian_case} applies.
    We prove that every compact is accessible and 1-small for any $h\in \Rset_{>0} \setminus \msh_0$ for the HMC kernel $\KkerH$, which implies that the Lebesgue measure is an irreducibility measure.

    Let $\Khmc\in \Nset^*$, $h\in \msh_0$ and $q_0\in \Rset^d$.
    By the continuity of the map $(q, p)\in (\Rset^d)^2 \mapsto \alpha_{h,\Khmc}(q,p) \in \Rset_{>0}$ under \Cref{hyp:gaussian_perturbation_nice}, for any $r > 0$ and $\tilde{r} > 0$, we have $\inf_{(q, p) \in \mathrm{B}(0_d, r) \times \mathrm{B}(0_d, \tilde{r})} \alpha_{h,\Khmc}(q,p) = \varepsilon_1(r,\tilde{r}) > 0$.
    Let $r>0$ be fixed.
    Thus for any $q_0 \in \bar{\mathrm{B}}(0_d, r),\, \msa \in \mathcal{B}$ and $\tilde{r}>0$, we can bound from below the transition kernel \eqref{eq:transition-kernel} by 
    \begin{align}
        \label{eq:minor}
        \KkerH(q_0, \msa)
        & \geq
        \int \dd p \, \mathbbm{1}_{\mathrm{B}(0_d, \tilde{r})}(p) \alpha_{h,T}(q_0,p) \GaussStandard(p)\mathbbm{1}_\msa(\operatorname{proj}_1(\Phiverlet[h][\Khmc](q_0, p)))\\
        & \geq  \min_{p\in \bar{\mathrm{B}}(0_d, \tilde{r})}\{\GaussStandard(p)\}  \varepsilon_1(r,\tilde{r}) \int \dd p \, \mathbbm{1}_{\mathrm{B}(0_d, \tilde{r})}(p) \mathbbm{1}_\msa(\operatorname{proj}_1(\Phiverlet[h][\Khmc](q_0, p)))\eqsp .
    \end{align}
    To bound the last term, we use the following Lemma.
   
    \begin{lemma}
        \label{lemma:ball_image_central_hmc}
    Let $q_0\in \Rset^d$ and $\Khmc\in \Nset_{>0}$. If for any $q\in \Rset^d$ the function $\psi_q: p\in \Rset^d\to \operatorname{proj}_1(\Phiverlet[h][\Khmc](q, p))$ is Lipschitz with a Lipschitz constant denoted by $L_\psi$,
     and that there exist $M, \tilde{r},r >0$  such that for any $q_0\in \bar{\mathrm{B}}(0_d,r)$, we have
     \begin{equation}\label{eq:ball-image-ball_hmc}
         \mathrm{B}(0_d, M) \subset \psi_{q_0}(\mathrm{B}(0_d, \tilde{r}))\eqsp .
     \end{equation}
     Then, for any $q_0\in \bar{\mathrm{B}}(0_d,r)$,
     \begin{equation}
        \int \dd p \, \mathbbm{1}_{\mathrm{B}(0_d, \tilde{r})}(p) \mathbbm{1}_\msa(\psi_{q_0}(p)) \geq L_{\psi}^{-d}\Leb (\mathrm{B}(0_d,M)\cap \msa) \eqsp .
     \end{equation}
    \end{lemma}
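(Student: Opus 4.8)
The plan is to recognise \Cref{lemma:ball_image_central_hmc} as the exact HMC counterpart of \Cref{lemma:ball_image_central}, with the dynamic index map $\psi_q^{(j')}$ replaced by the single leapfrog position map $\psi_q : p \mapsto \operatorname{proj}_1(\Phiverlet[h][\Khmc](q, p))$, and to deduce it from \cite[Proposition 11]{Durmus2017-tf} applied pointwise in the first argument. First I would fix $q_0 \in \bar{\mathrm{B}}(0_d, r)$ once and for all; the hypotheses then reduce to two facts at this point: $\psi_{q_0}$ is $L_\psi$-Lipschitz (this is the uniform-in-$q$ bound assumed in the statement), and $\mathrm{B}(0_d, M) \subseteq \psi_{q_0}(\mathrm{B}(0_d, \tilde{r}))$ by \eqref{eq:ball-image-ball_hmc}. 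Continuity of $\psi_{q_0}$, needed below, is guaranteed under \Cref{hyp:regularity}, exactly as in the proof of \Cref{lemma:ball_image_central}.

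Next I would observe that the left-hand side of the claimed inequality is the mass on $\msa$ of the image measure $\lambda_{\psi_{q_0}}$ on $(\Rset^d, \mcb{\Rset^d})$ defined by $\lambda_{\psi_{q_0}}(\msa) = \Leb\bigl(\psi_{q_0}^{-1}(\msa) \cap \mathrm{B}(0_d, \tilde{r})\bigr)$, i.e.\ the pushforward under $\psi_{q_0}$ of the Lebesgue measure restricted to $\mathrm{B}(0_d, \tilde{r})$. With this identification, \cite[Proposition 11]{Durmus2017-tf}, applied with $\Theta = \psi_{q_0}$, source ball $\mathrm{B}(0_d, \tilde{r})$ and target ball $\mathrm{B}(0_d, M)$, yields directly $\lambda_{\psi_{q_0}}(\msa) \geq L_\psi^{-d}\, \Leb\bigl(\mathrm{B}(0_d, M) \cap \msa\bigr)$ for every measurable $\msa$, which is the assertion. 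For completeness I would also record the short self-contained argument underlying that proposition: setting $\msz = \psi_{q_0}^{-1}(\msa) \cap \mathrm{B}(0_d, \tilde{r})$, the inclusion \eqref{eq:ball-image-ball_hmc} forces $\msa \cap \mathrm{B}(0_d, M) \subseteq \psi_{q_0}(\msz)$, while $L_\psi$-Lipschitz continuity gives $\Leb\bigl(\psi_{q_0}(\msz)\bigr) \leq L_\psi^{d}\, \Leb(\msz)$, so that $\Leb(\msz) = \lambda_{\psi_{q_0}}(\msa) \geq L_\psi^{-d}\, \Leb\bigl(\msa \cap \mathrm{B}(0_d, M)\bigr)$.

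I do not expect any genuine obstacle here. The only point requiring mild care is the measurability of $\msz$ and of its image $\psi_{q_0}(\msz)$: $\msz$ is Borel because $\psi_{q_0}$ is continuous, and a continuous (a fortiori Lipschitz) image of a Borel set is analytic, hence Lebesgue measurable. This is already dealt with inside \cite[Proposition 11]{Durmus2017-tf}, so the cleanest write-up is simply to carry out the bookkeeping of the first two paragraphs and then invoke that proposition, in complete analogy with the proof of \Cref{lemma:ball_image_central}.
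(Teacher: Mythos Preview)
Your proposal is correct and follows essentially the same approach as the paper: both reduce the claim to a direct application of \cite[Proposition 11]{Durmus2017-tf} with $\Theta=\psi_{q_0}$ for each fixed $q_0\in\bar{\mathrm{B}}(0_d,r)$, after identifying the left-hand side with the pushforward measure $\lambda_{\psi_{q_0}}(\msa)=\Leb(\psi_{q_0}^{-1}(\msa)\cap\mathrm{B}(0_d,\tilde{r}))$. Your additional self-contained sketch and measurability remarks go a bit beyond what the paper records, but the core argument is identical.
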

    \begin{proof}
        This Lemma is a direct application of \cite[Proposition 11]{Durmus2017-tf} for different choice of $\Theta$, with their notations $\Theta=\psi_{q_0}$ for any $q_0\in \bar{\mathrm{B}}(0_d,r)$ and $\mathrm{B}(\tilde{y}_0, \tilde{r})=\mathrm{B}(0_d, \tilde{M})$ .
        They define the measure $\lambda_\Theta$ on $(\Rset^d, \msb(\Rset^d))$ by setting for any $\msa \in \msb(\Rset^d)$,
        $$\lambda_\Theta(\msa)= \int \dd p \, \mathbbm{1}_{\mathrm{B}(0_d, \tilde{r})}(p) \mathbbm{1}_\msa(\psi_{q_0}(p))=\Leb(\psi_{q_0}^{-1}(\msa)\cap \mathrm{B}(0_d, \tilde{M})) \eqsp .$$
    \end{proof}
     If we show the assumptions of \Cref{lemma:ball_image_central_hmc}, we have for any $q\in \bar{\mathrm{B}}(0_d,r)$
    $$\KkerH(q, \msa)\geq \varepsilon_2  \varepsilon_1 L_{\psi}^{-d}\Leb (\mathrm{B}(0_d,M)\cap \msa) \eqsp ,$$
    which gives that $ \bar{\mathrm{B}}(0_d,r)$ is a 1-small set. $r$ being arbitrary, every compact set $K\subset \Rset^d$ are 1-small.

    The stated uniform Lipschitz continuity of $\psi_q$ follows from \cite[Lemma 17]{Durmus2017-tf} and \Cref{hyp:gaussian_perturbation_nice}.
    To show the missing condition \eqref{eq:ball-image-ball_hmc},
    we use the following lemma.
    \begin{lemma}
        \label{lemma:antecedents}
        Assume \Cref{hyp:gaussian_perturbation_nice}.
         Then, for any $M'>0 $ and $q' \in \Rset^d$, there exists $\tilde{r}>0$ such that 
        $\mathrm{B}(0,M')\subset \psi_{q'}(\mathrm{B}(0,\tilde{r}))$.
    \end{lemma}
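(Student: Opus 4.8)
The plan is to compare $\psi_{q'}=\tilde q_{\Khmc}(q',\cdot)$, the leapfrog position map associated with the potential $U=q^\top\boldsymbol\Sigma q/2+\tilde U$ of \Cref{hyp:gaussian_perturbation_nice}, with $q_{\Khmc}(q',\cdot)$, the leapfrog position map associated with the Gaussian part $U_\Sigma$ of $U$, and then to close the argument with a homotopy/degree argument of the type used in \cite[Proposition 14]{Durmus2017-tf}. Since $h\notin\msh_0$, \Cref{lemma:gaussian_case} (applied to $U_\Sigma$) shows that $p\mapsto q_{\Khmc}(q',p)$ is affine with invertible linear part; write it $p\mapsto a_{q'}+Lp$ with $a_{q'}\in\Rset^d$ and $L\in\Rset^{d\times d}$ invertible, and set $c_\Sigma=\|L^{-1}\|^{-1}>0$, so that $|a_{q'}+Lp|\ge c_\Sigma|p|-|a_{q'}|$. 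By \Cref{lemma:perturbation_point} there is $C=C(\Khmc,h,\boldsymbol\Sigma,q',A_5)>0$ with $|\psi_{q'}(p)-(a_{q'}+Lp)|\le C(|p|^{\varrho-1}+1)$ for all $p\in\Rset^d$; the crucial point is that $\varrho-1\in[0,1)$, so this deviation is sublinear in $|p|$.

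First I would fix $M'>0$ and $q'\in\Rset^d$ and pick $\tilde r>0$ large enough that
\[
 c_\Sigma\tilde r-|a_{q'}|-C(\tilde r^{\varrho-1}+1)>M'
 \quad\text{and}\quad
 |a_{q'}|+M'<c_\Sigma\tilde r\eqsp,
\]
which is possible because the linear term $c_\Sigma\tilde r$ dominates the sublinear term $C\tilde r^{\varrho-1}$ as $\tilde r\to\infty$. The second inequality yields $\mathrm{B}(0,M')\subset\mathrm{B}(a_{q'},c_\Sigma\tilde r)\subset a_{q'}+L\,\mathrm{B}(0,\tilde r)$, i.e.\ every $y\in\mathrm{B}(0,M')$ lies in the open image of $\mathrm{B}(0,\tilde r)$ under the affine bijection $p\mapsto a_{q'}+Lp$. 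The first inequality yields, for $|p|=\tilde r$, that $|a_{q'}+Lp|-C(|p|^{\varrho-1}+1)\ge c_\Sigma\tilde r-|a_{q'}|-C(\tilde r^{\varrho-1}+1)>M'$.

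Next, fix $y\in\mathrm{B}(0,M')$ and consider the homotopy $G_t(p)=a_{q'}+Lp+t\bigl(\psi_{q'}(p)-a_{q'}-Lp\bigr)$ for $t\in[0,1]$, which is continuous on $\overline{\mathrm{B}(0,\tilde r)}\times[0,1]$ (continuity of $\psi_{q'}$ follows from \Cref{hyp:gaussian_perturbation_nice}, which implies \Cref{hyp:regularity}). For $|p|=\tilde r$ and $t\in[0,1]$,
\[
 |G_t(p)|\ge|a_{q'}+Lp|-|\psi_{q'}(p)-a_{q'}-Lp|\ge c_\Sigma\tilde r-|a_{q'}|-C(\tilde r^{\varrho-1}+1)>M'\ge|y|\eqsp,
\]
so $y\notin G_t(\partial\mathrm{B}(0,\tilde r))$ for every $t$. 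Homotopy invariance of the Brouwer degree then gives $\deg(\psi_{q'},\mathrm{B}(0,\tilde r),y)=\deg(p\mapsto a_{q'}+Lp,\,\mathrm{B}(0,\tilde r),\,y)=\sign\det L\neq0$, the last equality holding because $y$ lies in the interior image of the affine bijection. A nonzero degree forces a solution $p\in\mathrm{B}(0,\tilde r)$ of $\psi_{q'}(p)=y$; since $y\in\mathrm{B}(0,M')$ was arbitrary, $\mathrm{B}(0,M')\subset\psi_{q'}(\mathrm{B}(0,\tilde r))$, which is the claim (the argument is identical for any fixed number of leapfrog steps).

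The only genuinely important step is the first: one must know that the perturbed leapfrog trajectory stays within a \emph{sublinearly} growing distance of the Gaussian one, so that the perturbation cannot undo the surjectivity provided by the invertible linear part $L$. This is exactly the content of \Cref{lemma:perturbation_point} and is where the hypothesis $\varrho<2$ enters; the remaining steps form a routine degree-theoretic continuation argument and raise no difficulty.
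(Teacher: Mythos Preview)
Your proof is correct and follows essentially the same strategy as the paper: a Brouwer-degree homotopy argument from the Gaussian leapfrog map to the perturbed one, using the sublinear deviation bound of \Cref{lemma:perturbation_point} and the invertibility from \Cref{lemma:gaussian_case} to control the boundary. The only difference is the choice of homotopy: you take the straight-line interpolation $G_t=(1-t)(a_{q'}+Lp)+t\,\psi_{q'}(p)$, whereas the paper uses the ``physical'' homotopy $\psi_{q'}^t$ given by running leapfrog with the interpolated potential $U_t=q^\top\boldsymbol\Sigma q/2+t\tilde U$; your choice is marginally simpler in that it needs \Cref{lemma:perturbation_point} only at $t=1$, while the paper applies it uniformly in $t$ (which is immediate since the constant $A_5$ is unchanged).
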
 
   
    \begin{proof}
        In this proof, we use the degree theory.
         The definition of the degree denoted by $\operatorname{deg}$ in the following, can be found in \cite[Proposition and Definition 1.1, 2.1]{degreeouterelo2009mapping} or \cite[p.20 and p.21]{Durmus2017-tf} .

        Let $M'>0$ and $q'\in \Rset^d$ be fixed.
        For any $t\in [0,1]$,
        we denote by
         \begin{equation}
            \psi_{q'}^t: p\in \Rset^d \mapsto \operatorname{proj}_1 [\Phiverlet[h][\Khmc]]_t(q',p)\eqsp ,
        \end{equation}
        where $[\Phiverlet[h][\Khmc]]_t$ is the map related to the potential
        \begin{equation}
            U_t: q\in \Rset^d \mapsto q^\top \Sigma q/2+t \tilde{U}\eqsp .
        \end{equation} 
        Remark that $U_1=U$ and $\psi_{q'}^1=\psi_{q'}$.
         Using $h\in \Rset_{>0} \setminus \msh_0$ and \Cref{lemma:gaussian_case}, $\psi_{q'}^0$ is a linear one-to-one map and thus 
         $(\psi_{q'}^0)^{-1}(\mathrm{B}(0,M'))$ is well defined, open and bounded.
          Therefore, there exists $r_1>0$ such that $\mathrm{B}(0,M')\subset \psi_{q'}^0(\mathrm{B}(0,r_1))$.

          We aim to transfer this property to $\psi_{q'}^1$ using \cite[Proposition 2.4]{degreeouterelo2009mapping}. 
          If there exists $r_2>r_1$ such that for any $t\in [0,1]$ and $p\in \Rset^d$ with $|p|\geq r_2$,
           we have 
           \begin{equation}
            \label{eq:r_2_to_prove}
            |\psi_{q'}^t(p) |\geq M'+1 \eqsp ,
           \end{equation}
              then, defining the homotopy
          \begin{equation}
            H:\,t,p\in [0,1]\times \bar{\mathrm{B}}(0,r_2)\mapsto \psi_{q'}^t(p) \eqsp ,
          \end{equation}
          we have,
          \begin{equation}
            \label{eq:border_control}
            \mathrm{B}(0,M') \subset \Rset^d \setminus H([0,1],\partial\bar{\mathrm{B}}(0,r_2))\eqsp.
          \end{equation}

          If \eqref{eq:r_2_to_prove} holds, using \cite[Proposition 2.4]{degreeouterelo2009mapping} with $H$, we have for any $z\in\mathrm{B}(0,M')$, 
          \begin{equation}
            \operatorname{deg}(z,\bar{\mathrm{B}}(0,r_2),\psi_{q'}^1)=\operatorname{deg}(z,\bar{\mathrm{B}}(0,r_2),\psi_{q'}^0)\eqsp .
          \end{equation}
          Moreover, we have for any $z\in\mathrm{B}(0,M')$, $\operatorname{deg}(z,\bar{\mathrm{B}}(0,r_2),\psi_{q'}^0)\neq 0$ since
           \begin{equation}
            (\psi_{q'}^0)^{-1}(\mathrm{B}(0,M'))\subset \bar{\mathrm{B}}(0,r_1 )\subset \bar{\mathrm{B}}(0,r_2) \eqsp .
           \end{equation}
           This implies that for any $z\in\mathrm{B}(0,M')$, $\operatorname{deg}(z,\bar{\mathrm{B}}(0,r_2),\psi_{q'}^1)\neq 0$ and thus applying \cite[Corollary 2.5, Chapter IV]{degreeouterelo2009mapping}, for any $z\in\mathrm{B}(0,M')$ there exists $y\in \bar{\mathrm{B}}(0,r_2)$ such that $z=\psi_{q'}^1(y) $.
           Therefore, by setting $\tilde{r}=r_2+1$, under \eqref{eq:r_2_to_prove}, we have,
          \begin{equation}
            \mathrm{B}(0,M')\subset \psi_{q'}^1(\mathrm{B}(0,\tilde{r}))=\psi_{q'}^1(\mathrm{B}(0,\tilde{r}))\eqsp .
          \end{equation}

          We prove now that we can choose $r_2>r_1$ such that \eqref{eq:r_2_to_prove}.
           First, we apply \Cref{lemma:perturbation_point} to the trajectory generated by the potential $U_t$ for any $t\in [0,1]$ and we notice that $U_t$ verifies \Cref{hyp:gaussian_perturbation_nice}
          with the same constant $A_5$ taken for $U_1$. There exists a constant $C_1(\Khmc,h,\Sigma,q',A_5)>0$ such that for any $p\in \Rset^d$ and $t\in[0,1]$,
          \begin{equation}
            |\psi_{q'}^t(p) -\psi_{q'}^0(p) | \leq C_1 (|p|^{\varrho-1}+1)\eqsp.
          \end{equation}
           Second, using \Cref{lemma:gaussian_case}, $p\in \Rset^d \mapsto \psi_{q'}^0(p)$ is a linear one-to-one map, thus there exists a constant $C_2(\Khmc,h,\Sigma,q',A_5)>0$ such that for any $p\in \Rset^d$,
           \begin{equation}
            |\psi_{q'}^0(p) | \geq C_2 |p| \eqsp .
          \end{equation}
        
           Finally, for any $p\in \Rset^d$ and $t\in[0,1]$,
          \begin{align}
            |\psi_{q'}^t(p) |&\geq |\psi_{q'}^0(p) |-|\psi_{q'}^t(p) -\psi_{q'}^0(p) |\\
            &\geq C_2 |p|-C_1 (|p|^{\varrho-1}+1)\eqsp .
          \end{align}
           Therefore, using $\varrho-1<1$ yields \eqref{eq:r_2_to_prove}, which completes the proof.
         
    \end{proof}
    
    By \cite[Lemma 17]{Durmus2017-tf}, the map $q\in \Rset^d \mapsto \psi_q(p)$ is Lipschitz for any $p\in \Rset^d$ with a Lipschitz constant $L_{\psi,2}$ independent of $p$.
    Applying \Cref{lemma:antecedents} with $M' = 2r L_{\psi,2}=2M>0$ and $q'=0_d$, there exists $\tilde{r}>0$ such that $\mathrm{B}(0,2M)\subset \psi_{0_d}(\mathrm{B}(0,\tilde{r}))$.
    Therefore, for any $q\in \bar{\mathrm{B}}(0,r)$, we have $\mathrm{B}(0,M)\subset \psi_q(\mathrm{B}(0,\tilde{r}))$.
      Therefore, all compact sets are 1-small for $\KkerH$.

     Moreover, \Cref{lemma:antecedents} and \eqref{eq:minor} imply that for any compact set $K$ with $\Leb(K)>0$ and any $q'\in \Rset^d$, we have $\KkerH(q',K)>0 $.
     Using the regularity of the Lesbegue measure, for any $\msa\in \mcb(\Rset^d) $ with $\pi(\msa)>0$ or equivalently $\Leb(\msa)>0$, we have $\KkerH(q,\msa)>0$ for any $q\in \Rset^d$.
      This implies the accessibility property in one step and thus the aperiodicity.
     Hence, the Lebesgue measure is an irreducibility measure for $\KkerH$ and $\KkerH$ is aperiodic. 
    Therefore, we have \Cref{thm:ergodicity_compile_HMC}-\ref{thm:item_i_ergo1_HMC}.
     The last statement then follows from \cite[Theorem 13.3.4]{markovchainmeyn2012markov} since $\KkerU_h$ is positive recurrent with invariant probability $\pi$ (by construction).

     Now, we prove the Theorem under \Cref{hyp:regularity} and \eqref{eq:transitivity-condition}. In the previous proof, \Cref{lemma:antecedents} is the only part where the assumption \Cref{hyp:gaussian_perturbation_nice} is involved.
     The arguments of the previous proof apply by changing \Cref{lemma:antecedents} with the following Lemma:
     \begin{lemma}
        Assume \Cref{hyp:regularity} and \eqref{eq:transitivity-condition}.
         Then, for any $M'>0 $ and $q' \in \Rset^d$, there exists $\tilde{r}>0$ such that 
        $\mathrm{B}(0,M')\subset \psi_{q'}(\mathrm{B}(0,\tilde{r}))$.
     \end{lemma}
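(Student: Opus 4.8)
The plan is to bypass the degree-theoretic homotopy argument entirely and instead read the claim off from \Cref{thm:trajectory-transitivity}. Under \Cref{hyp:regularity} together with \eqref{eq:transitivity-condition} (and, when $\Khmc = 1$, unconditionally by \Cref{rm:Tequal1}), that theorem asserts that $\psi_{q'}: p \mapsto \operatorname{proj}_1 \Phiverlet[h][\Khmc](q', p)$ is a continuous bijection of $\Rset^d$ onto itself whose inverse $\psi_{q'}^{\inv}$ is Lipschitz. So the first thing I would do is fix a Lipschitz constant $L > 0$ for $\psi_{q'}^{\inv}$, which exists by the conclusion of \Cref{thm:trajectory-transitivity}.

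Next I would fix $M' > 0$ and $q' \in \Rset^d$ and simply set $\tilde{r} = L\bigl(M' + |\psi_{q'}(0_d)|\bigr) + 1 > 0$. For an arbitrary $x \in \mathrm{B}(0_d, M')$, I would let $p = \psi_{q'}^{\inv}(x)$, which is well defined since $\psi_{q'}$ is onto, and estimate, using $\psi_{q'}^{\inv}(\psi_{q'}(0_d)) = 0_d$ and the Lipschitz bound,
\begin{equation}
  |p| = \bigl|\psi_{q'}^{\inv}(x) - \psi_{q'}^{\inv}(\psi_{q'}(0_d))\bigr| \leq L\,|x - \psi_{q'}(0_d)| \leq L\bigl(|x| + |\psi_{q'}(0_d)|\bigr) < \tilde{r} \eqsp .
\end{equation}
Hence $p \in \mathrm{B}(0_d, \tilde{r})$ and $x = \psi_{q'}(p) \in \psi_{q'}(\mathrm{B}(0_d, \tilde{r}))$; since $x$ was arbitrary this yields $\mathrm{B}(0_d, M') \subset \psi_{q'}(\mathrm{B}(0_d, \tilde{r}))$, which is exactly the claim.

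I do not expect any genuine obstacle in this lemma itself: all the work has already been done in \Cref{thm:trajectory-transitivity}, where the Banach fixed point argument applied to the map $F_{Q_0}$ encodes the role of the sharp condition \eqref{eq:transitivity-condition}. In the \Cref{hyp:gaussian_perturbation_nice} setting the analogous statement genuinely required a degree-theoretic homotopy (\Cref{lemma:antecedents}) because one only had a local homeomorphism far from the origin and no global invertibility; here global bi-Lipschitz invertibility makes the covering of $\mathrm{B}(0_d, M')$ immediate from the elementary Lipschitz estimate above, with $\tilde{r}$ chosen explicitly in terms of $L$, $M'$ and $|\operatorname{proj}_1 \Phiverlet[h][\Khmc](q', 0_d)|$.
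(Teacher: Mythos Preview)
Your proof is correct and follows essentially the same approach as the paper's own proof: both invoke \Cref{thm:trajectory-transitivity} to conclude that $\psi_{q'}$ is a bijection with Lipschitz inverse, from which the preimage $\psi_{q'}^{-1}(\mathrm{B}(0,M'))$ is bounded and hence contained in some ball $\mathrm{B}(0,\tilde{r})$. The only difference is cosmetic---the paper simply asserts boundedness of the preimage, while you compute an explicit $\tilde{r}$ from the Lipschitz constant $L$ and $|\psi_{q'}(0_d)|$.
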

     \begin{proof}
        Let $M'>0$ and $q'\in \Rset^d$. By \Cref{thm:trajectory-transitivity}, $\psi_{q'}$ is an homeomorphism.
        Thus, $\psi_{q'}^{-1}(\mathrm{B}(0,M'))$ is well defined open and bounded.
         Therefore, there exists $\tilde{r}>0$ such that 
        $\mathrm{B}(0,M')\subset \psi_{q'}(\mathrm{B}(0,\tilde{r}))$.
     \end{proof}
     The proof is complete.

  \section{The practical implementation.}
  \label{equivalence_section}

  \begin{algorithm}[H]
    \caption{
        Practical implementation for the no U-turn HMC proposal
    }
    \label{alg:practical}
    \begin{algorithmic}[1]
        \Statex \textbf{Input}
            Initial position and momentum $(q_0, p_0) \in (\Rset^d)^2$,
            maximum tree height $\Kmax$, the inverse mass matrix $\mathbf{M}^{-1}$.
        \State $q^-_0,p^-_0 \gets (q_0, p_0)$, $q^+_0,p^+_0 \gets (q_0,p_0)$, $k\gets 0$, $s_0\gets 1$, $\Sigma(\pi)_0\gets \tpi(q_0,p_0)$
      
        \While {$s_k=1$ and $k < \Kmax$}
            \State $V_k \sim \operatorname{Ber}(\tfrac{1}{2})$,\ee $q_{k+1}\gets q_k$
            \If {$V_k=0$}
              \State $q^-_{k+1},p^-_{k+1},\_,\_,q'_k,\Sigma(\pi)'_k,s'_k\gets $ BuildTree($q^-_{k},p^-_{k},-1,k$)
            \Else
              \State $\_,\_, q_{k+1}^+ ,q_{k+1}^{+} ,q'_k,\Sigma(\pi)_k',s'_k\gets $BuildTree($q^+_{k},p^+_{k},1,k$)
            \EndIf
            \If {$s'_k=1$}
              \State with probability $\min\left(1,\Sigma(\pi)'_k/\Sigma(\pi)_k\right)$, set $ q_{k+1}\gets q'_k$
    \EndIf
    
  \State $\Sigma(\pi)_{k+1}\gets \Sigma(\pi)_k+\Sigma(\pi)_k'$
  \State $s_{k+1}\gets s'_k  \mathbbm{1}[(q^+_{k+1} - q^-_{k+1}).p^-_{k+1}\geq 0]  \mathbbm{1}[(q^+_{k+1} -q^-_{k+1}).p^+_{k+1}\geq 0]$,
    \State $k\gets k+1$
        \EndWhile
    \State \textbf{return} $q_k$

        \State \textbf{function} BuildTree($q,p,v,k$)
        \If {$k=0$}
        \State $q',p' \gets$ Leapfrog($q,p,vh,\mathbf{M}^{-1}$)
  \State $\Sigma(\pi)'\gets \tpi(q',p'), \ee s'\gets 1$
  \State \textbf{return} $q',p',q',p',q',\Sigma(\pi)',s'$
  
        \Else
          \State $q^-,p^-,q^+,p^+,q',\Sigma(\pi)',s'\gets$ BuildTree($q,p,v,k-1$)
         
        \If {$s'=1$}
        \If {$v=0$}
              \State $q^-,p^-,\_,\_,q'',\Sigma(\pi)'',s''\gets $ Buildtree($q^-,p^-,v,k-1$)
            \Else
              \State $\_,\_,q^+,p^+,q'',\Sigma(\pi)'',s''\gets $Buildtree($q^+,p^+,v,k-1$)
            \EndIf
            
           \State with probability $\Sigma(\pi)'/\left(\Sigma(\pi)'+\Sigma(\pi)''\right)$, set $ q'\gets q''$
           \State $\Sigma(\pi)'\gets \Sigma(\pi)'+\Sigma(\pi)''$
  \State $s'\gets s''  \mathbbm{1}[(q^+ - q^-).p^-\geq 0]  \mathbbm{1}[(q^+ -q^-).p^+\geq 0]$
  \EndIf
  \State \textbf{return} $q^-,p^-,q^+,p^+,q',\Sigma(\pi),s'$
  \EndIf
    \end{algorithmic}
\end{algorithm}

\begin{algorithm}
  \caption{
      One Leapfrog step.
  }
  \label{alg:leapfrog}
  \begin{algorithmic}[1]
      \Statex \textbf{Input}
          Initial position and momentum $(q_0, p_0) \in (\Rset^d)^2$,
          the stepsize $h\in \Rset^*$, the inverse mass matrix $\mathbf{M}^{-1}$.
      \State $p'\gets p_0-h\nabla U(q_0) /2 $
      \State $q'\gets q_0+h\mathbf{M}^{-1}p' $
      \State $p'\gets p'-h\nabla U(q') /2 $

      \State \textbf{return} $q',p'$
  
  \end{algorithmic}
\end{algorithm}

The practical implementation described here (\Cref{alg:practical}) is equivalent to Algorithm 1 regarding the position output $q_{j_f}$, which is equal to the output $q_k$ in \Cref{alg:practical} when $M=\Idd$.

The inverse mass matrix $\mathbf{M}^{-1}$ is absent in this paper to simplify the presentation since a linear change of variables enables us to recover it.
However, we use it in the pseudocode presented here to allow for a precise comparison between the different variants of the algorithm.

Experts may notice that \Cref{alg:practical} is not precisely the one implemented in Stan, as discussed in the beginning of Section 3; 
these operational details are beyond the theoretical aims of this paper.

\end{document}